\definecolor{cadet}{rgb}{0.33, 0.41, 0.47}
\definecolor{darkgreen}{rgb}{0, 0.5, 0}
\tikzset{>=stealth}
\newcolumntype{L}[1]{>{\raggedright\let\newline\\\arraybackslash\hspace{0pt}}p{#1}}
\newcolumntype{C}[1]{>{\centering\let\newline\\\arraybackslash\hspace{0pt}}p{#1}}
\newcolumntype{R}[1]{>{\raggedleft\let\newline\\\arraybackslash\hspace{0pt}}p{#1}}
\definecolor{lightergray}{rgb}{.9,.9,.9}
\definecolor{lighterblue}{rgb}{.9,.9,1}
\definecolor{lightergreen}{rgb}{.9,1,.9}
\definecolor{lightgray}{rgb}{.75,.75,.75}
\renewcommand{\leq}{\leqslant}
\renewcommand{\geq}{\geqslant}
\renewcommand{\le}{\leqslant}
\renewcommand{\ge}{\geqslant}
\def\clap#1{\hbox to 0pt{\hss#1\hss}}
\def\mathllap{\mathpalette\mathllapinternal}
\def\mathllapinternal#1#2{%
\llap{$\mathsurround=0pt#1{#2}$}}
\newcommand{\rref}[1]{\protected@edef\@currentlabel{#1}}
\DeclareMathAlphabet{\mathpzc}{OT1}{pzc}{m}{it}
\DeclareMathOperator{\ch}{ch}
\DeclareMathOperator{\End}{End}
\DeclareMathOperator{\Id}{{\mathds{1}}}
\DeclareMathOperator{\Ber}{Ber}
\DeclareMathOperator{\qBer}{qBer}
\DeclareMathOperator{\Nil}{Nil}
\DeclareMathOperator{\diag}{diag}
\DeclareMathOperator{\SW}{SW}
\newcommand{\SWe}{\SW} 
\theoremstyle{definition}
 \numberwithin{equation}{section}
 \numberwithin{mdef}{section}
\let\es\varnothing
\newcommand{\ii}{i}
\newcommand{\CCC}{c}
\newcommand{\CC}{\mathbb{C}}
\newcommand{\lP}{\mathcal{P}}
\newcommand{\lV}{\mathcal{V}}
\newcommand{\lU}{\mathcal{U}}
\newcommand{\sla}{\mathfrak{sl}}
\newcommand{\nl}{\newline\newline}
\newcommand{\wT}{\mathbb{T}}
\newcommand{\wQ}{\mathbb{Q}}
\newcommand{\mtwo}[4]{\left(\begin{matrix} #1&#2\\#3&#4\end{matrix}\right)}
\newcommand{\vtwo}[2]{\begin{pmatrix} #1\\#2\end{pmatrix}}
\newcommand{\CO}{{\mathcal{O}}}
\newcommand{\CI}{{\mathcal{I}}}
\newcommand{\CJ}{{\mathcal{J}}}
\newcommand{\CR}{{\mathcal{R}}}
\newcommand{\CE}{{\mathcal{E}}}
\newcommand{\Bad}{\ensuremath{\mathit{\Theta}_{\,\rm not}}}
\newcommand{\Y}{{\ensuremath{\rm Y}}}
\newcommand{\M}{{\ensuremath{\mathsf{M}}}}
\newcommand{\DD}{\mathcal{D}}
\newcommand{\SG}{\mathsf{S}}
\newcommand{\SU}{{\mathsf{SU}}}
\newcommand{\GL}{{\mathsf{GL}}}
\newcommand{\gl}{\mathfrak{gl}}
\newcommand{\gm}{{\mathsf{m}}}
\newcommand{\gn}{{\mathsf{n}}}
\newcommand{\gln}{{\gl_\gm}}
\newcommand{\glm}{{\gl_\gm}}
\newcommand{\glmn}{{\mathfrak{\gl}_{\gm|\gn}}}
\newcommand{\slmn}{{\mathfrak{sl}_{\gm|\gn}}}
\newcommand{\gls}[1]{{\mathfrak{\gl}_{#1}}}
\newcommand{\Ygln}{\Y(\gln)}
\newcommand{\Yglmn}{\Y(\glmn)}
\newcommand{\CB}{{\mathcal{B}}}
\newcommand{\CF}{{\mathcal{F}}}
\newcommand{\fs}{{\bar\es}}
\newcommand{\be}{\begin{eqnarray}}
\newcommand{\ee}{\end{eqnarray}}
\newcommand{\EE}{\mathsf{E}}
\newcommand{\Tau}{{\mathcal{T}}}
\theoremstyle{plain}
\newtheorem{theorem}{Theorem}[section]
\newtheorem{lemma}[theorem]{Lemma}
\newtheorem{proposition}[theorem]{Proposition}
\newtheorem{corollary}[theorem]{Corollary}
\newtheorem*{theorem*}{Theorem}
\newcommand{\twoone}{\begin{tikzpicture}[scale=.15,baseline=-.1cm]
\draw (1,1) |- (0,-1) |- (2,1) |-(0,0);
\end{tikzpicture}}
\newcommand{\sA}{\ensuremath{\alpha}} 
\newcommand{\sB}{\ensuremath{\beta}} 
\newcommand{\sC}{\ensuremath{\gamma}} 
\newcommand{\sD}{\ensuremath{\delta}} 
\newcommand{\gA}{{\ensuremath{\bar\sA}}} 
\newcommand{\gB}{{\ensuremath{\bar\sB}}} 
\newcommand{\gC}{{\ensuremath{\bar\sC}}}
\newcommand{\ABCD}[2]{{[#1]}_{#2}}
 \newcommand{\pa}{\ensuremath{\ell}}
 \newcommand{\inhom}[1][\pa]{{\ensuremath{\theta_{#1}}}}
 \newcommand{\binhom}[1][\pa]{{\ensuremath{\bar\theta_{#1}}}}
\newcommand{\sse}{\chi}
\newcommand{\bsse}{\bar\sse}
\newcommand{\se}[1]{\ensuremath{\sse_{#1}}}
\newcommand{\sepa}{\ensuremath{\sse_{\pa}}}
\newcommand{\bse}[1]{\ensuremath{\bsse_{#1}}}
\newcommand{\bsepa}{\ensuremath{\bsse_{\pa}}}
\newcommand{\Ccrit}{\ensuremath{\mathcal{C}_{\rm crit}}}
\newcommand{\Cdom}{{\mathcal{C}}}
\newcommand{\Se}{{\mathcal{X}}}
\newcommand{\Secrit}{\ensuremath{\Se_{\rm crit}}}
\newcommand{\Sedom}{\Se}
\newcommand{\WA}{{\mathcal{W}}}
\newcommand{\WAL}{{\WA_{\Lambda}}}
\newcommand{\cWAL}{{\check\WA_{\Lambda}}}
\newcommand{\BA}{{\mathcal{B}}}
\newcommand{\BAL}{{\BA_{\Lambda}}}
\newcommand{\VTw}{{V_{\Lambda}}}
\newcommand{\VTl}{{V_{\Lambda}^+}}
\newcommand{\VTo}{{\mathsf{w}}}
\newcommand{\VV}{{U_{\Lambda}}}
\newcommand{\lVTwS}{{\mathcal{V}_{\Lambda}^{\SG}}}
\newcommand{\lVTlS}{{\mathcal{V}_{\Lambda}^{\SG+}}}
\newcommand{\lVVS}{{\mathcal{U}_{\Lambda}^{\SG}}}
\newcommand{\LD}{{\Lambda^+}}
\newcommand{\LW}{{\Lambda}}
\newcommand{\IPart}{{\underline\lambda}}
\newcommand{\N}[3]{{N_{#2,#3}^{(#1)}}}
\newcommand{\SYT}{\Tau}
\newcommand{\ao}{{\bar a}}
\newcommand{\so}{{\bar s}}
\newcommand{\bu}{{u}}
\newcommand{\vve}{{\bf e}}
\newcommand{\bas}{{\mathfrak{b}}}
\def\clap#1{\hbox to 0pt{\hss#1\hss}}
\def\mathllap{\mathpalette\mathllapinternal}
\def\mathllapinternal#1#2{%
\llap{$\mathsurround=0pt#1{#2}$}}
\newcommand{\eg}{{\it e.g. }}
\newcommand{\ie}{{\it i.e. }}
\newcommand{\cf}{{\it cf. }}
\newcommand{\etc}{{\it etc}}
\newcommand{\via}{{\it via }}
\newcommand{\rhs}{{r.h.s. }}
\newcommand{\lhs}{{l.h.s. }}
\newcommand{\wrt}{{w.r.t. }}
    \patchcmd{\maketitle}{\@fpheader}{}{}{}
\definecolor{graylight}{cmyk}{.30,0,0,.67} %
\newmdenv[ %
  linecolor=graylight,
  topline=false,
  bottomline=false,
  rightline=false,
  skipabove=\topsep,
  skipbelow=\topsep
]{leftrule}
\NewDocumentEnvironment{example}{O{\textbf{Example:}}} %
{\begin{leftrule}\noindent\textcolor{graylight}{#1}\par}
{\end{leftrule}}    
\title{Completeness of Wronskian Bethe equations for rational $\glmn$ spin chains}
\author{Dmitry Chernyak$^{a,b}$}
\author{Sébastien Leurent$^{c}$}
\author{Dmytro Volin$^{a,d,e}$}
\affiliation[a]{School of Mathematics \& Hamilton Mathematics Institute,\\Trinity College Dublin, College Green, Dublin 2, Ireland}
\affiliation[b]{Département de Physique, École Normale Supérieure, Université PSL, 24 rue Lhomond, 75005 Paris, France}
\affiliation[c]{Université de Bourgogne-Franche-Comté, Institut de
  Mathématiques de Bourgogne, UMR 5584 du CNRS, 9 avenue Alain Savary,
21000 Dijon, France}
\affiliation[d]{Nordita, KTH Royal Institute of Technology and Stockholm University,\\
Roslagstullsbacken 23, SE-106 91 Stockholm, Sweden}
\affiliation[e]{Department of Physics and Astronomy,\\ Uppsala University, Box 516, SE-751 20 Uppsala, Sweden}
\emailAdd{dmitry.chernyak@ens.fr}
\emailAdd{sebastien.leurent@u-bourgogne.fr}
\emailAdd{dmytro.volin@physics.uu.se}
\abstract{We consider rational integrable supersymmetric $\glmn$ spin chains in the defining representation and prove the isomorphism between a commutative algebra of conserved charges (the Bethe algebra) and a polynomial ring (the Wronskian algebra) defined by functional relations between Baxter Q-functions that we call Wronskian Bethe equations. These equations, in contrast to standard nested Bethe equations, admit only physical solutions for any value of inhomogeneities and furthermore we prove that the algebraic number of solutions to these equations is equal to the dimension of the spin chain Hilbert space (modulo relevant symmetries). 

Both twisted and twist-less periodic boundary conditions are considered, the isomorphism statement uses, as a sufficient condition, that the spin chain inhomogeneities $\inhom$, $\pa=1,\ldots,L$ satisfy $\inhom+\hbar\neq\inhom[\pa']$ for $\pa<\pa'$. Counting of solutions is done in two independent ways: by computing a character of the Wronskian algebra and by explicitly solving the Bethe equations in certain scaling regimes supplemented with a proof that the algebraic number of solutions is the same for any value of $\theta_\pa$. In particular, we consider the regime $\theta_{\pa+1}/\theta_{\pa}\gg 1$ for the twist-less chain where we succeed to provide explicit solutions and their systematic labelling with standard Young tableaux.
 }
\begin{document} 
\maketitle
\newpage
\section*{Notations}
{
\begin{minipage}{0.5\textwidth}
\begin{tabular}{R{0.15\textwidth}L{0.85\textwidth}}
\multicolumn{2}{l}{Typical values of indices}
\\
$a,b$ &  $1$ to $\gm$
\\
$i,j$ &  $\hat 1$ to $\hat \gn$ (hat is omitted sometimes)
\\
$\sA,\sB$ & from the set $\{1,\ldots,\gm,\hat 1,\ldots,\hat\gn\}$
\\
\pa& $1$ to $L$
\\
\hphantom{$\pa$} & {}
\end{tabular}
\end{minipage}
\begin{minipage}{0.5\textwidth}
\begin{tabular}{R{0.15\textwidth}L{0.85\textwidth}}
\multicolumn{2}{l}{Parameters}
\\
$z_\sA$& twist eigenvalues, $z_a\equiv x_a$, $z_{\hat i}\equiv y_i$
\\
$\inhom$& inhomogeneities (as variables)
\\
$\binhom$& inhomogeneities (fixed number)
\\
$\sepa$ & elementary symmetric polynomials
\\
$\bsepa$& $=\sepa(\binhom[1],\ldots,\binhom[L])$
\end{tabular}

\end{minipage}
\newline
\newline
\newline
\begin{tabular}{R{0.30\textwidth}L{0.70\textwidth}}
\multicolumn{2}{l}{Lie algebra}
\\
$\glmn$ & symmetry of the system (broken to Cartan in the twisted case)
\\
$\EE_{\sA\sB}$ & abstract generators and defining representation
\\
$\CE_{\sA\sB}$ & global spin chain action
\\
$\LD=(\IPart_1,\IPart_2,\ldots)$ & Young diagram $\equiv$ integer partition (typically of $L$)
\\
$(\IPart'_1,\IPart'_2,\ldots)$ & transposed partition, $h_{\LD}:=\IPart'_1$.
\\
$\LW=[\lambda_1,\ldots,\lambda_\gm|\nu_1,\ldots,\nu_\gn]$ & fundamental weight (eigenvalues of $\CE_{\alpha\alpha}$)
\\
$(\hat\lambda_1\ldots,\hat\lambda_{\gm'}|\hat\nu_1,\ldots,\hat\nu_{\gn'})$ & shifted weight (describes $\LD$ with marked point)
\end{tabular}
\nl
\newline
\begin{tabular}{R{0.15\textwidth}L{0.85\textwidth}}
\multicolumn{2}{l}{Spin chain}
\\
$V$ & Hilbert space of the spin chain ($\simeq (\CC^{\gm|\gn})^{\otimes L}$)
\\
$\VTw$ & subspace of $V$ spanned by states of weight $\LW$
\\
$\VTl$ & subspace of $V$ spanned by highest weight states of irreps $\LD$
\\
$\VV$ & either $\VTw$ or $\VTl$
\\
$d_\Lambda$ & dimension of $\VV$
\\
\end{tabular}
\nl
\newline
\begin{tabular}{R{0.15\textwidth}L{0.85\textwidth}}
\multicolumn{2}{l}{Bethe and Wronskian algebras}
\\
$\hat c_k^{(d)}$, $\hat c_\pa$ & operators acting on spin chain, coefficients in Baxter Q-operators, \eg $Q_k=u^{M_k}(1+\frac{\hat c_{k}^{(1)}}{u}+\ldots)$
\\
$c_k^{(d)}$, $c_\pa$ & abstract variables and/or eigenvalues of $\hat c_k^{(d)}$, $\hat c_\pa$
\\
$\BAL$ & Bethe algebra restricted to $\VV$ (generated by $\hat c_\pa$), a $\CC[\sse]$-module
\\
$\BAL(\binhom[])$ &  specialised Bethe algebra (for spin chain representation at point $\binhom[]$)
\\
$\BAL(\bsse)$ &specialised Bethe algebra (for symmetrised representation at point $\bsse$)
\\
$\WAL$ & Wronskian algebra (generated by $c_\pa$ subject to Wronskian Bethe equations)
\\
$\WAL(\bsse)$ & specialised Wronskian algebra
\end{tabular}
\nl
\newline
\begin{tabular}{R{0.15\textwidth}L{0.85\textwidth}}
\multicolumn{2}{l}{Functional relations conventions}
\\
$u$ & spectral parameter
\\
$\hbar$ & Unit of discrete shift in \eg Baxter equation, typically $\hbar=\pm \ii,\pm1,\pm 2$
\\
$f^{[n]}$ & $f^{[n]}\equiv f(u+\frac{\hbar}{2}n)$, $f^\pm\equiv f^{[\pm 1]}$
\\
$f\propto g$& \parbox[t]{30em}{$f$ and $g$, as functions of $u$, are equal up to a normalisation}
\end{tabular}
}
\tableofcontents

\section{Introduction}
Rational integrable spin chains are one of the first quantum integrable systems that were discovered and studied. In fact, their simplest $\SU(2)$ representative was introduced and solved, by means of coordinate Bethe Ansatz, in the seminal paper of Hans Bethe \cite{Bethe:1931hc}. 

In this article we consider periodic integrable spin chains of length $L$  constructed using the $\glmn$-invariant rational R-matrix, and with spin chain nodes being in fundamental (defining) representation of $\glmn$. The parameters defining the model are the twist matrix $G$ and inhomogeneities $\inhom[1],\ldots, \inhom[L]$. We cover the cases when $G$ is either equal to the identity (twist-less case) or is diagonalisable with distinct eigenvalues $x_1,\ldots,x_{\gm},y_1,\ldots,y_{\gn}$ (generic twisted case).

Spectrum of the commuting charges that form the so-called Bethe algebra $\BA$ can be encoded into rational symmetric combinations of the Bethe roots $\bu_{k}^{(\sA)}$. Equations defining the values of $\bu_{k}^{(\sA)}$ shall be called Bethe equations, and their most known presentation is given by nested Bethe Ansatz equations (NBAE)  which is the following relation between fractions \cite{Sutherland:1975vr,Kulish:1979cr,Kulish:1985bj,Ogievetsky:1986hu}
\be
\label{eq:Bethedistinguished}
\prod_{\pa=1}^L\frac{u_k^{(\sA)}-\inhom[\pa]+\frac{c_{1,2}+c_{1,1}}{2}\,\hbar\,\delta_{\sA,1}}{u_k^{(\sA)}-\inhom[\pa]-\frac{c_{1,2}+c_{1,1}}{2}\,\hbar\,\delta_{\sA,1}}
=\frac{{z}_{\sA+1}}{{z}_{\sA}}
\prod_{\substack{
    1\le\sB\le \gm+\gn-1\\
    1\le l\le M_{\sB}\\
    (\sB,l)\neq (\sA,k)}}
\frac{u_{k}^{(\sA)}-u_{l}^{(\sB)}+\frac \hbar 2c_{\sA,\sB}}{u_{k}^{(\sA)}-u_{l}^{(\sB)}-\frac \hbar 2 c_{\sA,\sB}}\,.
\ee
Here $\sA\in\{1,2,\dots,\gm+\gn-1\}$ and all $k\in\{1,\dots,M_{\sA}\}$, we denote $z_\sA=x_\sA$ for $1\leq \sA\leq \gm$, and $z_\sA=y_{\sA-\gm}$ for $\gm+1\leq \sA\leq \gm+\gn$, and $\hbar$ is a non-zero complex number (typical choices are $\ii,1,2$). Finally $c_{\sA,\sB}$ is the Cartan matrix of the $\slmn$ subalgebra of $\glmn$. It is equal \eg to $\left(\begin{smallmatrix} 2& -1 & 0 \\ -1 & 0 & 1 \\ 0 & 1 & -2  \end{smallmatrix}\right)$ for $\sla(2|2)$, the expression for other ranks should be obvious from this example. This expression of the Cartan matrix is written in the so-called distinguished grading of $\glmn$ but other gradings are also possible \cite{Tsuboi:1998ne,Ragoucy:2007kg}, the corresponding equations are obtained \via duality transformations, and we briefly mention them in Section~\ref{sec:NBAE}.
\newline
\newline
Obvious questions arising are whether each solution of the Bethe equations describes some physical state, which we call the {\it faithfulness} property, and whether all physical states can be described in this way, which we  call the {\it completeness} property. In particular, one often asks whether the number of solutions to the Bethe equations is equal to the dimension of the Hilbert space, probably after some obvious symmetries are factored out. In the literature, these properties are typically covered by the name of completeness, however the precise meaning of the word varies.

Quite surprisingly, despite the fundamental nature of these questions, they were properly resolved only in 2009 for the $\gl_2$ case and in 2013 for the $\gl_\gm$ case by Mukhin, Tarasov and Varchenko \cite{Mukhin2009, MTV}. Completeness and faithfulness were also recently proven for $\gl_{1|1}$ by Huang, Lu, and Mukhin \cite{Huang_2019, lu2019supersymmetric}. Proving completeness and faithfulness for an arbitrary rank $\glmn$ case is the subject of the current paper. For formal proofs, we build on ideas of \cite{MTV} and add several new insights, even for the bosonic $\gl_\gm$ subcase, to achieve the result. Besides formal proofs, we also give a recipe to explicitly label solutions.
\newline
\newline
Counting of solutions was first time addressed already in \cite{Bethe:1931hc} using the so-called string hypothesis, and later on this approach was extended to $\gl_\gm$ \cite{Kirillov1987}, $\gl_{2|1}$ \cite{Foerster:1992uk} and $\gl_{2|2}$ \cite{Schoutens:1993bf} cases. Further study of combinatorics implied by string hypothesis for $\glm$ spin chains led to the formulation of the Kerov-Kirillov-Reshetikhin bijection \cite{Kerov1988,Kirillov1988} between rigged configurations of Bethe strings and (in case of spin chains in the defining representation) standard Young tableaux. Although counting assuming string hypothesis leads to correct numbers, the hypothesis is strictly speaking wrong as one can show by a more detailed analysis and explicit counter-examples, see \eg \cite{VLADIMIROV1984418,Isler:1993fc}. Hence this approach, after all, did not accomplish its original thought application -- proving the completeness of rational Bethe equations. Instead, ideas of \cite{Kerov1988,Kirillov1988} became extremely fruitful and were further generalised in various applications of algebraic combinatorics, in particular in the context of ``combinatorial'' integrability that can be viewed as the crystal limit $q\to 0$ of $q$-deformed (XXZ-type) spin chains, see \eg \cite{Kuniba:2001xx,Hatayama:2001gy} and references therein.

Apart from the combinatorial challenge, analysing solutions of \eqref{eq:Bethedistinguished} has clear technical complications. First, solutions with coinciding Bethe roots generically do not correspond to physical states and then they should be discarded. However, there are cases when such solutions should be kept \cite{Avdeev1986,Volin:2010xz,Hao:2013rza}. Second, the so-called exceptional solutions with $u=u'\pm \hbar$ and/or $u=\inhom[]\pm\frac{\hbar}{2}$ (case {\it a} in \cite{Avdeev1986}) render relation \eqref{eq:Bethedistinguished} singular. Some of the exceptional solutions are physical and some of them are not, and, for instance, their behaviour upon change of twist or inhomogeneities  can decide for their physicality. Tracing this behaviour becomes a burden, especially at higher rank. To our knowledge, only the homogeneous $\gl_2$ case was properly understood \cite{Nepomechie:2013mua}. At higher ranks, non-physicality can be also hidden in non-physical exceptional solutions of dual Bethe equations even if the original Bethe equations appear as being free from any singularities \cite{MarboeUnpub}.

It is then not surprising that one should look for a different set of equations instead of \eqref{eq:Bethedistinguished} to prove completeness \cite{Baxter:2001sx}, and it is indeed the case for the proof in \cite{MTV} where a very elegant Wronskian condition was used. Define the finite-difference Wronskian between any number of $k$ functions as 
\be
W(F_1,\ldots,F_k)\equiv \det\limits_{1\leq   i,j\leq k} F_i(u+\hbar(\frac{k+1}{2}-j)),
\ee
 introduce $\gm$ monic polynomials~\footnote{We use notations suited for our generalisation to supersymmetric case. They are different from those in \cite{MTV}.} $q_{a|\es}=u^{M_{a|\es}}\left(1+\sum\limits_{k=1}^{M_{a|\es}}\frac{c_{a|\es}^{(k)}}{u^{k}}\right)$, $a=1,\ldots,\gm$ of degree $M_{a|\es}$. Then the eigenstates of the Bethe algebra of the $\glm$ spin chain are in a one-to-one correspondence with the solutions of
\be
\label{eq:WronskianBethe}
\frac{W(x_1^{u/\hbar}q_{1|\es},\ldots , x_\gm^{u/\hbar} q_{\gm|\es})}{W(x_1^{u/\hbar},\ldots,x_{\gm}^{u/\hbar})}=\prod_{\pa=1}^L(u-\inhom[\pa])\,
\ee
which should be considered as equations on the coefficients $c_{a|\es}^{(k)}$. As formulated, the statement  holds for the case when all $x_{a}$ are pairwise distinct, and it also applies to the twist-less case after arrangements discussed in Section~\ref{sec:parameterisation}.

We shall call \eqref{eq:WronskianBethe} Wronskian Bethe equations (WBE). They are equivalent to NBAE (with coinciding Bethe roots solutions being discarded) for generic values of $\inhom$  but, in contrast to \eqref{eq:Bethedistinguished}, smoothly work at {\it any} values of inhomogeneities, and this includes an important physical case of homogeneous spin chain with all $\inhom[\pa]=0$. Relation to Bethe roots of \eqref{eq:Bethedistinguished} is given by $\frac{W(x_1^{u/\hbar}q_{1|\es},\ldots , x_a^{u/\hbar} q_{a|\es})}{W(x_1^{u/\hbar},\ldots,x_{a}^{u/\hbar})}=\prod\limits_{k=1}^{M_a}(u-\bu_k^{(a)})$, $M_a=\sum\limits_{b=1}^a M_{b|\es}$.
\newline
\newline
WBE are natural in the logic of the analytic Bethe Ansatz \cite{Baxter:1971cs,Reshetikhin1983} (though in early works in this formalism their significance was not recognised), while NBAE  are often associated with the (nested) coordinate or algebraic Bethe Ansatz \cite{GaudinBook, Faddeev:1996iy} (though they are derived \via analytic Bethe Ansatz as well \cite{Tsuboi:1997iq,Arnaudon:2004vd}). The details of the completeness and faithfulness questions depend on the chosen approach. 

The nested coordinate/algebraic Bethe Ansatz is indeed an ansatz to build an eigenfunction and Bethe equations appear as the necessary consistency conditions for the ansatz to succeed.  Faithfulness is then the question of sufficiency of these conditions, while completeness is the question whether there exist eigenfunctions not described by this Ansatz.  In physics literature, faithfullness is often considered as granted reducing completeness to counting the number of solutions to the Bethe equations. The generic position faithfulness indeed follows rather straightforwardly from the ansatz itself, but a full systematic proof covering all exceptional situations would be much harder to achieve. To our knowledge, such a proof using a direct algebraic Bethe Ansatz framework was done only very recently for the $\gl_2$ homogeneous case \cite{Granet:2019knz}. In the context of separation of variables, a new type of ansatz to build eigenstates in higher-rank systems emerged \cite{Gromov:2016itr, Liashyk:2018qfc, Ryan:2018fyo}. Its faithfulness for $\glm$ chains is proven in the case of non-degenerate twist and assuming $\inhom[\pa]-\inhom[\pa']\neq \hbar\mathbb{Z}$.

The analytic Bethe Ansatz is actually not an ansatz to build wave functions. It often starts by considering the Bethe algebra -- a set of commuting operators which satisfy various functional relations as functions of the spectral parameter $u$ that were extensively studied \cite{Krichever:1996qd,Bazhanov:1998dq,Pronko:1998xa,Tsuboi:2009ud,Kazakov:2015efa}. WBE is one of (or a consequence of) these relations with $Q_{a|\es}:=x_a^{u/\hbar} q_{a|\es}$ being the renowned Q-operators, the first example of such an operator is due to Baxter \cite{Baxter:1972hz,Baxter:1982zz}. In the analytic Bethe Ansatz approach, faithfulness is non-trivial to demonstrate even in generic position, and it is an important part of our paper to prove it. On the other hand, one is certain that each of the Q-operators eigenvalues satisfy WBE. Hence the question of completeness becomes equivalent to the question of whether the Bethe algebra generated by Q-operators contains the full set of commuting charges, \ie whether the eigenvalues of Q-operators are sufficient to fully parametrise the Hilbert space. We  shall resolve this question positively by explicitly counting the number of solutions of WBE. Hence we still face the question of counting as in the algebraic/coordinate Bethe Ansatz scenario, however the statement that is proven as a consequence of counting is a bit different.
\newline
\newline
In physical applications, inhomogeneities  are often set to $\inhom=0$. However, keeping inhomogeneities as parameters that we are going to vary is decisive for the approach discussed in this paper~\footnote{The twist values $x_a,y_i$ are always kept fixed however.}. To start with, inhomogeneities are regulators that  put our system to a generic position. One can even explore regimes where Bethe equations can be solved explicitly which provides a constructive way to count solutions. In the twisted case, such a regime is $\left|\frac{\inhom[\pa+1]-\inhom[\pa]}{\hbar}\right|\gg 1$ when we  label solutions using binomial expansions, and in the twist-less case such a regime is $\left|\frac{\inhom[\pa+1]}{\inhom[\pa]}\right|\gg 1$ when we can label solutions using standard Young tableaux (SYT), a result first time stated in \cite{Kirillov1986}. Moreover if $\inhom$ are positive, we will demonstrate that there is an unambiguous analytic continuation from this regime to the point $\inhom=0$.

Furthermore, we show that the Bethe algebra can be generated by only $L$ generators. There are also $L$ inhomogeneities which allows us to prove generic position faithfulness statements using a rigorous version of a ``number of variables equals number of parameters'' argument.

Finally, it can be demonstrated that all properties can be made {\it polynomial} in $\sepa$ -- elementary symmetric polynomials in inhomogeneities. Algebraically, this shall be formalised by proving that certain properly designed objects are free $\CC[\se{1},\ldots,\se{L}]$-modules. This implies that general position completeness and faithfulness statements can be specialised to {\it any} numerical value of $\sepa$. 
\newline
\newline
The paper is organised as follows. 

In Section~\ref{sec:Prelim}, we recall all the necessary known results about the Yangian, the Bethe algebra, and Q-operators. The fact that Q-operators belong to the Bethe algebra on the level of representation is proven in Appendix~\ref{sec:q-belongs-bethe}. We conclude the section with the formulation of the supersymmetric twisted and twist-less versions of WBE \eqref{eq:WronskianBethe}. 

Sections~\ref{sec:C} and \ref{sec:F} provide proofs of completeness and faithfulness.

Section~\ref{sec:C} proves completeness. We introduce the concept of Wronskian algebra (a polynomial ring defined by WBE), prove that it is a free $\CC[\sse]$-module,  and explicitly count its rank using Hilbert series (a.k.a. character, index, partition function) confirming that it coincides with the dimension of the (corresponding) Hilbert space. By a standard argument, the rank of the Wronskian algebra is the number of solutions to WBE counted with multiplicities. The proof of freeness essentially uses the so-called properness of WBE which is proven in Appendix~\ref{sec:Finiteness} and Appendix~\ref{sec:deta-constr-mapp}.

Section~\ref{sec:F} proves faithfulness. More accurately, faithfulness means that the Bethe algebra is a faithful representation of the Wronskian algebra, which will immediately imply that these two algebras are isomorphic. The proof is first done for generic values of inhomogeneities (over the polynomial ring $\CC[\sse]$) and then is specialised to numerical values of $\sepa$. Important results allowing one to specialise  at any numerical value are covered in Appendix~\ref{sec:cyclic} which builds substantially on the approach of \cite{MTV}.

Sections~\ref{sec:variouspar} and \ref{sec:label} aim to make the obtained results more practical. 

Section~\ref{sec:variouspar} discusses various ways to parametrise the Bethe algebra in the twist-less case. In particular, we demonstrate that all the restricted Bethe algebras $\BAL$ for $\glmn$ with fixed Young diagram $\LD$ and different $\gm,\gn$ are isomorphic to one-another and to the Q-system on this diagram. We also explain how this formalism is mapped to NBAE.

Section~\ref{sec:label} considers regimes $\left|\frac{\inhom[\pa+1]-\inhom[\pa]}{\hbar}\right|\gg 1$, $\left|\frac{\inhom[\pa+1]}{\inhom[\pa]}\right|\gg 1$ and shows how to explicitly find solutions of WBE in these regimes. Some technical questions are postponed to Appendix~\ref{sec:enum-solut-with}.

In Section~\ref{sec:summary}, we summarise the results and then discuss their immediate applications. This includes an algorithm to solve Bethe equations (including at $\inhom=0$) with solutions being labelled with SYT, and applications to the Gaudin model and to the separation of variables program. We conclude the section with a review of a relation between the restricted Bethe algebras and certain quantum cohomology rings.

The paper uses substantially results and terminology from algebraic geometry and commutative algebra while the target audience includes researchers with no appropriate background. To alleviate the issue, we illustrate the discussion with numerous examples, including a comprehensive case study in Appendix~\ref{sec:OE}, and supplement the paper with Appendix~\ref{sec:pedestrian} containing mostly textbook material applied to the concrete problem that we consider. Section~\ref{sec:algedes} also summarises textbook knowledge about multiplicity of solutions but we decided to keep it in the main text  given its importance for the  paper.

\section{Definitions and basic properties}
\label{sec:Prelim}
As often happens in mathematical physics, it will be useful to recast a physical question into a problem in representation theory. A spin chain should be considered as a representation of the $\glmn$ Yangian. Commuting Hamiltonians belong to its certain commutative subalgebra known as \cite{Nazarov_1996} the Bethe algebra $\CB$, and the completeness question is closely related to the explicit realisation of this algebra by Baxter polynomials subjected to constraints. 

This section collects definitions of the above-mentioned objects. For the supersymmetric Yangian, our sign conventions are the same as in \cite{Nazarov1991,Gow_thesis}, however we define quantum Berezinian with a different overall shift of the spectral parameter. Many hidden subtleties in a consistent usage of signs are nicely reviewed in \cite{Maillet:2019ayx}.

\subsection{\texorpdfstring{$\glmn$}{gl(m|n)} Lie superalgebra, shifted and fundamental weights}
\label{sec:glmn}
Let us recall some essential facts about the $\glmn$ Lie superalgebra \cite{KAC19778}. Assign parity $\bar a=0$ for any ``bosonic'' index $a\in\{1,\ldots,\gm\}$  and parity $\bar i=1$ for any ``fermionic'' index $i\in\{\hat 1,\ldots,\hat\gn\}$. The $\glmn$ algebra is spanned by the generators $\EE_{\sA\sB}$, $\sA,\sB\in\{1,\ldots,\gm,\hat 1,\ldots,\hat\gn\}$ whose graded Lie bracket is 
\be
\left[\EE_{\sA\sB},\EE_{\sA'\sB'}\right\}=\delta_{\sB\sA'}\EE_{\sA\sB'}-(-1)^{(\gA+\gB)(\gA'+\gB')}\delta_{\sB'\sA}\EE_{\sA'\sB}\,.
\ee

The Hilbert space~\footnote{Usage of terminology ``Hilbert space'' is customary for quantum systems, however we do not use any scalar products in this work, except in Section~\ref{sec:alternative}.} comprising states of the spin chain of length $L$ is $V:= (\CC^{\gm|\gn})^{\otimes L}$. Each spin chain site $\CC^{\gm|\gn}$ transforms under the defining (vector) representation of $\glmn$. The global action of $\glmn$ on $V$ shall be denoted as $\CE_{\sA\sB}$. It is induced from the single site action by using the standard graded product rule, \eg if $\EE_{\sA\sB}\, \vve_{\sC}=\delta_{\sB\sC}\,\vve_{\sA}$ then, for $L=2$ 
\be
\CE_{\sA\sB}\, \vve_{\sC}\otimes\vve_{\sC'}=\delta_{\sB\sC}\,\vve_{\sA}\otimes\vve_{\sC'}+(-1)^{(\bar\sA+\bar\sB)\bar\sC}\delta_{\sB\sC'}\,\vve_{\sC}\otimes\vve_{\sA}.
\ee 
Overall, the sign rule $(A\otimes B)(C\otimes D)=(-1)^{\bar B \bar C}(AC)\otimes(BD)$ for tensor products of graded algebras and modules shall be always assumed. 
\nl
When describing spin chains with periodic boundary conditions ({\it twist-less} case), $\glmn$ is the symmetry of the system in the sense that it commutes with the Hamiltonians we are interested to diagonalise. Correspondingly, we shall encounter covariant representations -- the irreps appearing in the tensor powers of the defining $\glmn$ representation. These representations are of the highest-weight type, with highest-weight vector $v$ being defined by the property $\CE_{\sA\sB}\,v=0$ for $\sA<\sB$. 

The highest-weight property depends on a choice of the total order $<$ in the set $\{1,\ldots,\gm,\hat 1,\ldots,\hat\gn\}$, and different choices related to permutation of bosonic and fermionic indices lead to non-equivalent parameterisations of the system. Most of the properties that we shall discuss do not depend on such an order, and so we will often use an invariant description  based on labelling of the irreps with Young diagrams which is possible due to the supersymmetric version of the Schur-Weyl duality \cite{Sergeev1985,BERELE1987118}.

 \begin{figure}[t]
\begin{center}
\begin{picture}(220,160)(0,-140)
\thinlines
\color{gray}
\drawline(0,20)(0,-138)
\drawline(80,-100)(80,-138)
\drawline(0,20)(210,20)
\drawline(80,-100)(210,-100)
\color{blue!5}
\polygon*(0,20)(0,-100)(20,-100)(20,-80)(40,-80)(40,-40)(80,-40)(100,-40)(100,-20)(120,-20)(120,0)(160,0)(160,20)(0,20)
\color{gray}
\thinlines
\multiput(0,0)(0,-20){6}{\line(1,0){190}}
\multiput(0,20)(20,0){5}{\line(0,-1){150}}
\multiput(40,20)(20,0){8}{\line(0,-1){120}}
\multiput(0,-60)(0,-20){3}{\line(1,0){40}}

\color{black}
\thinlines
\put(0,0){
\dashline{4}(10,-78)(10,-95)
\put(10,-95){\vector(0,-1){5}}
\dashline{4}(10,-62)(10,-40)
\put(10,-45){\vector(0,1){05}}
\put(6,-73){$\hat\nu_1$}
\put(30,-75){\vector(0,-1){5}}
\put(30,-65){\vector(0,1){05}}
\put(26,-73){$\hat\nu_2$}
}
\dashline{4}(44,10)(-20,10)
\put(-15,10){\vector(-1,0){5}}                                                                                                                                                                                                                                                                                                                                                                                                                                                                                                                                                                                                                                                                                                                                                                                                                                                                                                                                                                                                                                                                                                                                                                                                                                                                                                                                                                                                                                                                                                                                                                                                                                                                                                                                                                                                                                                                                                                                                                                                                                                                                                                                                                                                                                                                                                                                                                                                                                                                                                                                                                                                                                                                                                                                                                                                                                                                                                                                                                                                                                                                                                                                                                                                                                                                                                                                                                                                                                                                                                                                                                                                                                                              \dashline{4}(58,10)(160,10)
\put(155,10){\vector(1,0){5}}
\put(45,7){$\hat\lambda_1$}
\dashline{4}(42,-10)(0,-10)
\put(5,-10){\vector(-1,0){5}}
\dashline{4}(57,-10)(120,-10)
\put(115,-10){\vector(1,0){5}}
\put(45,-13){$\hat\lambda_2$}
\dashline{4}(42,-30)(20,-30)
\put(25,-30){\vector(-1,0){5}}
\dashline{4}(57,-30)(100,-30)
\put(95,-30){\vector(1,0){5}}
\put(45,-33){$\hat\lambda_3$}
\put(47.5,-50){\vector(-1,0){7.5}}
\put(47,-53){$\hat\lambda_4\!=0$}

\color{BurntOrange}
\thinlines
\put(0,-20){
\put(35,-115){\vector(-1,0){35}}
\put(45,-115){\vector(1,0){35}}
\put(37.5,-117){$\gn$}
}
\put(-4,0){
\put(200,-35){\vector(0,1){55}}
\put(200,-45){\vector(0,-1){55}}
\put(196,-42){$\gm$}
}

\color{cadet}
\thicklines
\drawline(40,-60)(40,-40)(20,-40)(20,-20)(0,-20)(0,0)(-20,0)(-20,20)(-40,20)
\drawline(40,-60)(20,-60)(20,-40)(0,-40)(0,-20)(-20,-20)(-20,0)(-40,0)(-40,20)
\color{blue}
\thicklines
\drawline(0,-138)(0,-100)(20,-100)(20,-80)(40,-80)(40,-40)(80,-40)(100,-40)(100,-20)(120,-20)(120,0)(160,0)(160,20)(210,20)

\color{Black}
\put(0,-120){\circle*{3}}
\put(0,-100){\circle*{3}}
\put(20,-100){\circle*{3}}
\put(20,-80){\circle*{3}}
\put(40,-80){\circle*{3}}
\color{Red}
\put(40,-60){\circle*{3}}
\color{Black}
\put(40,-40){\circle*{3}}
\put(60,-40){\circle*{3}}
\put(80,-40){\circle*{3}}
\put(100,-40){\circle*{3}}
\put(100,-20){\circle*{3}}
\put(120,-20){\circle*{3}}
\put(120,0){\circle*{3}}
\put(140,0){\circle*{3}}
\put(160,0){\circle*{3}}
\put(160,20){\circle*{3}}
\put(180,20){\circle*{3}}
\put(200,20){\circle*{3}}

\color{cadet}
\put(-20,-20){
\put(100,-80){\vector(-1,1){40}}
\put(82,-58){\rotatebox{-45}{r}}
}
\end{picture}

\caption{\label{fig:ShortLong}\scriptsize One-to-one correspondence between shifted weights $(9,6,4,0|3,2)$ and a Young diagram with a marked point (red dot). Other points on the boundary (black dots) and hence other sets of shifted weights can be chosen. The outlined option is diagonally shifted from the corner of the $\gm|\gn$-hook that is linked to the $\glmn$ representation theory \cite{KAC19778}: the Young diagrams that fit into the hook exactly describe finite-dimensional irreps. The Young diagrams that touch the hook corner correspond to the so-called long (typical) irreps, and the diagrams that do not touch the corner correspond to the so-called short (atypical) irreps.}
\end{center}
\end{figure}
Young diagrams $\LD$ with a marked point $(\gm',\gn')$ on the boundary are in bijection with tuples of shifted~\footnote{by the Weyl vector and with the additional shift by $-1$ of $\hat\nu_i$ to get a symmetric description} weights $(\hat\lambda_1,\ldots,\hat\lambda_{\gm'}|\hat\nu_1,\ldots,\hat\nu_{\gn'})$. We choose the marked point to be $\gm'=\gm-r,\gn'=\gn-r$ for $r\in\mathbb{Z}_{\geq 0}$, the role of $r$ is to reduce diagonally the rank of the $\glmn$ algebra such that the inner corner of the fat hook attains the Young diagram boundary, see Figure~\ref{fig:ShortLong}. The explicit relation between the shifted weights and the shape of $\LD$ is
\be
\begin{array}{rccl}
\hat\lambda_a&=\IPart_a-a-\gn+\gm\,,&\quad&a=1,2,\ldots,\gm-r\,,
\\
\hat\nu_i&=\IPart'_i-i-\gm+\gn\,,&\quad&a=1,2,\ldots,\gn-r\,,
\end{array}
\ee
where $(\IPart_1,\IPart_2,\ldots)$ is the integer partition forming the shape $\LD$, $(\IPart'_1,\IPart'_2,\ldots)$ is the integer partition of the transposed diagram, and  $r=\min\limits_k(k|\IPart_{\gm-k}+k-\gn\geq 0)$.

Most of the results {\it do not depend} on the choice of the marked point, this is demonstrated in Section~\ref{sec:variouspar}. We made the choice of the diagonal reduction only for easier connections with the results already known in the literature.
\nl
For spin chains with twisted boundary conditions, and for generic diagonal twist, only the Cartan subalgebra of $\glmn$ is the symmetry of the system and states are then described using the fundamental weight. We define it as the tuple $\Lambda=[\lambda_1,\ldots,\lambda_\gm|\nu_1,\ldots,\nu_\gn]$, where $\CE_{aa}\,v= \lambda_{aa} v$, $\CE_{ii}\,v=\nu_i v\,.$
\nl
Dictated by the symmetry of the problem, we  introduce restrictions of the Hilbert space to the weight subspaces
\be
\label{eq:23}
V\supset\VTw \supset \VTl\,.
\ee
$\VTw$ is defined as the space of all vectors with fundamental weight $\Lambda$. Its dimension is given by the multinomial coefficient  ${L!}/({\prod_{1\le a\le \gm} \lambda_a!\prod_{1\le i\le \gn} \nu_i!})$. $\VTl$ is defined as the space of the highest-weight vectors for all irreps with Young diagram $\LD$ inside $V$. Its dimension is equal to the number of standard Young tableaux of shape $\LD$. For concreteness we choose the standard order $1<2<\ldots<\gm<\hat 1<\ldots<\hat \gn$ in which case the fundamental weight $\Lambda$ of the highest-weight vectors of the irrep $\LD$ is given by the rule
\be
\lambda_a=\IPart_a\,,\quad \nu_i={\rm max}(0,{\IPart'_i}-\gm)\,.
\ee
As is explained on page~\pageref{p65}, any other choice of the total order would lead to an isomorphic description and to the same conclusions albeit explicit realisation of $\VTl$ and certain related objects will be modified.

We shall use the notation $\VV$ to denote either $\VTl$ or $\VTw$ when discussion equally applies to both subspaces $\VTl$ and $\VTw$.
\subsection{Yangian}
\label{sec:Yangian}
The Yangian $\Yglmn$ is a quasi-triangular Hopf algebra with rational R-matrix. We summarise below its properties which will be relevant for us, see \eg \cite{molev2007yangians,Gow_thesis} for a more detailed discussion. 

Let $\sA,\sB\in\{1,\ldots,\gm\}\cup\{\hat 1,\ldots, \hat \gn\}$ and $k\in\{1,2,\dots\}$. The Yangian's generators $t_{\sA\sB}^{(k)}$ are collected, \via formal series in  $\hbar/u$
\be
t_{\sA\sB}(u)=\delta_{\sA\sB}\Id+\frac{\hbar}{u}t_{\sA\sB}^{(1)}+\ldots
\ee
into ``monodromies'' $t_{\sA\sB}(u)$ whose parity is equal to $\gA+\gB$.

Quasi-triangularity is an RTT-type relation that reads in component form as
\be
\label{eq:defYangian}
[t_{\sA\sB}(u),t_{\sC\sD}(v)\}=\frac{\hbar(-1)^{\gA\gB+\gA\gC+\gB\gC}}{u-v}\left(t_{\sC\sB}(u)t_{\sA\sD}(v)-t_{\sC\sB}(v)t_{\sA\sD}(u)\right)\,.
\ee
From Hopf algebra structures, we will only need the co-product
$\Delta(t_{\sA\sB}(u))=\sum_{\sC} t_{\sA\sC}\otimes t_{\sC\sB}$.
\newline
\newline
To realise the Yangian representation on the spin chain, consider first the evaluation homomorphism
\be\label{eq:ev1}
ev_{\inhom}:t_{\sA\sB}(u)\mapsto \delta_{\sA\sB}\Id+\hbar\,\frac{(-1)^{\gA}\EE_{\sA\sB}}{u-\inhom}\,,
\ee
where $\EE_{\sA\sB}$ are the $\glmn$ generators in the defining representation. Then, for $\inhom[]:=(\inhom[1],\ldots,\inhom[L])$, combine $L$ such maps by repetitively using the co-product
\be\label{eq:ev2}
ev_{\inhom[]}:t_{\sA\sB}\mapsto ev_{\inhom[1]}\otimes\ldots
\otimes ev_{\inhom[L]}\left(\sum_{\sC_1,\ldots,\sC_{L-1}}t_{\sA\sC_1}\otimes t_{\sC_1\sC_2}\otimes\ldots\otimes t_{\sC_{L-1}\sB}\right)\,.
\ee
We shall call \eqref{eq:ev2} the {\it spin chain representation} of the Yangian. 

This representation contains, in the first non-trivial coefficient of the $\hbar/u$ expansion, the global action of $\glmn$ on the spin chain defined in Section~\ref{sec:glmn}:
\be
\label{eq:globalaction}
ev_{\inhom[]}(t_{\sA\sB}(u))=\delta_{\sA\sB}\Id+(-1)^{\gA}\frac{\hbar}{u}{\CE_{\sA\sB}}+\ldots\,.
\ee

For $T_{\sA\sB}\equiv Q_{\theta}(u)t_{\sA\sB}(u)$, where  $Q_{\theta}(u)=\prod\limits_{\pa=1}^L(u-\inhom[\pa])$, $ev_{{\theta}}(T_{\sA\sB})\equiv Q_{\theta}\,ev_{{\theta}}(t_{\sA\sB})$ are {\it polynomials} in $u$ of degree at most $L$. Note that $ev_{{\theta}}(T_{\sA\sB})$ are also polynomials in $\inhom$. Construction \eqref{eq:ev2} corresponds to the graphics commonly used to define the monodromy matrix of a spin chain from Lax operators: $ev_{\inhom[]}(T_{\sA\sB})=\!\!
\begin{tikzpicture}[baseline=-.1cm,scale=.5] 
\foreach\x/\i/\pos/\dist in {1/1/right/-.1cm,2/2/right/-.1cm,4/L/left/-.16cm} {\fill (\x,0) circle (4pt);
\draw (\x,.5) -- (\x,-.5) node [\pos=\dist] {\tiny $\inhom[\i]$};}
\draw (0.5,0)  node [below=-1pt] {\tiny \ensuremath{\alpha}} -- (2.5,0) (3.5,0) -- (4.5,0)node [below=-1.5pt] {\tiny \ensuremath{\beta}};
\draw [ dotted, thick] (2.6,0)--(3.4,0);
\end{tikzpicture}
$, and $\inhom$, $\pa=1,\ldots, L$ are commonly known as the spin chain {\it inhomogeneities}.

We will use $T_{\sA\sB}=Q_{\theta}\,t_{\sA\sB}$ to denote both the Yangian generators and their images $ev_{{\theta}}(T_{\sA\sB})$. The context shall make it clear which meaning is being used. Note that $ev_{{\theta}}$ is not a faithful map and hence not all algebra-level results subdue the representation-level properties.

For the discussion of this paper, it will be often important to consider $\inhom$ as unevaluated commuting variables. When it is the case, the image of the map $ev_{\inhom[]}$ are  endomorphisms with polynomial coefficients~\footnote{Here we allow freedom of speech and consider $ev_{\theta}(\Yglmn)$ in the sense of $ev_{\inhom[]}(T_{\sA\sB})$.}:
\be\label{eq:steps}
ev_{\theta}:\Yglmn\longrightarrow (\End(\mathbb{C}^{\gm|\gn}))^{\otimes L}\otimes\CC[\theta]\,,
\ee
where $\CC[\theta]\equiv \CC[\inhom[1],\inhom[2],\ldots,\inhom[L]]$ is the polynomial ring in variables $\inhom$. Such operators naturally act on $\lV:=V\otimes\CC[\theta]$. Such a description also appears in the context of a Hecke algebra, see page~\pageref{pdAHA}.

If we are interested in inhomogeneities having particular numerical values -- in which case we typically denote them by $\binhom$ -- then we get a representation of the Yangian in a more standard sense
\be\label{eq:steps2}
ev_{\binhom[]}:\Yglmn\longrightarrow (\End(\mathbb{C}^{\gm|\gn}))^{\otimes L}\,.
\ee
If we need to emphasise that \eqref{eq:steps2} but not \eqref{eq:steps} is being used, we shall refer to \eqref{eq:steps2} as the {\it spin chain representation at point} $\binhom[]\equiv (\binhom[1],\ldots,\binhom[L])$.

\subsection{Bethe algebra}
The below-defined Bethe algebra $\CB$ is a commutative subalgebra of $\Yglmn$ which depends on a constant $\GL(\gm|\gn)$ group matrix $G$ dubbed twist. We restrict ourselves to the case  when $G$ is diagonalisable and furthermore choose a reference frame that diagonalises $G$, so $\CB$ actually depends only on the eigenvalues~\footnote{Matrices of the $\GL(\gm|\gn)$ group have entries belonging to a Grassmann algebra, hence their eigenvalues are in principle not complex numbers. Our discussion will assume that twists are complex numbers nevertheless. One can then check that the results still hold for any twists of type $x_a=A_a+n_a$, $y_i=B_i+n_i$ where $A,B\in\mathbb{C}$ and $n$ - even nilpotent elements of the Grassmann algebra, assuming that $A_a,B_i$ are pairwise distinct.} $x_1,\ldots, x_\gm$, $y_1,\ldots, y_\gn$ of $G$. We will consider only two opposing cases: of generic twist, when all $x_a,y_i$ are distinct, and of no-twist when $G=\Id$. Considering intermediate cases is possible but combinatorially bulky~\footnote{Analytic structure of Q-functions for partially degenerate twists, which is an essential ingredient for the completeness statements, was explored in detail in \cite{Kazakov:2015efa}.}. 

The Bethe algebra $\CB$ is defined as the algebra that is polynomially generated by the transfer matrices $\wT_{\mu}$ in covariant representations of $\glmn$ labelled with integer partitions  or equivalently Young diagrams $\mu$. By ``polynomially generated'' we mean that elements of $\CB$ are finite degree polynomials in $\hat d_k$ --  coefficients of the ({\it a priori} formal) $\hbar/u$ expansion  $\wT_{\mu}=\chi_{\mu}(G) u^{L|\mu|}(\Id+\hat d_1\frac{\hbar}{u}+\ldots)$, where $\chi_{\mu}(G)$ is the character of $G$ in representation $\mu$. 

Transfer matrices $\wT_{\mu}$ can be constructed using fusion from $T_{\sA\sB}$ \cite{Kulish1986,Zabrodin:1996vm} and hence are defined on the level of the Yangian as well as its representation $ev_{\theta}$. When we descend to the representation level, $\wT_{\mu}(u)$ is a degree-$L|\mu|$ polynomial in $u$, so the  $\hbar/u$ expansion truncates.

Let $(1^a)$ denote the Young diagram consisting of one column of height $a$, and $(s)$ -- the Young diagram consisting of one row of width $s$. To avoid discussing fusion in detail, we note that the $\wT_{\mu}$, and hence the Bethe algebra, can be polynomially generated from $\wT_{(1^a)}$ or from $\wT_{(s)}$, $a,s=1,2,\ldots$ using the determinant Cherednik-Bazhanov-Reshetikhin (CBR) formula \cite{Cherednik1987,Bazhanov:1989yk,Kazakov:2007na}  -- the Yangian version of Jacobi-Trudi  identities for characters $\chi_{\mu}(G)$, while $\wT_{(1^a)}$ and  $\wT_{(s)}$ are compactly defined through monodromies $T_{\sA\sB}$ as \cite{Molev2009}
\begin{subequations}
\label{eq:genseries}
\begin{align}
\Ber\left[\Id-\,\DD\,T(u)G\,\DD\right] &= \sum_{a=0}^{\infty} (-1)^a\DD^a\,\wT_{(1^a)}(u)\,\DD^a\,,
\\
\frac 1{\Ber\left[\Id-\,\DD\,T(u)G\,\DD\right]} &= \sum_{s=0}^{\infty} \DD^s\,\wT_{(s)}(u)\,\DD^s\,,
\end{align}
\end{subequations}
where $\DD\equiv e^{-\frac 12\hbar\partial_u}$.

The \lhs of \eqref{eq:genseries} is defined as follows. For $\M=\Id-\DD\,T G\,\DD$, introduce the notation $\M=\mtwo{\ABCD{\M}A}{\ABCD{\M}B}{\ABCD{\M}C}{\ABCD{\M}D}$ and $\M^{-1}= \mtwo{\ABCD{\M^{-1}}A}{\ABCD{\M^{-1}}B}{\ABCD{\M^{-1}}C}{\ABCD{\M^{-1}}D}$, where $\ABCD{\M}A$ is the $\gm\times \gm$ block of $\M$, $\ABCD{\M^{-1}}D$ is the $\gn\times\gn$ block of $\M^{-1}$ {\it etc.} Then the Berezinian is defined as $\Ber(\M)=\det \ABCD{\M}A\det \ABCD{\M^{-1}}D$, where the determinant of the block ``$A$'' (resp. ``$D$'') is defined through a column-ordered (resp. line-ordered) expansion
{\it e.g.}  $\det \ABCD{\M}A=\epsilon_{a_1\ldots a_\gm}(\ABCD{\M}A)_{a_11}\ldots (\ABCD{\M}A)_{a_\gm \gm}$.  Although $\M$ is a matrix with non-commutative entries $\M_{\sA\sB}$, the entries satisfy the (supersymmetric version of) Manin relations $[\M_{\sA\sB},\M_{\sC\sD}\}=(-1)^{\gA\gB+\gA\gC+\gB\gC}[\M_{\sC\sB},\M_{\sA\sD}\}$ which ensure that the above-defined Berezinian can only change sign if columns/rows of $\M$ are permuted. From earlier works, we mention that Berezinians in the context of $\Yglmn$ were introduced in \cite{Nazarov1991}, and generalise similar constructions in the $\Ygln$ case \cite{Nazarov_1996, Talalaev:2004qi, Chervov2008}, see also \cite{molev2007yangians}.

The physical Hamiltonian of the system is an element of the Bethe algebra and it is usually chosen to be $\mathbb{H}=\partial_u\log\wT_{(1)}(u)|_{u=0}$.  The algebraic equivalent of the statement that the Bethe algebra contains all commuting charges is the statement that it is a maximal commutative subalgebra that contains $\mathbb{H}$. The question about maximality can be asked on the level of the Yangian algebra or of its spin chain representation. On the level of the algebra, in the bosonic $\Ygln$ case, polynomial combinations of $\wT_{(1^a)}$ indeed generate, for non-degenerate twist, a maximal commutative subalgebra of $\Ygln$ \cite{Nazarov_1996, molev2007yangians} but it seems an equivalent statement was not proved for the supersymmetric case. To our knowledge, a comprehensive study of the Bethe algebra on the Yangian algebra level is still lacking.

However, our goal is to describe the Bethe algebra represented on the spin chain in which case it can be understood much better. In particular, the quantum Berezinian defined
\be
\qBer\equiv \DD^{(\gn-\gm)}\Ber\left[\DD\,T(u)G\,\DD\right]\DD^{(\gn-\gm)}
\ee
and which is known to generate the center of the Yangian \cite{Nazarov1991,Gow_2007} can be expressed, at least on the level of representation, as a ratio of transfer matrices $\qBer\propto
\frac{\wT_{((\gn+1)^{\gm})}}{\wT_{(\gn^{\gm+1})}}$ \cite{Maillet:2019ayx} and hence, by Hamilton-Cayley, belongs to the Bethe algebra.

It will be one of our results that, under a mild sufficient assumption on $\inhom[]$ ($\inhom+\hbar\neq\inhom[\pa']$ for $\pa<\pa'$), the Bethe algebra is a maximal commutative subalgebra of $\End(V)$ -- the algebra of all linear transformations of the spin chain Hilbert space.  %

\subsection{Q-operators}
Although there are infinitely many $\wT_{(1^a)}$ and $\wT_{(s)}$  in the expansions \eqref{eq:genseries}, we need finitely many functions of $u$ to generate the Bethe algebra. This can be seen for instance by analysing the CBR formula. On the level of representation, probably the most economic way is to express transfer matrices through Baxter Q-operators that were explicitly constructed as operators acting on the supersymmetric spin chain in \cite{Belitsky:2006cp,Bazhanov:2008yc,Kazakov:2010iu,Frassek:2010ga,Tsuboi:2019vvv}\footnote{For constructions of Q-operators for bosonic spin chains, see eg \cite{Bazhanov:1998dq,Derkachov:2008aq,Bazhanov:2010ts,Bazhanov:2010jq,Frenkel:2013uda}. }. The Q-operators are not elements of the Yangian, but they do belong to the Bethe algebra in the representation $ev_{\theta}$, in particular they are matrices whose coefficients are polynomials in $\inhom$, see Appendix~\ref{sec:q-belongs-bethe}.

The Q-operators generate the Bethe algebra as follows~\footnote{Equation \eqref{eq:WronskianwT} expresses $\wT$  for the so-called rectangular representations, where the Young diagram $(s^a)\equiv (\smash{\underbrace{s,s,s,\dots,s}_{a\textrm{ times}}})$ is of rectangular shape; representations $(1^a)$, $(s)$ are special subcases. \linebreak[4]\rule{0pt}{0pt} \hspace{8em}\hfill  Generalisation to arbitrary representations is known \cite{Tsuboi:2011iz}.} \cite{Krichever:1996qd,Gromov:2010km,Tsuboi:2011iz}
\begin{align}\label{eq:WronskianwT}
  \wT_{(s^a)}&=u^{a\,s\,L} \chi_{a,s}(G)(\Id +\hat d_1\frac{\hbar}{u}+\ldots)\notag\\&\propto
 \frac 1{{Q_{\bar\emptyset|\bar\emptyset}^{[a-s]}}
}  \prod\limits_{k=1}^a\prod\limits_{l=1}^s \frac {{Q_{\bar\emptyset|\bar\emptyset}}^{[a+s+2-2k-2l]}}{(\Ber G)^{u/\hbar}}\times
  \begin{cases}
  \displaystyle
      {{\epsilon^{b_1\dots b_\gm}}{} Q_{b_1\dots b_a|\emptyset}^{[\gm-\gn+s]}Q_{b_{a+1}\dots b_\gm|\smash{\bar\emptyset}}^{[-s]}}     \,, & s\ge a-\gm+\gn\\
      \displaystyle
      {{\epsilon^{i_1\dots i_\gn}}{} Q_{\emptyset\vphantom{\bar\emptyset}|i_1\dots i_s}^{[\gm-\gn-a]}Q_{\bar\emptyset|i_{s+1}\dots i_{\gn}}^{[+a]}}
     \,,  & a\ge s+\gm-\gn
  \end{cases}\,,
\end{align}
where $\epsilon$ denotes the Levi-Cevita antisymmetric tensor, summation over repeated indices is performed, and the $\propto$ symbol involves a proportionality factor which is identified by imposing that the coefficient of  the highest degree of $\wT_{(s^a)}$ (as a polynomial in $u$) is the character $\chi_{(s^a)}(G)$.

\label{page:Qsys}
In total, there are $2^{\gm+\gn}$ Q-operators. They are  labelled as $Q_{A|I}$, where $A$ is a multi-index from $\{1,\ldots,\gm\}$ and $I$ is a multi-index from $\{1,\ldots,\gn\}$. $Q_{A|I}$ are anti-symmetric \wrt permutations in $A$ and $I$, and polynomial up to an exponential prefactor (as in \ref{eq:WronskianBethe}):
\begin{equation}
  \label{eq:Qq}
  Q_{A|I}\propto \frac{\prod_{a\in A} x_a^{u/\hbar}}{\prod_{i\in I} y_i^{u/\hbar}}\, q_{A|I}\,,
\end{equation}
where the proportionality factor~\footnote{One can set the proportionality factor to be equal to one when $|A|+|I|\le 1$, which fixes this factor for other values of $A$ and $I$ due to the relations \eqref{QQrelations}. It is explicitly spelled out in \eg \cite{Kazakov:2015efa}.} in ``$\propto$'' is fixed by the condition that each $q_{A|I}$ is a monic polynomial in the variable $u$.

The Q-operators satisfy the following QQ-relations
\begin{subequations}
\label{QQrelations}
\be
Q_{Aab|I}Q_{A|I}=W(Q_{Aa|I},Q_{Ab|I})\,,
\\
\label{QQferm} Q_{Aa|I}Q_{A|Ii}=W(Q_{Aa|Ii},Q_{A|I})\,,
\\
Q_{A|Iij}Q_{A|I}=W(Q_{A|Ii},Q_{A|Ij})\,.
\ee
\end{subequations}
Furthermore, the two operators $Q_{\es|\es}$ and $Q_{\fs|\fs}$ are central: on the one hand $Q_{\fs|\fs}$ is by construction a multiple of identity operator (hence a central element), and one can note that (\ref{eq:WronskianwT}) gives\footnote{There exists a different choice of normalizations that simplifies this expression to $\qBer=Q_{\fs|\fs}^-/Q_{\fs|\fs}^+$.}
\begin{align}
\qBer\propto
\frac{\wT_{((\gn+1)^{\gm})}}{\wT_{(\gn^{\gm+1})}}\propto
  \begin{cases}
\frac{Q_{\bar\emptyset|\bar\emptyset}^{[-\gm+\gn+1]}Q_{\bar\emptyset|\bar\emptyset}^{[-\gm+\gn+3]}\dots Q_{\bar\emptyset|\bar\emptyset}^{[+\gm-\gn+1]}}{(\Ber G)^{(\gm-\gn)u/\hbar}Q_{\bar\emptyset|\bar\emptyset}^{[\gm-\gn-1]}}&\textrm{ if }\gm\ge\gn\,,\\
\frac{(\Ber G)^{(\gn-\gm)u/\hbar} Q_{\bar\emptyset|\bar\emptyset}^{[\gm-\gn+1]}}{Q_{\bar\emptyset|\bar\emptyset}^{[-\gm+\gn-1]}Q_{\bar\emptyset|\bar\emptyset}^{[-\gm+\gn-3]}\dots Q_{\bar\emptyset|\bar\emptyset}^{[+\gm-\gn-1]}}&\textrm{ if }\gn\ge\gm\,.
  \end{cases}
\end{align}
The statement that $\qBer$ is in the center of the Yangian hence reflects the property that $Q_{\fs|\fs}$ itself is central.

On the other hand $Q_{\es|\es}=1$, which allows to write all $Q$-operators explicitly in terms of $Q_{a|\es}$, $Q_{\es|i}$, and $Q_{a|i}$ \cite{Tsuboi:2009ud,Kazakov:2015efa}:
\be\label{eq:qqforms}
Q_{(\mathsf{a}|\mathsf{a}+\mathsf{c})}\propto (Q_{(1|1)}^{[\mathsf{c}]})^{\mathsf{a}}Q_{(0|1)}^{[\mathsf{c}-1]}\ldots Q_{(0|1)}^{[1-\mathsf{c}]}\,,\quad Q_{(\mathsf{a}+\mathsf{c}|\mathsf{a})}\propto (Q_{(1|1)}^{[\mathsf{c}]})^{\mathsf{a}}Q_{(1|0)}^{[\mathsf{c}-1]}\ldots Q_{(1|0)}^{[1-\mathsf{c}]}\,,
\ee
For compactness, we used exterior forms $Q_{(\mathsf{a}|\mathsf{b})}=\frac 1{{\mathsf{a}}!\mathsf{b}!}Q_{a_1\ldots a_{\mathsf{a}}|i_1\ldots i_{\mathsf{b}}}\psi_{0}^{a_1}\ldots \psi_0^{a_{\mathsf{a}}}\psi_{1}^{i_1}\ldots\psi_{1}^{i_{\mathsf{b}}}$ with $\psi_0^a,\psi_1^{i}$ being auxiliary Grassmann variables.

We shall need the expression for $Q_{\fs|\fs}\equiv Q_{1\ldots \gm|1\ldots \gn}$ which explicitly is the following determinant
\begin{equation}
\label{susyw}
Q_{\fs|\fs}(u)=
(-1)^{\gn(\gm-\gn)}\begin{vmatrix}
Q_{1|1}^{[\gm-\gn]} & \cdots & Q_{1|\gn}^{[\gm-\gn]} & Q_{1|\es}^{[\gm-\gn-1]} & Q_{1|\es}^{[\gm-\gn-3]} & \cdots & Q_{1|\es}^{[-(\gm-\gn)+1]}\\
\vdots & ~ & \vdots & \vdots & \vdots & ~ & \vdots\\
Q_{\gm|1}^{[\gm-\gn]} & \cdots & Q_{\gm|\gn}^{[\gm-\gn]} & Q_{\gm|\es}^{[\gm-\gn-1]} & Q_{\gm|\es}^{[\gm-\gn-3]} & \cdots & Q_{\gm|\es}^{[-(\gm-\gn)+1]}
\end{vmatrix}
\,,
\end{equation}
for the case $\gm\geq \gn$; a similar expression can be written also for $\gm>\gn$.

One should also note that the coefficients $Q_{a|\es}$, $Q_{\es|i}$,
and $Q_{a|i}$ are related by \eqref{QQferm}:
\begin{equation}
Q_{a|i}^+-Q_{a|i}^-= Q_{a|\es}Q_{\es|i}\,.
\label{sport}
\end{equation}

\subsection{Quantisation condition (Wronskian Bethe equations)}
\label{sec:parameterisation}
The essential property for the description of the Bethe algebra is the explicit analytic structure of Q-operators (a.k.a. rational analytic Bethe Ansatz) which is known from results of \cite{Bazhanov:2008yc,Kazakov:2010iu,Bazhanov:2010jq}, and can be also derived using the logic of Appendix~\ref{sec:q-belongs-bethe}.

The definition of Q-operators depends on a gauge choice, see \eg \cite{Gromov:2014caa} for details. Below we write expressions in one particular gauge which is suitable for our goals~\footnote{Apart from the gauge choices, there also exists several other discrepancies in labelling conventions across the literature. First, an arrangement in the spectral parameter can be present. Second, the role of  the Q-operators and their Hodge duals $Q^{A|J}\propto \varepsilon^{AA'}\varepsilon^{JJ'}Q_{JJ'}$ can be swapped. Third, a permutation of indices $1\ldots \gm|1\ldots \gn$ can be used.}.

For the central element $Q_{\fs|\fs}$ it is possible to directly compute its explicit value  which in the gauge of our choice becomes
\be
\label{eq:Qfs}
Q_{\fs|\fs}(u)\propto  (\Ber G)^{u/\hbar} Q_{\theta}(u)\,.
\ee
This property is an important aspect of the Bethe algebra and it is essentially equivalent to the set of  Bethe equations as it will become clear below.

We set $\Ber G=1$ for convenience. It only affects the overall normalisation of transfer matrices and hence is inessential.
\newline
\newline
To write expressions for the other Q-operators, we need to restrict  the representation space to a certain subspace. The generic twist and twist-less cases should be treated separately.

\subsubsection*{Twisted case} 
The Cartan generators $\CE_{\sA\sA}$ of the global $\glmn$ action \eqref{eq:globalaction} commute with $\CB$ (and belong to it). To describe analytic properties of the Q-operators in a useful manner, we need to restrict to an eigenspace of $\CE_{\sA\sA}$ which is the weight space $\VTw$ defined after \eqref{eq:23}. The Bethe algebra restricted to this subspace shall be denoted as
\be
\BAL:=\BA|_{\VTw}\,.
\ee
Upon restriction to $\VTw$, the polynomial operators $q_{A|I}$ of (\ref{eq:Qq}) read \be
\label{qAJ}
q_{A|J}=u^{M_{A|J}}+\sum_{k=1}^{M_{A|J}}\hat c_{A|J}^{(k)}u^{M_{A|J}-k}\,.
\ee
These are monic polynomials of degree $M_{A|J}$ with operator-valued coefficients $\hat c_{A|J}^{(k)}$. The diagonalisation of $\hat c_{A|J}^{(k)}$ is the subject of a Bethe Ansatz.
The degree $M_{A|J}$ has fixed value on each $\VTw$, which can for instance be identified by explicit computations following Appendix~\ref{sec:q-belongs-bethe} (see \cite{Kazakov:2010iu}), they have the following expression in terms of the fundamental weight $\Lambda=[\lambda_1,\ldots,\lambda_{\gm}|\nu_1,\ldots,\nu_{\gn}]$ \begin{align}
\label{degree}
M_{A|J}&=\sum_{a\in A}\lambda_a+\sum_{j\in J}\nu_j\,,
\end{align}
and can be interpreted as the ``magnon'' numbers of the nested Bethe Ansätze.

For generic twist, \eqref{sport} is a non-degenerate system of linear equations in the coefficients $\hat c_{a|i}^{(k)}$ that fixes $\hat c_{a|i}^{(k)}$ and thus $Q_{a|i}$ uniquely. Hence all the Q-operators are generated by the single-index Q-functions $Q_{a|\es}$, $Q_{\es|i}$. There are precisely $L$ coefficients $\hat c_{a|\es}^{(k)}$ and $\hat c_{\es|i}^{(k)}$ as one can quickly conclude form \eqref{degree} and the invariant value of the total charge $\sum\limits_{a=1}^{\gm}\lambda_a+\sum\limits_{j=1}^{\gn}\nu_j=L\,.$

We can use $C_{\Lambda}$ -- the set of all coefficients $\hat c_{a|\es}^{(k)}$ and $\hat c_{\es|i}^{(k)}$ -- to polynomially generate $Q_{\fs|\fs}$ using \eqref{susyw}. This operation is a supersymmetric generalisation of the Wronskian determinant in \eqref{eq:WronskianBethe} and shall be denoted as $\SWe(C_{\Lambda})(u)$,
\begin{equation}
\label{mastereq}
\fbox{
$
\displaystyle
\SWe(C_{\Lambda})(u)=\prod_{\pa=1}^L(u-\inhom[\pa])
$   
}\,.
\end{equation}
Note that we chose the normalisation of $\SWe$ such that the leading $u^L$ term is monic.

We call \eqref{mastereq} the quantisation condition or the Wronskian Bethe equations (WBE). Its important feature is to provide exactly $L$ equations on $L$ variables (elements of the set $C_{\Lambda}$) and to contain $L$ free parameters $\inhom$. We shall denote this system of $L$ equations as
\be
\label{mastereq2}
\SWe_\pa(c) =\sepa\,,\quad \pa=1,2,\ldots,L\,,
\ee
where $\sepa$ are elementary symmetric polynomials of $\inhom[1],\inhom[2],\ldots,\inhom[L]$. Dependence on inhomogeneities only through their symmetric combinations $\sepa$ will be very important in our studies. Quite often, we will consider $\sepa$ as independent variables instead of inhomogeneities.

We shall consider the quantisation condition as an equation both on the level of operators denoted uniformly as $\hat c_{\pa}$, $\pa=1,\ldots, L$ and on the level of abstract variables denoted as $c_{\pa}$. We shall show eventually that {\it any} $c_{\pa}$ solving \eqref{mastereq} provides eigenvalues for $\hat c_{\pa}$'s.

\begin{example}
Consider a $GL(3)$ spin chain of length $L=3$, and the weight subspace $\VTw$  with $\Lambda=[2,1,0]$.  The Q-system is parameterised by
\be
Q_1 &\propto& x_1^{u/\hbar}\times (u^2+\hat c_{1}^{(1)}u+\hat c_{1}^{(2)})\,,\nonumber\\
Q_2 &\propto& x_2^{u/\hbar}\times (u+\hat c_2^{(1)})\,,\nonumber\\
Q_3 &\propto& x_3^{u/\hbar}\,.
\ee
$\SWe\propto W(Q_1,Q_2,Q_3)=\det\limits_{1\leq a,b\leq 3}Q_a(u+\hbar (2-b))$.
Set for simplicity $x_3=3,x_2=2,x_1=1$, then \eqref{mastereq2} becomes explicitly
\begin{align}
  &\mathllap{(\hat c_1^{(2)}\hat c_2^{(1)}}-\hbar(\hat c_1^{(2)}+\frac 5 2\hat c_1^{(1)}\hat c_2^{(1)})+\frac {\hbar^2}2(9
\hat c_1^{(1)}+7\hat c_2^{(1)})-\frac {15} 2 \hbar^3)
  &&&  &-\inhom[1]\,\inhom[2]\,\inhom[3] \nonumber\\
+ u\times& (\hat c_1^{(2)}+\hat c_{1}^{(1)}c_2^{(1)}+\hbar (-\frac 7 2 \hat c_1^{(1)}-5\hat c_2^{(1)})+\frac{25}2\hbar^2 )&=&& +u\times&(\inhom[1]\inhom[2]+\inhom[1]\inhom[3]+\inhom[2]\inhom[3])\nonumber\\
+u^2\times& (\hat c_1^{(1)}+\hat c_2^{(1)}-6\hbar) &&& +u^2\times& (-\inhom[1]-\inhom[2]-\inhom[3])\nonumber\\
+u^3& &&& +u^3&\,
\end{align}
which yields us three equations satisfied by $c_{1}^{(1)},c_{1}^{(2)},c_{2}^{(1)}\,,$ both on the level of operators and their eigenvalues.  

Counting the number of solutions is easy in this example: one can derive a cubic equation on $c_{2}^{(1)}$ with a cubic term that never vanishes and furthermore we observe that $c_{1}^{(1)},c_{1}^{(2)}$ follow uniquely if we fix the value of $c_{2}^{(1)}$. So there are always three solutions which is the dimension of $\VTw$. For generic values of $\inhom[\pa]$, all solutions are distinct.

\end{example}
\begin{example}
Consider a $\gls{1|1}$ spin chain of length $L=3$, and $\VTw$ with $\Lambda=[\lambda|\nu]=[2|1]$. We parameterise the Q-system by
\be
Q_{1|\es} &\propto& x^{+u/\hbar}\times (u^2+\hat c_{1|\es}^{(1)}u+\hat c_{1|\es}^{(2)})\,,\nonumber\\
Q_{\es|1} &\propto& y^{-u/\hbar}\times (u+\hat c_{\es|1}^{(1)})\,.
\ee
In this case $Q_{\theta}=\SWe\propto Q_{1|1}$, and one needs to compute $Q_{1|1}$ from the finite-difference equation $Q_{1|1}(u+\hbar/2)-Q_{1|1}(u-\hbar/2)=Q_{1|\es}Q_{\es|1}$ which supplies the equations on $c_{1|\es}^{(1)},c_{1|\es}^{(2)}$, and $c_{\es|1}^{(1)}$. For $x=3,y=1$ they are
\be
\label{explsu11twist}
&&u^3-u^2(\se1-3\hbar)+u(\se2-2\se1\hbar+\frac 34\hbar^2)-(\se3-\se2\hbar+\frac 14\se1\hbar^2-\frac 14\hbar^3) \nonumber\\
&&=
(u^2+ c_{1|\es}^{(1)}u+ c_{1|\es}^{(2)})(u+ c_{\es|1}^{(1)})\,,
\ee
where $\se1=\inhom[1]+\inhom[2]+\inhom[3]$, $\se2=\inhom[1]\inhom[2]+\inhom[1]\inhom[3]+\inhom[2]\inhom[3]$, $\se3=\inhom[1]\inhom[2]\inhom[3]$.

Counting solutions for this example is even simpler, and it is a good demonstration of when a supersymmetric system can be advantageous for finding the spectrum of the Bethe algebra. The \lhs of \eqref{explsu11twist} is a degree-three polynomial with all coefficients known through the parameters of the theory. It has three roots, and one choses which of these roots is  %
$-c_{\es|1}^{(1)}$ which fixes $c_{\es|1}^{(1)}$. Values for other $c$'s follow. Hence there are three solutions  which is indeed the dimension of the weight subspace.

\end{example}

\subsubsection*{Twist-less case}
\label{sec:twistlesscase}

For the twist-less case, the symmetry of the system is enhanced as all the generators $\CE_{\sA\sB}$ commute with the Bethe algebra. Now, the Cartan subalgebra of $\glmn$ does not belong to the Bethe algebra and so the latter acting on the spin chain is definitely not maximal commutative. However, if we restrict ourselves to the weight subspace $\VTl$, maximal commutativity on this subspace will follow from completeness.
 \be
\BAL^+:=\BA|_{\VTl}\,.
\ee
We will typically drop the superscript $^+$ and denote the restricted Bethe algebra as $\BAL$.
 
The Q-operators were constructed in \cite{Belitsky:2006cp,Bazhanov:2008yc,Kazakov:2010iu,Frassek:2010ga,Tsuboi:2019vvv} for the case of generic twist. Taking the twist-less limit is quite a tricky procedure \cite{Bazhanov:2010ts} which was analysed substantially in sections 3.3 and 3.4 of \cite{Kazakov:2015efa}. The result of this analysis is that the below-presented properties that define the twist-less Q-system remain true at the level of operators.
 
\paragraph{Long representations}
Consider first a situation when $\Lambda$ is a long representation of $\glmn$. The Young diagram of such a representation touches the corner of the fat hook, consider for instance the situation with $\gm\to \gm-r, \gn\to \gn-r$ in Figure~\ref{fig:ShortLong}.
 
In this case, the Q-operators $Q_{A|J}$ are polynomials in $u$ of degree $M_{A|J}$~\footnote{For comparison with other literature, it might be needed to relabel Q-functions using the maps $a\to\gm+1-a, i\to\gn+1-i$. One checks the notation by asking for which $a,i$ $Q_{a|i}$ is a polynomial of the smallest degree. In our conventions, it is $Q_{\gm|\gn}$.},
\begin{align}
\label{eq:poldegree}
M_{A|J}&=\sum_{a\in A}\hat\lambda_a+\sum_{j\in J}\hat\nu_j-\frac{|A|(|A|-1)}{2}-\frac{|J|(|J|-1)}{2}+|A||J|\,,
\end{align}
where $\hat\lambda_a,\hat\nu_i$ are the shifted weights defined as in Figure~\ref{fig:ShortLong}.

We will need mostly
\be
\label{eq:221}
M_{a|\es}=\hat\lambda_a\,,\quad M_{\es|i}=\hat\nu_i\,,\quad M_{a|i}=\hat\lambda_a+\hat\lambda_i+1\,.
\ee
 
A few modifications have to be made to obtain an equivalent of \eqref{mastereq}. First, we notice that \eqref{sport} fixes $Q_{a|i}$ only up to an additive constant and hence  $\hat c_{a|i}^{(M_{a|i})}$ are new independent parameters used in the computation of $Q_{\fs|\fs}$. Second, the computation of $Q_{\fs|\fs}(u)$ and of other physically relevant quantities such as transfer matrices is invariant under the transformations 
\be
\label{eq:225}
Q_{\es|i}\rightarrow Q_{\es|i}+\alpha\, Q_{\es|j}\,,\quad Q_{a|\es}\rightarrow Q_{a|\es}+\beta\, Q_{b|\es}\,.
\ee 
We impose inequalities $i\geq j$ and $a\geq b$ to preserve the polynomial degrees \eqref{eq:221}. We fix these symmetry transformations by putting to zero all the coefficients $c_{\es|i}^{(\hat\nu_j)}$ for $j\leq i$ and $c_{a|\es}^{(\hat\lambda_b)}$ for $b\leq a$.  

A straightforward counting shows that the set consisting of all non-zero $c_{a|\es}^{(k)},c_{\es|i}^{(k)}$ combined together with  $c_{a|i}^{(M_{a|i})}$ gives us exactly $L$ variables. Denote this set by $C_{\Lambda}$. $Q_{\fs|\fs}\propto Q_{\theta}$ is unambiguously and polynomially reconstructed from $C_{\Lambda}$ according to \eqref{susyw} supplemented with \eqref{sport}, we denote the corresponding operation again as  $\SWe(C_{\Lambda})(u)$ though explicit polynomial realisation of $\SWe$ is different now.

In this modified setting, \eqref{mastereq} holds.
  
 \begin{example}
Consider a $\gls{3}$ spin chain of length $L=3$, and consider states in the representation $\LD=\twoone$. By the recipe of Figure~\ref{fig:ShortLong}, $\hat\lambda_1=4,\hat\lambda_2=2,\hat\lambda_3=0$. Then one generates the Bethe algebra by three Q-functions
\begin{subequations}
\label{Qsgl3}
\be
Q_1 &=& u^4+\hat c_{1}^{(1)}u^3+\hat c_{1}^{(3)}u\,,\nonumber\\
Q_2 &=& u^2+\hat c_{2}^{(1)}u\,,\nonumber\\
Q_3 &=& 1\,.
\ee
\end{subequations}
We fixed $c_{1}^{(2)}=c_{1}^{(4)}=c_{2}^{(1)}=0$ using symmetries of the system. The Wronskian condition \eqref{mastereq} which is $\det\limits_{1\leq a,b\leq 3}Q_a(u+(2-b)\hbar)\propto Q_{\theta}$ provides three equations to be satisfied by $c_1^{(1)},c_{1}^{(3)},c_2^{(1)}$:
\begin{align}
c_{1}^{(3)}-\hbar^2(c_{2}^{(1)}-c_{1}^{(1)})&=8\se3\,,&
3c_{1}^{(1)}c_2^{(1)}-2\hbar^2 &= 8\se2\,,&
3c_{1}^{(1)}+6 c_{2}^{(1)}&= -8\se1\,,
\end{align}
it has two solutions.
\end{example}
 \begin{example}
Consider a $\gls{1|1}$ spin chain of length $L=3$, again in the representation $\LD=\twoone$. By the recipe of Figure~\ref{fig:ShortLong}, $\hat\lambda=1,\hat\nu=2$. Then we use \eqref{eq:poldegree} to deduce the degree of Q-functions and get
\begin{subequations}
\label{Qsgl11}
\be
Q_{1|0} &=& u+\hat c_{1|0}\,,\nonumber\\
Q_{0|1} &=& u+\hat c_{0|1}\,.
\ee
\end{subequations}
Equation $Q_{1|1}^+-Q_{1|1}^-= Q_{1|0}Q_{0|1}$ provides $Q_{1|1}$ up to an additive constant $c_{1|1}$,
\be
Q_{1|1}\propto u^3+\frac 32(\hat c_{1|0}+\hat c_{0|1})u^2+(3\hat c_{1|0}\hat c_{0|1}-\frac 14\hbar^2)u+\hat c_{1|1}\,,
\ee
which together with $c_{1|0}$ and $c_{0|1}$ yields three variables that generate the Bethe algebra. The Wronskian condition \eqref{mastereq} is $Q_{1|1}\propto Q_{\theta}$, it implies the equations on $c$'s:
\be
\label{eq:229}
c_{1|0}+c_{0|1}=-\frac 23 \se1\,,\quad c_{1|0}\,c_{0|1}= \frac 1{3}\se2+\frac 1{12}\hbar^2\,,\quad c_{1|1}=-\se3\,
\ee
which have two solutions.
\end{example}

\paragraph{Short representations}
\label{sec:shortrep}
The Young diagram of a short representation does not touch the internal corner of the $\gm|\gn$ fat hook. Define $r$ according to Figure~\ref{fig:ShortLong}. Introduce sets ${\bf A}=\{\gm-r+1,\gm-r+2,,\ldots,\gm\}$ and ${\bf J}=\{\gn-r+1,\gn-r+2,\ldots,\gn\}$, and label all Q-functions as $Q_{AA_0|JJ_0}$, where $A_0$ is a multi-index from ${\bf A}$ and $J_0$ is a multi-index from ${\bf J}$. Then the properties of the Q-functions can be described as follows \cite{Kazakov:2015efa}: If $|A_0|=|J_0|$ then $Q_{AA_0|JJ_0}=Q_{A|J}$ and, if $|A_0|\neq |J_0|$ then $Q_{AA_0|JJ_0}$ are not uniquely defined in the twist-less limit but also such Q-operators appear in the physically-relevant quantities, such as transfer matrices and $Q_{\fs|\fs}$, in combinations that {\it vanish} in the twist-less limit.

The described property allows us to restrict the $\glmn$ Q-system to the $\gls{\gm-r|\gn-r}$ Q-system defined as $Q_{A|J}^{\rm rest}=Q_{A{\bf A}|J{\bf J}}$ which has the property $Q_{\es|\es}^{\rm rest}=Q_{{\bf A}|{\bf J}}=Q_{\es|\es}=1$ and $Q_{\fs|\fs}^{\rm rest}=Q_{1\ldots \gm-r|1\ldots \gn-r}^{\rm rest}=Q_{\fs|\fs}$. This subsystem is sufficient to generate the Bethe algebra. Since an originally short representation becomes long from the point of view of $\gls{\gm-r|\gn-r}$ subalgebra and since the polynomial degrees are correctly captured by \eqref{eq:poldegree},  we can use the same logic as for the long representations and formulate the supersymmetric Wronskian condition \eqref{mastereq} using $C_{\Lambda}$ of the $\gls{\gm-r|\gn-r}$ Q-system.

\begin{example}
The representation $\twoone$ can be considered as a short one of the $\gls{2|2}$ algebra. Then ${\bf A}=\{2\},{\bf J}=\{2\}$, and so all the physical information is contained in the functions $Q_{12|12}=Q_{1|1}$, $Q_{12|2}=Q_{1|\es}$, $Q_{2|12}=Q_{\es|1}$, $Q_{2|2}=Q_{\es|\es}$. The Wronskian is given by
\be
\SWe=Q_{1|1}Q_{2|2}-Q_{1|2}Q_{2|1}\,.
\ee
While $Q_{1|2}$ and $Q_{2|1}$ are not uniquely defined in the twist-less limit, any prescription would imply that at least either $Q_{1|2}$ or $Q_{2|1}$ vanish and so their product vanishes as well. Given that $Q_{2|2}=Q_{\es|\es}=1$, $\SWe=Q_{\theta}$ implies  $Q_{1|1}=Q_{\theta}$ which fully parallels the above-described $\gls{1|1}$ example.
\end{example}

\section{Completeness}
\label{sec:C}
So far we introduced $\BAL$ -- the restriction of the Bethe algebra to the weight subspace $\VV$ (which is $\VTw$ or $\VTl$). It is generated by the restriction of the Q-operators who in turn are (twisted) polynomials of the spectral parameter.  We also selected precisely $L$ coefficients of these polynomials assembled into the set $C_{\Lambda}$ and explained how they are used to generate the whole $\BAL$. 

From now on, the elements of $C_{\Lambda}$ are labelled in a uniform manner as $c_{\pa}$, $\pa=1,\ldots,L$. It is important to articulate what notation $c_{\pa}$ means exactly. If it is an explicit matrix acting on $\VV$ then we denote it as $\hat c_{\pa}$. In contrast, we agree to denote by $c_{\pa}$ without hat abstract commuting variables that have, by definition, only one property: they satisfy Wronskian Bethe equations $\eqref{mastereq2}$. 

A one-to-one correspondence between $c_\pa$ solving WBE and eigenvalues of $\hat c_{\pa}$ will be established later, in Section~\ref{sec:F}. In the current section, we show that  WBE have precisely the right number of solutions $d_{\Lambda}=\dim_{\mathbb{C}}\VV$. This property is usually referred to as {\it completeness}, why this naming is justified was discussed in the introduction.

\subsection{Analytic description}
\label{sec:basicproperties}
First, we develop some intuition about analytic description of the Wronskian Bethe equations $\SWe_{\pa}(c)=\se{\pa}$.
Think about them as a polynomial map
\begin{equation}
\label{Wmap}
\begin{array}[t]{lrcl}
\SW : &\mathbb{C}^L & \longrightarrow & \mathbb{C}^L\,, \\
& (c_1,\dots ,c_L) & \longmapsto & (\se1=\SWe_1(c),\dots,\se L=\SWe_L(c))\,.
\end{array}
\end{equation}
Denote by $\Cdom\simeq \CC^L$ the domain of definition of the map and by $\Sedom:=\SW(\Cdom)$ its image.

\paragraph{\it Surjectivity.} The map $\SW$ is in fact surjective, that is $\Sedom\simeq\mathbb{C}^L$ which means that the Wronskian relations \eqref{mastereq2} have at least one solution for any complex value $\bsepa$ of $\sepa$.
Indeed, matrix coefficients of $\hat c_{\pa}$ are polynomials in $\inhom[\pa]$, {\it e.g.} by construction of Baxter Q-operators, and so they are defined for any numerical value $\binhom[\pa]$. Furthermore $\hat c_{\pa}$ commute and so they have at least one common eigenvector ${\bf u}({\inhom[]})$. Eigenvalues of $\hat c_{\pa}$ on this vector satisfy \eqref{mastereq2} and so they provide a solution to $\SWe_\pa(c)=\bsepa$ for $\bsepa=\sepa(\binhom[])$.

\paragraph{\it Critical and regular points.}  Denote by $\Ccrit$ the set of all the critical (degeneration) points $c$ where the differential of $\SW$ is not invertible. Its image $\Secrit\equiv \SW(\Ccrit)$ shall be called the set of critical points $\sse$.  Using {\it e.g.} Sard's theorem one states that $\Secrit$ is of measure zero in $\Sedom$. The complement to $\Ccrit$, resp. $\Secrit$, shall be called domain of regular points $c$, resp $\sse$. Restricted to the regular points, the map $\SW$ is locally a diffeomorphism, \ie  for each point $c\notin\Ccrit$ there is a neighbourhood of $\SW(c)$ where $\SW$ can be smoothly inverted. This implies that all solutions to the Bethe equations are {\it distinct} in a neighbourhood around a regular point $\sse$. This also shows that in such a neighbourhood the fibers of $\SW$ are all finite and of the same cardinality ($\SW$ is polynomial and so it cannot have infinite discrete fibers).

\paragraph{\it Properness.} \label{par:prop} All solutions $c_{\pa}$ are bounded at any finite value of $\inhom$'s or, in more abstract terms, the inverse image of a compact set is compact. $\SW$ is then said to be proper. This very important technical point is proved in two independent ways: using the fact that $Q$-operators have bounded spectrum, as is explained in the remark on page~\pageref{propernessfromQ}; and by a direct analysis of the equations themselves, in Appendix~\ref{sec:Finiteness} for the twisted case and Appendix~\ref{sec:deta-constr-mapp} for the twist-less case.

\paragraph{\it Path-connectivity}  $\Ccrit$ can be easily described as $\mathrm{det} \frac{\partial \SWe_{\pa}}{\partial c_{\pa'}}=0$ which is just a polynomial equation on $c_{\pa}$ that, obviously, defines a domain of (complex) co-dimension $1$.

This implies that any two solutions $c_{\pa}$ and $c'_{\pa}$ can be connected by a smooth path $\gamma$ that avoids the singular domain $\Ccrit$. We can always choose $\gamma$ such that its image $\SW(\gamma)$ also passes only through regular points of $\Se$. Note that one or both points $c_{\pa}$ and $c'_{\pa}$ can actually belong to $\Ccrit$. So any singular solution can be obtained as a limit of regular solutions. Sporadic solutions, defined as solutions that exist only for some subspace of points in $\Se$ cannot exist either by the same argument~\footnote{For an example of equations with sporadic solutions, consider $x(x-1)=0, \theta x=0$. For all $\theta\neq 0$ there is only one solution $x=0$. However, for $\theta=0$ there is one extra sporadic solution $x=1$.}. 

We in particular conclude that for {\it any} choice of $\se{\pa}$, the number of solutions of Bethe equations is less or equal to $d_\Lambda$, where $d_{\Lambda}$ is defined as the number of solutions at regular points of $\Se$ (this number does not depend on $\sepa\notin\Secrit$ since the regular domain of $\Sedom$ is path-connected). 

\paragraph{\it Finiteness} By definition, a map is called finite if it is proper and its fibers at all points are finite~\footnote{The concept of finite morphism is usually defined in a more general set-up using a rather abstract algebraic formalism. Here we are working with analytic varieties when the general ``algebraic'' definition is equivalent to the ``topological'' definition that we are using, see \cite{SGA1, EGA1}.}, so SW is an example of such a map. This property will be used later. 
\newline
\newline
As we have established, all solutions to the Bethe equations \eqref{mastereq2} are distinct for $\sepa$ being in the regular domain of $\Sedom$. Some solutions coincide if $\sepa\in\Secrit$, and so the number of distinct solutions is smaller. It is typical to count solutions with multiplicities in such a case. When we deal with equations in several variables, the notion of multiplicity requires an appropriate formalism to be introduced which is our next goal.
\subsection{How to count solutions with multiplicity}
\label{sec:algedes}
Starting from now, we will gradually introduce an algebraic formalism to analyse the Wronskian Bethe equations. We will be using standard terminology from commutative algebra which is briefly summarised in Appendix~\ref{sec:alggeom}.
\newline
\newline
Let us introduce a polynomial ring  $\WAL$ that shall be called the {\it Wronskian algebra} and which is defined as follows. Consider $\CC[\sse][c]$ -- the algebra of polynomials in variables $\se1,\se2,\ldots,\se{L},$ $c_1,\ldots,c_L$; and $\CI_\Lambda=\langle \se1-\SWe_1(c),\se2-\SWe_2(c),\ldots,\se{L}-\SWe_L(c)\rangle$ -- the ideal generated by the equations \eqref{mastereq2}. Then
\be\label{WALdef}
\WAL:={{\CC[\sse][c]}}/{\CI_{\Lambda}}\,.
\ee
Over $\CC$, $\WAL$ is obviously isomorphic to $\CC[c]$. However, additionally, it is also naturally a $\CC[\sse]$-module. Namely, one defines the action of $\sepa$ on $\WAL\simeq\CC[c]$ as follows: we multiply elements of $\WAL$ by $\sepa$ and then replace $\sepa$ with $\SWe_\pa(c)$.

To link the $\CC[\sse]$-module structure with the Wronski map from the previous section we note that $\CC[c]$ is the coordinate ring of $\Cdom$ and $\CC[\sse]$ is the coordinate ring of $\Sedom$. The map 
\be\label{SWstar}
\SW^*:\CC[\sse]\to\CC[c]\,,\quad \chi_\pa\mapsto \SWe_\pa(c)
\ee 
used in the definition of the $\CC[\sse]$-action on $\WAL$ is a pullback of \eqref{Wmap}.

The number of solutions to the Wronskian Bethe equations appears as follows in the algebraic context. We {\it specialise} the Wronskian algebra to the complex point $\bsse$ where we would like to count the solutions. Specialisation is defined as
\be\label{specialisationW}
\WAL(\bar\sse):=\WAL/\langle \sepa-\bsepa \rangle \simeq \CC[c]/\langle \SWe_\pa(c)-\bsepa \rangle\,.
\ee
Then, it is a standard result that the number of solutions of a polynomial set of equations $P_i(x_1,\ldots,x_n)=0$, $i=1,\ldots m$ is equal to the dimension of the quotient ring $\mathcal{R}=\CC[x]/\langle P_1,\ldots, P_m\rangle$ (as a vector space over $\CC$). Moreover, in the case when solutions degenerate, the dimension of the quotient ring is used as {\it a definition}~\footnote{Again, the general definition of multiplicity in the full formalism of algebraic geometry is much more intricate but in our case it is equivalent to the one we use.} of the algebraic number of solutions (\ie  counted with multiplicity). In our case, the quotient ring in question is $\CR=\WAL(\bar\sse)$, and so the algebraic number of solutions of the Wronskian Bethe equation at point $\bsse$ is equal to $\dim_\CC \WAL(\bar\sse)$. Since at all points $\bar\sse$ the number of solutions is finite $\dim_\CC \WAL(\bar\sse)<+\infty$.

To see how this definition comes about in practice, consider the regular representation of the (finite-dimensional) quotient ring $\CR$ which is a map from elements of the ring to the ring endomorphisms defined by the ring multiplication. We can describe this map in terms of explicit matrices. Let $\bas_1,\ldots,\bas_r$ be some basis elements of $\mathcal{R}$. Then, for any $X\in\mathcal{R}$, one has $X\,\bas_i=\sum\limits_{j=1}^r \check X_{ij} \bas_j$, where $\check X_{ij}\in\CC$. The regular representation maps $X$ to the matrix $\check X$ whose components are $\check X_{ij}$.

\begin{example}
\label{ex:36}
Consider $\CR=\CC[x]/\langle x^2-a x+b\rangle $. Elements $x$ and $1$ span $\CR$, choose them as basis elements. Then one has  $x\cdot x=x^2=a\,x -b$, $x\cdot 1=x$ and so
\be
\label{checkx}
\check x=\left(\begin{matrix} 0 & 1 \\  -b & a \end{matrix}\right)\,.
\ee
\end{example}

It is easy to prove that the image of the regular representation is isomorphic to the algebra $\CR$. This allows one to understand properties of a polynomial ring in a more familiar setting of a matrix algebra that we denote as $\check \CR$.  

By the isomorphism, $P_{i}(\check x_1,\ldots,\check x_n)=0$ for $i=1,2,\ldots,m$. So all joint eigenvalues of $\check x_1,\ldots,\check x_n$ are solutions of the set of equations. And each solution should be one of the joint eigenvalues (to see this, take $\sum\limits_j (\check x_k)_{ij} \bas_j=x_k \bas_i$ and evaluate $x_k$ and $\bas_i$, who are polynomials in $x_k$, to numerical values corresponding to the solution of interest).

Hence, when $\check x_{\pa}$ are diagonalisable then it is clear that the number of solutions is equal to the size of the matrix which is the same as the dimension of the quotient ring. Moreover, all solutions should be distinct (otherwise, isomorphism between $\check\CR$ and $\CR$ won't hold).

When $\check x_{\pa}$ are not diagonalisable, one could expect that the solutions degenerate and that the multiplicity of degeneration is the size of the corresponding Jordan block, as in \eqref{checkx} when $a=b=0$. 
But since commuting matrices are not always simultaneously jordanisable, this intuitive picture should be slightly updated:  Any commuting set of matrices, $\check\CR$ in our case, admits a Dunford-Jordan-Chevalley decomposition, namely there is a basis where all the matrices take the form $D+N$, where $D$ is diagonal and $N$ is upper-triangular with zeros on the diagonal. Moreover, all elements of $N$ form a subalgebra in $\check\CR$ known as the nil-radical $\Nil(\check\CR)$ which is the ideal of all nilpotent elements of the ring. The quotient algebra $\diag(\check\CR)\equiv \check\CR/\Nil(\check\CR)$ is isomorphic to the algebra of matrices from $D$. 

Resorting to the regular representation was of course optional, the concepts of the nil-radical $\Nil(\CR)$ and the quotient $\diag(\CR)$ exist for any (commutative) ring. In summary, one has a short exact sequence
\be
\label{ses}
0\longrightarrow \Nil(\CR)\longrightarrow \CR \longrightarrow\diag(\CR) \longrightarrow 0\,.
\ee
$\dim_\CC\diag(\CR)$ is precisely the number of {\it distinct} solutions to the polynomial equations. $\dim_\CC \Nil(\CR)$ counts the amount of degeneration in solutions, and 
\be
\dim_\CC \CR=\dim_\CC\diag(\CR)+\dim_\CC \Nil(\CR)
\ee
is the total number of solutions counted with multiplicity.

We would like to emphasise that $\dim_\CC \CR$ is both the dimension of the quotient ring and the dimension of its regular representation (size of matrices). Eigenspaces of $\check\CR$ are all of dimension one~\footnote{Not to confuse with degeneration of solutions of polynomial equations. By definition, eigenspaces of a matrix $\check X$ are those that are annihilated by $\lambda-\check X$. In contrast, degenerate solutions correspond to existence of vectors that are annihilated by $(\lambda-\check X)^n$ for $n>1$.} and $\check\CR$ is a maximal commutative subalgebra of $\End(\CR)$. This remark will become important in our study of the Bethe algebra.

\subsection{Wronskian algebra is a free \texorpdfstring{$\mathbb{C}[\sse]$}{C[chi]}-module}
\label{sec:wronsk-algebra-free}
The following very powerful result can be proven about the Wronskian algebra:
\begin{proposition}
\label{freeness}
$\WAL$ is a free $\CC[\sse]$-module.
\end{proposition}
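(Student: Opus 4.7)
The plan is to identify the ring $\WAL$ with the coordinate ring $\CC[c]$ of $\Cdom$, viewed as a $\CC[\sse]$-module through the pullback $\SW^*$ of the Wronski map (cf.~\eqref{SWstar}), and then to derive freeness by chaining together three standard properties of this pullback: finiteness, flatness, and compatibility with a natural grading.

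Finiteness of $\SW^*$ amounts to finite generation of $\WAL$ as a $\CC[\sse]$-module, and this is where essentially all the substantive work lies: it follows from the properness of $\SW$ established in Appendices~\ref{sec:Finiteness}--\ref{sec:deta-constr-mapp} combined with the finiteness of every fibre (recorded on page~\pageref{par:prop}), since a proper affine morphism with finite fibres is finite in the scheme-theoretic sense. Given finiteness, I would invoke Hironaka's miracle-flatness criterion to pass to flatness: $\CC[\sse]$ and $\CC[c]$ are both regular (hence Cohen--Macaulay) of Krull dimension $L$, and the fibres of $\SW^*$ are zero-dimensional; applied at each maximal ideal of $\CC[\sse]$ and at each of its preimages in $\CC[c]$, the criterion $\dim S = \dim R + \dim(S/\mathfrak{m}_R S)$ is satisfied automatically, which yields flatness of $\WAL$ over $\CC[\sse]$. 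To upgrade flatness to freeness I would then introduce a grading. Since each $c_\pa$ is the coefficient of a definite power of the spectral parameter in a Baxter Q-polynomial, there is a natural positive grading for which $\deg\sepa=\pa$ and every relation $\sepa-\SWe_\pa(c)$ is homogeneous; this makes $\WAL$ a graded $\CC[\sse]$-module. Over a positively graded Noetherian $\CC$-algebra, any finitely generated graded flat module is automatically free: lift a homogeneous $\CC$-basis of $\WAL/\langle\se{1},\ldots,\se{L}\rangle\WAL$ to $\WAL$, graded Nakayama then gives generation, while flatness together with minimality of the lift forces $\CC[\sse]$-linear independence.

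The only truly non-trivial step is finiteness: a polynomial endomorphism of $\CC^L$ need not be proper in general, and the case analyses in Appendices~\ref{sec:Finiteness}--\ref{sec:deta-constr-mapp} supplying properness are unavoidable. Once finiteness is in hand, the miracle-flatness reduction to the equality of Krull dimensions and the graded promotion of flatness to freeness are both textbook. A useful side effect, exploited in Section~\ref{sec:F} and for the counting argument of this section, is that the rank of the free module $\WAL$ equals $\dim_\CC\WAL(\bsse)$ for \emph{every} numerical specialisation $\bsse$, so it suffices to compute this fibre dimension at one convenient point to determine the rank of $\WAL$ globally.
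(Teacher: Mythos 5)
Your first two steps coincide with the paper's own proof: finiteness of $\SW$ (properness from Appendices~\ref{sec:Finiteness} and \ref{sec:deta-constr-mapp} together with finiteness of the fibres) gives finite generation of $\WAL$ over $\CC[\sse]$, and the miracle-flatness criterion applied to the regular source and target gives flatness. The divergence is in the last step, and that is where there is a genuine gap. You claim that with $\deg c_{A|J}^{(k)}=k$ and $\deg\sepa=\pa$ ``every relation $\sepa-\SWe_\pa(c)$ is homogeneous'', so that $\WAL$ is a graded $\CC[\sse]$-module and graded Nakayama applies. This is false for $\hbar\neq 0$: the Wronskian relations contain lower-degree terms weighted by powers of $\hbar$, which is a fixed nonzero complex number, not a graded variable. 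For instance, in the paper's $\gl_3$, $L=3$ twist-less example one relation reads $3c_1^{(1)}c_2^{(1)}-2\hbar^2=8\se{2}$, and in the $\gls{1|1}$ example $c_{1|0}\,c_{0|1}=\tfrac13\se{2}+\tfrac1{12}\hbar^2$; neither is homogeneous. Consequently $\WAL$ is only a \emph{filtered} $\CC[\sse]$-module --- this is precisely why Section~\ref{sec:Hilbert} works with filtrations and the associated graded when computing Hilbert series --- so graded Nakayama is unavailable, and ordinary Nakayama fails too because $\langle\se1,\ldots,\se L\rangle$ is not contained in the Jacobson radical of $\CC[\sse]$; lifting a $\CC$-basis of $\WAL/\langle\sse\rangle\WAL$ therefore does not by itself give generation, let alone a free basis.

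The gap can be closed in two ways. The paper's route: finitely generated plus flat over the Noetherian ring $\CC[\sse]$ gives projectivity, and then the Quillen--Suslin theorem (which you were trying to avoid) gives freeness. Alternatively, your graded idea can be salvaged by passing to the associated graded module of the filtration; but then you must show that the leading-symbol system (the top-degree, $\hbar$-independent parts of the equations, essentially a Gaudin-type Wronskian system) is itself proper with zero-dimensional fibres and flat --- i.e.\ rerun the finiteness and miracle-flatness steps at the level of symbols --- and then invoke the standard lifting ``free associated graded implies free'' for exhaustive positive filtrations. Neither of these is contained in your write-up, so as it stands the final implication ``flat $\Rightarrow$ free'' is not justified.
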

\begin{proof}
$\SW^*$ \eqref{SWstar} is a ring morphism from $\CC[\sse]$ to $\CC[c]$ making $\WAL$ a $\CC[\sse]$-algebra and therefore a $\CC[\sse]$-module. On the other hand, we can view $\SW$ \eqref{Wmap} as an algebraic morphism from the variety $\Cdom\simeq\CC^L$ (for the current discussion, the affine space $\mathbb{A}^L$) to $\Sedom\simeq\mathbb{A}^L$. We know that all the fibres of $\SW$ are finite sets and are therefore of dimension $0$. Moreover, $\mathbb{A}^L$, as an algebraic variety, is regular and (therefore) Cohen-Macaulay. Then, by a general result (sometimes called ``miracle flatness theorem'') $\SW^*$ is a flat ring morphism and so $\WAL$ is a flat $\CC[\sse]$-module, see for example \cite{matsumura1989commutative, hartshorne2013algebraic}. Since $\CC[\sse]$ is Noetherian and $\WAL$ is finitely-generated as a $\CC[\sse]$-module (because $\SW$ is finite) it is actually projective \cite{cohn2002further}. Finally, by the Quillen–Suslin theorem \cite{Quillen1976} it is free.
\end{proof}
The above proof looks very short, however it is based on several abstract results from algebraic geometry. In appendix~\ref{sec:pedestrian} we provide an elementary study of the Wronskian algebra which helps in understanding the logic behind the above proof.

By definition of a free module, $\WAL$ has a basis -- a collection of elements $\bas_1,\ldots,\bas_r$ such that any element $\mathfrak{a}\in\WAL$ is represented in a unique way as 
\be
\label{eq:39}
\mathfrak{a}=k_1\,\bas_1+\ldots+k_r\,\bas_r\,,
\ee
where $k_i\in\CC[\sse]$. It is easy to prove that $\bas_1,\ldots,\bas_r$ remains a basis after specialisation \eqref{specialisationW} which leads to an immediate corollary of Proposition~\ref{freeness}:
\begin{corollary}
The algebraic number of solutions of the Wronskian Bethe equations is the same for any value of $\bsepa$.
\end{corollary}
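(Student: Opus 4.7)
The plan is a direct exploitation of the free $\CC[\sse]$-module structure established in Proposition~\ref{freeness}. Fix a $\CC[\sse]$-basis $\bas_1,\ldots,\bas_r$ of $\WAL$, where $r$ is the rank. The claim is that after specialising at any point $\bsse\in\CC^L$, the images of $\bas_1,\ldots,\bas_r$ form a $\CC$-basis of $\WAL(\bsse)$, so that $\dim_\CC \WAL(\bsse)=r$ independently of $\bsse$. This is exactly the algebraic number of solutions by the general principle recalled in Section~\ref{sec:algedes}.

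The first step, spanning, is essentially trivial: given any element $\mathfrak{a}\in\WAL$, writing $\mathfrak{a}=\sum_i k_i(\sse)\bas_i$ with $k_i\in\CC[\sse]$ via \eqref{eq:39} and then specialising $\sepa\mapsto\bsepa$ shows that the classes of $\bas_i$ span $\WAL(\bsse)$ over $\CC$, because each $k_i(\sse)$ becomes a scalar $k_i(\bsse)\in\CC$.

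The second, slightly more substantive, step is linear independence. Suppose $\sum_i \lambda_i \bas_i\equiv 0$ in $\WAL(\bsse)$ for some $\lambda_i\in\CC$; equivalently, in $\WAL$ one has
\begin{equation}
\sum_i \lambda_i\bas_i \;=\; \sum_{\pa=1}^L (\sepa-\bsepa)\,\mathfrak{b}_\pa,
\end{equation}
for some $\mathfrak{b}_\pa\in\WAL$. Expanding each $\mathfrak{b}_\pa=\sum_j k_{\pa,j}(\sse)\bas_j$ in the basis and comparing coefficients using the uniqueness of the representation \eqref{eq:39}, one gets $\lambda_i=\sum_\pa (\sepa-\bsepa)\,k_{\pa,i}(\sse)$ in $\CC[\sse]$. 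The right-hand side lies in the ideal $\langle \sepa-\bsepa\rangle\subset\CC[\sse]$, whereas the left-hand side is a constant; since $\CC\cap\langle\sepa-\bsepa\rangle=\{0\}$ (the ideal is the maximal ideal of functions vanishing at $\bsse$), one concludes $\lambda_i=0$.

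I do not expect any serious obstacle; freeness is the content of Proposition~\ref{freeness}, and the rest is a textbook argument about base change $\CC[\sse]\to\CC[\sse]/\langle\sepa-\bsepa\rangle=\CC$ applied to a free module. The only small subtlety is keeping track of whether one is working in $\WAL$ over $\CC[\sse]$ or in $\WAL(\bsse)$ over $\CC$, which is why I prefer the explicit bookkeeping above rather than simply invoking that tensoring a free module with the residue field preserves the rank.
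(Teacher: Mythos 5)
Your proof is correct and is exactly the argument the paper has in mind: the paper states that the basis $\bas_1,\ldots,\bas_r$ guaranteed by Proposition~\ref{freeness} ``remains a basis after specialisation'' and leaves the verification implicit, while you spell out the spanning and linear-independence steps (the latter via uniqueness of the expansion \eqref{eq:39} and $\CC\cap\langle\sepa-\bsepa\rangle=\{0\}$). This is the standard base-change argument for a free module, so no difference in substance from the paper's route.
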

This number is equal to the number $d_{\Lambda}$ of solutions to Bethe equations at regular points.
Next section shows that this number is dim $U_{\Lambda}$.

\subsection{Number of solutions {\it via} Hilbert series}
\label{sec:Hilbert}
Let us find the value of $d_{\Lambda}$ explicitly. With the help of Proposition~\ref{freeness}, counting  is reduced to a simple dimensional analysis as we shall describe now.
\newline
\newline
If one chooses a rule by which we assign degree $0$ to the identity element and some positive integer degrees to other elements of a ring $\CR$ then we can define the ring filtration 
$$
\CF_0\subset\ldots \CF_{k}\subset \CF_{k+1}\subset\ldots\,,
$$ 
where $\CF_{k}$ is the vector subspace of $\CR$ (over $\CC$) spanned by all elements of degree not exceeding $k$. Grading assignment should be compatible with the ring structure meaning that for any $k,k'$ and any ring element $\mathfrak{r}_k$ of degree $k$  one has $\mathfrak{r}_k\CF_{k'}\subset \CF_{k+k'}$.

A useful characterisation of a graded ring is given by its Hilbert series defined as
\be
\ch_\CR(t)=\sum_{k=0}^\infty \dim_{\CC}\left(\CF_{k}/\CF_{k-1}\right) t^k\,.
\ee

Since the Wronskian ring $\WAL$ is $\CC$-isomorphic to $\CC[c_1,\ldots,c_L]$, computing its Hilbert series is particularly simple. It is just given by
\begin{equation}
\ch_{\WAL}(t)=\prod_{\pa=1}^L \frac{1}{1-t^{\deg c_\pa}}\,.
\end{equation}
Recall that $c_{\pa}$ is a selected subset of $c_{A|J}^{(k)}$. Define $\deg c_{A|J}^{(k)}=k$. With the labeling $c_{A|J}^{(k)}$ of \eqref{qAJ},
this grading is consistent with the ring structure because the latter follows from the relations between Q-functions which are polynomials in $u$.  Then we have
\begin{subequations}
\begin{lemma}
\label{thm:Hilbertseries}
The Hilbert series of the Wronskian algebra $\WAL$ of the twisted system for $\Lambda=[\lambda_1,\ldots,\lambda_m;\nu_1,\ldots,\nu_\gn]$ is
\be
\label{twistedhs}
\ch_\Lambda(t):=\ch_{\WAL}(t)=\prod_{a=1}^{\gm}\prod_{k=1}^{\lambda_a}\frac 1{1-t^k}\prod_{i=1}^{\gn}\prod_{k=1}^{\nu_i}\frac 1{1-t^k}\,.
\ee
The Hilbert series of the Wronskian algebra $\WAL$ of the $\glmn$ twist-less system with $\Lambda=\LD$ being a Young diagram depends on the Young diagram $\LD$ alone and is given by
\be
\label{twistlesshs}
\ch_{\LD}(t):=\ch_{\WAL}(t)=\prod_{(a,s)\in\LD}\frac 1{1-t^{h_{a,s}}}\,,
\ee
where the product runs over all boxes of $\LD$ and $h_{a,s}$ is the hook length at box $(a,s)$.
\end{lemma}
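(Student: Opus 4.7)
The plan is to exploit the $\CC$-algebra isomorphism $\WAL\simeq\CC[c_1,\ldots,c_L]$, which follows immediately from the fact that the Wronskian Bethe equations \eqref{mastereq2} eliminate each $\sepa$ polynomially in favor of the $c$'s, and to equip this polynomial ring with the grading $\deg c_{A|J}^{(k)}:=k$. This grading is compatible with the defining ideal $\CI_{\Lambda}$ once we assign $\deg\sepa=\pa$: the Q-functions are polynomials in $u$ in which $c_{A|J}^{(k)}$ multiplies $u^{M_{A|J}-k}$, and the Wronskian/supersymmetric Wronskian $\SWe$ inherits the grading with $\deg u=1$, so each equation $\SWe_\pa(c)=\sepa$ is homogeneous of degree $\pa$. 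Consequently
\begin{equation*}
\ch_{\WAL}(t)=\prod_{c_\pa\in C_{\Lambda}}\frac{1}{1-t^{\deg c_\pa}}\,,
\end{equation*}
and the proof reduces to enumerating $C_{\Lambda}$ and identifying the multiset of its degrees.

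For the twisted case \eqref{twistedhs}, this is immediate: by \eqref{qAJ}--\eqref{degree}, $C_{\Lambda}$ consists of the coefficients $c_{a|\es}^{(k)}$ with $k\in\{1,\ldots,\lambda_a\}$ and $c_{\es|i}^{(k)}$ with $k\in\{1,\ldots,\nu_i\}$, directly yielding the stated product. For the twist-less case I would first use the reduction of short representations to long representations of a smaller $\gls{\gm-r|\gn-r}$ discussed in Section~\ref{sec:parameterisation}, so that we may assume $\LD$ touches the inner corner of the $\gm'|\gn'$ fat hook. Then $C_{\Lambda}$ comprises the gauge-fixed $c_{a|\es}^{(k)}$ with $k\in\{1,\ldots,\hat\lambda_a\}\setminus\{\hat\lambda_a-\hat\lambda_b:b>a\}$, the gauge-fixed $c_{\es|i}^{(k)}$ with $k\in\{1,\ldots,\hat\nu_i\}\setminus\{\hat\nu_i-\hat\nu_j:j>i\}$, and for each $(a,i)$ with $1\leq a\leq\gm'$, $1\leq i\leq\gn'$ the integration constant $c_{a|i}^{(M_{a|i})}$ of degree $M_{a|i}=\hat\lambda_a+\hat\nu_i+1$.

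The remaining, and principal, step is a purely combinatorial identity asserting that the multiset of degrees just enumerated equals the multiset of hook lengths $\{h_{a,s}:(a,s)\in\LD\}$. I would partition the boxes of $\LD$ into three regions defined by the inner corner of the fat hook: the bosonic arm $\{(a,s):a\leq\gm', s>\gn'\}$, the fermionic leg $\{(a,s):a>\gm', s\leq\gn'\}$, and the corner block $\{(a,i):a\leq\gm', i\leq\gn'\}$. For a corner box the direct computation $h_{a,i}=(\lambda_a-i)+(\lambda'_i-a)+1$ combined with the definitions of $\hat\lambda_a$ and $\hat\nu_i$ from Section~\ref{sec:glmn} gives $h_{a,i}=\hat\lambda_a+\hat\nu_i+1=M_{a|i}$, matching the degree of $c_{a|i}^{(M_{a|i})}$. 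The bosonic arm contribution reduces, via the shift $s\mapsto s-\gn'$ (which leaves the beta-numbers $\hat\lambda_a$ invariant because $\lambda'_s\leq\gm'$ for $s>\gn'$ in a long rep), to the classical identity for ordinary Young diagrams stating that the row-$a$ hook lengths are exactly $\{1,\ldots,\hat\lambda_a\}\setminus\{\hat\lambda_a-\hat\lambda_b:b>a\}$; this is a standard consequence of the Maya-diagram / abacus description of a partition. The fermionic leg is handled symmetrically by working with the transposed partition. The main obstacle I anticipate is the bookkeeping required to ensure the beta-number identity applies cleanly to the truncated diagrams obtained by these restrictions and that the three regional contributions assemble without overlap or omission, but no deeper input beyond the Frame--Robinson--Thrall argument is needed.
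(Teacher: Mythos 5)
Your proposal is correct and follows essentially the same route as the paper: identify the generating set $C_\Lambda$ with the grading $\deg c_{A|J}^{(k)}=k$, write the Hilbert series of the free polynomial ring as a product over generators, and match the multiset of degrees to the hook lengths of $\LD$ after reducing short representations to long ones. The only difference is that where the paper establishes the degree-to-hook-length bijection pictorially (Figure~\ref{fig:degrees}: corner block for the $c_{a|i}^{(M_{a|i})}$, compressed rows/columns for the gauge-fixed $c_{a|\es}^{(k)}$, $c_{\es|i}^{(k)}$), you substantiate exactly the same decomposition via the first-column hook-length (beta-number) identity, which is a sound and indeed more explicit justification of the same bijection.
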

\end{subequations}
\begin{proof}
The result in the twisted case is immediately obvious from \eqref{degree} and the fact that $c_1,\ldots,c_{L}$ are precisely all $c_{a|\es}^{(k)}$ and $c_{\es|i}^{(k)}$. For the twist-less case, one needs to perform a little analysis on precisely what $c_{A|J}^{(k)}$ generate the Wronskian algebra. One can do it by filling the boxes of $\LD$ with degrees of the variables $c_\pa$ in a special way. This procedure outlined in Figure~\ref{fig:degrees} clearly establishes a bijection with the lengths of the corresponding hooks.
\end{proof}
\begin{figure}[t]
\begin{center}
{
\begin{picture}(180,120)(-30,-100)
\thinlines

\color{blue!5}
\polygon*(0,20)(0,-100)(20,-100)(20,-80)(40,-80)(40,-40)(80,-40)(100,-40)(100,-20)(120,-20)(120,0)(160,0)(160,20)(0,20)
\color{gray}
\thinlines
\multiput(0,0)(0,-20){5}{\line(1,0){160}}
\multiput(0,20)(20,0){8}{\line(0,-1){120}}
\color{black}
\color{blue}
\put(147,7){$1$}
\put(127,7){$2$}
\put(106,7){X}
\put(87,7){$4$}
\put(66,7){X}
\put(47,7){$6$}
\put(27,7){$7$}
\put(7,7){$8$}
\put(-14,7){X}
\put(107,-13){$1$}
\put(86,-13){X}
\put(67,-13){$3$}
\put(47,-13){$4$}
\put(27,-13){$5$}
\put(6,-13){X}
\put(87,-33){$1$}
\put(67,-33){$2$}
\put(47,-33){$3$}
\put(26,-33){X}
\color{red}
\put(7,-53){$3$}
\put(6,-73){X}
\put(7,-93){$1$}
\put(27,-73){$1$}
\color{white}
\polygon*(0,-100)(20,-100)(20,-80)(40,-80)(40,-40)(80,-40)(100,-40)(100,-20)(120,-20)(120,0)(160,0)(160,20)(160,-100)(0,-100)
\thinlines
\color{gray}
\drawline(0,-100)(20,-100)(20,-80)(40,-80)(40,-40)(80,-40)(100,-40)(100,-20)(120,-20)(120,0)(160,0)(160,20)(0,20)(0,-100)

\color{cadet}
\thicklines
\drawline(40,-60)(40,-40)(20,-40)(20,-20)(0,-20)(0,0)(-20,0)(-20,20)(-40,20)
\drawline(40,-60)(20,-60)(20,-40)(0,-40)(0,-20)(-20,-20)(-20,0)(-40,0)(-40,20)

\color{Red}
\put(40,-60){\circle*{3}}

\end{picture}
}
\raisebox{50pt}{
$\quad\longrightarrow\quad$}
{
\begin{picture}(160,120)(0,-100)
\thinlines

\color{blue!5}
\polygon*(0,20)(0,-100)(20,-100)(20,-80)(40,-80)(40,-40)(80,-40)(100,-40)(100,-20)(120,-20)(120,0)(160,0)(160,20)(0,20)
\color{green!80!red!30}
\polygon*(0,20)(40,20)(40,-60)(0,-60)(0,20)
\color{gray}
\thinlines
\multiput(0,0)(0,-20){5}{\line(1,0){160}}
\multiput(0,20)(20,0){8}{\line(0,-1){120}}
\color{black}
\color{blue}
\put(147,7){$1$}
\put(127,7){$2$}
\put(107,7){$4$}
\put(87,7){$6$}
\put(67,7){$7$}
\put(47,7){$8$}
\put(107,-13){$1$}
\put(87,-13){$3$}
\put(67,-13){$4$}
\put(47,-13){$5$}
\put(87,-33){$1$}
\put(67,-33){$2$}
\put(47,-33){$3$}
\color{red}
\put(7,-73){$3$}
\put(7,-93){$1$}
\put(27,-73){$1$}
\color{teal}
\put(24,7){$11$}
\put(4,7){$13$}
\put(4,-13){$10$}
\put(7,-33){$8$}
\put(7,-53){$4$}
\put(27,-13){$8$}
\put(27,-33){$6$}
\put(27,-53){$2$}
\color{white}
\polygon*(0,-100)(20,-100)(20,-80)(40,-80)(40,-40)(80,-40)(100,-40)(100,-20)(120,-20)(120,0)(160,0)(160,20)(160,-100)(0,-100)
\thinlines
\color{gray}
\drawline(0,-100)(20,-100)(20,-80)(40,-80)(40,-40)(80,-40)(100,-40)(100,-20)(120,-20)(120,0)(160,0)(160,20)(0,20)(0,-100)

\color{Red}
\put(40,-60){\circle*{3}}

\end{picture}
}

\caption{\label{fig:degrees}\scriptsize Degrees of $c_{\alpha}$ that generate the twist-less Wronskian algebra. On the left: The blue numbers show degrees of $c_{a|\es}^{(k)}$, where $a=1,2,\ldots$ is the corresponding row of the Young diagram. The red numbers show degrees of $c_{\es|i}^{(k)}$ where $i=1,\ldots$ is the corresponding column. Crosses mean the terms $c_{a|\es}^{(k)}$ and $c_{\es|i}^{(k)}$ that are excluded by symmetry reasons, see the discussion after \eqref{eq:221}. For instance, the term $c_{1|\es}^{(9)}$ (the constant term of $Q_{1|\es}$) is cancelled in the linear combination $Q_{1|\es}+\alpha Q_{4|\es}$. On the right: the same blue/red numbers compressed to the right/bottom after the crosses are removed. Note that this arrangement corresponds to the hook lengths. The green rectangular area has exactly as many boxes as number of physically-relevant functions $Q_{a|i}$. The constant terms of these functions, $c_{a|i}^{(k)}$ with $k=\deg Q_{a|i}=\hat\lambda_a+\hat\lambda_i+1$, are independent variables used in generation of the Wronskian algebra. Their degrees, shown in green, match the hook lengths as well.
}
\end{center}
\end{figure}

To make the above-introduced filtration compatible with the $\CC[\sse]$-action on $\WAL$, one needs to define $\deg\sepa=\pa$. This is a simple reflection of the fact that the Wronskian relations $\sepa=\SWe_\pa(c)$ come from equating coefficients between polynomials in $u$. 

Because $\WAL$ is a free $\CC[\sse]$-module, $\dim\WAL(\bsse)=\frac{\ch_\CR(t)}{\ch_{\CC[\sse]}(t)}\raisebox{-0.5em}{$|_{t=1}$}$ for any $\bsepa$. We then compute $\ch_{\CC[\sse]}(t)=\prod\limits_{\pa=1}^L\frac 1{1-t^\ell}$ and easily conclude
\begin{theorem}
The number of solutions of the Wronskian Bethe equations counted with algebraic multiplicity is equal to
\begin{subequations}
\begin{alignat}{3}
\dim\WAL(\bsse) &=  \dim \VTw &=&\, \frac{L!}{\prod\limits_{a=1}^{\gm}\lambda_a!\prod\limits_{i=1}^{\gn}\nu_i!}&&\quad\text{for the twisted case}\,,
\\
\dim\WAL(\bsse) &= \dim \VTl &=&\, \frac{L!}{\prod\limits_{(a,s)\in \Lambda} h_{a,s}}&&\quad\text{for the twist-less case}\,.
\end{alignat}
\end{subequations}
\qed
\end{theorem}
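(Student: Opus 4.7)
The plan is to reduce the counting problem to a Hilbert series computation, leveraging the fact that $\WAL$ has been shown to be a free $\CC[\sse]$-module (Proposition~\ref{freeness}). As discussed in Section~\ref{sec:algedes}, the algebraic number of solutions of the Wronskian Bethe equations at a point $\bsse$ is by definition $\dim_\CC \WAL(\bsse)$. Since $\WAL$ is free over $\CC[\sse]$ of some finite rank $r$, specialisation at any point $\bsse$ preserves a basis, giving $\dim_\CC \WAL(\bsse) = r$ independently of $\bsse$. Hence it suffices to compute $r$ and compare it with $\dim \VV$.

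To extract $r$, I would use the compatibility of the filtration with both the ring structure of $\WAL$ and the $\CC[\sse]$-action (which requires the grading assignment $\deg c_{A|J}^{(k)}=k$ already described, together with $\deg\sepa=\pa$, consistent with the Wronskian relations equating coefficients of polynomials of the same degree in $u$). Freeness as a graded module then yields a factorisation
\begin{equation}
\ch_{\WAL}(t) \;=\; P_\Lambda(t)\cdot \ch_{\CC[\sse]}(t)\,,\qquad \ch_{\CC[\sse]}(t)=\prod_{\pa=1}^{L}\frac{1}{1-t^{\pa}}\,,
\end{equation}
where $P_\Lambda(t)=\sum_i t^{\deg \bas_i}$ encodes the degrees of a homogeneous $\CC[\sse]$-basis of $\WAL$. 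Taking $t\to 1$ gives $P_\Lambda(1)=r$, so the rank is obtained as the limit of the ratio of the two Hilbert series.

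Substituting the formulae from Lemma~\ref{thm:Hilbertseries} and using $\sum_a\lambda_a+\sum_i\nu_i = L$, I would compute in the twisted case
\begin{equation}
r \;=\; \lim_{t\to 1}\frac{\prod_{\pa=1}^{L}(1-t^{\pa})}{\prod_{a=1}^{\gm}\prod_{k=1}^{\lambda_a}(1-t^k)\prod_{i=1}^{\gn}\prod_{k=1}^{\nu_i}(1-t^k)} \;=\; \frac{L!}{\prod_{a=1}^{\gm}\lambda_a!\,\prod_{i=1}^{\gn}\nu_i!}\,,
\end{equation}
which is exactly $\dim \VTw$. In the twist-less case the same limit, applied to the hook-product Hilbert series, yields
\begin{equation}
r \;=\; \frac{L!}{\prod_{(a,s)\in\LD} h_{a,s}}\,,
\end{equation}
which is the hook length formula, equal to the number of standard Young tableaux of shape $\LD$ and hence to $\dim \VTl$.

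The only non-routine ingredient is the legitimacy of the ``take $t\to 1$ in the ratio'' argument; I would justify it by observing that freeness implies $P_\Lambda(t)$ is a genuine polynomial (so no singularity at $t=1$), that the factor $\prod_\pa(1-t^\pa)$ in $1/\ch_{\CC[\sse]}(t)$ has a zero of order $L$ at $t=1$, and that the denominator $\ch_{\WAL}(t)^{-1}$ has the same order of vanishing there (this counts the Krull dimension of $\WAL$, which equals $L$ because $\SW$ is finite and $\Sedom\simeq\CC^L$). Everything else is bookkeeping: the Hilbert series are already in hand, and matching the multinomial/hook-length expressions to $\dim\VV$ is classical. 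The main conceptual obstacle, establishing freeness of $\WAL$ over $\CC[\sse]$, has already been cleared in the preceding subsection; here it is deployed to turn a subtle enumerative question into a transparent dimension count.
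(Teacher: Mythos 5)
Your proposal is correct and follows essentially the same route as the paper: freeness of $\WAL$ over $\CC[\sse]$ (Proposition~\ref{freeness}) plus the graded/filtered structure reduces the count to the ratio $\ch_{\WAL}(t)/\ch_{\CC[\sse]}(t)$ evaluated at $t=1$, with the Hilbert series taken from Lemma~\ref{thm:Hilbertseries}, yielding the multinomial coefficient in the twisted case and the hook length formula in the twist-less case. Your extra remark justifying the $t\to1$ limit (polynomiality of the quotient and matching orders of vanishing) is a harmless elaboration of what the paper states more tersely.
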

In summary, the number of solutions to the Wronskian Bethe equations is the correct one and hence the equations are complete for any numerical choice of the inhomogeneities.

\section{Faithfulness}
\label{sec:F}
Until now, we have concentrated on the Wronskian algebra $\WAL$ to study the properties of the Bethe equations \eqref{mastereq2}. In particular we have shown that they have the expected number of solutions counted with multiplicity. Now we will establish a bijective correspondence between these solutions and eigenvalues (in degenerate cases, trigonal blocks) of the Bethe algebra. Mathematically, this correspondence is formulated as an isomorphism between the Wronskian algebra and the Bethe algebra. The map $\varphi:c_{\pa}\mapsto \hat c_{\pa}$  from the Wronskian algebra to the Bethe algebra is tautologically surjective, and it is its injectivity (faithfulness) that we need to prove.

In this section, we shall first establish the isomorphism over the polynomial ring $\CC[\sse]$ (which in practice means ``in general position'') and then prove that the isomorphism also holds for any numerical value $\bsepa$ of $\sepa$. A sufficient condition for the isomorphism to hold is that inhomogeneities that solve $\bsepa=\sepa(\binhom[1],\ldots,\binhom[L])$ satisfy $\binhom+\hbar\neq\binhom[\pa']$ for $\pa<\pa'$.

\subsection{Isomorphism between Wronskian and Bethe algebra}
Recall that $\BAL$ -- the Bethe algebra restricted to $U_\Lambda$ -- can be viewed as  an algebra of operators generated by $\hat c_\pa$. As we learned from the previous section, it is beneficial to first keep inhomogeneities as indeterminates, and this is what we are going to do with the Bethe algebra as well. Then $\BAL$ is naturally a subalgebra of $\End(\VV)\otimes\CC[\inhom[]]$, however one should be careful because we define $\BAL$ as the algebra generated from $\hat c_\pa$ by considering any polynomials in $\hat c_\pa$ with coefficients from $\CC$, and not from $\CC[\inhom[]]$. This is a non-trivial remark because of the following
\begin{lemma}\label{thm:1} If $p\in \CC[\inhom[]]$ and $p\times\Id \in \BAL$ then $p$ is a symmetric polynomial in inhomogeneities. 
\end{lemma}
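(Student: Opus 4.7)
The plan is to exploit the well-known permutation covariance of the transfer matrices $\wT_\mu(u;\inhom[])$, which is a direct consequence of the RTT relation together with the Yang-Baxter equation and the cyclicity of the auxiliary trace. For any adjacent transposition $\pi=(\pa,\pa{+}1)\in S_L$, there is an invertible operator $R_\pi(\inhom[])\in\End(V)\otimes\CC(\inhom[])$, built from the $R$-matrix $\check R_{\pa,\pa+1}(\inhom-\inhom[\pa+1])$, such that
\[
\wT_\mu(u;\pi\inhom[])=R_\pi(\inhom[])\,\wT_\mu(u;\inhom[])\,R_\pi(\inhom[])^{-1}.
\]
Since $\BAL$ can be generated, as a $\CC$-algebra, by the coefficients (in $u$) of the $\wT_\mu$, the same intertwining extends to every $X(\inhom[])\in\BAL$: conjugation by $R_\pi(\inhom[])$ acts on it exactly as the substitution $\inhom[]\mapsto\pi\inhom[]$ performed on the operator-valued polynomial $X$.

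Given $p\cdot\Id\in\BAL$, I would write $p(\inhom[])\cdot\Id=X(\inhom[])$ for some $\CC$-polynomial $X$ in the $\wT_\mu$-coefficients, and conjugate both sides by $R_\pi(\inhom[])$:
\[
p(\inhom[])\,\Id=R_\pi\,\bigl(p(\inhom[])\,\Id\bigr)\,R_\pi^{-1}=R_\pi\,X(\inhom[])\,R_\pi^{-1}=X(\pi\inhom[])=p(\pi\inhom[])\,\Id,
\]
where the first equality is the conjugation-invariance of a scalar, the third is the intertwining property, and the last is the hypothesis applied at the permuted tuple $\pi\inhom[]$. This would yield $p(\inhom[])=p(\pi\inhom[])$ on the generic locus where $R_\pi$ is invertible, hence on all of $\CC^L$ by polynomial continuity. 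Since the adjacent transpositions generate $S_L$, the polynomial $p$ would be $S_L$-invariant, \ie symmetric in $\inhom[1],\ldots,\inhom[L]$.

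The hard part will be making sure the permutation covariance of the transfer matrices can be meaningfully applied to elements of $\BAL$ written in terms of the generators $\hat c_\pa$ rather than directly in terms of transfer matrix coefficients. This should be harmless: $\BAL$ admits both sets of generators, since the $\hat c_\pa$'s are coefficients of Q-operators, which intertwine with $\wT_\mu$ via the CBR formula and the Wronskian expression \eqref{eq:WronskianwT}; consequently the argument above can be run equivalently with $X$ built from either set of generators, and the conclusion is independent of this choice.
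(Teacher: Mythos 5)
Your proposal is correct and follows essentially the same route as the paper: the paper's proof likewise conjugates Bethe-algebra elements by the intertwiner $r_{\pa}(\inhom[])=(\inhom-\inhom[\pa+1])\mathcal{P}_{\pa,\pa+1}+\hbar\,\Id$ (your $R_\pi$) combined with the permutation $\Pi_\pa$ of the inhomogeneities, and deduces symmetry of $p$ from the independence of the $\inhom$. The only cosmetic difference is that the paper never inverts $r_\pa$ --- it multiplies by $\Pi_\pa r_\pa$ on both sides and uses that $(\Pi_\pa r_\pa)^2=(\hbar^2-(\inhom[\pa]-\inhom[\pa+1])^2)\times\Id$, so a matrix with polynomial entries times this nonzero polynomial vanishes only if the matrix does --- whereas you invert $R_\pi$ on the generic locus and conclude by polynomial continuity, which is equally valid.
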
 
\begin{proof} Recall that the spin chain vector space is a tensor product of $L$ copies of $\CC^{\gm|\gn}$. Introduce $r_{\pa}(\inhom[])=(\inhom-\inhom[\pa+1])\mathcal{P}_{\pa,\pa+1}+\hbar\Id$, where $\mathcal{P}_{\pa,\pa+1}$ is the graded permutation of two copies of $\CC^{\gm|\gn}$ at the $\pa$'th and the $(\pa+1)$'th position of $(\CC^{\gm|\gn})^{\otimes L}$. This is an intertwining operator that satisfies
\be
\label{eq:braiding}
r_{\pa}\, ev_{(\inhom[1],\ldots,\inhom,\inhom[\pa+1],\ldots,\inhom[L])}\left(T_{ij}\right)\, = ev_{(\inhom[1],\ldots,\inhom[\pa+1],\inhom,\ldots,\inhom[L])}\left(T_{ij}\right)\,r_{\pa}\,
\ee
which can be graphically represented as
\begin{tikzpicture}[baseline=.05cm,scale=.5] 
\foreach\x/\i/\y in {1/1/1.2,2/2/1.2,4/{\pa}/0,5/{\pa+1}/0,7/L/1.2} {\fill (\x,0) circle (4pt);
\draw (\x,\y) -- (\x,-.5) node [right=-.1cm] {\tiny $\inhom[\i]$};}
\draw[rounded corners=.1cm] (4,0) -- (4,.25) -- (4.9,.7) -- (4,1)-- (4,1.2);
\draw[rounded corners=.1cm] (5,0) -- (5,.25) -- (4.1,.7) -- (5,1)-- (5,1.2);
\fill (4.5,0.5) circle (4pt);
\draw (0.5,0) -- (2.5,0) (3.5,0) -- (5.5,0)  (6.5,0) --(7.5,0);
\draw [ dotted, thick] (2.6,0)--(3.4,0) (5.6,0)--(6.4,0);
\end{tikzpicture}=
\begin{tikzpicture}[baseline=-.3cm,xscale=.5,yscale=-.5] 
\foreach\x/\i/\y in {1/1/1.2,2/2/1.2,4/{\pa+1}/0,5/{\pa}/0,7/L/1.2} {\fill (\x,0) circle (4pt);
\draw (\x,\y) -- (\x,-.5) (\x,1.2) node [below=-.1cm] {\tiny $\inhom[\i]$};}
\draw[rounded corners=.1cm] (4,0) -- (4,.25) -- (4.9,.7) -- (4,1) -- (4,1.2);
\draw[rounded corners=.1cm] (5,0) -- (5,.25) -- (4.1,.7) -- (5,1) -- (5,1.2);
\fill (4.5,0.5) circle (4pt);
\draw (0.5,0) -- (2.5,0) (3.5,0) -- (5.5,0)  (6.5,0) --(7.5,0);
\draw [ dotted, thick] (2.6,0)--(3.4,0) (5.6,0)--(6.4,0);
\end{tikzpicture}
and is essentially the Yang-Baxter equation in the physical channel.

If we also introduce $\Pi_\pa$ -- permutation of inhomogeneities $\inhom[\pa]$ and $\inhom[\pa+1]$ in $\CC[\inhom[]]$ then $\Pi_\pa r_\pa$ commutes with the Yangian action and hence with the Bethe algebra. Therefore, if there is any equation $P(\hat c_1,\ldots,\hat c_L)=p(\inhom[1],\ldots,\inhom[L])\times\Id$ that holds so will hold $P(\hat c_1,\ldots,\hat c_L)=p(\inhom[\sigma(1)],\ldots,\inhom[\sigma(L)])\times \Id$ for any $\sigma\in\SG_L$. Since inhomogeneities are independent, this is only consistent if $p$ is a symmetric polynomial.

To be prudent, we notice that the derivation of $P(\hat c_1,\ldots,\hat c_L)=p(\inhom[\sigma(1)],\ldots,\inhom[\sigma(L)])\times\Id$ emerges from the following argument:\\ $\begin{array}{l}0=\Pi_\pa r_\pa(P(\hat c_1,\ldots,\hat c_L)-p(\inhom[1],\ldots,\inhom[\pa],\inhom[\pa+1],\ldots,\inhom[L])\times\Id)\Pi_\pa r_\pa\\\phantom{0}=(P(\hat c_1,\ldots,\hat c_L)-p(\inhom[1],\ldots,\inhom[\pa+1],\inhom[\pa],\ldots,\inhom[L])\times\Id)(\Pi_\pa r_\pa)^2=:A\times (\Pi_\pa r_\pa)^2.\end{array}$\\ $(\Pi_\pa r_\pa)^2=(\hbar^2-(\inhom[\pa]-\inhom[\pa+1])^2)\times\Id$. Then, because components of the matrix $A$ are polynomials in $\inhom$, $0=A\times(\hbar^2-(\inhom[\pa]-\inhom[\pa+1])^2)$ is only possible if $A=0$.
\end{proof}

We see that for instance $\inhom\times\Id$ does not belong to the Bethe algebra, except for $L=1$. But any symmetric polynomial in inhomogeneities (times the identity operator) is an element of $\BAL$ because of \eqref{mastereq2}. So $\BAL$ is naturally a $\CC[\sse]$-algebra. 

Recall now also the definition of the Wronskian algebra $\WAL$ \eqref{WALdef} which is a polynomial algebra generated by $c_\pa$ and which is also a $\CC[\sse]$-algebra.

There is a potential difference between $\WAL$ and $\BAL$. The generators of the Wronskian algebra, by definition, satisfy only \eqref{mastereq2}. The generators of the Bethe algebra are certain explicit operators and they could in principle satisfy some other additional constraints. However, we can show that they do not.

\begin{theorem}
\label{thm:isomorphism}
The map $\varphi$ defined as
\begin{equation}
\label{isomorphism}
\varphi:\CR\longrightarrow\BAL\,, \qquad \varphi:\,c_{\pa} \longmapsto \hat{c}_{\pa}
\end{equation}
is an isomorphism of $\CC[\sse]$-algebras.
\end{theorem}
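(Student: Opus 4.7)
The map $\varphi$ is tautologically a well-defined, surjective morphism of $\CC[\sse]$-algebras: well-defined because the operators $\hat c_\pa\in\BAL$ satisfy \eqref{mastereq2} by construction (the WBE are consequences of functional relations among Q-operators, verified in the appendix on Q-operators belonging to the Bethe algebra), and surjective because $\BAL$ is generated by these same $\hat c_\pa$. The content of the theorem is therefore injectivity: no $\CC[\sse]$-polynomial relation beyond the WBE holds between the $\hat c_\pa$.

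My plan is to match ranks of both algebras as $\CC[\sse]$-modules. From Proposition~\ref{freeness} and the Hilbert-series computation of Section~\ref{sec:Hilbert}, we already know $\WAL$ is a free $\CC[\sse]$-module of rank $d_\Lambda=\dim_\CC\VV$. To control $\BAL$, the plan is to introduce the enlarged spin-chain module $\lVV:=\VV\otimes_\CC\CC[\sse]$, a free $\CC[\sse]$-module of rank $d_\Lambda$ on which $\BAL$ acts, and to exhibit a \emph{cyclic vector} $v\in\lVV$, meaning $\BAL\cdot v=\lVV$. Cyclicity automatically forces the orbit map $\BAL\to\lVV$, $a\mapsto a\cdot v$, to be injective: if $a\cdot v=0$ then for any $b\in\BAL$, $a\cdot(b\cdot v)=b\cdot(a\cdot v)=0$ by commutativity, so $a$ kills all of $\lVV$ and hence equals zero as an element of $\End_{\CC[\sse]}(\lVV)$. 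Combined with surjectivity this yields a $\CC[\sse]$-module isomorphism $\BAL\cong\lVV$; in particular $\BAL$ is free of rank $d_\Lambda$.

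Once both algebras are free $\CC[\sse]$-modules of the same rank $d_\Lambda$ over the integral domain $\CC[\sse]$, the surjection $\varphi$ must be an isomorphism: its kernel is a submodule of the free module $\WAL$, hence torsion-free, while tensoring with the fraction field $\CC(\sse)$ turns $\varphi$ into a surjection of $d_\Lambda$-dimensional vector spaces -- automatically a bijection -- so the kernel has generic rank zero and therefore vanishes. Specialisation at a numerical $\bsse$ then follows from flatness: since both sides are free $\CC[\sse]$-modules, the quotient by $\langle\sepa-\bsepa\rangle$ is exact and $\varphi$ descends to an isomorphism $\WAL(\bsse)\cong\BAL(\bsse)$, under the stated mild condition $\binhom+\hbar\neq\binhom[\pa']$ for $\pa<\pa'$ on the inhomogeneities.

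The hard part, which I expect to be the main obstacle, is the cyclicity claim for $\BAL$ acting on $\lVV$. A natural candidate vector is $v=\vve_1\otimes\cdots\otimes\vve_1$, projected onto $\VTl$ in the twist-less case, and cyclicity should follow from a degeneration/triangularity argument as the $\inhom$ become widely separated: in the regime $|\inhom[\pa+1]/\inhom[\pa]|\gg 1$ (or its twisted analogue from Section~\ref{sec:label}) the explicit solutions provide $d_\Lambda$ linearly independent vectors of the form $P(\hat c)\cdot v$, and freeness of $\WAL$ propagates the statement from this limiting regime to the whole of $\mathrm{Spec}\,\CC[\sse]$. The supersymmetric and twist-less subtleties are absorbed into the correct choice of generators from the QQ-system and the restriction to $\VTl$ respectively; the sufficient condition $\binhom+\hbar\neq\binhom[\pa']$ enters precisely when specialising from generic $\sse$ to an explicit numerical value of the inhomogeneities, where one has to ensure that no ``unexpected'' kernel appears in the orbit map $a\mapsto a\cdot v$ after base change, and is the content of the appendix on cyclicity modelled on the approach of \cite{MTV}.
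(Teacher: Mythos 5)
Your reduction of the theorem to injectivity, and your final step (a surjection between free $\CC[\sse]$-modules of equal rank over an integral domain is an isomorphism), are fine. The gap is in the middle: the module on which you want to run the cyclic-vector argument does not exist as stated. The operators $\hat c_\pa$ have matrix entries that are polynomials in the individual inhomogeneities $\inhom$, not in the symmetric combinations $\sepa$, so $\BAL$ does not act on $\lVV:=\VV\otimes_\CC\CC[\sse]$ at all; the natural module is $\VV\otimes\CC[\inhom[]]$, which as a $\CC[\sse]$-module has rank $L!\,d_\Lambda$, not $d_\Lambda$. Consequently cyclicity in your sense is false for that module (the paper states explicitly that the image of the orbit map $\psi_\omega$ is never all of $U_\Lambda\otimes\CC[\inhom[]]$ beyond $L=1$), and the correct replacement is the symmetrised subspace $\lVVS$, whose identification as the image of $\psi_\omega$ requires the degenerate-affine-Hecke/character computations of Appendix~\ref{sec:cyclic} (Propositions~\ref{thm:ch1},~\ref{thm:ch2} and Lemma~\ref{thm:chiiso}). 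That machinery is substantially heavier than the theorem you are proving, and in the paper it is only needed later, for the specialisation statements at numerical $\binhom[]$. Worse, the injectivity of the orbit map (Lemma~\ref{thm:nonzero}), which your scheme would still need, is itself proven by exactly the analytic argument that the paper uses to prove Theorem~\ref{thm:isomorphism} directly: eigenvalues of $\hat c_\pa$ on a common eigenvector solve WBE, near a regular point the solutions are local diffeomorphic branches of $\sse$, so a polynomial relation killed by $\varphi$ vanishes on an $L$-dimensional ball and hence identically. Your proposed detour therefore does not bypass the essential step, and the degeneration/``large separation'' sketch you offer in its place only gives cyclicity of $\VV$ over $\CC$ at generic numerical points, which is weaker than the module statement you need over $\CC[\sse]$.

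By contrast, the paper's own proof of this theorem is deliberately lightweight: it uses neither the freeness of $\WAL$ nor the solution count, only surjectivity of the Wronski map plus the ``number of variables equals number of parameters'' argument (a measure-theoretic selection of a branch on which the common eigenvector realises one fixed solution, then vanishing of a polynomial on an open ball). Your reliance on Proposition~\ref{freeness} and the Hilbert series is legitimate since those come earlier, but the rank of $\BAL$ as a $\CC[\sse]$-module is precisely what is not available at this stage without either the analytic argument or the full symmetrised-module apparatus. Finally, the closing remark that specialisation at $\bsse$ ``follows from flatness'' under the condition $\binhom+\hbar\neq\binhom[\pa']$ glosses over the real difficulty: evaluating matrix entries at $\binhom[]$ is strictly more restrictive than quotienting $\BAL$ by $\langle\sse-\bsse\rangle$ (see the good/bad example in Section~\ref{sec:specialisation}), which is why the paper needs Theorems~\ref{repisotheorem} and~\ref{isotheorem} and Appendix~\ref{sec:cyclic} — but that lies outside the statement under review.
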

In other words, $\hat{c}_{\pa}$ not only satisfy \eqref{mastereq2} but  any polynomial relation between $\hat{c}_{\pa}$ with coefficients in $\CC[\sse]$ should follow from \eqref{mastereq2}.

The proof below makes precise the following argument: as there are as many variables $c_\pa$ as the parameters $\sepa$ in \eqref{mastereq2}, there cannot exist an extra relation between the variables because it would imply a relation between the parameters which are known to be independent.

\begin{proof} We need to show that the exhibited map $\varphi$ is a well-defined (consistent) morphism and that it is surjective and injective. It is well-defined because Q-operators form a commutative algebra and they satisfy  \eqref{mastereq2} from the very derivation of this relation. It is surjective because  $\hat{c}_{\pa}$ generate $\BAL$.

The non-trivial part is the injectivity (faithfulness). To prove this we take an element $P\in\WAL$ (so just a polynomial in the variables $c_{\pa}$ and $\se{\pa}$ modulo relations in the ideal) and show that $\varphi(P)=0$ implies $P=0$.

Note that $P$ can be viewed as a polynomial in $c_{\pa}$ with constant coefficients as all occurrences of $\se{\pa}$ can be replaced by $\SWe_{\pa}(c)$. Since $\varphi(P(c))=P(\hat c)$, $P$ has to vanish every time when $c_{\pa}$ are eigenvalues of $\hat c_{\pa}$ on a joint eigenvector. Then it suffices to construct enough of such eigenvalues to conclude that $P=0$.

To this end consider ${\sse}\notin \Secrit$. There exists a neighbourhood $\mathcal{O}_{\sse}$ where all the solutions of \eqref{mastereq2} are distinct and can be parameterised by $d_{\Lambda}$ diffeomorphisms $\SW^{-1}_i$ from $\mathcal{O}_{\sse}$ to $d_{\Lambda}$ non-intersecting open sets $U_i$ in $\Cdom$. 

We know that for all points of $\mathcal{O}_{\sse}$ the Bethe algebra has at least one common eigenvector and that the corresponding eigenvalues of $\hat c_{\pa}$ provide a solution of \eqref{mastereq2}. By choosing in some way exactly one eigenvector at each point of $\mathcal{O}_{\sse}$ we create a disjoint partition of $\mathcal{O}_{\sse}$ into $d_\Lambda$ sets $\mathcal{O}_i$ corresponding to points of $\mathcal{O}_{\sse}$ where the common eigenvector gives the $i$-th solution. The closure (in $\mathcal{O}_{\sse}$) of one of the $\mathcal{O}_i$'s, say $\bar{\mathcal{O}}_1$, contains an $L$-dimensional ball $\mathcal{O}$. This is proved as follows~\footnote{This can be also proven by arguing that the common eigenvector can be chosen continuously in which case taking closure is unnecessary as well.}. Consider $\lambda$ the Lebesgue measure on $\mathcal{O}_{\sse}$ normalised to $1$. Then either $\lambda(\bar{\mathcal{O}}_1)=1$ and therefore $\bar{\mathcal{O}}_1=\mathcal{O}_{\sse}$ since its complementary in $\mathcal{O}_{\sse}$ is an open set of measure zero or $\lambda(\bar{\mathcal{O}}_1)<1$ in which case by restricting to its complementary in $\mathcal{O}_{\sse}$ we are brought back to the same problem but with $d_\Lambda-1$ sets. We conclude by induction. 

Then $P$ vanishes on $\SW^{-1}_1(\mathcal{O}_1)$ and since the zeros of a polynomial form a closed set and $\SW^{-1}_1$ is a diffeomorphism on $\mathcal{O}_{\sse}$ it will also vanish on $\SW^{-1}(\bar{\mathcal{O}}_1)$ which contains an $L$-dimensional ball. $P$ being a polynomial in $L$ variables thus implies $P=0$.
\end{proof}

One may ask whether there are some additional polynomial relations between $\hat c_\pa$ with coefficients being non-symmetric polynomials of inhomogeneities. This is not possible either as can be shown using a slightly updated version of Lemma~\ref{thm:1}, see Appendix~\ref{sec:cth}.

\subsection{What can happen upon specialisation}
\label{sec:specialisation}
Now we shall consider what happens with this isomorphism when inhomogeneities $\inhom[\pa]$ get concrete numerical values. We call this procedure specialisation at point $\binhom[]$.

Specialisation of the Bethe algebra $\BAL(\binhom[])$ is replacing $\inhom$ with $\binhom$ in {\it all matrix entries} of the operators $\hat c_{\pa}$. On the other hand, specialisation of the Wronskian algebra is 
\be
\WAL(\binhom[])\equiv\WAL(\bsepa=\sepa(\binhom[]))\simeq \WAL/{\langle \sse-\bsse\rangle}\,.
\ee 
Its image under the map $\varphi$ is $\BAL/{\langle (\sse(\theta)-\bsse)\times\Id\rangle}$ which explicitly means the following: replace $\se{\pa}$ with its numerical value each time it {\it multiplies} some matrix belonging to $\BAL$. This operation is less restrictive than specialisation of the Bethe algebra and hence one can state that the morphism
\begin{equation}
\label{specmorphism}
\varphi_{\binhom[]}~:~\WAL({\binhom[]})\longrightarrow \BAL(\binhom[])
\end{equation}
is surjective but may have a non-zero kernel. We denote by $\Bad$ the set of $\binhom[]$ when $\varphi_{\binhom[]}$ is not an isomorphism.

\begin{example}
\label{ex:gb}
Consider a  Wronskian algebra $\WA$ realised by relations $c_1+c_2=\se1$ and $c_1c_2=\se2$~\footnote{Up to an isomorphism, it is the Wronskian algebra $\WA_{\twoone}$ with $\hbar =0$ which is partially specialised to {\it e.g.} $\theta_3=0$, {\it cf.} Appendix~\ref{sec:OE} }. It is a free $\CC[\se1,\se2]$-module, for the basis one can choose $1,c_1$, and the ring multiplication rule follows from $c_1^2-\se 1 c_1+\se 2=0$. Then $\check c_1=\mtwo{0}{1}{-\se2}{\se1}$.

Consider two ``Bethe algebras'' $\BA^{\rm good}$ and $\BA^{\rm bad}$, with, respectively,
\be
\label{eq:45}
\hat c_1^{\rm good}=\mtwo {\inhom[1]}{1}{0}{\theta_2}\,,\quad {\rm and}\quad \hat c_1^{\rm bad}=\mtwo {\inhom[1]}{0}{0}{\theta_2}\,.
\ee
They are both $\CC[\se1,\se2]$-isomorphic, as algebras, to $\WA$. Note however that they realise non-isomorphic representations over  $\CC[\se1,\se2]$, \ie  there is no intertwiner  matrix mapping $c_1^{\rm good}$ to $c_1^{\rm bad}$ whose coefficients are polynomial in $\se 1,\se 2$.

If we specialise at any point where $\binhom[1]\neq \binhom[2]$, the corresponding $\varphi_{\binhom[]}$ would be an algebra isomorphism both for $\CB^{\rm good}(\binhom[])$ and $\CB^{\rm bad}(\binhom[])$, also there would be obviously an intertwiner over $\mathbb{C}$ between ``good'' and ``bad'' representations making them isomorphic.

Now, let us specialise to a point $\binhom[1]=\binhom[2]$. The specialised Wronskian algebra becomes a two-dimensional algebra over $\CC$ generated by $1,c_1$ and relation $(c_1-\binhom[1])^2=0$. It is isomorphic to the algebra generated by $\check c_1=\mtwo{0}{1}{-\binhom[1]^2}{2\binhom[1]}$ which cannot be diagonalised, \cf \eqref{checkx}.

The Bethe algebra $\CB^{\rm good}(\binhom[])$ is also two-dimensional and isomorphic to $\WA(\binhom[])$ whereas $\CB^{\rm bad}(\binhom[])$ is one-dimensional since $c_1-\binhom[1]$ is in the kernel of $\varphi_{\binhom[]}$.
\end{example}

The difference between ``good'' and ``bad'' cases is in the presence of the nilpotent piece $\mtwo{0}{1}{0}{0}$ in $\hat c_1^{\rm good}$ which becomes an element of $\BA(\binhom[])$ each time $\binhom[1]=\binhom[2]$. This well illustrates what happens in the general situation. As $\BAL(\binhom[])$ is a commutative algebra, we can define the short exact sequence \eqref{ses} for it. Then we can state the following
\begin{theorem}
\label{eq:alwaysiso}
In the map between the two sequences
\be
\begin{tikzcd}
0 \arrow{r}{} &
 \Nil(\WAL(\binhom[])) \arrow{r}{} \arrow[swap]{d}{\varphi_{\binhom[]}^{\rm nil}} &
 \WAL(\binhom[]) \arrow{r}{} \arrow[swap]{d}{\varphi_{\binhom[]}} &
 \diag(\WAL(\binhom[])) \arrow{r}{} \arrow[swap]{d}{\varphi_{\binhom[]}^{\rm diag}} &
 0 
 \\%
 0 \arrow{r}{} &
 \Nil(\BAL(\binhom[])) \arrow{r}{}  &
 \BAL(\binhom[]) \arrow{r}{}  &
 \diag(\BAL(\binhom[])) \arrow{r}{} &
 0
\end{tikzcd}
\,,
\ee
$\varphi_{\binhom[]}^{\rm diag}$ is an isomorphism for any $\binhom[]$.
\end{theorem}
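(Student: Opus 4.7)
The strategy is to reinterpret both $\diag(\WAL(\binhom[]))$ and $\diag(\BAL(\binhom[]))$ through their sets of maximal ideals and prove the dual statement on spectra. Both diagonal quotients are finite-dimensional reduced commutative $\CC$-algebras, so each is isomorphic as a $\CC$-algebra to a product of copies of $\CC$ indexed by its maximal ideals. Maximal ideals of $\WAL(\binhom[])$ correspond bijectively to distinct solutions $\bar c \in \CC^L$ of the Wronskian Bethe equations at $\bsse=\sse(\binhom[])$, while maximal ideals of $\BAL(\binhom[])$ correspond to distinct joint eigenvalues of the commuting operators $\{\hat c_{\pa}(\binhom[])\}$ acting on $\VV$. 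Any joint eigenvalue automatically solves the Wronskian relations, giving an injection of spectra; hence injectivity of $\varphi_{\binhom[]}^{\rm diag}$ is equivalent to the reverse surjectivity, namely that every distinct solution at $\bsse$ is realized as a joint eigenvalue of $\BAL(\binhom[])$ on $\VV$.

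I will prove this surjectivity by a deformation argument. Choose an analytic one-parameter family $\binhom[](\epsilon)\in\CC^L$ with $\binhom[](0)=\binhom[]$ such that for $\epsilon\neq 0$ in a small punctured disk one has $\binhom[](\epsilon)\notin\Bad$, so that $\varphi_{\binhom[](\epsilon)}$ is already a $\CC$-algebra isomorphism. Such families exist in abundance because $\Bad$ is a proper subset of $\CC^L$: by Theorem~\ref{thm:isomorphism}, $\varphi$ is a $\CC[\sse]$-algebra isomorphism, so the locus where its matrix representation fails to be invertible after specialization is Zariski-closed and proper. For each such $\epsilon$, every one of the $d_{\Lambda}$ simple solutions at $\bsse(\epsilon):=\sse(\binhom[](\epsilon))$ is realized as a joint eigenvalue of $\{\hat c_{\pa}(\binhom[](\epsilon))\}$ on $\VV$.

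Fix now any solution $\bar c$ at $\bsse$. Properness of the Wronski map (Section~\ref{sec:basicproperties}) together with the constancy of $\dim_{\CC}\WAL(\binhom[])=d_{\Lambda}$ under specialization (Proposition~\ref{freeness}) ensures that there exists a branch $\bar c(\epsilon)$ of simple solutions at $\bsse(\epsilon)$ with $\bar c(\epsilon)\to\bar c$ as $\epsilon\to 0$: algebraic multiplicities are preserved, so every solution at $\bsse$ is the limit of at least one such branch. For each such $\epsilon$ pick a normalized joint eigenvector $v(\epsilon)\in\VV$ with eigenvalues $\bar c(\epsilon)$. By compactness of the unit sphere in the finite-dimensional space $\VV$, extract a subsequence $v(\epsilon_n)\to v\ne 0$. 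Since the matrix entries of $\hat c_{\pa}(\binhom[])$ are polynomial in $\binhom[]$, passing to the limit in $\hat c_{\pa}(\binhom[](\epsilon_n))\,v(\epsilon_n)=\bar c_{\pa}(\epsilon_n)\,v(\epsilon_n)$ yields $\hat c_{\pa}(\binhom[])\,v=\bar c_{\pa}\,v$, so $\bar c$ is a joint eigenvalue of $\BAL(\binhom[])$ on $\VV$, completing the proof.

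The main obstacle is the limit step. Properness alone only ensures that inverse images of compact sets are compact, whereas what we really need is that each solution at the (possibly singular) point $\bsse$ arises as a limit of a branch of simple solutions of the deformed problem. The necessary extra input is freeness of $\WAL$ over $\CC[\sse]$ (Proposition~\ref{freeness}), which guarantees that algebraic multiplicities are constant under specialization of $\bsse$, so no solution can appear or disappear discontinuously as $\epsilon\to 0$. A secondary technicality is that the family $v(\epsilon)$ need not depend continuously on $\epsilon$, but a convergent subsequence extracted via finite-dimensional compactness is sufficient for the conclusion.
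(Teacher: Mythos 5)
There is a genuine gap, and it sits exactly where the real difficulty of this theorem lies. Your deformation-plus-compactness argument needs, as input, that arbitrarily close to any given $\binhom[]$ there exist points outside $\Bad$, i.e.\ that $\Bad$ is a \emph{proper} subset of $\CC^L$. You justify this by saying it follows from Theorem~\ref{thm:isomorphism}. It does not. What follows easily is Zariski-closedness: picking a $\CC[\sse]$-basis $\mathfrak b_1,\dots,\mathfrak b_{d_\Lambda}$ of $\WAL$ (using Proposition~\ref{freeness}), the specialised algebra $\BAL(\binhom[])$ is spanned by the evaluated matrices $\varphi(\mathfrak b_i)(\binhom[])$, and $\Bad$ is cut out by the vanishing of all $d_\Lambda\times d_\Lambda$ minors of their coefficient matrix. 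But properness means these minors are not \emph{identically} zero in $\inhom[]$, i.e.\ that the $\varphi(\mathfrak b_i)$ are linearly independent over $\CC[\theta]$ and not merely over $\CC[\sse]$ --- and that is not a formal consequence of the abstract $\CC[\sse]$-algebra isomorphism. Indeed, a commuting family such as $\{\Id,\,f(\inhom[])\Id\}$ with $f$ a non-symmetric polynomial is $\CC[\sse]$-independent yet degenerates at \emph{every} numerical point; the paper's Example~\ref{ex:gb} makes the same point, that the specialisation behaviour is a representation-level property not determined by the $\CC[\sse]$-isomorphism class of the algebra. Closing this hole is precisely what the paper's own proof does: from the identically vanishing minors it extracts a relation $\sum_i p_i(\inhom[])\,\hat X^i=0$ with $p_i\in\CC[\theta]$, then uses the braiding/intertwiner property \eqref{eq:braiding} (the mechanism behind Lemma~\ref{thm:1}) to symmetrise the coefficients into $\CC[\sse]$, and only then does Theorem~\ref{thm:isomorphism} yield a contradiction. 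Without some equivalent of this symmetrisation step (or an independent exhibition of a single point where $\dim_\CC\BAL(\binhom[])=d_\Lambda$, e.g.\ from an explicit decoupling regime), your family $\binhom[](\epsilon)$ has no good points to start from and the argument does not get off the ground.

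The remainder of your proposal is sound and is in fact close in spirit to how the paper handles the singular points once regular points are under control: the reduction of injectivity of $\varphi_{\binhom[]}^{\rm diag}$ to surjectivity on spectra is correct, and the continuity argument (branches of solutions converging to $\bar c$ by properness and constancy of the algebraic count, normalised eigenvectors with a convergent subsequence, passage to the limit in the eigenvalue equations) is a legitimate replacement for the paper's appeal to continuity of the Q-operator spectrum. Note also that your use of properness and of Proposition~\ref{freeness} is not circular, since both have proofs independent of this theorem (Appendices~\ref{sec:Finiteness} and \ref{sec:deta-constr-mapp}), although the paper deliberately avoids them here so that properness can conversely be read off from the theorem.
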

The isomorphism  $\varphi_{\binhom[]}^{\rm diag}$ literally means that the distinct solutions of the Wronskian Bethe equations are in one-to-one correspondence with the eigenspaces of the Bethe algebra (there are no non-physical solutions). 
 
In the regular case when $\sse(\binhom[])\notin\Secrit$,  $\Nil(\WAL(\binhom[]))={0}$ and so the theorem implies that the Wronskian and the Bethe algebras are isomorphic. Hence $\sse(\Bad)\subset\Secrit$.
 
In the degenerate case $\sse(\binhom[])\in\Secrit$, the non-isomorphism between the Wronskian and Bethe algebras, if present, can be only due to $\varphi_{\binhom[]}^{\rm nil}$ having non-zero kernel. Roughly speaking, one can only lose information about Jordan block structure. The Bethe algebra could in principle have eigenspaces of dimension higher than $1$ while this never happens with $\cWAL$, see Section~\ref{sec:algedes}. 

\begin{proof}
First note that $\varphi_{\binhom[]}^{\rm nil}$ and hence $\varphi_{\binhom[]}^{\rm diag}$ are well-defined because the nil-radical is an ideal and the image of a nilpotent element is nilpotent.

Take a regular point $\binhom[]$ ($\sse(\binhom[])\notin\Secrit$) and consider a linear combination $X$ of $c_\pa$'s taking pairwise distinct values at the $d$ solutions~\footnote{We do not assume $d=d_{\Lambda}$ to make this proof independent of the counting result of Section~\ref{sec:Hilbert}. We also note that this proof does not rely on Proposition \ref{freeness}.} of \eqref{mastereq2} at $\binhom[]$ (this is possible since all the solutions are distinct). Then $(X^i)_{0\leq i\leq d-1}$ is a basis of $\WAL(\binhom[])$ and by continuity it will remain a basis upon specialisation to any $\theta$ in some open neighbourhood $\mathcal{O}_{\binhom[]}$. Other choices of local bases are possible but for convenience we will work with this one.

Suppose $\varphi_{\theta}$ \emph{is not} an isomorphism for all $\theta\in\mathcal{O}_{\binhom[]}$. Then $(\hat{X}^i)_{0\leq i\leq d-1}$ are not linearly dependent for all $\theta\in\mathcal{O}_{\binhom[]}$. Construct columns from the $d^2$ components of the matrices $\hat{X}^i$ and combine the columns into a $d^2\times d$ matrix.  Linear dependence implies that all of the $d\times d$ minors of this matrix vanish on $\mathcal{O}_{\binhom[]}$. Since these minors are polynomials in $\inhom$ this means that they are zero as polynomials. This in turn provides a non-trivial relation $\sum_{i=1}^{d-1}p_i(\inhom[\pa]) \hat{X}^i=0$ with $p_i\in\CC[\theta]$.

Now we would like to be able to take $p_i\in\CC[\sse]$. To this end we use the braiding property \eqref{eq:braiding} which implies $\sum_{i=1}^{d-1}p_i(\theta_{\sigma(\pa)}) \hat{X}^i=0$ for any $\sigma\in\SG_L$. Thus we can replace the $p_i$ by their symmetric part. To ensure that it is non-zero for at least one of them we can multiply the relation we started with by $\prod\limits_{\sigma\in\SG_L\backslash\mathrm{Id}}p_k(\inhom[\sigma(\pa)])$ for some non-zero $p_k$ and then take the symmetric part. 

In the end we obtain a non-zero polynomial $P$ with coefficients in $\CC[\sse]$ of degree smaller or equal to $d-1$ such that $P(\hat{X})=0$. But by the isomorphism \eqref{isomorphism} this implies that $P(X)\in\mathcal{I}_\Lambda$. Specialising $\WAL$ at a point of $\mathcal{O}_{\binhom[]}$ where one of the coefficients of $P$ does not vanish we obtain a contradiction with the fact that $(X^i)_{0\leq i\leq d-1}$ must be a basis at that point.

Therefore $\varphi_{\binhom[]'}$ is an isomorphism for at least one regular point $\binhom[]'\in\mathcal{O}_{\binhom[]}$. By path-connectivity this immediately propagates to all regular points. Indeed, $\varphi_{\binhom[]}$ can cease to be an isomorphism only if the dimension of the Bethe algebra drops but since the spectrum of Q-operators (the set of roots of their characteristic polynomials) is continuous in $\theta$ this can only happen when two solutions cross, that is, at singular points.

At singular points, by continuity of the spectrum, solutions of the Wronskian Bethe equations are still in bijection with the spectrum of Q-operators. The only information that can be lost is the multiplicity of the solutions. Then considering $\varphi_{\binhom[]}$ up to nilpotent parts restores the isomorphism.
\end{proof}

Note that the set of $\binhom[]$ where $\varphi_{\binhom[]}(X^i)$ for $0\leq i\leq d-1$ do not form a basis is {\it a priori} not related to the set of $\binhom[]$ for which $\sse(\binhom[])\in\Secrit$. Hence we can typically expect that $\sse(\Bad)$ is of measure zero inside $\Secrit$. 

\paragraph{Remark} \label{propernessfromQ} The continuity of the spectrum of Q-operators combined with the above theorem provides an immediate proof that $\SW$ is proper as was previously announced. There is no circular argument as we did not use properness in the proof above.

\subsection{Specialisation of the isomorphism}
\label{sec:specialisationofiso}
Although the set $\Bad$ where the Wronskian and the Bethe algebras are not isomorphic is constrained to be, most likely, in a measure zero subset of critical points $\binhom[]$, we still do not have means to locate $\Bad$. This is unsatisfactory because we cannot guarantee to be outside $\Bad$ for physically interesting cases, for instance when all inhomogeneities coincide. In this section we will provide an explicit constraint on $\Bad$. Since the required formalism is quite heavy we will only present the logic behind it and the final results. The technical details are postponed to Appendix~\ref{sec:cyclic}.

The main conceptual step is the following. Although the Wronskian and the Bethe algebras were shown to be isomorphic in Theorem~\ref{thm:isomorphism}, there is an important qualitative difference between them. Namely, the Bethe algebra is represented by matrices and so it naturally acts on a vector space (the spin chain Hilbert space), whereas the Wronskian algebra is abstractly defined by generators and relations and does not admit such a representation. The only natural space on which $\WAL$ could possibly act is itself (\ie by the regular representation). In addition to an isomorphism between algebras, we would like to build an isomorphism between this representation of $\WAL$ and the physical representation of $\BAL$.

To build such an isomorphism, a standard procedure is to try to find a cyclic vector. By definition, for a given algebra $\mathcal{A}$ that acts on some vector space $V$, a vector $\omega$ is said to be cyclic if the action of $\mathcal{A}$ on $\omega$ spans $V$. Then $V$ is said to be a cyclic $\mathcal{A}$-module. Equivalently, $\omega$ is cyclic if and only if the map $\psi_\omega : \mathcal{A}\rightarrow V$, $A\mapsto A\cdot\omega$ is surjective. If moreover it is injective, it is an isomorphism identifying $V$ with the regular representation of $\mathcal{A}$.

In the case of $\BAL$ acting on $U_\Lambda\otimes \CC[\inhom[]]$, it turns out that $\psi_\omega$ is injective for \emph{any} nonzero vector $\omega$ as shown in Lemma~\ref{thm:nonzero}. Unfortunately, the image of $\psi_\omega$ is not $U_\Lambda\otimes \CC[\inhom[]]$ except probably for $L=1$. This can already be seen by the fact that  $\BAL$ as an algebra involves only symmetric polynomials $\sepa$ whereas the matrix coefficients of $\hat c_\pa$ are from $\CC[\inhom[]]$. Nevertheless, for a specific and unique (up to normalisation) choice of $\omega$ one can explicitly identify the image of $\psi_\omega$ as a subspace $\lVVS\subset U_\Lambda\otimes \CC[\inhom[]]$ invariant under an action of the symmetric group $\SG_L$ commuting with the Yangian (Lemma~\ref{thm:chiiso}). Thus the regular representation of $\WAL$ can be identified with the representation $\lVVS$ of $\BAL$.

The above remarks are the tools to prove a powerful and explicit constraint on $\Bad$. Since it is a central result to us, we first recall the definitions of all the objects.
\newline
\newline
\indent $\WAL(\bar\sse)$ is the specialised Wronskian algebra at point $\bar\sse\equiv(\bse{1},\ldots,\bse{L})\in\Sedom\simeq\CC^L$. It is defined as $\WAL(\bar\sse):=\CC[c_1,\ldots,c_L]/\CI$, where $\CI :=\langle\SW_1-\bse1,\ldots\SW_L-\bse{L}\rangle$ is an ideal in $\CC[c_1,\ldots,c_L]$. For the definition of $\SW_\pa$ see \eqref{mastereq2}. Also denote $\hat\CI:=\varphi(\CI)$, where $\varphi$ is the map \eqref{isomorphism}, and $\CJ:=\langle\se1-\bse1,\ldots,\se{L}-\bse{L}\rangle$ considered as a (maximal) ideal of $\CC[\sse]$.

$\BAL(\binhom[])$ is the Bethe subalgebra of the Yangian in the spin chain representation at point $\binhom[]=(\binhom[1],\ldots,\binhom[L])$ restricted to the weight subspace $U_{\Lambda}$. It is generated by operators $\hat c_1,\ldots,\hat c_L$. Matrix entries of these operators are polynomials in inhomogeneities $\inhom$ that are being set to values $\binhom$.

$\chi_\pa(\inhom[]):=\sum\limits_{1\leq i_1<\ldots<i_\pa\leq L}\inhom[i_1]\ldots\inhom[i_\pa]$ are the elementary symmetric polynomials of degree $\pa$, $\pa=1,\ldots,L$.

Let us first state an immediate consequence of Theorem~\ref{thm:isomorphism} and the discussion above.
\begin{lemma}
\label{thm:trivialsp}
\begin{itemize}
\item[i)] $\varphi$ induces an isomorphism of $\CC$-algebras $\WAL(\bar\sse)\simeq\BAL/\hat{\CI}$,
\item[ii)] $\psi_\omega$ induces an isomorphism of representations $\WAL(\bar\sse)\simeq\lVVS/\CJ\cdot\lVVS$.
\end{itemize}
\end{lemma}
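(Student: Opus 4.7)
The plan is to obtain both statements as formal consequences of (a) Theorem~\ref{thm:isomorphism}, which gives the $\CC[\sse]$-algebra isomorphism $\varphi:\WAL\to\BAL$, and (b) the cyclic-vector construction from Appendix~\ref{sec:cyclic}, which provides a $\CC[\sse]$-module (equivalently, $\BAL$-module) isomorphism $\psi_\omega:\BAL\stackrel{\sim}{\to}\lVVS$. In both cases we simply reduce these isomorphisms modulo the ideal $\CJ=\langle \sepa-\bsepa\rangle\subset\CC[\sse]$.

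For part (i), recall that $\WAL(\bar\sse)\equiv \WAL/\CJ\cdot\WAL$ by the definition of specialisation \eqref{specialisationW}. Since $\varphi$ is a $\CC[\sse]$-algebra isomorphism, it restricts to a bijection between $\CJ\cdot\WAL$ and $\varphi(\CJ\cdot\WAL)$, and so descends to an isomorphism of quotient algebras $\WAL/\CJ\cdot\WAL\simeq\BAL/\varphi(\CJ\cdot\WAL)$. What remains is to identify $\varphi(\CJ\cdot\WAL)=\hat{\CI}$. This is immediate once one notices that inside $\WAL$ the defining relation $\sepa=\SWe_\pa(c)$ holds, so $\CJ\cdot\WAL$ is generated by the elements $\SWe_\pa(c)-\bsepa$, and by the definition $\hat\CI:=\varphi(\CI)$ these map precisely to the generators of $\hat\CI$. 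Combining, $\WAL(\bar\sse)\simeq\BAL/\hat\CI$.

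For part (ii), the map $\psi_\omega$ is an isomorphism of $\CC[\sse]$-modules, and is simultaneously $\BAL$-linear; via $\varphi$ it is also $\WAL$-linear. Quotienting both source and target by the submodule generated by $\CJ$ gives a module isomorphism $\BAL/\CJ\cdot\BAL\stackrel{\sim}{\to}\lVVS/\CJ\cdot\lVVS$. On the left, $\CJ\cdot\BAL$ is the same ideal as $\hat\CI$ since the generators of $\CJ$ act on $\BAL$ by multiplication by $\sepa-\bsepa$, and the image of these multiplication operators under the $\CC[\sse]$-algebra structure of $\BAL$ is exactly the set of generators of $\hat\CI$. Therefore $\lVVS/\CJ\cdot\lVVS\simeq\BAL/\hat\CI$, which by part (i) is isomorphic to $\WAL(\bar\sse)$ as a module over itself (the regular representation). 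This is the claimed isomorphism of representations.

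Neither step is hard; the lemma is essentially book-keeping, and its only real inputs are Theorem~\ref{thm:isomorphism} and the existence of a cyclic $\omega$ as established in Appendix~\ref{sec:cyclic}. The only point requiring a bit of care is verifying that $\varphi(\CJ\cdot\WAL)=\hat\CI$ and that $\CJ\cdot\BAL=\hat\CI$, both of which follow from the Wronskian relations $\SWe_\pa(c)=\sepa$ that are built into the definition of $\WAL$ and transported to $\BAL$ via $\varphi$.
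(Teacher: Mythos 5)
Your proposal is correct and follows essentially the same route as the paper: the paper's proof is the single verification $\psi_\omega(\hat\CI)=\hat\CI\cdot\omega=\CJ\cdot\BAL\cdot\omega=\CJ\cdot\lVVS$, which encapsulates exactly your key identifications $\hat\CI=\CJ\cdot\BAL$ (equivalently $\varphi(\CJ\cdot\WAL)=\hat\CI$, using the Wronskian relations) and the descent of $\varphi$ and $\psi_\omega$ to the quotients by $\CJ$.
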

\begin{proof}
One easily checks that indeed $\psi_\omega(\hat\CI)=\hat\CI\cdot\omega=\CJ\cdot\BAL\cdot\omega=\CJ\cdot\lVVS$.
\end{proof}
This does not seem to be a very helpful statement since it is not clear how one should interpret the abstract quotients $\BAL/\hat{\CI}$ and $\lVVS/\CJ\cdot\lVVS$. However, Theorem~\ref{thm:iso} implies the following
\begin{theorem}
\label{repisotheorem} 
If $\bsepa=\sepa(\binhom[1],\ldots,\binhom[L])$ and $\binhom+\hbar\neq\binhom[\pa']$ for $\pa<\pa'$ then
\be
\label{eq:47}
\mathrm{ev}_{\binhom[]}: \lVVS/\CJ\cdot\lVVS   \longrightarrow  U_\Lambda
\,,\quad
[v] & \longmapsto  v(\binhom[1],\ldots,\binhom[L])
\ee
is an isomorphism of the Bethe algebra representations.
\qed
\end{theorem}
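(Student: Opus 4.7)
The plan is to establish the four standard properties of $\mathrm{ev}_{\binhom[]}$---well-definedness on the quotient, equivariance, surjectivity, and injectivity---reducing the entire non-trivial content to a single cyclicity statement whose proof is deferred to Appendix~\ref{sec:cyclic}. Well-definedness is a triviality: an arbitrary element of $\CJ\cdot\lVVS$ reads $\sum_{\pa}(\sepa-\bsepa)\,u_\pa$ with $u_\pa\in\lVVS$, and evaluating at $\inhom[]=\binhom[]$ gives zero by the hypothesis $\bsepa=\sepa(\binhom[])$. Equivariance is immediate because $\BAL$ acts on $\lVVS$ by operators $\hat c_\pa$ with matrix entries in $\CC[\inhom[]]$, whose specialisation at $\binhom[]$ is exactly the action of $\BAL(\binhom[])$ on $U_\Lambda$.

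Surjectivity is reduced to cyclicity. By Lemma~\ref{thm:chiiso} one has $\lVVS=\psi_\omega(\BAL)=\BAL\cdot\omega$, so
\begin{equation*}
\mathrm{ev}_{\binhom[]}(\lVVS)=\BAL(\binhom[])\cdot\omega(\binhom[])\,.
\end{equation*}
Surjectivity of \eqref{eq:47} is therefore equivalent to the statement that the specialised vector $\omega(\binhom[])\in U_\Lambda$ is cyclic for $\BAL(\binhom[])$. This is exactly the content of Theorem~\ref{thm:iso}, whose hypothesis is precisely the stated condition $\binhom+\hbar\neq\binhom[\pa']$ for $\pa<\pa'$.

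Injectivity then follows from a dimension count. Combining part (ii) of Lemma~\ref{thm:trivialsp} with the completeness theorem of Section~\ref{sec:Hilbert} yields
\begin{equation*}
\dim_\CC\!\bigl(\lVVS/\CJ\cdot\lVVS\bigr)=\dim_\CC\WAL(\bsse)=d_\Lambda=\dim_\CC U_\Lambda\,,
\end{equation*}
so a surjective linear map between equidimensional finite-dimensional spaces is automatically an isomorphism.

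The main obstacle is therefore Theorem~\ref{thm:iso} itself, i.e.\ the cyclicity of $\omega(\binhom[])$. This requires a sufficiently explicit description of $\omega$ whose specialisation can be controlled; a natural candidate arises from symmetrisation under the $\SG_L$-action built out of the braiding \eqref{eq:braiding}, whose invertibility is governed by the factors $\hbar^2-(\binhom-\binhom[\pa'])^2$. The hypothesis $\binhom+\hbar\neq\binhom[\pa']$ for $\pa<\pa'$ is precisely what prevents the ``upper half'' of these factors from vanishing, providing the technical hook that propagates cyclicity---which holds at a generic point by Theorem~\ref{thm:isomorphism}---to the prescribed numerical specialisation, along the lines of the MTV-style argument carried out in Appendix~\ref{sec:cyclic}.
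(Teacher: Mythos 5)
Your skeleton is fine as far as it goes: well-definedness, Bethe-equivariance, and the dimension count for injectivity (Lemma~\ref{thm:trivialsp}(ii) together with the rank-$d_\Lambda$ freeness of $\lVVS$, so that $\dim_\CC \lVVS/\CJ\cdot\lVVS=d_\Lambda=\dim_\CC U_\Lambda$) all check out, and the reduction of surjectivity to the statement that $\omega(\binhom[])$ is cyclic for $\BAL(\binhom[])$ in $U_\Lambda$ is a correct equivalence. The gap is in what you then invoke. Theorem~\ref{thm:iso} is \emph{not} that cyclicity statement: it asserts that $\mathrm{ev}_{\binhom[]}$ is an isomorphism of \emph{Yangian} modules between the full symmetrised space $\mathcal{V}^{\SG}(\bsse)$ and $(\CC^{\gm|\gn})^{\otimes L}$, and its proof rests on a different cyclic vector, namely $\vve^+=\vve_1^{\otimes L}$, whose cyclicity under the Yangian at any admissible point is Proposition~\ref{thm:cyclicity} (a representation-theoretic input), combined with the rank count of Corollary~\ref{thm:a6}. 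In the paper, the $\BAL(\binhom[])$-cyclicity of $\omega(\binhom[])$ is Corollary~\ref{repisomorphismsp}, i.e.\ a \emph{consequence} of the very theorem you are proving; so, as written, your surjectivity step either mis-cites its source or is circular.

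The mechanism you sketch in your last paragraph would not close this gap either. Propagating cyclicity from generic points via invertibility of the braiding $r_\pa$ involves the factors $\hbar^2-(\binhom-\binhom[\pa'])^2$, hence the two-sided condition $\binhom-\binhom[\pa']\neq\pm\hbar$, and in any case such continuity arguments are exactly what produces only the weaker Theorem~\ref{eq:alwaysiso}: they cannot control arbitrary, possibly degenerate, specialisations, which is the whole point of the one-sided hypothesis $\binhom+\hbar\neq\binhom[\pa']$ for $\pa<\pa'$. The actual route of Appendix~\ref{sec:cyclic} is to prove the Yangian-level isomorphism of Theorem~\ref{thm:iso} at \emph{any} such point, using cyclicity of $\vve^+$ and the character/freeness computations of the symmetrised modules, and then to obtain Theorem~\ref{repisotheorem} by restricting this isomorphism to the weight (resp.\ highest-weight) subspaces; the restriction is legitimate because the global $\glmn$ action lies in the image of the Yangian and the decomposition $\mathcal{V}^{\SG}=\bigoplus_\Lambda\mathcal{V}^{\SG}_\Lambda$ of Corollary~\ref{thm:a6} is compatible with specialisation. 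If you replace the appeal to ``cyclicity of $\omega(\binhom[])$'' by this restriction argument, your proof becomes correct and essentially coincides with the paper's.
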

Here $\rm{ev}_{\binhom[]}$ denotes the map~\footnote{Not to confuse with $ev_{\binhom[]}$ in \eqref{eq:ev2}} induced by $\rm{Ev}_{\binhom[]}$, the evaluation of vectors of $\lVVS\subset U_\Lambda\otimes \CC[\inhom[]]$ at $\binhom[]$. As $\rm{Ev}_{\binhom[]}(\CJ\cdot\lVVS)=0$, \eqref{eq:47} is well-defined. Since the Bethe algebra is represented by $\BAL/\hat{\CI}\simeq \WAL(\bar\sse)$ in $\End (\lVVS/\CJ\cdot\lVVS)$ and by $\BAL(\binhom[])$ in  $\End(U_\Lambda)$ we thus obtain
\begin{theorem}
\label{isotheorem} 
If $\bsepa=\sepa(\binhom[1],\ldots,\binhom[L])$ and $\binhom+\hbar\neq\binhom[\pa']$ for $\pa<\pa'$ then
\be
\varphi_{\binhom[]}:  \WAL(\bar\sse)  \longrightarrow  \BAL(\binhom[])
\,,\quad
c_{\pa} & \longmapsto  \hat c_{\pa}
\ee
is an algebra isomorphism over $\CC$. \qed
\end{theorem}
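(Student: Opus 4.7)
The plan is to deduce Theorem~\ref{isotheorem} almost immediately from Theorem~\ref{repisotheorem} together with Lemma~\ref{thm:trivialsp}. By Lemma~\ref{thm:trivialsp}(i), the map $\varphi$ of Theorem~\ref{thm:isomorphism} already descends to a $\CC$-algebra isomorphism $\WAL(\bar\sse)\simeq\BAL/\hat\CI$, and by Lemma~\ref{thm:trivialsp}(ii) the regular representation of $\WAL(\bar\sse)$ is identified (via the cyclic vector map $\psi_\omega$) with the Bethe-algebra module $\lVVS/\CJ\cdot\lVVS$. The only thing still missing is a bridge between the abstract module $\lVVS/\CJ\cdot\lVVS$ and the physical module $U_\Lambda$ on which $\BAL(\binhom[])$ acts; this bridge is precisely what Theorem~\ref{repisotheorem} supplies, under the hypothesis $\binhom+\hbar\neq\binhom[\pa']$ for $\pa<\pa'$.

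First I would compose the two representation isomorphisms to obtain a faithful action of $\WAL(\bar\sse)$ on $U_\Lambda$, namely
\[
\WAL(\bar\sse)\;\xrightarrow{\psi_\omega}\;\lVVS/\CJ\cdot\lVVS\;\xrightarrow{\mathrm{ev}_{\binhom[]}}\;U_\Lambda\,.
\]
Because $\mathrm{ev}_{\binhom[]}$ intertwines the Bethe algebra actions, the element $c_\pa\in\WAL(\bar\sse)$ acts on $U_\Lambda$ (through this identification) exactly as the operator $\hat c_\pa\in\BAL(\binhom[])$: indeed on the middle term it acts by $\hat c_\pa$ (since $\psi_\omega$ realises the representation isomorphism of Lemma~\ref{thm:trivialsp}(ii)), and $\mathrm{ev}_{\binhom[]}$ then commutes with $\hat c_\pa$ by Theorem~\ref{repisotheorem}.

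From this the two properties of $\varphi_{\binhom[]}$ follow. Surjectivity is automatic: the image of $\WAL(\bar\sse)$ in $\End(U_\Lambda)$ is generated, as a $\CC$-algebra, by the operators $\hat c_\pa$, and this is by definition $\BAL(\binhom[])$. Injectivity is an immediate consequence of faithfulness of the composite action: the regular representation of any (finite-dimensional) commutative algebra is faithful since the annihilator of the unit is trivial, and faithfulness is preserved by the two representation isomorphisms. Hence $\varphi_{\binhom[]}$ is a bijection.

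In short, once Theorem~\ref{repisotheorem} is granted, Theorem~\ref{isotheorem} reduces to a diagram chase. The genuine work — and therefore the main obstacle — is the proof of Theorem~\ref{repisotheorem}, i.e.\ showing that the evaluation map on the $\SG_L$-invariant subspace $\lVVS$ descends to a bijection modulo $\CJ\cdot\lVVS$ precisely when the non-resonance condition $\binhom+\hbar\neq\binhom[\pa']$ holds. This is exactly where the cyclic-vector machinery and the analysis of Appendix~\ref{sec:cyclic} (building on \cite{MTV}) are needed; the degeneration of the evaluation in the presence of an $\hbar$-string of inhomogeneities is the essential subtlety that dictates the hypothesis of the theorem.
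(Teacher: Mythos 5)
Your proposal is correct and follows essentially the same route as the paper: Theorem~\ref{isotheorem} is stated there as an immediate consequence of Lemma~\ref{thm:trivialsp} together with Theorem~\ref{repisotheorem}, exactly the composition $\mathrm{ev}_{\binhom[]}\circ\psi_\omega$ identifying $U_\Lambda$ with the regular (hence faithful) representation of $\WAL(\bar\sse)$ that you spell out. You also correctly locate the real content in Theorem~\ref{repisotheorem} (i.e.\ Theorem~\ref{thm:iso} and the cyclic-vector analysis of Appendix~\ref{sec:cyclic}), which is where the non-resonance hypothesis enters, just as in the paper.
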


This result improves Theorem~\ref{eq:alwaysiso} by giving an explicit condition under which not only the diagonal but also the nilpotent parts of the Wronskian and Bethe algebras are isomorphic. Note that values of $\bsepa$ are not restricted in any way, there is only a restriction on which solution of $\bsepa=\sepa(\binhom[1],\ldots,\binhom[L])$ can be taken. 

Assuming the condition $\binhom+\hbar\neq\binhom[\pa']$ for $\pa<\pa'$ is satisfied, the construction above has several immediate consequences. Composing $\mathrm{ev}_{\binhom[]}\circ\psi_\omega$ or, equivalently, acting with $\WAL(\bar\sse)$ on $\omega(\binhom[])$ \via $\varphi_{\binhom[]}$ we obtain
\begin{corollary}
\label{repisomorphismsp}
The spin chain representation of the Bethe algebra restricted to $U_\Lambda$ is isomorphic to the regular representation of $\WAL(\bsse)$.
\qed
\end{corollary}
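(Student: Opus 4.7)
The plan is to assemble the corollary by composing the two representation isomorphisms already proved in Lemma~\ref{thm:trivialsp}(ii) and Theorem~\ref{repisotheorem}, using Theorem~\ref{isotheorem} to identify the Bethe and Wronskian algebras. Throughout I would assume the sufficient condition $\binhom[\pa]+\hbar\neq\binhom[\pa']$ for $\pa<\pa'$ that is standing in this subsection, so that all three of the preceding results are available.

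Concretely, first I would recall that by Lemma~\ref{thm:trivialsp}(ii) the map $\psi_\omega : a\mapsto a\cdot\omega$ induces an isomorphism
\begin{equation*}
\WAL(\bsse)\xrightarrow{\;\sim\;}\lVVS/\CJ\cdot\lVVS
\end{equation*}
of representations, where the left-hand side carries the regular representation of $\WAL(\bsse)$ on itself and the right-hand side carries the representation induced from the $\BAL$-action on $\lVVS$ (with $\BAL$ identified with $\WAL(\bsse)$ via $\varphi$). Next I would apply Theorem~\ref{repisotheorem}: the evaluation map $\mathrm{ev}_{\binhom[]}:\lVVS/\CJ\cdot\lVVS\to U_\Lambda$ is an isomorphism of Bethe algebra representations, where $U_\Lambda$ carries the spin chain representation of $\BAL(\binhom[])$. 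Composing these two isomorphisms yields the desired map
\begin{equation*}
\mathrm{ev}_{\binhom[]}\circ\psi_\omega:\WAL(\bsse)\longrightarrow U_\Lambda,\qquad a\longmapsto \varphi_{\binhom[]}(a)\cdot\omega(\binhom[]),
\end{equation*}
where $\omega(\binhom[])$ is the evaluation at $\binhom[]$ of the distinguished cyclic vector.

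There is essentially no obstacle beyond reassembling the preceding results. The only genuine point to check is the compatibility of the two composed isomorphisms with the algebra isomorphism $\varphi_{\binhom[]}:\WAL(\bsse)\to\BAL(\binhom[])$ of Theorem~\ref{isotheorem}, so that the transport of the regular $\WAL(\bsse)$-action along $\mathrm{ev}_{\binhom[]}\circ\psi_\omega$ coincides with the spin chain $\BAL(\binhom[])$-action on $U_\Lambda$. This holds tautologically: in both intermediate representations the action of an element $a\in\WAL(\bsse)$ is by definition the action of $\varphi_{\binhom[]}(a)$, and $\mathrm{ev}_{\binhom[]}$, being induced from a Yangian-equivariant evaluation of $U_\Lambda\otimes\CC[\inhom[]]$, intertwines these actions. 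Hence the composition is an isomorphism of $\WAL(\bsse)$-modules between the regular representation and $U_\Lambda$, proving the corollary.
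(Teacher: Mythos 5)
Your proposal is correct and follows essentially the same route as the paper: the paper also obtains the corollary by composing $\mathrm{ev}_{\binhom[]}\circ\psi_\omega$, i.e.\ Lemma~\ref{thm:trivialsp}(ii) with Theorem~\ref{repisotheorem}, equivalently by acting with $\WAL(\bsse)$ on $\omega(\binhom[])$ via $\varphi_{\binhom[]}$, under the standing assumption $\binhom+\hbar\neq\binhom[\pa']$ for $\pa<\pa'$. Your explicit check of compatibility with $\varphi_{\binhom[]}$ is the same tautological observation the paper leaves implicit.
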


This result is important for separation of variables, as is discussed in Section~\ref{sec:SoV}.

\begin{corollary}
 $\BAL(\binhom[])$ is a maximal commutative subalgebra of $\End(U_{\Lambda})$.  
\end{corollary}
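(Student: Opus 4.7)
The plan is to deduce this as an immediate consequence of Corollary~\ref{repisomorphismsp} combined with a standard general fact about regular representations of commutative algebras. First I would isolate the following elementary lemma: for any finite-dimensional commutative unital $\CC$-algebra $\mathcal{R}$, the left multiplication map $L:\mathcal{R}\hookrightarrow\End_{\CC}(\mathcal{R})$, $r\mapsto L_{r}$, identifies $\mathcal{R}$ with a \emph{maximal} commutative subalgebra of $\End_{\CC}(\mathcal{R})$. The proof is a one-line bicommutant argument: if $X\in\End_{\CC}(\mathcal{R})$ commutes with every $L_{r}$, then for all $r\in\mathcal{R}$ one has
\begin{equation*}
X(r)=X(L_{r}(1))=L_{r}(X(1))=r\cdot X(1)=L_{X(1)}(r),
\end{equation*}
so $X=L_{X(1)}$ lies in the image of $\mathcal{R}$. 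In particular no strictly larger commutative subalgebra of $\End_{\CC}(\mathcal{R})$ can contain $L(\mathcal{R})$.

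Next I would transport this fact along Corollary~\ref{repisomorphismsp}. Under the hypothesis $\binhom+\hbar\neq\binhom[\pa']$ for $\pa<\pa'$ (inherited from Theorem~\ref{isotheorem}, on which Corollary~\ref{repisomorphismsp} rests), that corollary supplies a $\CC$-linear isomorphism $\Phi:U_{\Lambda}\xrightarrow{\sim}\WAL(\bar{\sse})$ intertwining the action of $\BAL(\binhom[])$ on $U_{\Lambda}$ with the left regular action of $\WAL(\bar{\sse})$ on itself, the two algebras being identified by the isomorphism $\varphi_{\binhom[]}$ of Theorem~\ref{isotheorem}. Conjugation by $\Phi$ is an algebra automorphism of the full endomorphism ring $\End_{\CC}(U_{\Lambda})\simeq\End_{\CC}(\WAL(\bar{\sse}))$, and it sends the subalgebra $\BAL(\binhom[])$ onto $L(\WAL(\bar{\sse}))$. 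Since maximal commutativity is manifestly preserved by algebra automorphisms of the ambient endomorphism ring, the lemma of the previous paragraph implies that $\BAL(\binhom[])$ is maximal commutative in $\End_{\CC}(U_{\Lambda})$, which is the desired statement.

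I do not anticipate a genuine obstacle: everything hard has already been absorbed into Corollary~\ref{repisomorphismsp}, which is itself a distillation of the cyclic-vector construction of Appendix~\ref{sec:cyclic} together with Theorems~\ref{thm:isomorphism} and~\ref{isotheorem}. The only point requiring a moment of care is to stress that Corollary~\ref{repisomorphismsp} yields an equivariant isomorphism of \emph{modules}, not merely of algebras; the module statement is what lets us realize $\BAL(\binhom[])$ concretely as the image of the regular representation of $\WAL(\bar{\sse})$ and thereby invoke the bicommutant argument. One should also observe that the condition $\binhom+\hbar\neq\binhom[\pa']$ for $\pa<\pa'$ is implicitly assumed, matching the hypotheses of the two theorems feeding into the corollary; without this assumption only $\varphi_{\binhom[]}^{\rm diag}$ is guaranteed to be an isomorphism by Theorem~\ref{eq:alwaysiso}, and nilpotent elements of $\BAL(\binhom[])$ could in principle fail to saturate the commutant.
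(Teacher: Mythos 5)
Your proof is correct and follows essentially the same route as the paper: the paper likewise deduces the corollary from Corollary~\ref{repisomorphismsp} together with the fact (stated in Section~\ref{sec:algedes}) that the regular representation of a commutative algebra is maximal commutative in the endomorphism ring, which is exactly your bicommutant lemma spelled out. Your remarks about equivariance of the module isomorphism and the standing hypothesis $\binhom+\hbar\neq\binhom[\pa']$ match the paper's implicit assumptions.
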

\begin{proof}
$\cWAL(\bar\sse)$ -- the image of generators $\WAL(\bar\sse)$ under the regular representation -- is a maximal commutative subalgebra of $\End_{\CC}(\WAL(\bar\sse))$ (this is true for any regular representation, see Section~\ref{sec:algedes}).
\end{proof}
In physical terms, this means that the Bethe algebra contains all commuting charges of the system.
\begin{corollary}
\label{thm:49}
If $\BAL(\binhom[])$ is diagonalisable then its spectrum is simple.
\end{corollary}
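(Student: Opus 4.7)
The plan is to invoke the preceding corollary---that $\BAL(\binhom[])$ is a maximal commutative subalgebra of $\End(U_{\Lambda})$---and combine it with the following piece of elementary linear algebra: any diagonalisable commutative subalgebra of $\End(V)$ that is maximal (for the inclusion order among commutative subalgebras) must equal, in some basis of $V$, the full algebra of diagonal matrices, and in particular has simple joint spectrum.

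Concretely, suppose $\BAL(\binhom[])$ is diagonalisable, i.e.\ all its elements admit a common eigenbasis. Fix such a basis $(e_{1},\ldots,e_{d_{\Lambda}})$ of $U_{\Lambda}$; every element of $\BAL(\binhom[])$ then acts diagonally in this basis, so $\BAL(\binhom[])\subseteq\mathcal{D}$, where $\mathcal{D}\subset\End(U_{\Lambda})$ is the commutative subalgebra of matrices diagonal in $(e_{i})$. A strict inclusion would contradict the maximality of $\BAL(\binhom[])$, so $\BAL(\binhom[])=\mathcal{D}$. Since $\mathcal{D}$ separates the lines $\CC\,e_{i}$, each $e_{i}$ spans its own one-dimensional joint eigenspace of $\BAL(\binhom[])$ and the $d_{\Lambda}$ associated characters are pairwise distinct; the joint spectrum of $\BAL(\binhom[])$ is therefore simple.

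No genuine obstacle arises: the substantive input is the maximal-commutativity statement just proved, and the remainder is elementary. Equivalently, one could invoke Corollary~\ref{repisomorphismsp} to identify $U_{\Lambda}$ with the regular representation of $\WAL(\bar\sse)$ and then observe that, once the nil-radical is trivial, the joint eigenspaces of a commutative algebra in its regular representation are one-dimensional; but the maximality argument above is the quickest route and needs no further ingredients.
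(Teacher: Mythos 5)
Your proof is correct and follows essentially the same route as the paper, which simply observes that a diagonalisable commutative matrix algebra with non-simple spectrum cannot be maximal commutative; you give the contrapositive in expanded form (maximality forces equality with the full diagonal algebra in a common eigenbasis, hence simple joint spectrum), with the same key input being the maximality corollary proved just before.
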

\begin{proof}
A diagonalisable commutative algebra of matrices whose spectrum is not simple is not maximal commutative.
\end{proof}
In particular, the Bethe algebra has simple spectrum every time it is invariant under a Hermitian conjugation which is often the case in physical applications. Simplicity of the spectrum allows us to introduce a new way of classifying solutions by continuously deforming inhomogeneities, see Section~\ref{sec:alternative}.

\begin{example}
Let us anticipate on the example detailed in Appendix~\ref{sec:OE} corresponding to a $\gl_2$ non-twisted spin chain of length $L=3$. The Bethe algebra on the two-dimensional highest-weight subspace $V_{\begin{tikzpicture}[scale=.15,baseline=-.1cm]
\draw (1,1) |- (0,-1) |- (2,1) |-(0,0);
\end{tikzpicture}}$ is generated (as a $\CC[\sse]$-module) by the identity and the non-trivial operator
\begin{equation}
  \CCC=
\begin{pmatrix}
    2\sse_1-\sqrt 3 (\inhom[1]-\theta_3)&\sse_1-3\theta_2+\sqrt 3\hbar\\
    \sse_1-3\theta_2-\sqrt 3 \hbar&2\sse_1+\sqrt 3 (\inhom[1]-\theta_3)
  \end{pmatrix}\,.
\end{equation}
As long as $\CCC$ is not proportional to the identity, the evaluated Bethe algebra will be of dimension $2$ and therefore isomorphic to the Wronskian algebra. We see that $c\propto \mathbbm 1$ if and only if $\hbar=0$, that is in the Gaudin limit, and  $\binhom[1]=\binhom[2]=\binhom[3]$.

This example demonstrates that the hypothesis $\inhom[\pa']\neq \inhom+\hbar$ in Theorem~\ref{isotheorem} is not a necessary condition.

\end{example}

\subsection{Construction of Bethe vectors}
Let us finally comment on how to use the established isomorphism to construct in a bijective way eigenstates of the Bethe algebra from solutions of the Bethe equations. This gives us the practical meaning of the words ``completeness'' and ``faithfulness''.

We can always take an element of the Wronskian algebra $X$ such that $(X^k)_{0\leq k\leq d_{\Lambda}-1}$ forms a basis. In other words the polynomial of smallest degree such that $P(X)=0$ in $\WAL(\bsse)$ is of degree $d_{\Lambda}$~\footnote{The existence of such an $X$ is obvious at non-degenerate points $\binhom[]$ and otherwise follows from the analysis in Appendix~\ref{sec:multiplicity}. We introduced $X$ for clarity, but the discussed construction of the Bethe algebra eigenstates can be also formulated in a way that does not rely on the existence of $X$.}. Its roots $x_i$, $i=1,\ldots, d_{\Lambda}$ is a way to encode the solutions of WBE \eqref{mastereq}. 

By the established isomorphism, $P$ is both the characteristic and minimal polynomial of the matrix $\hat X\in \BAL(\binhom[])$ which means that any smaller-degree polynomial of $\hat X$ yields a non-zero matrix. Let $x_i$ be an eigenvalue of $\hat X$. Then $\frac{\det(\lambda-\hat X)}{\lambda-x_i}$ is a polynomial in $\lambda$. Take a cyclic vector $\omega$~\footnote{Corollary~\ref{repisomorphismsp} implies that under the usual assumption on $\binhom[]$ such a vector always exists. By continuity, this implies that any generic vector will also be cyclic.} and define
\be\label{ketw}
v_{x_i}=\frac{\det(\lambda-\hat X)}{\lambda-x_i}_{|\lambda =\hat{X}}\omega\,.
\ee
Since $\det(\lambda-\hat X)_{|\lambda = \hat{X}}=0$, one has $(\hat X-x_i)v_{x_i}=0$ and so  $v_{x_i}$ is an eigenvector of $\hat{X}$ with eigenvalue $x_i$. 

We emphasise that as $B_{\Lambda}(\binhom[])$ is isomorphic to the regular representation of the Wronskian algebra all eigenspaces are one-dimensional and so all eigenvectors with eigenvalue $x_i$ are collinear with $v_{x_i}$ which guarantees the bijection.

In case of degeneration of solutions, different Jordan blocks of $\hat X$ must have distinct eigenvalues.  If $x_i$ is a root of multiplicity $n$ then $v_{x_i}^{(m)}\equiv \frac{\det(\lambda-\hat X)}{(\lambda-x_i)^m}_{|\lambda =\hat{X}}\omega$, $m\leq n$, provide a Jordan basis for $\hat X$. Since $\hat{X}$ generates $B_{\Lambda}(\binhom[])$ this basis will also trigonalise the Bethe algebra.

\section{Various parameterisations of the Bethe algebra}
\label{sec:variouspar}
Although we proved that the number of solutions of the Wronskian Bethe equations is the correct one, we did not develop any intuition about how these solutions are organised. We address this issue in the next two sections by proposing techniques to systematically label solutions.
\subsection{Restriction and extension of Q-systems}
\label{sec:restr-extens-q}
We introduced a restricted Q-system on page~\pageref{sec:shortrep} to cover the case of short representations: for special sets ${\bf A},{\bf J}$, $Q_{A|J}^{\rm rest}:=Q_{A{\bf A}|J{\bf J}}$. Now, we remark that the Q-functions $Q_{A|J}^{\rm rest}=Q_{A{\bf A}|J{\bf J}}$  satisfy the QQ-relations \eqref{QQrelations} and $Q_{\fs|\fs}^{\rm rest}=Q_{\fs|\fs}$ for {\it any} sets ${\bf A},{\bf J}$.

If $Q_{{\bf A}|{\bf J}}=1$ then one can interpret the restricted Q-system as a Q-system  of a smaller $\gls{\gm'|\gn'}$ algebra, where $\gm'=\gm-|{\bf A}|,\gn'=\gn-|{\bf J}|$. The condition $Q_{{\bf A}|{\bf J}}=1$ is of course non-trivial to demand. By  counting degrees of polynomials according to \eqref{eq:poldegree} we see that this is possible if  ($\gm,\gn$) is outside of the Young diagram or if ($\gm,\gn$) is situated on the boundary of the Young diagram such that ($\gm',\gn'$) is also on the boundary. We note that the degrees of the restricted Q-functions come out to be given by \eqref{eq:poldegree} for the $\gls{\gm'|\gn'}$ Q-system.

Let us now construct an opposite to the restriction procedure. For simplicity consider an ``elementary'' move. Define an extension from the $\gls{\gm|\gn-1}$ Q-system to the $\glmn$ Q-system as follows:
\be
\label{eq:ext1}
Q_{A|J\gn}^{\rm ext}=Q_{A|J}\,,\quad Q_{\es|\gn}^{\rm ext}=Q_{\es|\es}^{\rm ext}=1\,,
\ee
supplemented with the requirement that the Q-functions that do not contain $\gn$ are fixed by consistency of the QQ-relations. An example of this extension is depicted in Figure~\ref{hasseproof} using a Hasse diagram \cite{Tsuboi:2009ud}.

\definecolor{amazon}{rgb}{0.23,0.78,0.34}
\begin{figure}[t]
\centering 
\begin{tikzpicture}
  \node (max) at (0,4) {$Q_{\fs}=Q_{\fs|\fs}^{\rm ext}$};
  \node (max2) at (0.1,3.76) {};
  \node (a) at (-2,2) {\hspace{-2em}$Q_1=Q_{1|1}^{\rm ext}$};
  \node (b) at (0,2) {$Q_2=Q_{2|1}^{\rm ext}$};
  \node (b2) at (0.1,2.24) {};
  \node (c) at (2,2) {\hspace{2em}$Q_{12|\es}^{\rm ext}$};
  \node (d) at (-2,0) {$Q_{\es}=Q_{\es|1}^{\rm ext}=1$};
  \node (e) at (0,0) {$Q_{1|\es}^{\rm ext}$};
  \node (f) at (2,0) {$Q_{2|\es}^{\rm ext}$};
  \node (min) at (0,-2) {$Q_{\es|\es}^{\rm ext}=1$};
  \thicklines
  \draw[line width=0.35mm,blue] (max) -- (a) -- (d) -- (b) -- (max);
  \draw[line width=0.35mm,amazon] (max2) -- (b2);
  \draw[line width=0.35mm,red] (c) -- (f) -- (min) -- (e);
  \draw[line width=0.35mm,dotted] (max) -- (c);
  \draw[line width=0.35mm,dotted] (b) -- (f);
  \draw[line width=0.35mm,dotted] (d) -- (min);
  \draw[line width=0.35mm,preaction={draw=white, -,line width=6pt},dotted] (a) -- (e);
  \draw[line width=0.35mm,preaction={draw=white, -,line width=6pt},red] (e) -- (c);
\end{tikzpicture}
\caption{Hasse diagram for extension of the $\gls{2}$ Q-system (blue square) to the $\gls{2|1}$ Q-system (blue and red squares). One considers nodes of the blue square as known, supplements this data with $Q_{\es|\es}^{\rm ext}= 1$ condition, and finds the rest by QQ-relations. Note that if $Q_2=1$ then $Q_{2|\es}^{\rm ext}=0$. In this case, $Q_{\fs}$ contains all physical information and we can restrict both the $\gls{2}$ and the $\gls{2|1}$ systems to the $\gls{1}$ system (green line) that consists of $Q_{\fs}^{\rm rest}=Q_{\fs}=Q_{\fs|\fs}^{\rm ext}$ and $Q_{\es}^{\rm rest}=Q_2=Q_{2|1}^{\rm ext}=1$.}
\label{hasseproof}
\end{figure}

\begin{lemma}
The extension  \eqref{eq:ext1} is always possible and moreover it defines all the Q-functions $Q_{A|J}^{\rm ext}$ uniquely up to symmetries.
\end{lemma}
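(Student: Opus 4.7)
The plan is to reconstruct the new Q-functions $Q^{\rm ext}_{A|J}$ (those with $\gn\notin J$) by a double induction on $(|J|,|A|)$, starting from the normalisations $Q^{\rm ext}_{\es|\gn}=Q^{\rm ext}_{\es|\es}=1$ and from the old data $Q^{\rm ext}_{A|J\gn}=Q_{A|J}$. The induction is driven by two specialisations of the QQ-relations \eqref{QQrelations} in which the extra fermionic index $\gn$ plays a distinguished role. Substituting the boundary identifications, the purely fermionic relation with $j=\gn$ becomes
\begin{equation}\label{eq:planA}
Q_{\es|Ji}\,Q^{\rm ext}_{\es|J}\;=\;W\!\bigl(Q^{\rm ext}_{\es|Ji},\,Q_{\es|J}\bigr)\,,\qquad J\cup\{i\}\not\ni\gn\,,
\end{equation}
while the mixed (fermionic) relation \eqref{QQferm} with $i=\gn$ becomes
\begin{equation}\label{eq:planB}
Q^{\rm ext}_{Aa|J}\,Q_{A|J}\;=\;W\!\bigl(Q_{Aa|J},\,Q^{\rm ext}_{A|J}\bigr)\,,\qquad J\not\ni\gn\,.
\end{equation}
Reading these as equations for the ``new'' Q-function inside the Wronskian, \eqref{eq:planA} determines $Q^{\rm ext}_{\es|Ji}$ from the previously constructed $Q^{\rm ext}_{\es|J}$, and \eqref{eq:planB} determines $Q^{\rm ext}_{Aa|J}$ from $Q^{\rm ext}_{A|J}$. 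Starting from $Q^{\rm ext}_{\es|\es}=1$, one first climbs the bottom row via \eqref{eq:planA}; then, for each $J\not\ni\gn$, one climbs the Hasse diagram in $|A|$ via \eqref{eq:planB}, populating every node.

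The main technical obstacle is polynomiality: a priori these recursive formulas only give rational solutions. For the seed case $J=\es$ of \eqref{eq:planA} the equation reduces to the first-order finite-difference identity $(Q^{\rm ext}_{\es|i})^+-(Q^{\rm ext}_{\es|i})^-=Q_{\es|i}$, which admits a polynomial solution unique up to an additive constant. The general inductive step amounts to showing that the denominator on the left of \eqref{eq:planA}/\eqref{eq:planB} divides the Wronskian on the right. I will extract this divisibility from the QQ-relations already satisfied by the old $\gls{\gm|\gn-1}$ Q-system: at each root $u_0$ of the denominator, the relevant old QQ-relation (rewritten, if needed, in one of the determinantal forms \eqref{eq:qqforms}) forces the Wronskian on the right to vanish at $u_0$ with the matching multiplicity. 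Equivalently, one writes $Q_{A|J}$ in the old system as a Wronskian of more elementary Q-functions and recognises the ratio in \eqref{eq:planB} as a determinant of the expected rectangular shape, whose polynomiality is manifest.

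To finish, consistency and uniqueness must be verified. Consistency means that the $Q^{\rm ext}_{A|J}$ produced by \eqref{eq:planA}-\eqref{eq:planB} satisfy \emph{every} QQ-relation of the extended $\glmn$ system, not just the two used in the construction. All remaining QQ-relations (bosonic, and fermionic with $i\neq\gn$) follow by direct substitution from the construction combined with the old $\gls{\gm|\gn-1}$ QQ-relations and the standard $2\times 2$ Plücker identity $W(W(F,G),H)\cdot\text{shifts}$. For uniqueness, the only freedom encountered along the induction lies in the integration constants appearing at the seed case $J=\es$ of \eqref{eq:planA}, one per fermionic index $i\neq\gn$; subsequent applications of \eqref{eq:planA}-\eqref{eq:planB} are ``forced'' (both sides are polynomials of fixed degree and the equation is non-degenerate), so no further ambiguity appears. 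Each such constant is exactly the symmetry shift $Q^{\rm ext}_{\es|i}\mapsto Q^{\rm ext}_{\es|i}+\alpha\,Q^{\rm ext}_{\es|\gn}$ from \eqref{eq:225}, propagated to the rest of the diagram by the recursions. Thus the extension exists and is unique up to exactly these symmetries, as claimed.
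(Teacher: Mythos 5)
There is a genuine gap at the heart of your induction: the polynomiality of each step is not actually established, and the place where you locate the difficulty cannot be resolved the way you sketch. In your relation $Q^{\rm ext}_{Aa|J}\,Q_{A|J}=W\bigl(Q_{Aa|J},Q^{\rm ext}_{A|J}\bigr)$ the Wronskian that must be divisible by $Q_{A|J}$ contains the \emph{newly constructed} function $Q^{\rm ext}_{A|J}$, so the old $\gls{\gm|\gn-1}$ QQ-relations by themselves say nothing about its value at the zeros of $Q_{A|J}$; the divisibility can only come from how $Q^{\rm ext}_{A|J}$ was built (e.g.\ a determinant representation), which you mention but never develop, and matching multiplicities at degenerate roots is not addressed at all. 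The uniqueness part has a second gap: every application of your fermionic recursion $W\bigl(Q^{\rm ext}_{\es|Ji},Q_{\es|J}\bigr)=Q_{\es|Ji}\,Q^{\rm ext}_{\es|J}$ is a first-order finite-difference equation whose polynomial solution is fixed only up to adding $\alpha\,Q_{\es|J}=\alpha\,Q^{\rm ext}_{\es|J\gn}$ (and this addition is degree-allowed, since $Q_{\es|J\gn}$ has the lower degree), so the claim that only the seed case $J=\es$ produces integration constants and that ``subsequent applications are forced'' is false as stated. One would have to show that these level-by-level constants are not independent but are all induced by the seed symmetries \eqref{eq:225}, or are eliminated by consistency with the remaining QQ-relations — which is exactly the part you dismiss with ``follows by direct substitution''.

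For comparison, the paper sidesteps both problems by constructing only the generating set $Q^{\rm ext}_{a|\es}$, $Q^{\rm ext}_{\es|j}$, $Q^{\rm ext}_{a|j}$ through division-free formulas: $Q^{\rm ext}_{a|\es}=(Q^{\rm ext}_{a|\gn})^+-(Q^{\rm ext}_{a|\gn})^-$ is an explicit finite difference of known data, $Q^{\rm ext}_{\es|j}$ solves $(Q^{\rm ext}_{\es|j})^+-(Q^{\rm ext}_{\es|j})^-=Q^{\rm ext}_{\es|j\gn}$ (polynomial solution unique up to the additive constant that is exactly the symmetry \eqref{eq:225}), and $Q^{\rm ext}_{a|j}$ is then fixed \emph{uniquely} by the linear identity $(Q^{\rm ext}_{a|j})^-\,Q^{\rm ext}_{\es|\gn}-(Q^{\rm ext}_{a|\gn})^-\,Q^{\rm ext}_{\es|j}=Q^{\rm ext}_{a|j\gn}$, after which a single check of \eqref{sport} and the fact that these functions generate the whole Q-system (as in \eqref{eq:qqforms}) finishes the argument. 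If you want to keep your node-by-node reconstruction of the entire Hasse diagram, you must supply the divisibility argument (e.g.\ via Wronskian/determinant representations of the $Q^{\rm ext}$'s) and prove that the per-node additive ambiguities are all generated by the seed constants; as written, the proof is incomplete.
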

\begin{proof} Let us find  $Q_{a|\es}^{\rm ext}, Q_{\es|j}^{\rm ext}, Q_{a|j}^{\rm ext}$ by the prescribed construction. Since $Q_{\es|\gn}=1$, one computes $Q_{a|\es}^{\rm ext}=(Q_{a|\gn}^{\rm ext})^+-(Q_{a|\gn}^{\rm ext})^-$ using the known value of $Q_{a|\gn}^{\rm ext}=Q_{a|\es}$. To find $Q_{\es|j}^{\rm ext}$  one solves $(Q_{\es|j}^{\rm ext})^+-(Q_{\es|j}^{\rm ext})^-=Q_{\es|j\gn}^{\rm ext}$. The polynomial solution is fixed up to an additive constant, but we remind that $Q_{\es|j}^{\rm ext}\to Q_{\es|j}^{\rm ext}+\alpha\, Q_{\es|\gn}^{\rm ext}$ is a symmetry of the twist-less Q-systems. Finally, we use $(Q_{a|j}^{\rm ext})^-(Q_{\es|\gn}^{\rm ext})-(Q_{a|\gn}^{\rm ext})^-Q_{\es|j}^{\rm ext}=Q_{a|j\gn}^{\rm ext}$ which is a consequence of \eqref{QQrelations} to uniquely fix $Q_{a|j}^{\rm ext}$ from the already identified quantities. One can now check that the above-constructed Q's satisfy $(Q_{a|j}^{\rm ext})^+-(Q_{a|j}^{\rm ext})^-=Q_{a|\es}^{\rm ext}Q_{\es|j}^{\rm ext}$ and so they properly generate the whole Q-system. 
\end{proof}

A  caveat of the extension procedure is that  $Q_{a|\es}^{\rm ext}=0$ if $Q_{a|\gn}^{\rm ext}$ is a constant. This does not happen however if both points $(\gm,\gn-1)$ and $(\gm,\gn)$ belong to the boundary of the Young diagram.

The analogous definitions and statements can be made also for the extension from the $\gls{\gm-1|\gn}$ to the $\glmn$ Q-system.

\subsection{Isomorphism of twist-less \texorpdfstring{$\BAL$}{B Lambda} across \texorpdfstring{$\glmn$}{gl(m|n)} algebras of various ranks}

An important conclusion from the made observations is that if $\gls{\gm'|\gn'}\subset \gls{\gm|\gn}$ and both points $(\gm',\gn')$ and $(\gm,\gn)$ belong to the boundary of the Young diagram then the restriction and extension procedures of the Q-system are inverse of one another and hence both Q-systems contain precisely the same physical information.

By performing a sequence of restrictions and extensions, we conclude that all the $\gls{\gm'|\gn'}$ Q-systems with $(\gm',\gn')$ being on the boundary  of $\LD$ (black and red dots in Figure~\ref{fig:ShortLong}) are bijectively related. Moreover, this is done by using only polynomial operations. Hence we conclude the following.
\begin{lemma}
\label{thm:Baiso}
In the twist-less case, all the Bethe algebras $\BAL$ generated by the $c_\pa$ of $\gls{\gm|\gn}$ Q-systems, where $(\gm,\gn)$ is any point on the boundary of the Young diagram $\LD$, are isomorphic as $\CC[\sse]$-algebras.\qed
\end{lemma}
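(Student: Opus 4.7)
The plan is to reduce the statement to an elementary move between two adjacent boundary points of $\LD$ and then compose such moves. Two points $(\gm,\gn)$ and $(\gm',\gn')$ on the boundary of $\LD$ are connected by a lattice path along the boundary, each step of which is a unit move either from $(\gm,\gn-1)$ to $(\gm,\gn)$ or from $(\gm-1,\gn)$ to $(\gm,\gn)$. Any such intermediate lattice point is automatically on the boundary, so it suffices to prove the claim for one elementary move, say from $\gls{\gm|\gn-1}$ to $\glmn$.

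For the elementary step, the explicit recipe is already supplied by the extension construction of Section~\ref{sec:restr-extens-q}, whose inverse is the restriction $Q^{\rm rest}_{A|J}=Q_{A\gn|J\gn}$. The key point is that all the formulas used there (finite differences, QQ-relations, fixing of gauge constants) are polynomial operations in the coefficients $c_{A|J}^{(k)}$ and in $\hbar$, and map monic polynomials of the correct degree (dictated by \eqref{eq:poldegree}) to monic polynomials of the correct degree. I would thus write down, explicitly or schematically, the bijection
\[
\Phi:C_\Lambda^{(\gm,\gn-1)}\ \longleftrightarrow\ C_\Lambda^{(\gm,\gn)}
\]
induced by extension/restriction, and verify that both $\Phi$ and $\Phi^{-1}$ are polynomial maps over $\CC[\hbar]$, using that neither side is degenerate (this is where the boundary hypothesis enters, via the remark after the extension lemma that rules out $Q_{a|\es}^{\rm ext}=0$).

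Next, I would lift $\Phi$ to a $\CC[\sse]$-algebra map of the corresponding Wronskian algebras. Since both constructions yield the same central element $Q_{\fs|\fs}$, the Wronskian Bethe equation $\SWe(C_\Lambda)=Q_\theta$ is literally the same constraint in both parameterisations. Consequently the ideal $\CI_\Lambda\subset\CC[\sse][c]$ of Section~\ref{sec:algedes} is mapped to itself under $\Phi$, so $\Phi$ descends to an isomorphism $\WAL^{(\gm,\gn-1)}\xrightarrow{\sim}\WAL^{(\gm,\gn)}$ of $\CC[\sse]$-algebras. Invoking Theorem~\ref{thm:isomorphism} on each side then yields the desired $\CC[\sse]$-isomorphism $\BAL^{(\gm,\gn-1)}\simeq\BAL^{(\gm,\gn)}$, and composition along the boundary path completes the proof.

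The main obstacle I anticipate is bookkeeping rather than conceptual: one must check that the gauge-fixing conventions chosen to extract $C_\Lambda$ out of all polynomial coefficients of the Q-operators remain consistent under $\Phi$, and that the cardinality of $C_\Lambda$ stays equal to $L$ under the elementary move (which follows from Figure~\ref{fig:degrees} and the hook-length counting in Lemma~\ref{thm:Hilbertseries}, since both Hilbert series depend only on $\LD$). A minor additional subtlety is that the extension in Section~\ref{sec:restr-extens-q} is described on the level of $Q$-functions only up to the symmetries \eqref{eq:225}; one should verify that these symmetries act compatibly on the two chosen generating sets so that $\Phi$ is well-defined as a map between gauge-fixed coordinates and not merely between gauge equivalence classes.
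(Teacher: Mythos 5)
Your core mechanism is the same as the paper's: reduce to elementary moves between adjacent boundary points, use the extension/restriction construction of Section~\ref{sec:restr-extens-q}, note that both directions are polynomial, and use the boundary hypothesis to exclude the degeneration $Q_{a|\es}^{\rm ext}=0$. Where you diverge is in the final step. The paper concludes directly at the operator level: since the generators $\hat c_\pa$ of the Bethe algebra in one parameterisation are polynomial combinations (with the inverse map also polynomial) of the generators in the other, the two operator algebras are identified outright, and the paper explicitly stresses that this argument does \emph{not} require the Wronskian--Bethe isomorphism. Your route instead descends to the Wronskian algebras and then invokes Theorem~\ref{thm:isomorphism} on each side. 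This is logically admissible --- there is no circularity, since Theorem~\ref{thm:isomorphism} is proven independently of Section~\ref{sec:variouspar} --- but it buys you less: it makes the lemma depend on faithfulness, which the paper deliberately keeps out of the argument (a point that matters later, e.g.\ when the lemma is used to transfer statements to the Gaudin model, where the analogue of faithfulness is itself being derived via this isomorphism), and it requires you to check that Theorem~\ref{thm:isomorphism} and its prerequisites (properness, counting, Q-operators on the spin chain) hold for every boundary parameterisation, something the direct operator-level argument sidesteps. Two small points of bookkeeping: the ideal is not "mapped to itself" but rather the ideal of one parameterisation is mapped into the ideal of the other, since the coordinate sets $C_\Lambda$ differ; and your identification of the constraint on both sides rests on $Q_{\fs|\fs}^{\rm ext}=Q_{\fs|\fs}$, which indeed follows immediately from \eqref{eq:ext1}, so that step is fine.
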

We note that arguments leading to this conclusion do not require isomorphism of the Wronskian and Bethe algebra.

\begin{example}
The $\gls{3}$ and $\gls{1|1}$ Q-systems for representation $\twoone$ from the previous examples are related by the prescribed procedure:
\be
\label{reproc}
\gls{3|0} \xrightarrow{\rm rest} \gls{2|0} \xrightarrow{\rm ext} \gls{2|1} \xrightarrow{\rm rest} \gls{1|1}\,,
\ee
the extension bit of which is outlined in Figure~\ref{hasseproof}.

Hence $Q_1^{\gls{3}}$, $Q_2^{\gls{3}}$ (and $Q_3^{\gls{3}}=1$) should contain the same physical information as $Q_{1|\es}^{\gls{1|1}},Q_{\es|1}^{\gls{1|1}}$ (and $Q_{1|1}^{\gls{1|1}}=Q_{\theta}$) and expressible through one another.

By performing the restriction and extension transformations as is described above one finds
\begin{subequations}
\label{311}
\be
Q_2^{\gls{3}}&\propto&\Psi(Q_{\es|1}^{\gls{1|1}})\,,\\
Q_1^{\gls{3}}&\propto&\Psi\left(\frac{W(Q_{1|1}^{\gls{1|1}},\Psi(Q_{1|\es}^{\gls{1|1}}))}{Q_{1|\es}^{\gls{1|1}}}\right)\label{eq:Q1sugl}
\,,
\ee
\end{subequations}
where $\Psi(A)=B$ if $B^+-B^-=A$ and the ratio in \eqref{eq:Q1sugl} is a polynomial (Bethe equations ensure that the Euclidean remainder of the numerator and the denominator is the zero polynomial). $\Psi$ is defined up to addition of a constant, there are three such constants in \eqref{311} which corresponds to three symmetries in the $\gls{3}$ system restricted to $V_{\twoone}$. We fixed the symmetry in \eqref{Qsgl3} by setting certain $c_i^{(k)}$ to zero, and we should set the constants of integration to the values that reproduce this choice.

Explicitly in terms of  $c$'s, the transformation \eqref{311} becomes
\begin{subequations}
\label{c311}
\begin{align}
c_{1}^{(3)} &=4c_{1|\es}\,,\\
c_{1}^{(1)} &=(2 c_{\es|1}-4c_{1|\es})\hbar^2-8c_{1|1}\,,\\
c_{2}^{(1)} &=2 c_{\es|1}\,.
\end{align}
\end{subequations}
By reversing \eqref{reproc}, we find that
\begin{subequations}
\label{113}
\begin{align}
Q_{\es|1}^{\gls{1|1}}&\propto Q_{23}^{\gls{3}}\,,\\
Q_{1|\es}^{\gls{1|1}}&\propto W(Q_{1}^{\gls{3}},Q_{2}^{\gls{3}},Q_{3}^{\gls{3}},u)
                      \,,\\
Q_{1|1}^{\gls{1|1}}&\propto Q_{123}^{\gls{3}}\,.
\end{align}
\end{subequations}
In terms of $c$'s, this becomes
\begin{subequations}
\label{c113}
\begin{align}
c_{\es|1} &=\frac{c_2^{(1)}}2\,,\\
c_{1|\es} &=\frac{c_1^{(3)}}4\,,\\
c_{1|1}    &=\frac{(c_2^{(1)}-c_1^{(3)})\hbar^2-c_1^{(1)}}8\,.
\end{align}
\end{subequations}
Transformation \eqref{c113} is the inverse of \eqref{c311}. %
\end{example}

\paragraph{Remark} In general, the isomorphism between Bethe algebras for different $(\gm,\gn)$ is a non-linear polynomial map in $c_\pa$, and it is remarkable that its inverse is also polynomial.

\vspace{.5cm}

By Lemma~\ref{thm:Baiso}, we can use the Q-system which corresponds to $\gls{\gm=h_{\LD}|\gn=0}$, where $h_{\LD}$ is the height of the Young diagram, and is therefore purely bosonic. Hence we can use in principle results from \cite{MTV} for bosonic Bethe algebras to formulate and prove the completeness and faithfulness statements for supersymmetric spin chains~\footnote{It is also possible to establish a bijection between bosonic and supersymmetric Q-systems in the presence of twist. To this end one first extends the original Q-system to a larger one with a partially degenerate twist where an analog of the Young diagram boundary and hence a possibility to move along it emerges.}, but we choose to not rely on this relation to \cite{MTV}, highlight instead novel important features of the system, and to produce results in a way that does not use the nested Bethe Ansatz.

\subsection{Q-system on Young diagrams}
\label{sec:reconstr-full-q}
One of the interesting features is that the Bethe algebra can be also generated from the so-called Q-system on a Young diagram which was introduced in \cite{Marboe:2016yyn}, extensively used in application to the AdS/CFT spectral problem \cite{Marboe:2017dmb,Marboe:2018ugv} and more recently in several other studies, see \eg \cite{Jacobsen:2018pjt,Bajnok:2019zub}.

The Q-system on a Young diagram is a collection of monic polynomials $\wQ_{a,s}$ defined as 
\be\label{wQasdef}
\wQ_{a,s} \propto Q_{a+1,a+2,\ldots,\gm|s+1,s+2,\ldots,\gn}\,,
\ee
where the Q-function on the \rhs is a member of the $\glmn$ Q-system and $(\gm,\gn)$ belongs to the boundary of the Young diagram. It is clear from the extension-restriction procedure that $\wQ_{a,s}$ does not depend on $\gm,\gn$. In particular, $\wQ_{0,0}=Q_{\theta}$. 

 $\wQ_{a,s}$ are naturally assigned to the nodes of the $\mathbb{Z}^2$-lattice which belong to the Young diagram shape, $\wQ_{a,s}=1$ on the boundary of the diagram (black/red dots in Figure~\ref{fig:ShortLong}). The QQ-relations between $\wQ_{a,s}$ are
\be
\label{eq:58}
\wQ_{a+1,s+1}\wQ_{a,s} \propto W(\wQ_{a+1,s},\wQ_{a,s+1})\,,
\ee
this follows from \eqref{QQferm}.
\begin{example}
For the Young diagram $\twoone$, $\wQ_{a,s}$ that are not equal to $1$ are $\wQ_{1,0}=Q_{\es|1}^{\gls{1|1}}=Q_2^{\gls{2}}=Q_{23}^{\gls{3}}$, $\wQ_{0,1}=Q_{1|\es}^{\gls{1|1}}$, and $\wQ_{0,0}=Q_{\theta}$.  There is only one non-trivial relation $\wQ_{0,1}\wQ_{1,0}=\wQ_{0,0}^+-\wQ_{0,0}^-$, so there is no significant difference between this simple Q-system and the above-discussed example of a $\gls{1|1}$ Q-system. Systems for larger Young diagrams are more interesting of course, there are a plenty of examples in \cite{Marboe:2017dmb,Marboe:2018ugv} and we give one explicit example on page \pageref{exmp:p44}.
\end{example}

The Q-system on a Young diagram $\LD$ is polynomially generated from the Bethe algebra $\BAL$ as it follows directly from \eqref{wQasdef}. The converse is also true  (the polynomiality lemma of \cite{Marboe:2016yyn}):
\begin{lemma}
\label{thm:pollemma}
Consider the Q-system on a Young Diagram $\LD$, and choose any $\gm,\gn$ that lie on the boundary of $\LD$. Then, up to symmetry, one can uniquely construct a solution of the $\glmn$ QQ-relations such that \eqref{wQasdef} holds and that the coefficients of the $Q_{A|J}$ are polynomial functions of the coefficients of the $\wQ_{a,s}$.
\end{lemma}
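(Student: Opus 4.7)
The plan is to show that the extension-restriction procedure discussed just above this lemma, which already provides existence and uniqueness up to symmetry, can be carried out by purely polynomial operations on the coefficients of the $\wQ_{a,s}$. Since restriction only re-indexes existing polynomials and the traversal along the boundary of $\LD$ is a finite alternating sequence of restrictions and elementary extensions, it suffices to establish polynomiality for one elementary extension step, say from a $\gls{\gm|\gn-1}$ Q-system to a $\glmn$ Q-system.

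For this step I would revisit, in order, the three constructions appearing in the proof of the extension lemma: (i) $Q_{a|\es}^{\rm ext}=(Q_{a|\gn}^{\rm ext})^+-(Q_{a|\gn}^{\rm ext})^-$, which is manifestly polynomial in the coefficients of the known $Q_{a|\gn}^{\rm ext}=Q_{a|\es}$; (ii) $Q_{\es|j}^{\rm ext}$ obtained by solving the finite-difference relation $(Q_{\es|j}^{\rm ext})^+-(Q_{\es|j}^{\rm ext})^-=Q_{\es|j\gn}^{\rm ext}$ for a monic polynomial of the degree dictated by \eqref{eq:poldegree}; (iii) $Q_{a|j}^{\rm ext}$ recovered from the identity $(Q_{a|j}^{\rm ext})^- -(Q_{a|\gn}^{\rm ext})^-\,Q_{\es|j}^{\rm ext}=Q_{a|j\gn}^{\rm ext}$, which is again a direct polynomial assignment. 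Then, having produced $Q_{a|\es},Q_{\es|i},Q_{a|i}$ polynomially, all remaining multi-index $Q_{A|I}$ of the $\glmn$ system follow polynomially from the exterior-form expressions \eqref{eq:qqforms} (and the determinant formula \eqref{susyw}), which involve only shifts, multiplications and additions.

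The only nontrivial step is (ii), and it is the crux of the argument: I must prove an auxiliary ``finite-difference summation'' statement that if $G$ is a given polynomial in $u$ then the monic polynomial $F$ of degree $\deg G+1$ solving $F^+-F^-=G$ has coefficients that are polynomial in those of $G$, with the single exception of the constant term, which is free. This is seen by writing $F=\sum_k f_k u^k$, $G=\sum_k g_k u^k$ and matching: the resulting linear system is upper-triangular with diagonal entries $k\hbar$ for $k\ge 1$, so it is solved by back-substitution, yielding each $f_k$ ($k\ge1$) as a polynomial in the $g_k$'s. The remaining ambiguity in $f_0$ is exactly the additive-constant symmetry $Q_{\es|j}\to Q_{\es|j}+\alpha\, Q_{\es|\gn}$ of \eqref{eq:225}, which is the ``up to symmetry'' clause in the statement and which I fix by a convention (for example as in the choice made after \eqref{eq:225}).

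The main obstacle, and the reason for routing the reconstruction through the Hasse-diagram path rather than through direct iteration of bosonic QQ-relations like $Q_{Aab|I}Q_{A|I}=W(Q_{Aa|I},Q_{Ab|I})$, is the avoidance of polynomial division in $u$: every operation above is either a shifted subtraction or the finite-difference summation just described, both of which preserve polynomial dependence of coefficients on input data. Iterating the elementary move along any boundary path of $\LD$ then produces the full $\glmn$ Q-system as polynomials in the coefficients of $\wQ_{a,s}$, with the choice of $(\gm,\gn)$ on the boundary affecting only the explicit realisation of these polynomials and not their existence or uniqueness modulo the symmetries already identified.
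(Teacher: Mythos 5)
There is a genuine gap, and it sits at the very heart of the lemma. Your induction is built on the elementary extension step \eqref{eq:ext1}, but that step takes as input the \emph{full} $\gls{\gm|\gn-1}$ Q-system (all $Q_{a|\es}$, $Q_{\es|j}$, $Q_{a|j}$, hence all $2^{\gm+\gn-1}$ functions), whereas the data of the lemma is only the nested chain $\wQ_{a,s}\propto Q_{a+1,\ldots,\gm|s+1,\ldots,\gn}$ of \eqref{wQasdef}. Walking along the boundary never enlarges this data: restriction and extension relate the full systems at two adjacent boundary points, and the chain functions are literally the same polynomials at every point, so the iteration can neither start nor absorb the given $\wQ_{a,s}$ incrementally. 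Concretely, there is in general no boundary point whose full Q-system is contained in the chain (already for $\gl_2$ with data $Q_{12},Q_2,Q_\es$ the function $Q_1$ is missing; only for tiny corners like $\gls{1|1}$ does the chain exhaust the system), and the new functions produced by an extension are \emph{fixed} by the old ones up to the additive constants of \eqref{eq:225}, leaving no freedom to match the remaining given $\wQ_{a,s}$. So the statement you take as granted --- ``the extension-restriction procedure already provides existence and uniqueness up to symmetry'' --- is about extending a full system, not about lifting the chain to a full system, and the latter is precisely what the lemma asserts.

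Relatedly, your claim that ``the only nontrivial step is (ii)'' and that polynomial division in $u$ is avoided is not tenable: recovering the off-chain functions forces you to solve Wronskian relations of the type $W(Q_1,Q_2)\propto Q_{12}$ for $Q_1$ with $Q_2,Q_{12}$ given, i.e.\ a first-order difference equation with non-constant coefficient; compare \eqref{eq:Q1sugl}, where exactly such a ratio appears and its polynomiality is guaranteed only by the QQ-relations, not by construction. Your auxiliary summation lemma for $F^+-F^-=G$ is correct (triangular system with diagonal $k\hbar$, free constant term matching the shift symmetry), but it does not cover this harder inversion. The paper's proof attacks the reconstruction directly: it passes to the purely bosonic description via $B_{AJ}=Q_{A|\bar J}$ with $B_\es=Q_{\es|\fs}=1$, writes $\wQ_{a,s}\propto W(B_{a+1},\ldots,B_\gm,B_{\hat 1},\ldots,B_{\hat s})$ as in \eqref{eq:reconsQ}, fixes $B_i=u^{d_i}$, and determines the unknown coefficients of the $B_a$ recursively along a serpentine path, each appearing \emph{linearly} with the non-vanishing prefactor $g\,\wQ_{a+1,s+1}$; this yields existence, uniqueness up to symmetry, and polynomiality simultaneously. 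To repair your argument you would need an analogous statement showing how to lift the chain to a full system polynomially --- at which point you would essentially be reproducing the paper's proof rather than bypassing it.
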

So the Q-system on a Young diagram is yet another description of the Bethe algebra restricted to $\VTl$. We shall benefit from it for counting purposes.

We will now prove Lemma~\ref{thm:pollemma} in a bit different way compared to \cite{Marboe:2016yyn}. The techniques introduced in this  proof are used in Appendix~\ref{app:BetheDetails}.
\begin{proof}
It suffices to find all $Q_{a|\es}^{\gl_\gm}$ for the $\gl_{\gm|0}$ Q-system, where $\gm=h_{\LD}$ is the height of the Young diagram. Then we can use extensions and restrictions to get all other $\gl_{\gm'|\gn'}$ Q-systems. To this end, let us extend $\gl_{\gm|0}$ to the $\gl_{\gm|\gn}$ Q-system, where $\gn=\lambda_1$ is the width of the Young diagram. Upon extension $Q_{a|\es}^{\glm}=Q_{a|\fs}^{\glmn}$. If $(\gm,\gn)$ lies outside of the Young diagram boundary, the extended Q-system contains non-physical Q-functions making the extension procedure non-unique but this will not induce an ambiguity in fixing $Q_{a|\es}^{\glm}$.

In the following, the superscript $^{\glmn}$ will be omitted.

One has $Q_{\es|\fs}=\wQ_{\gm,0}=1$ which is very suitable for applying
the so-called bosonisation trick \cite{Gromov:2010km}. The bosonisation trick is the observation that the Q-functions $B_{AJ}:= Q_{A|\bar J}$, where $\bar J$ means the complementary set to $J$, satisfy the QQ-relations of the bosonic $\gls{\gn+\gm}$ system:
\be\label{Brelations}
B_{\Sigma\sA\sB}B_{\Sigma}\propto W(B_{\Sigma\sA},B_{\Sigma\sB})\,,
\ee
where $\Sigma$ is a multi-index and $\sA,\sB$ are indices from the set $\{1,\ldots,\gm,\hat 1,\ldots,\hat \gn\}$. This immediately follows from \eqref{QQrelations} and the definition of $B$.

Since $B_{\es}=Q_{\es|\fs}=1$, the relations \eqref{Brelations} are solved by $B_{\Sigma}=W(B_{\sA_1},B_{\sA_2},\ldots,B_{\sA_k})$, where $\Sigma=\{\sA_1,\sA_2,\ldots,\sA_k\}$ and so \eqref{wQasdef} becomes 
\begin{equation}
  \label{eq:reconsQ}
  \wQ_{a,s}\propto W(B_{a+1},\ldots,B_{\gm},B_{\hat 1},\ldots,B_{\hat s})\,.
 \end{equation}
We are going to solve these equations for $B_\sA$ in a unique way modulo admissible symmetry transformations. Note that our main interest is $B_{a}=Q_{a|\es}^{\glm}$.

From \eqref{eq:221}, $\deg B_{a}=\deg Q_{a|\es}^{\glm}=\IPart_{a}+\gm-a$. Furthermore, remark the following property of the Wronskian determinant: if  $W(P_1,\ldots,P_k)\propto 1$ for some polynomials $P_1,\ldots,P_k$  then, modulo permutations, $\deg P_r = r-1$ for $r=1,\ldots,k$. And so, by examining \eqref{eq:reconsQ} for $(a,s)$ being on the boundary of Young diagram, where $\wQ_{a,s}=1$, we conclude that all $B_{\sA}$ for $\sA=1,\ldots,\gn+\gm$ should have distinct degrees from $0$ till $\gm+\gn-1$. The degrees $d_a\equiv \deg B_a$'s  satisfy $d_1\le d_2\le \dots\le d_\gm$. Their assignment rule is explained in Figure~\ref{fig:boso}.
\begin{figure}
\centering
\includegraphics[width=8cm]{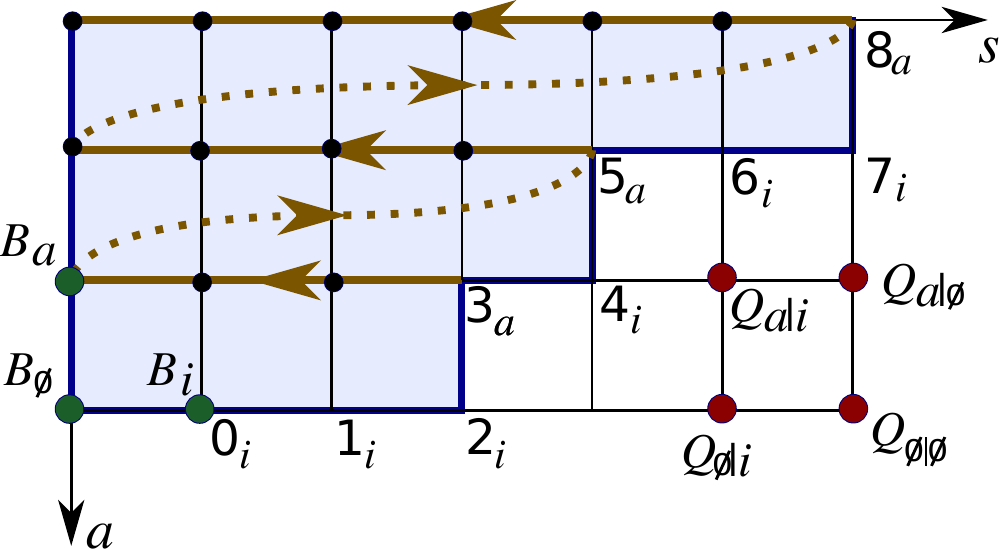}
\caption{\label{fig:boso}\scriptsize Rectangular lattice of size $\gm\times\gn$ represents the $\glmn$ Q-system used in the proof of Lemma~\ref{thm:pollemma}. We associate the collection of all $Q_{A|J}$ with $|A|=\gm-a$, $|J|=\gn-s$ to the node (a,s) of the lattice. When $(a,s)$ is on the Young diagram $\LD$, $\wQ_{a,s}$ is the smallest degree polynomial among $Q_{A|J}$. The Q-system is generated by $Q_{a|\es},Q_{\es|i},Q_{a|i}$ situated at the down-right corner or, using bosonisation, by $B_i=Q_{\es|\bar i}$ and $B_a=Q_{a|\fs}$ situated in the down-left corner. Degrees of polynomials $B_a,B_i$ are given by the Manhattan distance from $(\gm,1)$ to the appropriate points on the boundary of the Young diagram. For instance, $6_i$, the fifth number with subscript $i$, means that $\deg B_{\hat 5}=6$. Correspondingly $8_a$ means $\deg B_3=8$.}
\end{figure}

We can actually set $B_i=u^{d_i}$ because any subleading orders in polynomials $B_i$ do not affect physically relevant Q-functions (\ie those that survive restrictions that make short representations long, see page~\pageref{sec:shortrep}), in particular $\wQ_{a,s}$. Also, using \eqref{eq:225}, we can restrict $B_a$ to the form~\footnote{If we assign grading to $c_{a,s}$ as is done in Section~\ref{sec:Hilbert} then $\deg c_{a,s}$ is equal to $h_{a+1,s+1}$ -- the hook length, \emph{cf.}~Figure~\ref{fig:degrees}.}
\be
B_a=u^{\IPart_{a}+\gm-a}+\sum_{s=1}^{\IPart_{a}}c_{a-1,s-1}\,u^{d_s}\,.
\ee
Finally, we recursively fix coefficients $c_{a,s}$ following the serpentine path in Figure~\ref{fig:boso}. The point $(a,s)$ on the path is used to fix $c_{a,s}$. Any other $c_{a',s'}$ present in the equation \eqref{eq:reconsQ} are fixed from the previous recursion steps.  $c_{a,s}$ appears only linearly in the \rhs of \eqref{eq:reconsQ} and is multiplied by the {\it non-zero} prefactor $W(u^{d_{s+1}},B_{a+2},\ldots,B_\gm,u^{d_1},\ldots,u^{d_s})=g\,\wQ_{a+1,s+1}$, where $g$ is a non-zero constant, and so solution is unique. To confirm that it exists, recall that equation \eqref{eq:58} has a solution by assumptions of the lemma. But an arbitrary solution to this equation forms a one-parametric space $\wQ_{a,s}=\wQ_{a,s}^{(0)}+\tilde c_{a,s} \wQ_{a+1,s+1}$ parameterised by $\tilde c_{a,s}$. We can set $\wQ_{a,s}^{(0)}=W(B_{a+1}-c_{a,s}u^{d_{s+1}},B_{a+2},\ldots,B_{\hat s})$ and then $\tilde c_{a,s}=g\,c_{a,s}$.
\end{proof}
We remark that $B_a$ can be fixed uniquely up to symmetries from \eqref{eq:reconsQ} by considering only $a=0,\ldots,\gn-1$ and $s=0$, but the above proof asserts that the solution is polynomial if only if the Q-system on Young diagram has a polynomial solution. What is typically redundant is the number of equations \eqref{eq:58} needed to ensure polynomiality of the Q-system.  A conjecture about what is the minimal set of equations needed was given in \cite{Marboe:2016yyn}. For $\gls{2}$ spin chains, it was proven \cite{Granet:2019knz} that taking four equations with $a=0,1$, $s=0,1$ exactly suffices if $\inhom=0$.

\subsection{Relation to nested Bethe equations and quantum eigenvalues}
\label{sec:NBAE}
This topic was discussed numerously in the literature, it started to develop with the formulation of the analytic Bethe Ansatz \cite{Baxter:1971cs,Reshetikhin1983} and received a more covariant point of view after \cite{Krichever:1996qd}. It is worth emphasising \cite{Kazakov:2007fy} where the connection between Q-functions and Bethe equations \via the choice of different Kac-Dynkin paths was elucidated. In this section we summarise the known results and complement them with a discussion that focuses on completeness questions, see also Appendix~C of~\cite{Kazakov:2015efa}.

For a $\glmn$ spin chain, choose any permutation of the sequence $12\ldots\gm\hat 1\ldots\hat\gn$. It shall be called a choice of the nesting path. Define by ${\leftarrow\!k}$ the sequence of the first $k$ letters from the nesting path. For instance, if we chose  $2\hat 1\hat 2 1$ for the $\gls{2|2}$ case, then $\leftarrow\!1=2$, $\leftarrow\!2=2\hat 1$, $\leftarrow\!3=2\hat 1\hat2$, $\leftarrow\!4=2\hat 1\hat 21$. 

Nested Bethe equations are the equations on zeros of $Q_{\leftarrow\!k}$, $k=1,\ldots,\gm+\gn-1$, where we mean \eg $Q_{\leftarrow\! 3}=Q_{2|\hat 1\hat 2}$. Note also that, independently of the path choice, $Q_{\leftarrow\!(\gm+\gn)}=Q_{\theta}$ -- this is the fixed Q-function which plays the role of the source term.

The equations are derived as follows. Let $a,b$ denote some indices from the set $\{1,\ldots,\gm\}$, and $i,j$-some indices from the set  $\{\hat 1,\ldots,\hat \gn\}$. Then \eqref{QQrelations} imply
\begin{subequations}
\label{QQk}
\begin{align}
\label{QQka}
Q_{\leftarrow\!(k+1)}Q_{\leftarrow\!(k-1)}&\propto W(Q_{\leftarrow\!k},Q_{\leftarrow\!(k-1)b})\,,&&\text{\hspace{0.25em} for}\leftarrow\!(k+1)=\leftarrow\!(k-1)ab\,,
\\
\label{QQkb}
Q_{\leftarrow\!(k-1)a}Q_{\leftarrow\!(k-1)j}&\propto W(Q_{\leftarrow\!(k+1)},Q_{\leftarrow\!(k-1)})\,,&& \begin{tabular}{ll} \text{for $\leftarrow\!(k+1)=\leftarrow\!(k-1)aj$}\\ \text{\ or $\leftarrow\!(k+1)=\leftarrow\!(k-1)ja$}\end{tabular}\,,
\\
\label{QQkc}
Q_{\leftarrow\!(k+1)}Q_{\leftarrow\!(k-1)}&\propto W(Q_{\leftarrow\!k},Q_{\leftarrow\!(k-1)j})\,,&&\text{\hspace{0.25em} for}\leftarrow\!(k+1)=\leftarrow\!(k-1)ij\,.
\end{align}
\end{subequations}
Take \eqref{QQka} (or \eqref{QQkc}), make a shift $u\to u+\frac{\hbar}{2}$ and evaluate it at a zero of $Q_{\leftarrow\!k}$. Do the same for $u\to u-\frac{\hbar}{2}$ and divide the two evaluated equations by one another. One gets ``bosonic'' nested Bethe equations
\begin{subequations}
\label{BAEgeneric}
\be
\frac{Q_{\leftarrow\!(k+1)}^+Q_{\leftarrow\!(k-1)}^+}{Q_{\leftarrow\!(k+1)}^-Q_{\leftarrow\!(k-1)}^-}=-\frac{Q_{\leftarrow\!k}^{++}}{Q_{\leftarrow\!k}^{--}}\quad\text{at zeros of } Q_{\leftarrow\!k}\,.
\ee
By evaluating \eqref{QQkb} at zeros of $Q_{\leftarrow\!k}$, one gets $W(Q_{\leftarrow\!k},Q_{\leftarrow\!(k-1)j})=0$ which can be written in the ``fermionic'' nested Bethe equations form
\be
\frac{Q_{\leftarrow\!(k+1)}^+Q_{\leftarrow\!(k-1)}^-}{Q_{\leftarrow\!(k+1)}^-Q_{\leftarrow\!(k-1)}^+}=1\quad\text{at zeros of } Q_{\leftarrow\!k}\,.
\ee
\end{subequations}
For $\leftarrow\!k=\leftarrow\!(k-1)\sA_k$, we remind that $Q$-functions are actually (twisted) polynomials and denote them as
$Q_{\leftarrow\!k}\propto\prod\limits_{\sB\in \leftarrow\!k}z_\sB^{(-1)^{\bar\sB} u/\hbar}\prod\limits_{l=1}^{M_{k}}(u-u_l^{(k)})$. Equations (\ref{BAEgeneric}) read
\begin{equation}
\frac{z_{\sA_k}}{z_{\sA_{k+1}}}\prod\limits_{(k',l')\neq(k,l)} \frac{u_l^{(k)}-u_{l'}^{(k')}+\frac{\hbar}2 c_{k,k'}}{u_l^{(k)}-u_{l'}^{(k')}-\frac{\hbar}2 c_{k,k'}}=1\label{eq:WBEgeneric}\,,
\end{equation}
where $k'$ runs from $0$ to $\gm+\gn$ (whereas $1\le k \le \gm+\gn-1$), and
the matrix $c_{k,k'}$ has coefficients $c_{k,k}=(-1)^{\bar\alpha_k}+(-1)^{\bar\alpha_{k+1}}$, $c_{k,k+1}=c_{k+1,k}=-(-1)^{\bar\sA_{k+1}}$ and all other coefficients equal to zero. In the particular case $\sA_k=k$, if we rename the label ${(k)}$ as ${(\sA)}$ where $\sA:=\gm+\gn-k$ and we remember that $Q_{\leftarrow\!0}=1$ and $Q_{\leftarrow\!(\gm+\gn)}\propto Q_{\theta}$, then equation~(\ref{BAEgeneric}) becomes precisely (\ref{eq:Bethedistinguished})~\footnote{
The matrix $c_{\sA,\sB}$ of (\ref{eq:Bethedistinguished}) is obtained from the matrix $c_{k,k'}$ by restricting to $1\le \sA,\sB< \gm+\gn$ and re-ordering of the rows and columns.}.

In particular we see that the choice of the Cartan matrix, or equivalently of the Kac-Dynkin diagram, see \eg \cite{Frappat:1996pb}, is implied by the choice of the nesting path \cite{Kazakov:2007fy}. Namely, the Kac-Dynkin diagram should be a chain of $\gm+\gn-1$ nodes where the $k$'th node is fermionic (crossed) if the $k$'th and the $(k+1)$'th letters have different grading and is bosonic (blank) otherwise.

For twist-less systems, two comments are due. First, a twist-less Q-system is invariant under symmetry transformations \eqref{eq:225} and this ambiguity can propagate to the nested Bethe equations if the nesting path is generic. To avoid this happening, we restrict ourselves only to those paths for which $a$ is to the left from $b$ if $a>b$, and $i$ is to the left of $j$ if $i>j$. Such a choice ensures that if $Q_{A|J}=Q_{\leftarrow\!k}$ for some $k$ then $Q_{A|J}$ is the polynomial of the smallest degree among all $Q_{A'|J'}$ with $|A'|=|A|,|J'|=|J|$. Hence the distinguished subclass of the nesting paths is naturally realised by the paths across Young diagrams, with $\wQ_{a,s}=Q_{\leftarrow\!k}$, $a=\gm-|A|,s=\gn-|J|$, and so we can reformulate \eqref{BAEgeneric} using Q-functions from the Young diagram Q-system.  Second, for short representations, a part of the nesting path lies outside of the Young diagram. We should define $\wQ_{a,s}=Q_{\leftarrow\!k}=1$ if $(a,s)$ is on the path but outside of the Young diagram to get the correct interpretation in terms of the nested Bethe equations.

\noindent\begin{minipage}{\textwidth}
\begin{example}
\label{exmp:p44}
\begin{minipage}{0.4\textwidth}
\begin{center}
\begin{picture}(160,80)(0,-80)
\thinlines
\color{blue!5}
\polygon*(0,0)(160,0)(160,-20)(120,-20)(120,-40)(40,-40)(40,-80)(0,-80)(0,0)
\color{black}
\thinlines
\multiput(0,0)(0,-20){5}{\line(1,0){160}}
\multiput(0,0)(20,0){9}{\line(0,-1){80}}

\color{blue}
\thicklines
\drawline(0,0)(160,0)(160,-20)(120,-20)(120,-40)(40,-40)(40,-80)(0,-80)(0,0)

\color{Black}
\thicklines
\drawline(0,0)(0,-80)
\drawline(0.4,0)(0.4,-80)
\drawline(-0.4,0)(-0.4,-80)

\color{white}
\put(0,-20){\circle*{7}}
\put(0,-40){\circle*{7}}
\put(0,-60){\circle*{7}}

\thinlines
\color{Blue}
\put(0,-20){\circle{7}}
\put(0,-40){\circle{7}}
\put(0,-60){\circle{7}}

\end{picture}
\end{center}

\scriptsize
The Q-system on the depicted Young diagram leads, by the choice of the Kac-Dynkin nesting path, to Bethe equations of $\gl_4$ spin chain of length $L=18$. The momentum-carrying Bethe roots are zeros of $\wQ_{1,0}$, the Bethe roots on the nested levels are those of $\wQ_{2,0}$ and $\wQ_{3,0}$.

\end{minipage}
\hspace{0.1\textwidth}
\begin{minipage}{0.4\textwidth}
{
\begin{picture}(160,80)(0,-80)
\thinlines
\color{blue!5}
\polygon*(0,0)(160,0)(160,-20)(120,-20)(120,-40)(40,-40)(40,-80)(0,-80)(0,0)
\color{black}
\thinlines
\multiput(0,0)(0,-20){5}{\line(1,0){160}}
\multiput(0,0)(20,0){9}{\line(0,-1){80}}

\color{blue}
\thicklines
\drawline(0,0)(160,0)(160,-20)(120,-20)(120,-40)(40,-40)(40,-80)(0,-80)(0,0)

\color{Black}
\thicklines
\drawline(0,0)(20,0)(20,-20)(40,-20)(40,-40)(60,-40)(60,-60)(160,-60)
\put(0.4,0.4){\drawline(0,0)(20,0)(20,-20)(40,-20)(40,-40)(60,-40)(60,-60)(160,-60)}
\put(-0.4,-0.4){\drawline(0,0)(20,0)(20,-20)(40,-20)(40,-40)(60,-40)(60,-60)(160,-60)}

\color{white}
\put(20,0){\circle*{7}}
\put(20,-20){\circle*{7}}
\put(40,-20){\circle*{7}}
\put(40,-40){\circle*{7}}
\put(60,-40){\circle*{7}}
\put(60,-60){\circle*{7}}
\put(80,-60){\circle*{7}}
\put(100,-60){\circle*{7}}
\put(120,-60){\circle*{7}}
\put(140,-60){\circle*{7}}

\thinlines
\color{Blue}
\put(20,0){\circle{7}}
\put(20,-20){\circle{7}}
\put(40,-20){\circle{7}}
\put(40,-40){\circle{7}}
\put(60,-40){\circle{7}}
\put(60,-60){\circle{7}}
\put(80,-60){\circle{7}}
\put(100,-60){\circle{7}}
\put(120,-60){\circle{7}}
\put(140,-60){\circle{7}}

\put(20,0){
\put(-2.6,-2.6){\line(1,1){5.2}}
\put(-2.6,2.6){\line(1,-1){5.2}}
}
\put(20,-20){
\put(-2.6,-2.6){\line(1,1){5.2}}
\put(-2.6,2.6){\line(1,-1){5.2}}
}
\put(40,-20){
\put(-2.6,-2.6){\line(1,1){5.2}}
\put(-2.6,2.6){\line(1,-1){5.2}}
}
\put(40,-40){
\put(-2.6,-2.6){\line(1,1){5.2}}
\put(-2.6,2.6){\line(1,-1){5.2}}
}
\put(60,-40){
\put(-2.6,-2.6){\line(1,1){5.2}}
\put(-2.6,2.6){\line(1,-1){5.2}}
}
\put(60,-60){
\put(-2.6,-2.6){\line(1,1){5.2}}
\put(-2.6,2.6){\line(1,-1){5.2}}
}
\end{picture}
}

\scriptsize
The same Q-system but a different path leading to Bethe equations of $\gl_{3|8}$ spin chain of length $L=18$. The momentum-carrying Bethe roots are zeros of $\wQ_{0,1}$, the Bethe roots on nested levels are those of $\wQ_{1,1}$ and $\wQ_{1,2}$, all other nested levels are not excited, and so these Bethe equations are also those of the $\gl_{2|2}$ chain.
\end{minipage}
\newline
\newline
\scriptsize
Bethe equations are written as
\begin{align}
\prod_{\pa=1}^L\frac{u_k^{(\sA)}-\inhom[\pa]+\frac{c_{1,2}+c_{1,1}}{2}\,\hbar\,\delta_{\sA,1}}{u_k^{(\sA)}-\inhom[\pa]-\frac{c_{1,2}+c_{1,1}}{2}\,\hbar\,\delta_{\sA,1}}
&=(-1)^{\frac{c_{kk}}2}
\prod_{\sB=1}^{3}\prod_{l=1}^{M_\beta}
\frac{u_{k}^{(\sA)}-u_{l}^{(\sB)}+\frac \hbar 2c_{\sA,\sB}}{u_{k}^{(\sA)}-u_{l}^{(\sB)}-\frac \hbar 2 c_{\sA,\sB}}\,,&& 1\le \sA\le 3\,,&&
1\le k\le M_\sA\,;
\end{align}
\begin{minipage}{0.4\textwidth}
\centering
$
c_{\sA,\sB}=\left(\begin{matrix} 2& -1 & 0 \\ -1 & 2 & -1 \\ 0 & -1 & 2  \end{matrix}\right)\,,
$
$
\begin{array}{rcrl}
\wQ_{1,0} &=\prod\limits_{i=1}^{M_1}(u-u_i^{(1)}),& M_1&=10\,,
\\
\wQ_{2,0} &=\prod\limits_{i=1}^{M_2}(u-u_i^{(2)}),& M_2&=4\,,
\\
\wQ_{3,0} &=\prod\limits_{i=1}^{M_3}(u-u_i^{(3)}),& M_3&=2\,.
\end{array}
$
\end{minipage}
\hspace{0.1\textwidth}
\begin{minipage}{0.4\textwidth}
\centering
$
c_{\sA,\sB}=\left(\begin{matrix} 0& -1 & 0 \\ -1 & 0 & 1 \\ 0 & 1 & 0  \end{matrix}\right)\,,
$
$
\begin{array}{rcrl}
\wQ_{0,1} &=\prod\limits_{i=1}^{M_1}(u-u_i^{(1)}),& M_1&=14\,,
\\
\wQ_{1,1} &=\prod\limits_{i=1}^{M_2}(u-u_i^{(2)}),& M_2&=7\,,
\\
\wQ_{1,2} &=\prod\limits_{i=1}^{M_3}(u-u_i^{(3)}),& M_3&=4\,.
\end{array}
$
\end{minipage}
\end{example}
\end{minipage}

It is possible to perform a duality transformation \cite{Woynarovich_1983,PhysRevB.46.14624,Tsuboi:1998ne,Pronko:1998xa,Mukhin2003,Gromov:2007ky} -- to pass from the nested Bethe equations for one nesting path to the equations for another path, typically when the change of path is an elementary permutation. It can be done by lifting \eqref{BAEgeneric} to the QQ-relations and then by descending to another path.

Concerning completeness,  we start by commenting on the relation between \eqref{BAEgeneric} and \eqref{QQk}. The Bethe equations \eqref{BAEgeneric} involve ratios which can become of type $0/0$ for certain class of solutions known as exceptional solutions, see \eg  type {\it  a} in \cite{Avdeev1986}. We should provide some regularisation prescription to treat them properly. Furthermore, if $Q_{\leftarrow\!k}$ contains a double zero, \ie  coinciding Bethe roots, we are losing information when passing from \eqref{QQk} to \eqref{BAEgeneric}. Indeed, we should consider also a derivative of \eqref{QQk} at a double zero which provides an extra constraint in addition to the nested Bethe equations. Double zeros can indeed exist as physical solutions since we can always collide roots by fine-tuning values of inhomogeneities or twist \cite{Volin:2010xz}. Moreover, the fine-tuned points coincide sometimes with the physically-relevant case of $\inhom[\pa]=0$ \cite{Avdeev1986,Hao:2013rza}~\footnote{the observed cases are however for $\gl_2$ chains in a higher spin representation}.

Based on the above comments, \eqref{QQk} look as more appropriate equations than \eqref{BAEgeneric}. The QQ-relations \eqref{QQk} along a nesting path are closely related to the description in terms of  quantum eigenvalues \footnote{The name, to our knowledge, first time appears in the work of Sklyanin \cite{Sklyanin:1992sm}. Historically, the name ``analytic Bethe Ansatz'' (for transfer matrices as sum over $\Lambda$'s) was more often in use following the work of Reshetikhin \cite{Reshetikhin1983}.} \cite{Kuniba:1994na,Krichever:1996qd,Tsuboi:1997iq,Tsuboi:1998ne}. For clarity, we introduce them in the example of $\gl_{2|2}$ spin chain in a concrete grading choice - $2\hat 2\hat 1 1$. Consider a generating functional \cite{Kazakov:2007fy}
\be
\label{genfun} \sum_{a=0}^{\infty} (-1)^a\DD^a\,\wT_{(1^a)}(u)\,\DD^a =
\frac 1{1-\DD\Lambda_1\DD}(1-\DD\Lambda_{\hat 1}\DD)(1-\DD\Lambda_{\hat 2}\DD)\frac 1{1-\DD\Lambda_2\DD}
\ee
that can be viewed as a way to factorise $\Ber\left[\Id-\,\DD\,T(u)G\,\DD\right]$ (see \eqref{eq:genseries}), and thus $\Lambda_{\sA}$ are sometimes called quantum eigenvalues (of the monodromy matrix). $\Lambda_{\sA}$ commute between themselves and are expressed in terms of the $Q$-functions as
\begin{align}
\Lambda_2&=\frac{Q_2^{+}}{Q_2^-}Q_{\theta}\,,&\Lambda_{\hat 2}&=\frac{Q_{2|2}^{[-2]}}{Q_{2|2}}\frac{Q_2^{+}}{Q_2^{-}}Q_{\theta}\,,&\Lambda_{\hat 1}&=\frac{Q_{2|12}^-}{Q_{2|12}^{+}}\frac{Q_{2|2}^{[+2]}}{Q_{2|2}}Q_{\theta}\,,& \Lambda_1&=\frac{Q_{2|12}^-}{Q_{2|12}^{+}}\frac{Q_{12|12}^{[+2]}}{Q_{12|12}}Q_\theta\,.
\end{align}
The general rule is
\be
\Lambda_\sA=\frac{Q_{\leftarrow\!k}^{[\pm 2+\gm-\gn-|A|+|J|]}Q_{\leftarrow\!(k-1)}^{[\mp 2+\gm-\gn-|A'|+|J'|]}}{Q_{\leftarrow\!k}^{[\gm-\gn-|A|+|J|]}Q_{\leftarrow\!(k-1)}^{[\gm-\gn-|A'|+|J'|]}}Q_{\theta}\,,
\ee
where $\sA$ is on the $k$'th position of the nesting path, and $\leftarrow\!k=A|J$, $\leftarrow\!(k-1)=A'|J'$, the upper sign corresponds to $\bar\sA=0$ and the lower sign - to $\bar\sA=1$. While $\Lambda_{\sA}$ depend on the choice of the nesting path, the generating functional \eqref{genfun} does not which follows from the QQ-relations \eqref{QQrelations}~\footnote{In \cite{Huang_2019}, the same concepts and statements are expressed more formally. There, {\it population} is the same as Q-system reviewed on page~\pageref{page:Qsys} and onwards, and {\it reproduction procedure} is the same as the above-mentioned duality transformations.}.

Given that \eqref{genfun} generates transfer matrices, $\Lambda_2-\Lambda_{\hat 2}$ should not have poles at zeros of $Q_2^-$, $\Lambda_{\hat 1}+\Lambda_{\hat 2}$ should not have poles at zeros of $Q_{2|2}$ \etc ., these conditions are another way to generate nested Bethe equations, in the same spirit as they are derived from Baxter equation, see \eg   \cite{Frenkel:2013uda}.

Both QQ-relations along the path and no-poles conditions for combinations of quantum eigenvalues (which are also path-dependent) are less constraining than the Wronskian condition \eqref{mastereq}. What happens is that polynomiality should be ensured for {\it all} choices of paths, that is for all Q-functions. This requirement is achieved by \eqref{mastereq} or equivalent formulations. We remark that even if a solution of \eqref{QQk} looks normal (\ie it has no coinciding Bethe roots or roots separated by $\hbar$, $\frac 12\hbar$) it might be still not physical because of the problems with polynomiality happening when we try to change the path~\footnote{This was observed by C.~Marboe and one of the authors \cite{MarboeUnpub} while computing the AdS/CFT spectrum for \cite{Marboe:2017dmb,Marboe:2018ugv}. Curiously, attempts to mitigate this issue led to the formulation of the Q-system on a Young diagram \cite{Marboe:2016yyn}.}.

\section{Labelling solutions}
\label{sec:label}
In this section we solve Wronskian Bethe equations explicitly in special regimes, when all $\inhom$ are far away from one another. Then one can continuously deform $\inhom$ to any desired values thus obtaining a way to label solutions. A practical application of this labelling approach is demonstrated in Section~\ref{sec:alternative}. The labelling approach provides an alternative physics-style proof of the completeness.

\subsection{Twisted case, labelling with a multinomial expansion}
Consider the regime when $|\inhom-\inhom[\pa']|\sim \Lambda$, and $\Lambda$ is large. This limit appeared previously in the literature including for counting purposes, see \eg  \cite{MTV,Gromov:2019wmz}.

By rescaling
\be
\label{eq:rescaling}
u\to u/\Lambda,\quad \inhom\to\inhom/\Lambda,\quad \hbar\to\hbar/\Lambda\,,
\ee
which is a symmetry of the Wronskian equations,  we can consider the $\hbar\rightarrow 0$ limit with all $\inhom[\pa]$ being finite and distinct instead of $\Lambda\to\infty$. We will show below that when $\hbar\rightarrow 0$, we can neglect shifts of the spectral parameter in the polynomial piece of the Baxter function $Q=z^{-u/\hbar}q(u)$, $q(u+\hbar)\simeq q(u)+\CO(\hbar)$. So all the QQ-relations \eqref{QQrelations} become of type $QQ=QQ$ and equation \eqref{mastereq} simplifies to
\begin{equation}
\label{hbarzeroeq}
\prod_{a=1}^{\gn}q_{a|\es}(u)\prod_{i=1}^{\gm}q_{\es|i}(u)=\prod_{\pa=1}^L(u-\inhom)\,.
\end{equation}
The number of solutions to the last equation which is an equation on $c_{a|\es},c_{\es|i}$ is easily counted to be
\begin{equation}
\label{numofsol}
\frac{L!}{\prod_{a=1}^{\gn}\lambda_a ! \prod_{i=1}^{\gm} \nu_i !}=\mathrm{dim}~\VTw\,.
\end{equation}

The value $\hbar=0$ is quite special, therefore let us ensure that the conclusion about number of solutions holds also for $\hbar\neq 0$. 
\begin{lemma}
\label{thm:61}
For distinct $\inhom$, the number of solutions of \eqref{mastereq} in some neighbourhood of $\hbar=0$ is given by \eqref{numofsol} and the solutions are in one-to-one correspondence, by analytic continuation in $\hbar$, with solutions at $\hbar=0$.
\end{lemma}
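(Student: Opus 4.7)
The strategy is to establish the count and the bijection by studying the limit $\hbar\to 0$ explicitly, then propagating the result to a small neighbourhood by the implicit function theorem. The first step is to justify the reduction \eqref{hbarzeroeq}. Since each Q-function is of the form $Q = z^{u/\hbar}q(u)$, a direct expansion of $W(Q_1,Q_2) = Q_1^+Q_2^- - Q_1^-Q_2^+$ gives, for pairwise distinct twist eigenvalues, $W(Q_1,Q_2) = (z_1z_2)^{u/\hbar}\bigl[(\sqrt{z_1/z_2}-\sqrt{z_2/z_1})\,q_1q_2 + O(\hbar)\bigr]$; the analogous expansion of the supersymmetric Wronskian \eqref{susyw} yields, at leading order in $\hbar$, an overall twist prefactor times $\prod_a q_{a|\es}\prod_i q_{\es|i}$. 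Combined with \eqref{eq:Qfs}, the Wronskian Bethe equations \eqref{mastereq} at $\hbar=0$ are exactly \eqref{hbarzeroeq}.

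At $\hbar=0$, a solution of \eqref{hbarzeroeq} is nothing but a factorisation of $\prod_\pa(u-\inhom)$ into monic polynomials $q_{a|\es}$ and $q_{\es|i}$ of the prescribed degrees $\lambda_a$ and $\nu_i$. Because the $\inhom$ are pairwise distinct, every such factorisation is uniquely specified by an ordered partition of the set $\{\inhom[1],\ldots,\inhom[L]\}$ into blocks of sizes $\lambda_1,\ldots,\lambda_\gm,\nu_1,\ldots,\nu_\gn$, giving exactly \eqref{numofsol} solutions. Each such solution is regular: the differential of the map $(c_\pa)\mapsto (\chi_\pa)$ at $\hbar=0$ sends a tangent vector $(\delta q_{a|\es},\delta q_{\es|i})$ to $\sum_a \delta q_{a|\es}\prod_{b\neq a}q_{b|\es}\prod_i q_{\es|i}+\sum_i \delta q_{\es|i}\prod_a q_{a|\es}\prod_{j\neq i}q_{\es|j}$, which is injective precisely because the $q_{a|\es}$ and $q_{\es|i}$ at such a solution are pairwise coprime (this is a standard consequence of the Chinese Remainder Theorem applied to $\CC[u]/\prod_\pa(u-\inhom)$). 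Hence the Jacobian of the equations is invertible at each of the \eqref{numofsol} solutions at $\hbar=0$.

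Viewing $\hbar$ as an additional parameter, the implicit function theorem then extends each such $\hbar=0$ solution to a unique analytic branch $c_\pa(\hbar)$ in some neighbourhood of $\hbar=0$, and the branches stay distinct by continuity. This produces at least \eqref{numofsol} distinct solutions for every sufficiently small $\hbar\neq 0$. By the completeness statement already established in Section~\ref{sec:Hilbert}, the algebraic number of solutions of \eqref{mastereq} is exactly $\dim\VTw=$\eqref{numofsol} for \emph{any} value of the parameters $\sse$ (and hence for any $\hbar$ treated on the same footing, since the argument of Proposition~\ref{freeness} applies to the Wronskian algebra with $\hbar$ adjoined to the base). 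Therefore the branches we constructed exhaust all solutions and each is simple.

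The main obstacle is making the $\hbar\to 0$ degeneration of the Q-system rigorous: one has to check that no solution of \eqref{mastereq} escapes to infinity as $\hbar\to 0$ (otherwise IFT branches would not give all solutions). This is precisely the properness of the Wronskian map discussed on page~\pageref{par:prop}, which holds uniformly in $\hbar$ by the analysis of Appendix~\ref{sec:Finiteness}; combined with the constancy of the algebraic count, it guarantees that the \eqref{numofsol} branches found by IFT are all solutions and provide the claimed bijection.
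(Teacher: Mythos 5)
Your proposal is correct in substance, and its first half — the reduction of \eqref{mastereq} to \eqref{hbarzeroeq} at $\hbar=0$, the multinomial count of factorisations of $Q_\theta$, the invertibility of the Jacobian via pairwise coprimality of the $q$'s, and the extension of each $\hbar=0$ solution by the analytic implicit function theorem — is exactly the paper's argument (the paper merely asserts the non-degeneracy of the differential; your Chinese-Remainder argument is a valid fleshing out). The genuine divergence is in how the extra solutions at small $\hbar\neq0$ are ruled out. The paper proves directly a version of properness in $\hbar$: assuming a sequence of solutions with $\hbar^{(n)}\to0$ whose roots escape to infinity, it extracts a subsequence in which every root either converges or diverges, and observes that one of the equations \eqref{mastereq} then contains a monomial built from all and only the diverging roots (visible because at $\hbar=0$ the system degenerates to \eqref{hbarzeroeq}), which dominates every other term and contradicts the finiteness of $\binhom[]$; combined with the IFT this shows every small-$\hbar$ solution lies on one of the constructed branches. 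You instead cap the number of solutions by the algebraic count of Section~\ref{sec:Hilbert} (valid for each fixed $\hbar\neq0$ and any $\bar\sse$), so your $d_\Lambda$ distinct IFT branches must exhaust them. This is logically sound and not circular, but two remarks are in order. First, it makes Lemma~\ref{thm:61} dependent on the freeness/Hilbert-series proof of completeness, whereas the paper intends Section~\ref{sec:label} as an \emph{independent} second counting proof; your route forfeits that, and the paper's direct boundedness argument is what preserves it. Second, your assertions that properness holds ``uniformly in $\hbar$'' by Appendix~\ref{sec:Finiteness}, and that the argument of Proposition~\ref{freeness} applies with $\hbar$ adjoined to the base, are stated without proof: Appendix~\ref{sec:Finiteness} works at fixed $\hbar\neq0$, and joint $(\sse,\hbar)$ properness as $\hbar\to0$ is precisely the nontrivial point the paper's monomial-growth argument settles. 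In your structure these claims are fortunately redundant (the fixed-$\hbar$ count already closes the argument), but if you want the lemma to stand on its own you should replace them by a direct boundedness argument along the paper's lines.
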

\begin{proof}
Let us treat $\hbar$ as a parameter and consider $\SW$ as a smooth map from $\Cdom\times\CC$ to $\Sedom$. For our discussion, it will be safe to use $\inhom$ instead of $\sepa$ to (locally) parameterise $\Sedom$. Take a solution $\bar{c}$ of \eqref{hbarzeroeq} corresponding to a point $\binhom[]$ with pairwise distinct coordinates. It is easy to check that the differential of $\SW$ with respect to the first $L$ coordinates (the $c_\pa$ variables) at the point ${\bar{c},\hbar=0}$ is an invertible $L\times L$ matrix. Indeed it just reduces to the differential of the smooth map from $\CC^L$ to $\CC^L$ defined by \eqref{hbarzeroeq} which is clearly non-degenerate if all $\binhom$ are distinct. Therefore we can apply the analytic implicit function theorem to conclude that all the solutions at $\hbar=0$ can be analytically extended to solutions in a neighbourhood of $\hbar=0$.

It remains to show that all solutions for some $\hbar\neq 0$ can be obtained by extending from $\hbar=0$. This is equivalent to establishing a version of properness, namely that all sequences of solutions $\bar{c}^{(n)}$ corresponding to $\binhom[]$ and $\hbar^{(n)}\rightarrow 0$ as $n\rightarrow \infty$ remain bounded. Assuming the contrary and rewriting \eqref{mastereq} in terms of the roots $u_i$ of the Q-functions, we can extract a subsequence of solutions $\bar{u}^{(n)}$ such that every root either converges to a finite limit or diverges to infinity. But then one of the equations of $\eqref{mastereq}$ will contain a monomial with all the diverging roots and only those. This can be seen by the fact that at $\hbar=0$, \eqref{mastereq} reduces to \eqref{hbarzeroeq}. This monomial will therefore grow faster than any other possible monomial as $n\rightarrow \infty$. Since $\hbar^{(n)}\rightarrow 0$ and $\binhom[]$ is finite we arrive at a contradiction.
\end{proof}

\subsection{Twist-less case, labelling with standard Young tableaux}\label{sec:twist-less-case}

Using the $\hbar\to 0$ limit is not sufficient in the absence of twist as
there are no longer fast-oscillating terms $z_{\sA}^{u/\hbar}$ in Q-functions and so the QQ relations won't simplify to the structure $QQ=QQ$~\footnote{They will reduce to those of the Gaudin model, see Section~\ref{sec:Gaudin}}. Hence we use a stronger regime when 
\be
\label{scaling}
\inhom[L]\gg \theta_{L-1}\gg\ldots \gg\inhom[1]\gg \hbar\,.
\ee
The technical analysis of this limit is given in Appendix~\ref{app:BetheDetails}, and here we describe its combinatorial outcome. Similarly to the twisted case, each root of each Q-function $\mathbb{Q}_{a,s}$ ``sticks to'' a specific inhomogeneity in the sense that, at the leading order of \eqref{scaling}, it is proportional to this inhomogeneity. However, in contrast to the twisted case, the coefficients of proportionality are not equal to one and an arbitrary distribution of the roots between the inhomogeneities is not allowed. Namely, when $\theta_L$ is large compared to the other inhomogeneities, the Q-functions must exhibit the behaviour~\footnote{$\sim$ designates an equality at the leading order of the corresponding expansion, for this case -- the large-$\theta_L$ expansion. Equality is verified by comparing coefficients of polynomials in $u$.}
\begin{equation}\label{eq:QSplit}
  \wQ_{a,s}(u)\sim (u-\N{L}as\, \inhom[L])\, \tilde \wQ_{a,s}(u)
\end{equation}
for certain $(a,s)$ and $\wQ_{a,s}(u)\sim \tilde\wQ_{a,s}(u)$ for the other $(a,s)$, such that $\tilde\wQ_{a,s}$   form a Q-system on a Young diagram which is obtained from $\LD$ by removing one box. $\N{L}as$ are numerical coefficients fixed below. It is clear that \eqref{eq:QSplit} will hold precisely for those $(a,s)$ for which $\deg \wQ_{a,s}=\deg \tilde\wQ_{a,s}+1$, and these are the points satisfying $a < \ao,s < \so$ if the box $(\ao,\so)$ is being removed.

After removing one box, we end up with a Q-system for a spin chain of length $L-1$. Now we repeat the argument with $\theta_{L-1}\to \infty$ and so on and recursively fully disentangle the Young diagram. We associate a number $\SYT_{\ao,\so}$ to each box of the Young diagram which is equal to the length of the spin chain at which the box $(\ao,\so)$ decouples. These numbers range from $1$ to $L$ and increase across each row and each column, \ie they form a standard Young tableau (SYT) $\SYT$ of shape $\LD$. An equivalent statement on the level of NBAE was made for $\gln$ in \cite{Kirillov1986}, although the $\hbar\to0$ regime was suggested (which we know is not sufficient) and details were not given for $\gn>2$.

For a given SYT $\SYT$, the solution $\wQ_{a,s}$ at the leading order of \eqref{scaling} is then given by
\be\label{eq:wQlead0}
\wQ_{a,s}\sim\wQ_{a,s}^{\rm lead}=\prod_{\ao> a,\so> s}\left(u-\N{\SYT_{\ao,\so}}as\theta_{\SYT_{\ao,\so}}\right)\,,
\ee
see Figure~\ref{fig:6}.
\begin{figure}
  \centering
  \begin{tikzpicture}[rotate=-90,scale=.5,node distance=0pt]
    \fill [red!25!white] (0,2) +(2,1) |- +(1,2) |- +(0,3) |- cycle;
    \fill [green!25!white] (2,1) rectangle ++ (2,2);
    \draw (1,0) |- (0,5) |- (5,0) |- (0,1)
    (2,0) |- (0,4) (4,0) |- (0,3) (3,0) -- (3,3) (0,2) -- (4,2);
\foreach \l[count=\y] in
    {{1,2,4,7,10},{3,6,8,13},{5,9,11},{12,15,16},{14}}
    { \foreach \t [count=\x] in \l
      {\path (-0.5,-0.5) +(\y,\x) node {\t};}}
    \draw [<-, shorten <= .15cm,>=stealth, green!50!black] (2,1) node [circle, fill=green!50!black,inner sep=.05cm] {}
    to [bend right=-10] ++(1.,3) node [anchor=175,text=black] 
    {$\mathbb
        Q_{2,1}=(u-\underbrace{ u^{(2,1)}_1}_{\propto \theta
          _{{\color{green!60!black}{9}}}}\color{black})
        (u-\underbrace{ u^{(2,1)}_2}_{\propto \theta
          _{\color{green!60!black}{11}\color{black}}})
        (u-\underbrace{ u^{(2,1)}_3}_{\propto \theta
          _{\color{green!60!black}{15}\color{black}}})
        (u-\underbrace{ u^{(2,1)}_4}_{\propto \theta
          _{\color{green!60!black}{16}\color{black}}}) $};
    \draw [<-, shorten <= .15cm,>=stealth, red!50!black] (0,3) node [circle, fill=red!70!black,inner sep=.05cm] {}
    to [bend right=-20] ++(0,2.5) node [anchor=173,text=black] 
    {$\mathbb
        Q_{0,3}=(u-\underbrace{ u^{(0,3)}_1}_{\propto \theta
          _{{\color{red!80!black}{7}}}}\color{black})
        (u-\underbrace{ u^{(0,3)}_2}_{\propto \theta
          _{\color{red!80!black}{10}\color{black}}})
        (u-\underbrace{ u^{(0,3)}_3}_{\propto \theta
          _{\color{red!80!black}{13}\color{black}}})$};
  \end{tikzpicture}
  \caption{\label{fig:6}Description of a solution of twist-less Q-system \via a standard Young tableau:
    in the regime $\inhom[L]\gg \theta_{L-1}\gg\ldots \gg\inhom[1]\gg
    \hbar$, the scaling of the roots of the $\mathbb Q$
    function at any given node is given by the set of boxes to the
    bottom right of this node (for instance the red resp. green boxes
    for $\mathbb Q_{0,3}$ resp. $\mathbb Q_{2,1}$).}
  \label{fig:Q_tab}
\end{figure}

Let us now fix the coefficients $\N{L}as$. To this end recall that $\wQ_{a,s}$ are defined as monic polynomials and restore the normalisation in \eqref{eq:58}
\begin{equation}
  \label{eq:QQ_norm} \wQ_{a+1,s}\wQ_{a,s+1}=\frac{\wQ_{a,s}^+\wQ_{a+1,s+1}^--\wQ_{a,s}^-\wQ_{a+1,s+1}^+}{\hbar\left(\deg
    \wQ_{a,s}-\deg \wQ_{a+1,s+1}\right)}\,.
\end{equation}
The key identity we will need is the relation of the polynomial degrees of $\wQ_{a,s}$ to the hook length $h_{a,s}$ of the Young diagram box $(a,s)$
\be
\deg\wQ_{a,s}-\deg\wQ_{a+1,s+1}=h_{a+1,s+1}\,.
\ee
Note that $h_{a,s}$ will change when we remove certain boxes from the Young diagram. Denote therefore by $h_{a,s}^{(\pa)}$ the hook length in the diagram with $\pa$ boxes which appears in the recursive procedure. Also, denote by $\wQ_{a,s}^{(\pa)}$ the Q-functions on this diagram.

Let us remove the box $(\ao,\so)$ in the recursive procedure. This means that there are currently $\pa=\SYT_{\ao,\so}$ boxes in the diagram, and one has $\wQ_{a,s}^{(\pa)}\sim(u-\N{\pa}{a}{s}\,\theta_{\pa})\wQ_{a,s}^{(\pa-1)}$ for all pairs $(a,s)$ with $a< \ao,s< \so$ and $\wQ_{a,s}^{(\pa)}\sim\wQ_{a,s}^{(\pa-1)}$ otherwise. Consider the regime $u\ll\inhom$ for which $(u+\kappa\,\hbar-\N{\pa}{a}{s}\,\theta_{\pa})\simeq -\N{\pa}{a}{s}\,\theta_{\pa}$ for any finite $\kappa$ and so \eqref{eq:QQ_norm} simplifies providing consistency relations between $\N{\pa}{a}{s}$:
\be
\begin{array}{rlcl}
\N{\pa}{a+1}{s}\N{\pa}{a}{s+1} &=\N{\pa}{a}{s}\N{\pa}{a+1}{s+1}\,, & & a<\ao-1\,,\ s<\so-1\,,
\\
\N{\pa}{a}{s+1} &=\N{\pa}{a}{s}\frac{h_{a+1,s+1}^{(\pa)}-1}{h_{a+1,s+1}^{(\pa)}}\,, & & a=\ao-1\,,\ s<\so-1\,,
\\
\N{\pa}{a+1}{s} &=\N{\pa}{a}{s}\frac{h_{a+1,s+1}^{(\pa)}-1}{h_{a+1,s+1}^{(\pa)}}\,, & & a<\ao-1\,,\ s=\so-1\,.
\end{array}
\ee
There is no constraint at the point $a=\ao-1,s=\so-1$ but instead we have to set $\N{\pa}00=1$  since $\wQ_{0,0}^{(L)}=Q_{\theta}$ and so $\wQ_{0,0}^{(\pa)}=(u-\inhom)\wQ_{0,0}^{(\pa-1)}$. This normalisation allows finding all $\N{\pa}{a}{s}$ with no ambiguities:
\be
\N{\pa}{a}{s}=\prod_{a'=1}^{a}\frac{h_{a',\so}^{(\pa)}-1}{h_{a',\so}^{(\pa)}}\prod_{s'=1}^{s}\frac{h_{\ao,s'}^{(\pa)}-1}{h_{\ao,s'}^{(\pa)}}\,,\quad a< \ao, s< \so\,.
\ee
We emphasise that this solution depends on the choice of $\SYT$ through the condition $\SYT_{\ao,\so}=\pa$. Therefore, two distinct $\wQ_{a,s}^{\rm lead}$ differ by the scaling of at least one Bethe root.

We hence confirmed that $\wQ_{a,s}^{\rm lead}$ is explicitly and bijectively fixed by standard Young tableaux. We remind that the number of SYT is the dimension over $\mathbb{C}$ of $\VTl$ on which the Bethe algebra is restricted. 

Since inhomogeneities are not bounded in the regime \eqref{scaling} one still needs to perform work to show that solutions in this regime are bijectively linked to solutions at finite values of inhomogeneities. This is done in Appendix~\ref{sec:enum-solut-with}. In summary, we have the following result
\begin{lemma}
\label{thm:62}
For $\Lambda_\pa:=\frac{\binhom[\pa+1]}{\binhom[\pa]}$, $\pa=1,\ldots,L-1$, being large enough but finite, solutions of the Q-system on a Young diagram $\LD$ and hence of the Wronskian Bethe equations \eqref{mastereq} at point $\binhom[]$ are bijectively labelled with standard Young tableaux, where the solution associated with a tableau $\SYT$ approaches
\be\label{eq:wQlead}
\wQ_{a,s}\sim\prod_{\ao>a,\so> s}\left(u-\binhom[\SYT_{\ao,\so}]\prod_{a'=1}^{a}\frac{h_{a',\so}^{(\SYT_{\ao,\so})}-1}{h_{a',\so}^{(\SYT_{\ao,\so})}}\prod_{s'=1}^{s}\frac{h_{\ao,s'}^{(\SYT_{\ao,\so})}-1}{h_{\ao,s'}^{(\SYT_{\ao,\so})}}\right)\,
\ee
when $\Lambda_\pa$ approaches infinity. \qed
\end{lemma}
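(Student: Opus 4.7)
The plan is to proceed by induction on the chain length $L$, exploiting the recursive decoupling already sketched in Section~\ref{sec:twist-less-case}. The base case $L=1$ is immediate since $\LD$ is a single box with a unique SYT. For the induction step, I would assume the lemma holds for every Young diagram with $L-1$ boxes whenever the first $L-1$ inhomogeneities already satisfy the required sub-hierarchy, and then turn on $\Lambda_{L-1}=\binhom[L]/\binhom[L-1]\to\infty$.

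For each outer corner $(\ao,\so)$ of $\LD$, I would substitute $\wQ_{a,s}=(u-\N{L}{a}{s}\theta_L)\,\tilde\wQ_{a,s}$ for $a<\ao,s<\so$ and $\wQ_{a,s}=\tilde\wQ_{a,s}$ otherwise into the normalised QQ-relation \eqref{eq:QQ_norm} and into the quantisation condition \eqref{mastereq}. Expanding each relation in powers of $\theta_L$ with $u$ kept finite, the leading-in-$\theta_L$ part is an algebraic system that uniquely fixes the coefficients $\N{L}{a}{s}$ to the values computed in the main text, while the subleading part is precisely the Q-system on $\LD\setminus(\ao,\so)$ for the polynomials $\tilde\wQ_{a,s}$ with inhomogeneities $(\binhom[1],\ldots,\binhom[L-1])$. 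By the induction hypothesis the latter admits one solution per SYT of $\LD\setminus(\ao,\so)$; appending the entry $L$ in the removed box $(\ao,\so)$ converts these tableaux into the SYT of $\LD$ whose last entry sits at $(\ao,\so)$, and letting $(\ao,\so)$ range over all outer corners exhausts every SYT of $\LD$ exactly once.

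To promote this leading-order picture to a statement at finite but large $\Lambda_{L-1}$, I would apply the analytic implicit function theorem to the map $\SW$ from \eqref{Wmap} regarded as smoothly depending on $1/\Lambda_{L-1}$, after rescaling each generator $c_\pa$ by the power of $\theta_L$ dictated by the grading of Section~\ref{sec:Hilbert} so that the leading solution is a finite, nonzero point. Non-degeneracy of the Jacobian at that point is inherited from the induction hypothesis (which provides non-degeneracy for the $\tilde\wQ$-subsystem) combined with the explicit uniqueness of the coefficients $\N{L}{a}{s}$, and therefore each leading solution extends uniquely to a genuine solution of WBE at finite large $\Lambda_{L-1}$. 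To exclude extra solutions I would mimic the properness argument of Lemma~\ref{thm:61}: for any sequence of solutions with $1/\Lambda_{L-1}^{(n)}\to 0$, each root of each $\wQ_{a,s}$ must, up to a subsequence, either stay bounded or diverge as a constant multiple of $\theta_L$; comparing leading powers of $\theta_L$ in \eqref{mastereq} forces the divergent roots to arrange themselves into exactly one of the factorisations $(u-\N{L}{a}{s}\theta_L)$ listed above, so no new limit points appear. Together with the count from Section~\ref{sec:Hilbert}, which equates the number of SYT to the total algebraic multiplicity of solutions, this yields the claimed bijection.

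The main obstacle is the multi-scale character of the regime \eqref{scaling}: unlike the single-parameter deformation $\hbar\to 0$ used in Lemma~\ref{thm:61}, one must cascade $L$ distinct scales and track how each generator $c_\pa$ and each Bethe root scales with a specific product of $\theta$'s, verifying at each level of the induction that the block structure of the Jacobian matches the recursive decoupling and that no resonant cancellations occur between neighbouring scales. I expect this bookkeeping of scalings and non-degeneracy — together with a careful compactness argument showing that the ``large but finite'' thresholds on the $\Lambda_\pa$ can be chosen consistently — to be the heart of the argument deferred to Appendix~\ref{sec:enum-solut-with}.
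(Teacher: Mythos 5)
Your leading-order combinatorics (substituting the corner factorisation into \eqref{eq:QQ_norm} and \eqref{mastereq}, fixing the $\N{L}{a}{s}$, and recognising the subleading system as the Q-system on $\LD$ with one box removed) reproduces what the paper does in Section~\ref{sec:twist-less-case}. The genuine gaps are in the two steps you defer. First, the exclusion of extra solutions: your claim that ``comparing leading powers of $\theta_L$ in \eqref{mastereq} forces the divergent roots to arrange themselves into exactly one of the factorisations'' fails if done naively, because \eqref{mastereq} in the twist-less case is a Wronskian determinant and degrees of Wronskians do not simply add — cancellations between columns can eat divergent roots (this is exactly the example given at the start of Appendix~\ref{sec:deta-constr-mapp}, where $W(B_1,B_2,B_3)$ has three divergent roots although the $B_a$ carry four). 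This is why the Lemma~\ref{thm:61}-style argument does not transfer from the twisted case, where the $\hbar\to 0$ limit reduces everything to the product form \eqref{hbarzeroeq}. The paper instead passes to the bosonised parameterisation \eqref{eq:reconsQ}, uses the symmetry \eqref{eq:225} to rotate the $B_a$ into a frame where the ``small'' factors of the parametric factorisation have pairwise distinct degrees (Lemma~\ref{lemma:rotation}), and only then applies the Rouch\'e-type degree count of Lemma~\ref{thm:D1}; a further scan over intermediate scales $\Lambda^{\beta}$, $0<\beta<1$, is needed to show that exactly one root diverges, that it is attached to a corner box, that the affected nodes are exactly those with $a\le a_0$, $s\le s_0$, and that the divergent root scales exactly like $\theta_L$ (you assert ``a constant multiple of $\theta_L$'' without ruling out intermediate powers). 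None of this machinery appears in your sketch, and without it the step is not a proof.

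Second, the continuation to finite $\Lambda_\pa$. Rescaling each $c_\pa$ by a power of $\theta_L$ does not yield a finite non-zero limit point: in the regime \eqref{scaling} a degree-$k$ coefficient behaves like a product of $k$ \emph{distinct} inhomogeneities, so the required normalisation mixes several scales, and after any such rescaling the limiting map is not simply the length-$(L-1)$ Wronski map, so non-degeneracy of its Jacobian is not ``inherited from the induction hypothesis'' — that hypothesis gives a bijection of solutions, not invertibility of a differential (you could extract simplicity from a count, but then you import the Hilbert-series result, which both defeats the intended independence of this second completeness proof and still leaves the resonance bookkeeping you admit deferring). The paper's Appendix~\ref{sec:enum-solut-with} avoids the cascade and the induction altogether: it parameterises all inhomogeneities simultaneously as $\inhom=\alpha_\pa\Lambda^{\beta_\pa}$, parameterises the candidate solutions directly by the coefficients $\kappa_\pa$ of the factorised roots of the $B_a$ as in \eqref{eq:QparamL}, and observes that the induced map $\kappa\mapsto\alpha$ is the \emph{identity} at $1/\Lambda=0$; the analytic implicit function theorem then applies with a manifestly invertible Jacobian in a single stroke, with no counting input and no level-by-level threshold matching.
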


\section{Summary and applications}
\label{sec:summary}
\subsection{Completeness, faithfulness, and maximality of the Bethe algebra}
In this paper we proved completeness of the Wronskian Bethe equations (WBE) and faithfulness of the map from the Wronskian to the Bethe algebra, for the case of both twisted and twist-less supersymmetric spin chains.

Completeness on the level of equations is the statement that the algebraic number of solutions of the WBE is the "right one", \ie it is equal to the dimension of the weight space $\VV$ (as a vector space over $\CC$). We proved the statement for arbitrary numerical values of $\sepa$ -- elementary symmetric polynomials in inhomogeneities $\inhom$. The paper actually contains two independent proofs. The first one is based on a character computation presented in Section~\ref{sec:Hilbert} which is valid because the Wronskian algebra $\WAL$ is a free $\CC[\sse]$-module, by Lemma~\ref{freeness}. The second proof is based on the explicit solution counting in the limits $\left|\frac{\inhom[\pa+1]-\inhom[\pa]}{\hbar}\right|\gg 1$ (twisted case) and $\frac{\inhom[\pa+1]}{\inhom[\pa]}\gg 1$ (twist-less case). The fact that this counting is valid for finite (but probably large) values of inhomogeneities is summarised in Lemma~\ref{thm:61} and Lemma~\ref{thm:62}; the fact that the algebraic number of solutions remains the same for any values of inhomogeneities is a consequence of freeness but we also show this using more elementary arguments in Lemma~\ref{thm:counting}.

Faithfulness and hence bijectivity of the map established in Theorem~\ref{thm:isomorphism} allows one to transfer algebraic properties of the Wronskian algebra $\WAL$ to the Bethe algebra $\BAL$. The Bethe algebra over $\CC[\sse]$ and restricted to the weight subspace $\VV$ can then be viewed as a polynomial ring defined by WBE. Furthermore, for the twist-less case, $\BAL$ depends  on the Young diagram alone and does not depend on the rank of $\glmn$. Its description in terms of a Q-system on a Young diagram directly follows from the results of \cite{Marboe:2016yyn,Kazakov:2015efa} although this fact was not explained there and we filled in the gap in Section~\ref{sec:variouspar}. Using the bosonisation trick, we also found a novel very explicit way \eqref{eq:reconsQ} to parameterise functions $\wQ_{a,s}$ using Wronskian determinants which is the main technical tool for analysing the $\frac{\inhom[\pa+1]}{\inhom[\pa]}\gg 1$ regime.

The faithfulness property holds also for specialisation of $\sepa$ to any numerical value $\bsepa=\sepa(\binhom[1],\ldots,\binhom[L])$, whereas inhomogeneities should probably satisfy the constraint $\binhom+\hbar\neq\binhom[\pa']$ for $\pa<\pa'$~\footnote{If solutions of WBE are non-degenerate this constraint is not needed. For $\sse\in\Secrit$, it might be needed but we did not analyse precisely when, so we keep it as a sufficient requirement. Analysing when it is necessary would probably require exploration of Yangian representation theory beyond techniques developed in the paper.}. For almost any values of $\bsepa$, this follows already from Theorem~\ref{eq:alwaysiso} which uses very general properties of the WBE. However, to get really arbitrary values of $\bsepa$, a  more refined analysis is performed in Appendix~\ref{sec:cyclic} which relies  on properties of the Yangian and its representations. This analysis builds on ideas of \cite{MTV} generalising them to the supersymmetric case, with notable exception of Lemma~\ref{thm:nonzero}. %

Completeness and faithfulness combined ensure that $\BAL$ is a maximal commutative subalgebra of $\End(\VV)$ which should be viewed as the completeness property on the Bethe algebra level. Each {\it distinct} solution of WBE bijectively corresponds to a joint eigenstate of commuting charges. The word ``distinct'' means that even in the case when solutions degenerate, the eigenspace corresponding to the coinciding solutions is still one-dimensional. This does not contradict maximality of the Bethe algebra as the latter becomes non-diagonalisable in the degenerate case with the size of the corresponding trigonal block  equal to the degree of degeneration.
\newline
\newline
The results of this paper are likely to be generalisable for spin chains in an arbitrary highest-weight representation of $\glmn$~\footnote{A unified approach for twisted and non-twisted Bethe subalgebras of $\Ygln$ has been put forward in \cite{Ilin_2017}. Using these ideas and representation-theoretic arguments, powerful completeness-type results have been proven for the $\gl_2$ case in \cite{mashanovagolikova2019simplicity}.}. Given our preliminary studies, an analog of the quantisation condition \eqref{mastereq} will not be sufficient but we can build on a generalisation of Lemma \ref{thm:pollemma} for Q-systems on Young diagrams. Q-systems on Young diagrams can be also defined for non-compact spin chains \cite{Marboe:2017dmb} and they suggest an explicit isomorphism map between restricted Bethe algebras $\BAL$ for non-compact spin chains and compact spin chains. The isomorphism class of $\BAL$ should depend only on the extended Young diagram introduced in \cite{Marboe:2017dmb,Gunaydin:2017lhg}. Performing the suggested program should prove completeness of the quantum spectral curve for $\mathcal{N}$=4 SYM \cite{Gromov:2013pga,Gromov:2014caa} which is confirmed so far by an extensive analysis of its solutions in \cite{Marboe:2017dmb,Marboe:2018ugv}.

\subsection{Simplicity of the spectrum and controlled numerical solution}
\label{sec:alternative}
Choose normalisation $\hbar=\ii$ and consider the twist-less case and real values of inhomogeneities $\inhom$. The Bethe algebra is invariant under Hermitian conjugation in this case and hence is diagonalisable. On the other hand, diagonalisation is impossible if there are coinciding solutions which immediately implies that restriction of the Bethe algebra to a weight subspace $\VTl$ has {\it simple spectrum}, \cf Corollary~\ref{thm:49}.

The considered scenario contains both the homogeneous spin chain with $\inhom=0$ and the spin chain decoupling limit $|\inhom[\pa+1]/\inhom[\pa]|\gg 1$ where WBE can be solved explicitly and labelled with standard Young tableaux. We can continuously connect the homogeneous spin chain and the spin chain in the decoupling limit while keeping $\inhom$ real and in this way {\it unambiguously} label solutions of homogeneous Bethe equations by SYT \footnote{If we allow $\inhom$ to be complex then we wan connect any two solutions of WBE by varying $\inhom[]$. This follows from the path connectivity argument of Section \ref{sec:basicproperties} and allows in turn to connect any two SYT proving a conjecture made in \cite{Kirillov1986}.}.

We have realised this idea numerically. Parameterise inhomogeneities as $\binhom=\Lambda^{\pa-1}-1$, $\pa=1,\ldots,L$. For a chosen tableau $\SYT$, start with the solution \eqref{eq:wQlead} in the decoupling regime $\Lambda\gg 1$ and then incrementally decrease $\Lambda$ until it reaches the point $\Lambda=1$. While changing $\Lambda$ we require that \eqref{eq:QQ_norm} are always satisfied. The numerical realisation turned out to be very stable for any choice of $\SYT$ that we tried. For $L\lesssim 20$, we are able to produce, for a given $\SYT$, a numerical solution with an 80-digit precision in less than three minutes, the speed is obviously much faster for shorter chains. Further substantial optimisation of the code should be possible.

The details of implementation and the code will be published in a separate work. Here we give one illustration. For the example of Q-system on Young diagram on page~\pageref{exmp:p44}, choose a standard Young tableau
$$
\includegraphics[width=0.4\textwidth]{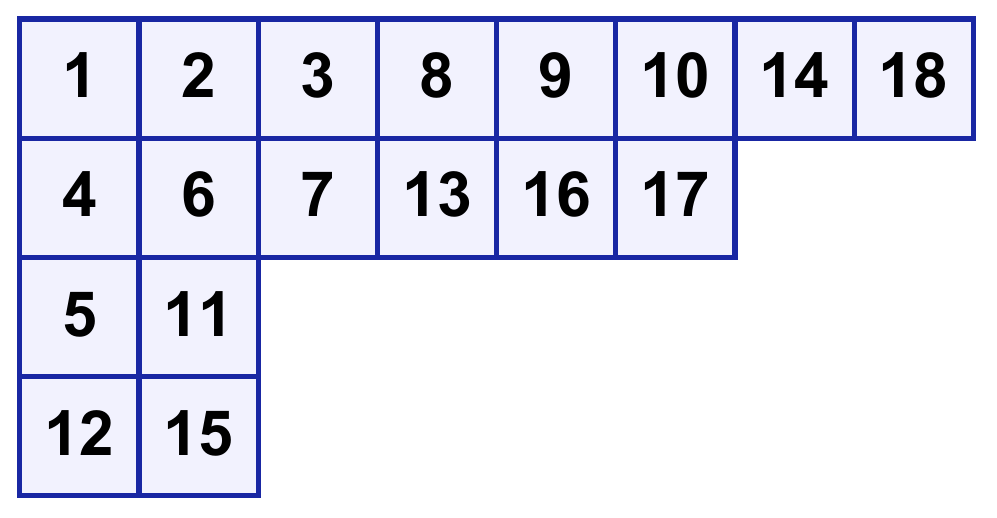}
$$
Then we get the following numerical solution

\definecolor{mblue}{rgb}{0, 0.1, 0.96}
\definecolor{mgreen}{rgb}{0.30, 0.55, 0.19}
\definecolor{mred}{rgb}{0.61, 0.12, 0.08}
\noindent\begin{minipage}{0.45\textwidth}
\begin{center}
{
\includegraphics[width=\textwidth]{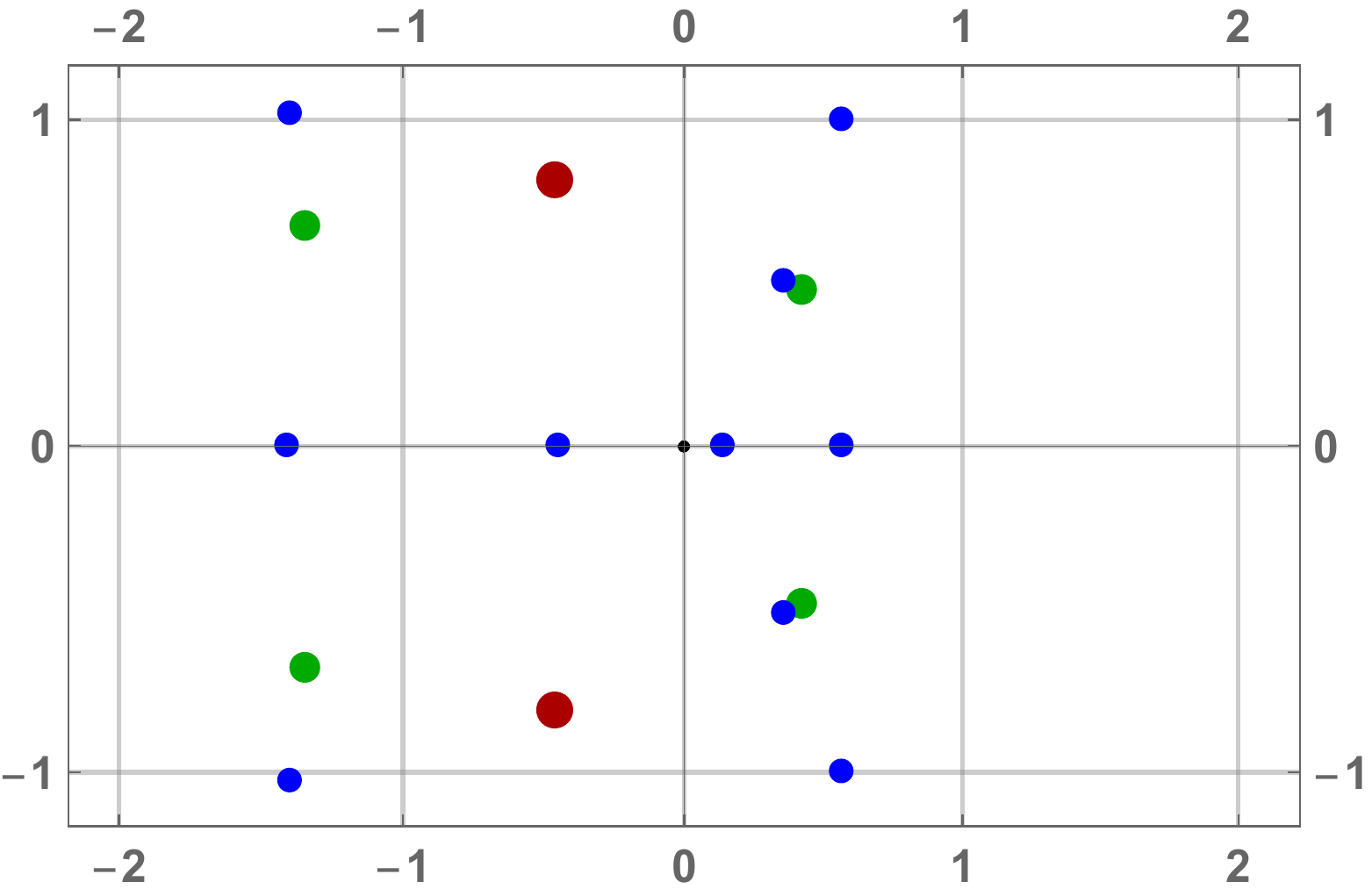}
}
\end{center}
\scriptsize
Roots of $\wQ_{1,0}$  ({\color{mblue} blue}), $\wQ_{2,0}$  ({\color{mgreen} green}), $\wQ_{3,0}$ ({\color{mred} red}) are shown
\end{minipage}
\hspace{0.1\textwidth}
\begin{minipage}{0.45\textwidth}
\begin{center}
{
\includegraphics[width=\textwidth]{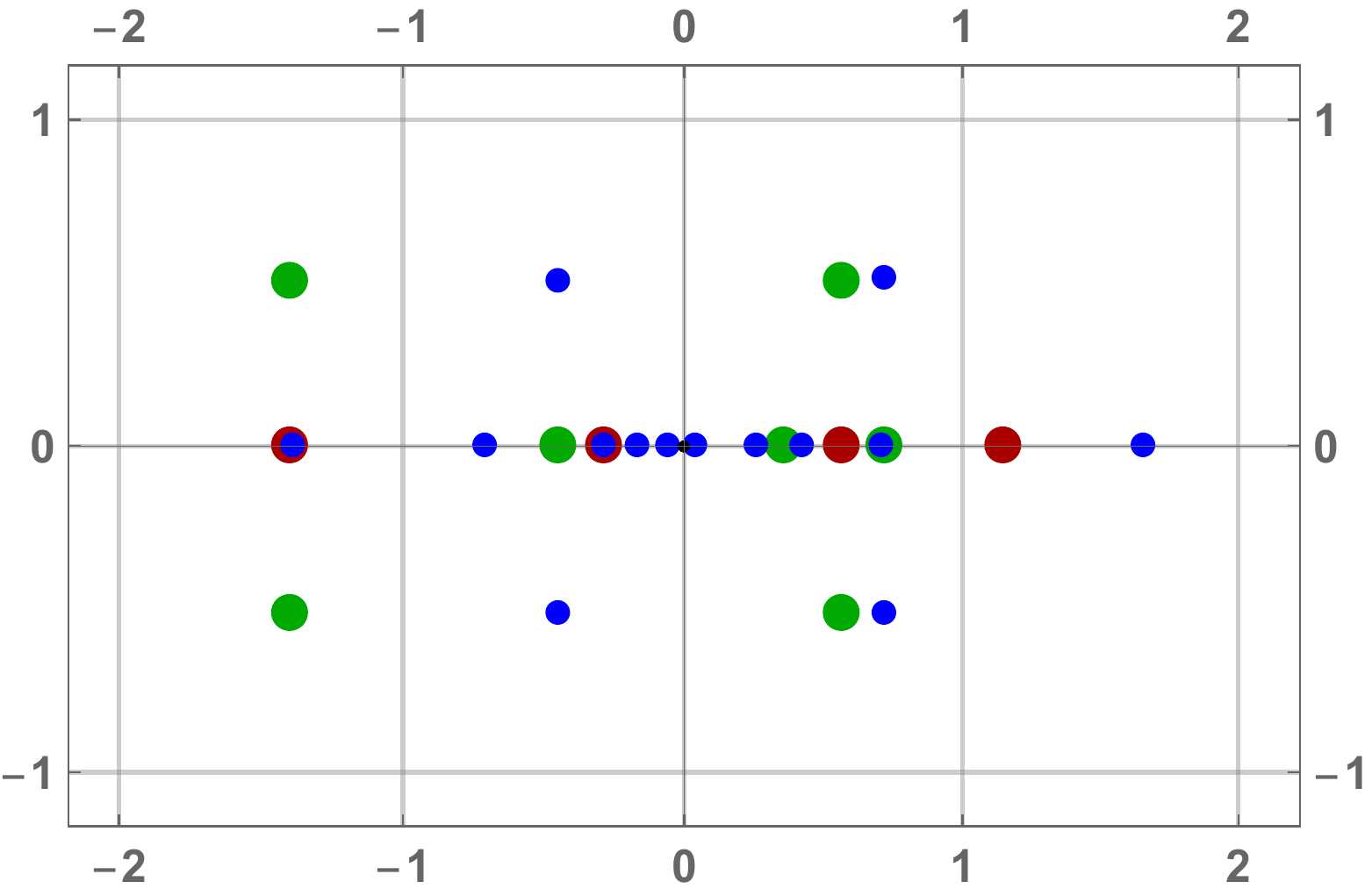}}
\end{center}
\scriptsize
Roots of $\wQ_{0,1}$ ({\color{mblue} blue}), $\wQ_{1,1}$ ({\color{mgreen} green}), $\wQ_{1,2}$ ({\color{mred} red}) are shown
\end{minipage}
\newline
\newline
To our knowledge, the proposed approach is the first example when we have a systematic and proven to be unambiguous way to control all solutions of the Bethe equations for systems of size where direct brute-forcing (\eg by numerical diagonalisation of Hamiltonian matrix) is unlikely to be practical. For instance, there are $2\,148\,120$ different standard Young tableaux of the same shape (and hence distinct solutions of the Bethe equations with the same magnon numbers) as in the above example. Overall, the length $L=18$ $\gl_4$ chain has a Hilbert space of dimension $\sim 6.8\times 10^{10}$ comprising $81\,662\,152$ irreps of $\gl_4$ with one solution of WBE per irrep.

To compare with other approaches, all solutions of the $\gl_2$ chain for $L=14$ were reported in \cite{Hao:2013jqa}. Yet, at this length the Hilbert space is of dimension $16384$ and the Hamiltonian being a sparse matrix \cite{Sandvik:2011bd} can be diagonalised numerically. In \cite{Bargheer:2008kj}  solutions with large magnon numbers were studied quite systematically but only particular classes of solutions were controlled. For spin chains of rather large length, the low-energy excitations around the antiferromagnetic vacuum are also of numerical interest, see \eg \cite{Caux:2005xx}. It would be interesting to explore whether we can apply the proposed techniques in this regime and improve the existing tools which rely on the string hypothesis.
 \subsection{Gaudin model}
\label{sec:Gaudin} 

The $\gln$ Gaudin model \cite{gaudin:jpa-00208506, GaudinBook} can be obtained as the $\hbar\to 0$ limit of the non-twisted $\gln$ spin chain. Formally, this just amounts to replacing the discrete Wronskian by an actual Wronskian. On the representation theory side the spin chain will no longer be a representation of the Yangian $\Ygln$ but of the current algebra $\gln[u]$ (\eg in the terminology of \cite{Mukhin_2009.2}). Completeness of the Bethe Ansatz for the $\gln$ Gaudin model has been proven in \cite{Mukhin_2009.2} under the assumption that all inhomogeneities are pairwise distinct. The philosophy of the proof and the end result are similar to \cite{MTV}. As far as we know, for the supersymmetric $\glmn$ Gaudin model, completeness is  proven for generic values of $\inhom$ \cite{Mukhin_2015} in the twist-less case, and for any pair-wise distinct $\inhom$ for the twisted case \cite{HUANG20201}.

In our construction, the Bethe equations and Q-operators admit a well-defined ${\hbar\to 0}$ limit and an analogue of Theorem~\ref{thm:isomorphism} can be proven along the same lines. Similarly, all the results relying solely on the analytic properties of the map $\SW$ will also be true in the Gaudin case. In particular, the algebraic number of solutions will not depend on $\bsse$ and will be equal to $d_\Lambda$. Specialisation of the isomorphism will also hold generically using the same arguments as in Theorem~\ref{eq:alwaysiso}. 

To prove further constraints on specialisation as in Theorem~\ref{isotheorem}, one way would be to investigate the representation theory of $\glmn[u]$. But instead of doing that, we can reformulate statements for the Bethe algebra $\BAL$ of the $\glmn$ system as statements for the non-supersymmetric $\gl_{h_{\Lambda^+}}$ system. This is based on the results of Section~\ref{sec:variouspar} for isomorphisms of Bethe algebras and the discussion on page~\pageref{p65} for the existence of a cyclic vector. They apply in the $\hbar\to 0$ limit as well. Using \cite{Mukhin_2009.2}, one then confirms that the specialisation of the isomorphism also holds in the case of the supersymmetric Gaudin model for pairwise distinct inhomogeneities which is a naive $\hbar\to 0$ limit of Theorem~\ref{isotheorem} and related statements.

\subsection{Separation of variables}
\label{sec:SoV}
To construct a basis that factorises wavefunctions of eigenstates of the Bethe algebra, Maillet and Niccoli proposed \cite{Maillet:2018bim} to repeatedly act with transfer matrices on a reference state. One can reach factorisation also by choosing other Bethe algebra elements that depend on the spectral parameter $u$. This idea was fruitfully used recently alongside with other related tools in application to rational spin chains \cite{Ryan:2018fyo,Maillet:2018czd,Maillet:2019nsy,Maillet:2019ayx,Gromov:2019wmz,Ryan:2020rfk,Maillet:2020ykb}. Currently, an SoV basis was constructed for $\glm$ spin chains in arbitrary finite-dimensional representation as \cite{Ryan:2020rfk} 
\be
\bra{\bf x}=\bra{0}\prod_{\pa=1}^L\prod_{k=1}^{\gm-1}\det\limits_{1\leq i,j\leq k}Q_i(x_{kj}^{\pa})
\ee
and for $\glmn$ spin chains in the defining representation as \cite{Maillet:2019ayx}
\be
\bra{\bf x}=\bra{0}\prod_{\pa=1}^L \wT_{(1)}(\inhom)^{d_{\pa}}\,.
\ee
Here $x_{kj}^{\pa}=\inhom+\hbar\,m_{kj}^{\pa}$, where $m_{kj}^{\pa}$ are integers forming Gelfand-Tsetlin patterns and defining what is $\bra{\bf x}$, and $d_{\pa}$ are integers from $0\leq d_{\pa}\leq \gm+\gn-1$ also defining what is $\bra{\bf x}$. There are exactly as many choices for $m_{kj}^{\pa}$ and $d_{\pa}$ as the dimension of the corresponding Hilbert space.

An important technical challenge of this approach is to prove that a construction as above indeed produces a basis of the Hilbert space. It was resolved in the mentioned works, however only spin chains with generic twist were considered and certain restrictions on admissible values of inhomogeneities were used.

We can now give an alternative insight on resolving this challenge. By Theorem~\ref{repisotheorem}, the representation of the Bethe algebra is isomorphic to the regular representation of the Wronskian algebra. In particular there is always a cyclic vector. One can choose the cyclic vector as a reference state $\bra{0}$. Then $\bra{\bf x}=\bra{0}\varphi(\bas_x)$ form a basis as long as $\bas_x$ form a basis in the Wronskian algebra. The Wronskian algebra is a polynomial algebra with Pl{\"u}cker-type relations. Then the basis question reduces to questions very similar to those of projective geometry. This naturally links to the last topic we would like to review.

\subsection{Geometric representation theory and Bethe/Gauge correspondence}

So far we have concentrated on the isomorphism between the restricted Bethe algebra $\BAL$ and the polynomial ring $\WAL=\CC[\sse][c]/\mathcal{I}_\Lambda$. It turns out that for $\gl _\gm$ spin chains the Bethe algebra can be realised in a third, purely geometric way \cite{maulik2012quantum, Nekrasov_2009, Nekrasov:2009ui, Gorbounov_2013}. Consider the manifold $\mathcal{F}_\Lambda$ of all the partial flags associated to the weight $\Lambda=[\lambda_1,\ldots,\lambda_{\gm}]$, that is chains of vector spaces
\begin{equation}
\{0\}=V_0\varsubsetneq V_1 \varsubsetneq \ldots \varsubsetneq V_{\gm-1} \varsubsetneq V_\gm=\CC^L
\end{equation}
such that $\mathrm{dim}~V_a/V_{a-1}=\lambda_a$ for all $1\leq a\leq \gm$. $\mathcal{F}_\Lambda$ admits a natural action of $\GL (L)$. Now consider its cotangent bundle $T^*\mathcal{F}_\Lambda$ with an action of $\GL (L)\times\CC^*$, where $\CC^*$ acts on the cotangent spaces by multiplication. It is known \cite{Gorbounov_2013} that the equivariant cohomology ring of $T^*\mathcal{F}_\Lambda$ with this action of $\GL (L)\times\CC^*$ is given by
\begin{equation}
H^\bullet_{\GL (L)\times\CC^*}(T^*\mathcal{F}_\Lambda,\CC)=\CC[c, \sse, \hbar]/\langle\prod_{a=1}^\gm q_a(u)-Q_\theta(u)\rangle.
\end{equation}
In this identification the $L+1$ parameters $(\sepa)_{1\leq \pa \leq L}$ and $\hbar$ come from the standard $H^\bullet_{\GL (L)\times\CC^*}(\mathrm{pt},\CC)=\CC[\chi,\hbar]$-module structure of $H^\bullet_{\GL (L)\times\CC^*}(T^*\mathcal{F}_\Lambda,\CC)$. Treating $\hbar$ as an additional parameter now, $H^\bullet_{\GL (L)\times\CC^*}(T^*\mathcal{F}_\Lambda,\CC)$ is thus $\CC[\sse,\hbar]$-isomorphic to $\BAL$ in the singular twist limit limit $x_1\ll x_{2}\ll\ldots\ll x_\gm$. In general, when $x_a$ are arbitrary pairwise distinct complex numbers, the twisted Bethe algebra $\BAL$ can be identified with a quantum deformation of $H^\bullet_{\GL (L)\times\CC^*}(T^*\mathcal{F}_\Lambda,\CC)$, the so-called equivariant quantum cohomolgy ring $QH^\bullet_{\GL (L)\times\CC^*}(T^*\mathcal{F}_\Lambda,\CC)$. 

This connection is actually a particular case of a more general construction \cite{maulik2012quantum} which first appeared in the context of Bethe/Gauge correspondence \cite{Nekrasov_2009, Nekrasov:2009ui}. Starting from any so-called Nakajima quiver variety one can build a Yangian action~\footnote{The Yangian in question is not necessarily $\Y(\glm)$. An identification of these geometrically-realised Yangians with known integrable systems and solutions of the Yang-Baxter equation is an open question.} on its equivariant quantum cohomology ring considered as a Hilbert space. Moreover, the action of the Yangian generators can be expressed as some geometric operations on the classes of the variety. In particular the Baxter Q-operator can be constructed in a purely geometric way \cite{Pushkar:2016qvw}.

It is still unclear how to properly extend this construction to supersymmetric Yangians \cite{Nekrasov:2018gne}. We hope that results of our paper, in particular the isomorphism between the bosonic and supersymmetric case elucidated in Section~\ref{sec:variouspar}, will be useful for the advancement of the subject.

\acknowledgments
We are grateful to Luca Cassia, Vladimir Dotsenko, Antoine Ducros, Arnaud Eteve, David Hernandez, Atsuo Kuniba, Jules Lamers, Fedor Levkovich-Maslyuk, Maksim Maydanskiy, Lucy Moser, Sergey Mozgovoy, Rafael Nepomechie, Antoine Picard, Paul Ryan, Peter Schauenburg, Didina Serban, Samson Shatashvili, Pedro Tamaroff, Ronan Terpereau, and Emmanuel Wagner for useful discussions. 

The work of D.V. and S.L. was partially supported by the Knut and Alice Wallenberg Foundation under grant ``Exact Results in Gauge and String Theories''  Dnr KAW 2015.0083. The work of S.L. was partially supported by the European Union (through the PO FEDER-FSE Bourgogne 2014/2020 program) and the EIPHI Graduate School (contract ANR-17-EURE-0002) as part of the ISA 2019 project, and by the région Bourgogne-Franche-Comté as part of the MolQuan project.

D.V. is very grateful to Institut de Math\'ematiques de Bourgogne and D\'epartement de Physique de l'Ecole Normale sup\'erieure for hospitality where a part of this work was done. S.L. is very grateful to Nordiska institutet f{{\"o}}r teoretisk fysik, where a part of this work was done.

\appendix

\section{Wronskian algebra - a pedestrian approach}
\label{sec:pedestrian}
\subsection{Some facts from commutative algebra and algebraic geometry}
\label{sec:alggeom}
\subsubsection*{Basic definitions}
The study of the polynomial equations can be done in an analytic (geometric) or in an algebraic way. In the analytic approach, to a set of $m$ polynomial equations in $n$ variables $P_{i}(x_1,\ldots,x_n)=0$, $i=1,\ldots, m$ is assigned an algebraic variety $\mathcal{A}$ - a set of points  $x\equiv (x_1,\ldots,x_n)\in \mathbb{C}^n$ where these equations hold. The algebraic approach attaches to the equations an ideal generated by $P_{\pa}$, $\CI=\langle P_{\pa} \rangle$ which is the set of all possible polynomials in $n$ variables $Q\in \CC[x_1,\ldots, x_n]$  that can be written in the form $Q=\sum_{\pa}q_{\pa}P_{\pa}$ for some polynomials $q_{\pa}$.

The relation between the two approaches is established by Hilbert's Nullstellensatz: if $Q$ vanishes on $\mathcal{A}$ then $Q^{r}\in \CI$ for some integer $r$. The ideal constructed by all polynomials that vanish on $\mathcal{A}$ is called the radical of $\CI$ and is denoted by $\sqrt{\CI}$. 

The algebraic description is more abstract and is less used in physics but it allows one to more accurately formulate some of the properties of the Wronskian algebra. In particular, we can work over  fields different to $\CC$, \eg the field of fractions $\CC(\inhom[])$. 

The next concept is to consider functions on $\mathcal{A}$ formalised as the quotient ring
\be
\label{Qring}
\CR=\CC[x_1,\ldots,x_n]/\CI\,.
\ee
If $\CI=\sqrt{\CI}$ then $\CR$ is called the coordinate ring of $\mathcal{A}$. Note that in this paper not all ideals are equal to their radicals.

Finally, recall that a ring is said to be an integral domain if $ab=0$ implies $a=0$ or $b=0$. The corresponding ideal is then called prime ($ab\in \CI$ implies $a\in \CI$ or $b\in \CI$).

\subsubsection*{Polynomial division}
Easiness of the study of polynomials in {\it one variable} exists mainly due to the unambiguous polynomial division procedure. Recall how it works: let $P$ be a polynomial in $x$ of degree $b$ which is one (of those polynomials) that generates the ideal $\CI$ in $\CC[x]$. Let  $Q$ be any polynomial in $x$. If $Q$ contains a monomial $c\, x^a$ with $a\geq b$, we represent $Q$ as a combination $Q=c\, x^{a-b} P+(Q-c\, x^{a-b}P)$ in which the first term is divisible by $P$ and the second term has no monomial of degree $a$. One performs the same procedure with $P'=Q-c\, x^{a-b}P$ and continue it recursively until no monomials divisible by $x^b$ remain. So one obtains a representation $Q=q\,P+r$\,, where the degree of $r$ is strictly smaller than $b$. Both $q$ and $r$ are fixed uniquely. Furthermore, one can guarantee the B{\' e}zout's lemma, that is, one can find such $\alpha,\beta\in\CC[x]$ that $\alpha P_1+\beta P_2={\rm GCD}(P_1,P_2)$ for any polynomials $P_1,P_2$, and hence conclude that any ideal in one variable is principal, \ie   it is  generated by a single polynomial -- the GCD of polynomials $P_1,\ldots,P_m$ that generate the ideal.

A practical application in our case would be: if a Bethe algebra is generated by a single operator $\hat x$ then this algebra is guaranteed to be isomorphic to a quotient $\mathbb{C}[x]/\CI$ where $\CI$ is the ideal generated by the minimal polynomial of $\hat x$. As we have $L$ generating operators $\hat c_{\pa}$, things are not that simple.
\subsubsection*{Gr\"obner bases}
Many problems in systems with multiple variables arise from difficulties with the polynomial division. First, to even define a division algorithm one needs to introduce a total order on the set of monomials $x^d\equiv \prod\limits_{i=1}^n x_i^{d_i}$ that should be an order in which $1$ is the smallest monomial and $a<b$ implies $a\,c<b\,c$ for any $a,b,c$. A diversity of monomial orders is available in contrast to only one option for the single-variable case. We shall use below only lexicographic orders which form a small subset of all possibilities.

After fixing a monomial order and denoting by $P_{i}$ the generators of the ideal $\CI$, one can perform long polynomial division (exclusion of all monomials that are divisible by leading monomials of $P_{i}$) to represent any polynomial $Q$ as
\be
Q=\sum_{i} q_{i}\,P_{i}+r\,.
\ee
Unfortunately, neither the procedure nor its result are unique if $P_{i}$ are arbitrary generators, so the division is essentially meaningless.

However, if $P_{i}$ form a special set called  Gr\"obner basis then $r$ is uniquely defined by the polynomial division~\footnote{Note however that, in contrast to the one-dimensional case, monomials comprising $r$ can be still bigger than the leading monomials of $P_{i}$ and yet not divisible by the latter. Hence we might be unable to perform a chain of divisions that leads to the B{\' e}zout's lemma and it generically does not hold in the multivariable case.}. Hence $Q\in \CI$ iff $r=0$. $q_{i}$ are not unique though, but  uniqueness of $r$  suffices for the study of the quotient ring \eqref{Qring}.

A set of polynomials $P_i$ forms a Gr\"obner basis of an ideal $\CI$ if $i)$ they generate $\CI$, $ii)$ the set is closed under computation of S-polynomials, see \eg \cite{becker1998grobner} for further explanations. If moreover, for all $i\neq i'$,  $P_i$ does not contain monomials divisible by the leading monomial of $P_{i'}$  then such a Gr\"obner basis is called the reduced one and it is unique for the given choice of a monomial order. By a Gr\"obner basis we mean the reduced basis in the following.

\subsubsection*{Monomial basis}
Let us fix a Gr\"obner basis. The set of monomials that can arise in the remainders of polynomial divisions forms a basis in the quotient ring $\CR$ considered as a vector space. This basis shall be called the monomial basis.

We can use the monomial basis to realise the regular representation of an algebra in terms of explicit matrices, see the example on page \pageref{ex:36}. Such a basis has an important advantage -- all computations in it are performed in the original field, and so the coefficients of $\check x$  will belong to the same field, where $\check x$ is a matrix in the regular representation, see Section \ref{sec:algedes}.

\subsection{\texorpdfstring{$\mathbb{C}(\sse)$}{C(chi)}-module and invariance of solutions multiplicity}
\label{sec:multiplicity}
Consider the ring of polynomials in $2L$ variables $\CC[\sse][c]\equiv \CC[\se{1},\ldots\se{L}][c_1,\ldots,c_L]$. We define the Wronskian algebra as $\WAL:=\CC[\sse][c]/\CI_\Lambda$, where $\CI_{\Lambda}=\langle \SW_{\pa}(c)-\se{\pa}\rangle$ is the ideal generated by Wronskian relations. As we can simply exclude $\se{\pa}$ using equations $\se{\pa}=\SW_{\pa}(c)$, $\WAL$ is isomorphic over $\CC$ to $\CC[c]$ -- the ring of polynomials in $L$ variables. Hence, in particular, $\WAL$ is an integral domain and $\CI_{\Lambda}$ is a prime ideal.

In the case of prime ideals, it is quite easy to promote rings to fields. In this subsection, we shall consider $\se{\pa}=\SW_{\pa}(c)$ as an equation on $c_{\pa}$ in the field of fractions $\CC(\sse)$ and $\WAL$ as a ring over $\CC(\sse)$. ``Easiness'' of promotion lies in the following statement: any polynomial in variables $c_{\pa}$ and $\se{\pa}$ that belongs to $\WAL$ considered as an object in a ring over $\CC(\sse)$ would also belong to $\WAL$ considered as an object in a ring over $\CC[\sse]$.

When we work over a field of fractions, we can compute a  Gr\"obner basis. Simply, instead of conventional computation in $\CC[c_1,\ldots,c_L]/\langle \SWe_\pa(c)-\bsepa\rangle$ with numerical $\bsepa\in\CC$, we do a computation in $\CC(\sse)[c_1,\ldots,c_L]/\langle \SWe_\pa(c)-\sepa\rangle$ with symbolic $\sepa\in\CC(\sse)$. When the Gr\"obner basis is computed, we can construct the corresponding monomial basis and conclude what is the dimension of $\WAL$ (as a vector space over $\CC(\sse)$) and hence what is the number of solutions of the Wronskian equations. Note that the solutions themselves would typically only exist in an algebraic closure of $\CC(\sse)$. However, computation of the monomial basis can be performed directly in $\CC(\sse)$ and this is the only thing needed.

Working over $\CC(\sse)$ is equivalent to considering $\sepa$ in generic position, when no accidental relations happen. When we specialise to a concrete numerical value $\bsepa$ of $\sepa$, we are interested whether the number of solutions changes. We can formulate (a bit stronger) question from the point of view of the Gr\"obner basis: does it remain a Gr\"obner basis upon specialisation?
\begin{lemma}
\label{lemma:specialisation}
Let the Gr\"obner basis of the ideal $\CI_\Lambda=\langle \SW_{\pa}-\se{\pa}\rangle$ in $\CC(\sse)[c]$ \wrt some monomial order $<$ be given by polynomials 
\be
s_m=c^{m}+\sum_{m'<m}p_{mm'}(\sse)\,c^{m'}\,,\quad m\in M\,,
\ee
where $m:=(m_1,\ldots,m_L)$, $c^{m}:=\prod\limits_{\pa=1}^L c_{{\pa}}^{m_{\pa}}$,  $M$ is a set of tuples $m$, and  $p_{mm'}\in\CC(\sse)$.

Let $p_{mm'}$ be finite numbers when evaluated at $\sepa=\bsepa\in \mathbb{C}$. Then ${\bar s}_m=c^{m}+\sum\limits_{m'<m}p_{mm'}(\bsse)c^{m'}, m\in M$,  form the Gr\"obner basis  of the ideal $\CI_\Lambda(\bsse)=\langle \SW_{\pa}-\bsepa\rangle$  in $\CC[c]$ for the same monomial order.   
\end{lemma}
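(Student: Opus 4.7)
My plan is to verify the claim in three steps, exploiting throughout that each $s_m$ is monic with leading monomial $c^m$. This monic property is crucial because it will ensure that every reduction performed in $\CC(\sse)[c]$ using the Gr\"obner basis $\{s_m\}$ proceeds by polynomial operations on the coefficients $p_{mm'}$, without ever introducing denominators beyond those already present in the $p_{mm'}$. Since by hypothesis each $p_{mm'}$ lies in the local ring $\mathcal{O}:=\CC[\sse]_{\bsse}$ of rational functions having no pole at $\bsse$, every coefficient produced by such reductions will remain in $\mathcal{O}$ and admit an unambiguous specialisation at $\bsse$. The first step is to verify that $\bar s_m\in\CI_\Lambda(\bsse)$. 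For this I would use the fact that $\WAL\cong\CC[c]$ is torsion-free as a $\CC[\sse]$-module (a consequence of the algebraic independence of $\SW_1,\ldots,\SW_L$ in $\CC[c]$, itself implied by surjectivity of $\SW$), which produces an injection of localisations $\WAL\otimes_{\CC[\sse]}\mathcal{O}\hookrightarrow\CC(\sse)[c]/\CI_\Lambda$. This in turn implies $\CI_\Lambda\cap\mathcal{O}[c]=\langle\SW_\pa-\sepa\rangle_{\mathcal{O}[c]}$, so $s_m\in\mathcal{O}[c]$ can be expressed as $s_m=\sum_\pa h_{m\pa}(c,\sse)(\SW_\pa-\sepa)$ with $h_{m\pa}\in\mathcal{O}[c]$; specialisation at $\bsse$ then yields $\bar s_m\in\CI_\Lambda(\bsse)$.

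The second step is to establish the reverse inclusion $\CI_\Lambda(\bsse)\subseteq\langle\bar s_m\rangle$. For this I would apply the division algorithm in $\CC(\sse)[c]$ to each generator $\SW_\pa-\sepa$ with divisors $\{s_m\}$. Since the $s_m$ are monic, no division by leading coefficients is ever required, so the output $\SW_\pa-\sepa=\sum_m B_{\pa m}(c,\sse)\,s_m$ will have coefficients of $B_{\pa m}$ living in $\CC[\sse][p_{mm'}]\subseteq\mathcal{O}$. Specialising the identity then gives $\SW_\pa-\bsepa=\sum_m\bar B_{\pa m}\,\bar s_m$, and combined with the first step one obtains $\langle\bar s_m\rangle=\CI_\Lambda(\bsse)$.

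The third and hardest step is to check that $\{\bar s_m\}$ is actually a Gr\"obner basis of this ideal. I plan to use the $t$-representation form of Buchberger's $S$-pair criterion: $\{\bar s_m\}$ is a Gr\"obner basis iff for every pair $(m,m')$ the $S$-polynomial admits a representation $S(\bar s_m,\bar s_{m'})=\sum_k\bar A_{mm',k}\,\bar s_k$ with $\mathrm{lm}(\bar A_{mm',k}\,\bar s_k)\prec\mathrm{lcm}(c^m,c^{m'})$ for each nonzero summand. The division algorithm applied in $\CC(\sse)[c]$ will yield exactly such a $t$-representation $S(s_m,s_{m'})=\sum_k A_{mm',k}\,s_k$, and monic-ness once more guarantees that the $A_{mm',k}$ have coefficients polynomial in the $p_{mm'}$, hence in $\mathcal{O}$. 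After specialisation the bound $\mathrm{lm}(\bar A_{mm',k}\,\bar s_k)\preceq\mathrm{lm}(A_{mm',k}\,s_k)\prec\mathrm{lcm}(c^m,c^{m'})$ is preserved, so Buchberger's criterion applies and the lemma follows. The main obstacle is that the naive approach, specialising a \emph{standard} representation bounded by $\mathrm{lm}(S(s_m,s_{m'}))$, can fail: $\mathrm{lm}(S(\bar s_m,\bar s_{m'}))$ may drop strictly below $\mathrm{lm}(S(s_m,s_{m'}))$ when certain coefficients vanish under specialisation, so a standard representation need not remain standard. Using the coarser but purely combinatorial bound $\mathrm{lcm}(c^m,c^{m'})$, which is manifestly insensitive to specialisation, is precisely what sidesteps this obstacle.
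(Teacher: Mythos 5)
Your proof is correct and rests on the same mechanism as the paper's: because the basis elements are monic with $\sse$-independent leading monomials, every division carried out over $\CC(\sse)[c]$ produces coefficients that are polynomial in the $p_{mm'}$, hence regular at $\bsse$ and specialisable, and the leading-monomial combinatorics is untouched by specialisation. Beyond that, you tighten two points that the paper's proof leaves implicit. First, the paper only exhibits the inclusion $\CI_\Lambda(\bsse)\subseteq\langle \bar s_m\rangle$ (by dividing $\SW_\pa-\sepa$ by the $s_m$ and specialising, your step two); your localisation argument supplies the converse $\bar s_m\in\CI_\Lambda(\bsse)$, which is genuinely needed for the $\bar s_m$ to be a Gr\"obner basis of $\CI_\Lambda(\bsse)$ rather than of a possibly larger ideal. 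The torsion-freeness of $\WAL$ over $\CC[\sse]$ that you invoke follows from the algebraic independence of the $\SW_\pa$, itself a consequence of the surjectivity of the Wronski map established earlier in the paper, so this ingredient is legitimately available. Second, where the paper asserts that checking the S-polynomials is ``combinatorially the same exercise'', you correctly note the pitfall that a standard representation need not specialise to a standard representation, since the leading monomial of the specialised S-polynomial can drop, and you sidestep it by using the $\mathrm{lcm}$-bounded $t$-representation form of Buchberger's criterion, whose bound is purely combinatorial and survives specialisation. Both refinements are sound; the overall route is the paper's, made rigorous.
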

In other words, it is safe to specialise a Gr\"obner basis at those values of $\se {\pa}$ where denominators of $p_{mm'}$ do not vanish.
\begin{proof}
To verify the statement first we check that the declared set of ${\bar s}_{m}$ generates $\CI_\Lambda(\bsse)$. To this end, use long division in $\CC(\sse)[c_1,\ldots,c_L]$ to write $\SW_{\pa}-\se {\pa}=\sum_m q_m(\sse) s_m$. From the algorithm of long division it is clear that $q_m(\sse)$ are not singular at $\sse=\bsse$ if $p_{mm'}(\sse)$ are not singular which is the case by the condition of the theorem. Hence $\SW_{\pa}-\se {\pa}=\sum_m q_m(\sse) s_m$ can be evaluated and still holds at $\sse=\bsse$. To check that ${\bar s}_{m}$ form a Gr\"obner basis we need to \eg compute S-polynomials but this is combinatorially the same exercise as for $s_m$ since the leading monomials are not affected by specialisation.
\end{proof}

Wronskian equations can be obviously specialised at arbitrary point $\bsse$ and so the ring  $\WAL(\bsse)$ is always a well-defined object. Now we would like to show that, for a given $\bsse$, one can find a Gröbner basis that can be specialised at this point and its vicinity. This is the key point to prove the following theorem:
\begin{theorem}
\label{thm:counting}
$d_{\Lambda}:= \dim_{\CC}\WAL(\bsse)$ does not depend on $\bsse$.
\end{theorem}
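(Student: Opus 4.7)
The plan is to combine Lemma~\ref{lemma:specialisation} with an upper-semi-continuity argument so as to sandwich the function $\bar\sse\mapsto \dim_\CC \WAL(\bar\sse)$ between a common constant from both sides. First I would compute a Gröbner basis $\{s_m\}_{m\in M}$ of $\CI_\Lambda$ in $\CC(\sse)[c]$ with respect to some lex order on the $c$-variables, of the shape $s_m=c^m+\sum_{m'<m}p_{mm'}(\sse)\,c^{m'}$. The complement $\mathcal{B}$ of the leading monomials $\{c^m : m\in M\}$ among all $c$-monomials is a $\CC(\sse)$-basis of $\WAL\otimes_{\CC[\sse]}\CC(\sse)$; let $d$ denote its cardinality. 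Lemma~\ref{lemma:specialisation} then directly yields $\dim_\CC \WAL(\bar\sse)=d$ on the Zariski open set $\CC^L\setminus V(D)$, where $D(\sse)$ is the common denominator of the $p_{mm'}$. Only the extension of this equality to the proper subvariety $V(D)$ remains.

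For the lower bound $\dim_\CC\WAL(\bar\sse_0)\ge d$ at an arbitrary $\bar\sse_0\in\CC^L$, I would invoke upper semi-continuity of fiber dimensions for finitely generated modules: because $\SW$ is finite (page~\pageref{par:prop}), $\WAL$ is a finitely generated $\CC[\sse]$-module, so the function $\bar\sse\mapsto\dim_\CC\WAL(\bar\sse)$ is upper semi-continuous. The set $\{\bar\sse:\dim_\CC\WAL(\bar\sse)\ge d\}$ is therefore closed; containing the dense open $\CC^L\setminus V(D)$, it equals all of $\CC^L$. For the matching upper bound at $\bar\sse_0\in V(D)$ I would rerun Buchberger's algorithm over the local ring $\CC[\sse]_{\mathfrak{m}_{\bar\sse_0}}[c]$, whose non-units are precisely the polynomials vanishing at $\bar\sse_0$. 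Elements $D_m(\sse)\,s_m\in\CI_\Lambda\cap\CC[\sse][c]$ obtained by clearing denominators remain generators, and the local algorithm can always select pivots whose leading coefficients are units in $\CC[\sse]_{\mathfrak{m}_{\bar\sse_0}}$. This produces a Gröbner basis whose leading $c$-monomials form a set $M'$ containing $M$ and whose leading coefficients are all units; applying the obvious local-ring variant of Lemma~\ref{lemma:specialisation} gives $\dim_\CC\WAL(\bar\sse_0)\le |\mathcal{B}'|\le d$, closing the sandwich.

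The main obstacle is verifying the upper bound at $\bar\sse_0\in V(D)$: one has to show that the generic leading monomials $\{c^m : m\in M\}$ survive as leading monomials of a Gröbner basis computed over $\CC[\sse]_{\mathfrak{m}_{\bar\sse_0}}[c]$, rather than being refined to strictly smaller ones (which would enlarge the local initial ideal but not necessarily enough to match $d$). The cleanest way I see is to exploit that $\CC[\sse]_{\mathfrak{m}_{\bar\sse_0}}$ is a regular local ring and that the image of $\CI_\Lambda$ there still defines a finite extension; this guarantees that every $S$-polynomial encountered in Buchberger's algorithm can be reduced using unit leading coefficients, so that the local leading monomials are controlled by those in the generic computation. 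Making this algorithmic control precise — rather than appealing to the miracle flatness theorem that was used in the main text — is where the real technical work of the proof would lie.
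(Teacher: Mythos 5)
Your reduction to two inequalities is structurally sound, and the lower bound is correct as stated: properness makes $\WAL$ a finitely generated $\CC[\sse]$-module, fibre dimension of a finitely generated module is upper semi-continuous (Nakayama), and since it equals $d$ on the dense open set where Lemma~\ref{lemma:specialisation} applies, one gets $\dim_\CC\WAL(\bsse_0)\ge d$ everywhere. The genuine gap is the opposite inequality at $\bsse_0\in V(D)$, which is the real content of the theorem, because for finite but non-flat families the fibre dimension can jump \emph{up} at special points. Your proposed mechanism --- rerun Buchberger over $\CC[\sse]_{\mathfrak{m}_{\bsse_0}}[c]$ and claim that pivots with unit leading coefficients can always be chosen, so that the local leading monomials contain the generic ones --- is exactly the statement that needs proof (you say so yourself), and the justification you sketch, regularity of the local base ring plus finiteness of the extension, is not sufficient. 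For instance $A=\CC[\se1,\se2][c]/\langle c^2,\,\se1 c\rangle$ is finite over the regular ring $\CC[\se1,\se2]$, yet its fibre dimension jumps from $1$ to $2$ at the origin; no algorithmic choice of pivots can prevent this, since the jump is a property of the ideal, not of the computation. What rescues the Wronskian algebra is not the regularity of the base but the regularity (Cohen--Macaulayness) of the total space $\WAL\simeq\CC[c]$ --- i.e.\ precisely the miracle-flatness input of Proposition~\ref{freeness} that this appendix-level proof is supposed to avoid; invoking it would make your argument circular with respect to that goal.

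The paper closes this gap by a different, explicit mechanism. For the given point $\bsse_0$ it chooses a point-dependent generic linear form $x_1=\sum_{\pa}w_{1\pa}c_\pa$ so that the Gr\"obner basis takes the shape-lemma form \eqref{GBeq}: a single monic degree-$d_\Lambda$ polynomial in $x_1$ whose coefficients are \emph{polynomial} in $\sse$ (this uses properness), together with relations $x_\pa-\sum_k b_{\pa k}(\sse)\,x_1^k$ with rational $b_{\pa k}$. The $b_{\pa k}$ are then written as ratios of Vandermonde-type determinants in the solutions $x_\pa^{(i)}$, and a Puiseux-expansion analysis along a path approaching $\bsse_0$ shows that, for a suitable choice of the $w$'s, these ratios remain finite at $\bsse_0$; Lemma~\ref{lemma:specialisation} then applies \emph{at} $\bsse_0$ and gives $\dim_\CC\WAL(\bsse_0)=d_\Lambda$ directly, with no sandwiching needed. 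If you wish to keep your two-sided strategy, the missing upper bound requires an input of this kind (control of the specialised initial ideal, or flatness itself); controlling Buchberger's algorithm over the local ring with only ``regular base $+$ finite extension'' will not work.
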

In other words, number of solutions of Wronskian equations counted with multiplicities is always the same, even on the degeneration set $\Secrit$.

\begin{proof}
We know that the theorem holds for all points $\bsse\notin \SW(D)$ since all the solutions of the Wronskian equations are distinct there and so the dimension of the quotient ring coincides with the number of solutions that we denote as $d_{\Lambda}$. We can path-connect any two regular points and the number of solutions cannot change along the path, see Section~\ref{sec:basicproperties}. 

Take $L$ linearly independent constant vectors  $w_{\pa}=(w_{\pa1},\ldots,w_{\pa L})$ and define $x_{\pa}=\sum_{\pa'}w_{\pa\pa'}c_{\pa'}$. For almost any choice of $w_{\pa}$, the Gr\"obner basis of $\CI_{\Lambda}$ in $\CC(\sse)$ \wrt the monomial order $x_1<x_2<\ldots x_L$ should have the form
\begin{subequations}
\label{GBeq}
\be
\label{firstGBeq}
&& x_{1}^{d_{\Lambda}}+a_{1}^{(d_{\Lambda}-1)}(\sse)x_{1}^{d_{\Lambda}-1}+\ldots a_{1}^{(0)}\,,\\
&& x_{2}-\sum_{k=0}^{d_{\Lambda}-1}b_{2k}(\sse) x_1^{k}\,,\nonumber\\
\label{secondGBeq}\ldots \\
&& x_{L}-\sum_{k=0}^{d_{\Lambda}-1}b_{Lk}(\sse) x_1^{k}\,.\nonumber
\ee
\end{subequations}
Indeed, take a point $\bsse\notin \Secrit$  for which the conditions of Lemma~\ref{lemma:specialisation} hold. At such a point, leading monomials of the Gr\"obner basis are the same before and after specialisation, and so we can judge about the Gr\"obner basis from its specialised version. Since $\bsse\notin \Secrit$,  $\check x_{\pa}$ (regular representation of $x_\pa$, written as a matrix in the monomial basis) should have $d_{\Lambda}$ distinct eigenvalues for almost any choice of $\omega_{\pa}$,  and therefore the minimal polynomial equation it  satisfies is of degree $d_{\Lambda}$ which is \eqref{firstGBeq}. In the chosen lexicographic order this equation should belong to the Gr\"obner basis. Other variables $x_2,\ldots,x_L$ should satisfy \eqref{secondGBeq} (\ie  they are uniquely fixed if $x_1$ is fixed) otherwise dimension of $\WAL(\bsse)$ would exceed $d_{\Lambda}$.

By the properness of WBE $a_1^{(a)}(\sse)$ cannot have singularities, hence they are simply polynomials in $\se{\pa}$. Coefficients $b_{\pa k}$ however are rational functions of $\se{\pa}$ that can contain poles. Everywhere outside of these poles, the conditions of Lemma~\ref{lemma:specialisation} hold and we can perform specialisation asserting that the dimension of the specialised polynomial ring is $d_{\Lambda}$.

It remains to show that for any $\bsse\in \CC^L$, one can choose $\omega_{\pa}$ such that $b_{\pa k}$ are not singular at $\bsse$. To this end, we can actually explicitly express $b_{\pa k}$ in terms of solutions of the Wronskian system. Let $x_\pa=x_{\pa}^{(i)}$ be the $i$-th solution. Then polynomials \eqref{secondGBeq} can be rewritten as
\be
\label{mainpoleq}
x_{\pa}-\sum_{k=0}^{d_{\Lambda}-1}b_{\pa k}(\sse) x_1^{k}=
\frac{\det\left|\begin{matrix} 
x_{\pa} & 1 & x_1 & x_1^2 &\ldots
\\
x_\pa^{(1)} & 1& x_1^{(1)} & (x_1^{(1)})^2 & \ldots \\
 x_\pa^{(2)} & 1& x_1^{(2)} & (x_1^{(2)})^2 & \ldots\\
x_\pa^{(3)} & \ldots \\
\ldots
\end{matrix}\right|}{\det\left|\begin{matrix} 1 & x_1^{(1)} & (x_1^{(1)})^2 & \ldots \\
 1 & (x_1^{(2)}) & (x_1^{(2)})^2 & \ldots\\
1 & \ldots
\end{matrix}\right|}\,.
\ee
Indeed, equality of the above polynomials to zero implies $x_\pa=x_{\pa}^{(i)}$ precisely when $x_1=x_1^{(i)}$. 

While $x_{\pa}^{(i)}$ belong to an algebraic closure of $\CC(\sse)$, the above ratio of determinants is symmetric under permutations $x_{\pa}^{i}\to x_{\pa}^{\sigma(i)}$ and hence should be a polynomial in $x_1$ with coefficients in the base field, \ie  $\CC(\sse)$. This follows for instance from $\sum_{i=1}^{d_{\Lambda}} f(x^{(i)})={\rm Tr}\, f(\check x)$ and basic combinatorial arguments. Of course, one can conclude the same from the fact that \eqref{secondGBeq} are obtained in the process of computation of the Gr\"obner basis.

At points $\bsse$ where all $x_{1}^{(i)}$ are distinct, the denominator of \eqref{mainpoleq} is non-zero and hence $b_{\pa k}$ are non-singular. As discussed, for a given regular $\bsse$, we can adjust $\omega_{\pa}$ in a way that $x_1$ has non-degenerate solutions.

When $\bsse\in \SW(D)$ all $x_{\pa}$ degenerate. Then consider a one-parametric smooth path $\sse(t)$ in the space of parameters such that $\sse(t=0)=\bsse$ is the degeneration point of interest and $\bsse(t\neq 0)\notin \Secrit$. Moreover, one chooses such a path that all $x_{\pa}^{(i)}$ are distinct  along the path  for sufficiently small $t$, except for the point $t=0$ itself. 

The value of the ratio of determinants in \eqref{mainpoleq} is not well-defined at $t=0$ but it can be computed as the limit $t\to 0$. Since this ratio is a rational function of $\se{\pa}$, the limit, if finite, should produce polynomials \eqref{secondGBeq} specialised at $t=0$.

To compute the limit, note that all $x_{\pa}$, for generic enough choice of $\omega_{\pa}$, satisfy one-variable equations  $x_{\pa}^{d_{\Lambda}}+a_{\pa}^{(d_{\Lambda})}x_{\pa}^{d_{\Lambda}-1}+\ldots =0$, where $a_{\pa}^{(k)}$ are polynomials in $\sse$ and hence are well-defined even at $t=0$. Define $x_{\pa}^i(t)$ as solutions of these equations that coincide with solutions of Wronskian equations for $t\neq 0$; their $t=0$ value is then defined as the continuation $t\to 0$. If $\mu_{i\pa}$ is the degree of degeneration of solution $x_{\pa}^{(i)}$ at $t=0$ (\ie  $\mu_{i\pa}$ solutions of the one-variable equation on $x_{\pa}$ coincide at this point) then $x_{\pa}^{(i)}(t)$ is expanded in the Puiseux series
\be\label{eq:Puiseux}
x_{\pa}^{(i)}(t)=x_{\pa}^{(i)}(0)+r_{\pa i,1}t^{1/\mu}+r_{\pa i,2}t^{2/\mu}+\ldots\,,
\ee 
where $\mu={\rm LCM}(\mu_{11},\ldots,\mu_{LL})$. 

One should know finitely many terms in the series \eqref{eq:Puiseux} to compute the determinants ratio in \eqref{mainpoleq} in the limit $t\to 0$. We require that for these finitely many terms, for each $k$ and $i$, if at least one $r_{\pa i,k}$ is non-zero then all $r_{\pa i, k}$, $\pa=1,\ldots, L$, are non-zero. It is sufficient to guarantee that the ratio is finite, while imposing such a requirement excludes a measure zero subspace from acceptable values of $\omega_{\pa}$. Recall that we already excluded the space of $\omega_{\pa}$ where the degree of the minimal polynomial of $\check x_{\pa}$ is less than $d_{\Lambda}$ which is of measure zero as well. The majority of $\omega_{\pa}$ are outside of the stated restrictions, and we can choose any valid option to guarantee the regularity of $b_{\pa k}$ at $\sse=\sse^{(0)}$ and hence the possibility to specialise the Gr\"obner basis \eqref{GBeq} at this point thus concluding that $\dim_{{\CC}}\WAL(\bsse)=d_{\Lambda}$.
\end{proof}
The proposed proof shows that there is a close analogy between WBE and a polynomial equation in a single variable. Indeed, for any point $\bsse\in\Se$, regular or not, we can choose a variable $x_1$ that satisfies \eqref{firstGBeq} and such that there is a neighbourhood $\CO_{\bsse}$  where $b_{\pa,k}$ are non-singular which allows one to compute all elements of $\WAL$ using \eqref{secondGBeq}. So a single-variable equation \eqref{firstGBeq} contains all information about $\WAL$ in the selected neighbourhood.

\subsection{Freeness of \texorpdfstring{$\WAL$}{W\_Lambda} and trivialisation of a vector bundle}

For each $\CO_{\bsse}$, we have a basis generated by powers of $x_1$. Two different bases constructed at $\bsse$ and $\bsse'$  are related by a transition matrix which is regular together with its inverse on  the intersection of $\CO_{\bsse}$ and $\CO_{\bsse'}$. Hence we get a structure of a holomorphic bundle with fibers being $d_{\Lambda}$-dimensional vector spaces over the field $\CC$ and with base $\Se$. The existence of this holomorphic bundle is the same as saying that $\WAL$ is a projective $\CC[\sse]$-module. This is the so-called Serre-Swan correspondence \cite{serreswan}~\footnote{We are grateful to L.~Cassia for pointing out this relation to us}.

Because the base $\Se\simeq \CC^L$ is contractible, this bundle must be topologically trivial, that is, we can find $d_{\Lambda}$ global holomorphic sections forming a basis of the fiber at each point. A much more complicated question, already asked by Serre \cite{serreswan}, is whether we can choose these global sections to be polynomials of $\WAL$. A positive answer was given by the Quillen-Suslin theorem \cite{Quillen1976}. This theorem requires that $\WAL$ is a finitely generated $\CC[\sse]$-module, \ie that there exist finitely many elements $\tilde\bas_1,\ldots,\tilde\bas_{\tilde d}$ such that any element of $\WAL$ is their linear combination with coefficients from $\CC[\sse]$. This is easy to see to be the case. Take for instance the finite set of $\tilde d=L\times d_{\Lambda}$ monomials $x^n:=x_1^{n_1},x_2^{n_2},\ldots,x_L^{n_L}$ with $0\leq n_i< d_{\Lambda}$, where $x_i$ are the ones from the proof of Theorem~\ref{thm:counting}. Due to properness, $x_i$ satisfy a degree-$d_{\Lambda}$ equations with polynomial coefficients, \cf \eqref{firstGBeq}, and hence any higher powers of $x_i$ are expressible as linear combinations of the first $d_{\Lambda}$ powers.

The Quillen-Suslin theorem establishes that there are no non-trivial algebraic vector bundles over $\CC^L$ or equivalently by the Serre-Swan correspondence, that any finitely generated projective $\CC[\sse]$-module is free, with a basis given by the aforementioned global sections. Applied to $\WAL$, this is precisely the statement that it is a free module over $\CC[\sse]$, see Section \ref{sec:wronsk-algebra-free}.

\subsection{Non-symmetric functions}
\label{sec:cth}
Most of the time we work with only symmetric combinations $\sepa$ of inhomogeneities. However, the Baxter operators as explicit matrices acting on the spin chain have coefficients from $\CC[\theta]\equiv\CC[\inhom[1],\ldots,\inhom[L]]$. This prompts us to understand some properties of $\CC[\theta]$-modules as compared to $\CC[\sse]$-modules. Also, one can consider equations
\be\label{eq:A7}
\sepa(\inhom[1],\ldots,\inhom[L])=\sepa
\ee
as a toy model for \eqref{mastereq2} with $c_{\pa}=\inhom$ and $\SW_{\pa}(c)=\sepa$.

First, we demonstrate how to use the Gr\"obner basis techniques to conclude that the polynomial ring $\CC[\theta]$ is a free $\CC[\sse]$-module and count the number of solutions to \eqref{eq:A7}. To this end denote the elementary symmetric polynomial of degree $\pa$ in $k$ variables $\theta_{1},\ldots,\theta_{k}$ as $\sepa^{(k)}$. Being roots of $\prod\limits_{\pa=1}^k(u-\theta_{\pa})$, inhomogeneities satisfy the characteristic equations
\be\label{eq:ch}
s_k :=\theta_{k}^{k}+\sum_{\pa=1}^{k}(-1)^{n}\sepa^{(k)}\theta_k^{k-\pa}=0\,,\quad k=1,\ldots,L\,.
\ee
Now note that the polynomials $\sepa^{(k)}$ can be rewritten as polynomials in $\se{\pa'}\equiv \se{\pa'}^{(L)}$, with ${\pa'}\leq \pa$, and  $\theta_{m}$, $m>k$. As a result, $s_k$ become
\begin{subequations}
\label{eq:chall}
\begin{align}
\label{eq:ch2}
s_1&=\inhom[1]+(-\se{1}+\sum_{i=2}^L\theta_i)\,,\\
s_2&=\theta_2^2+\theta_2(-\se{1}+\sum_{i=3}^L\theta_i)+(\se{2}-(\se{1}-\sum_{i=3}^L\theta_i)\sum_{i=3}^L\theta_i-\sum_{3\leq i<j\leq L}\theta_i\theta_j)\,,\\
&\ldots\,,\nonumber\\
\label{eq:chn}
s_L&=\theta_{L}^{L}+\sum_{\pa=1}^{L-1}(-1)^{n}\sepa\,\inhom[L]^{L-\alpha}+(-1)^L \se{L}\,.
\end{align}
\end{subequations}

Now let's make a small formalisation: treat $\inhom$ and $\sepa$ as independent variables and consider an ideal $\mathcal{I}=\langle s_1,\ldots,s_L\rangle$ as an ideal in $\CC[\sse][\inhom[]]$. In the quotient ring $\CC[\sse][\inhom[]]/\CI$, excluding $\sepa$ in favour of $\inhom$ is easy. However, we are interested in the opposite -- to solve for $\inhom$ in terms of $\sepa$. We won't do this explicitly because this requires an algebraic closure but compute a Gr\"obner basis instead.

\begin{lemma} \label{thm:A3} The above-introduced polynomials $s_1,\ldots,s_L$ form the Gr\"obner basis of the ideal $\mathcal{I}=\langle s_1,\ldots,s_L\rangle$ \wrt a lexicographic order for which $\inhom[1]>\theta_2>\ldots>\inhom[L]>\sepa$, for any $\pa$.
\end{lemma}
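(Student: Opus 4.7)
The plan is to verify the conditions of Buchberger's criterion, which will turn out to be essentially immediate thanks to the shape of the polynomials $s_k$ displayed in \eqref{eq:chall}. The starting observation is that in the lexicographic order $\inhom[1] > \inhom[2] > \ldots > \inhom[L] > \sepa$, each $s_k$ has a very clean leading monomial: inspection of \eqref{eq:ch2}--\eqref{eq:chn} shows that every term in $s_k$ other than $\theta_k^k$ is either of strictly lower degree in $\theta_k$ or involves only the smaller variables $\theta_{k+1},\ldots,\theta_L$ together with the $\sepa$'s. Therefore
$$\mathrm{LM}(s_k) \;=\; \theta_k^{\,k}, \qquad k=1,\ldots,L.$$

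The first step is to record this leading-monomial computation carefully by going through \eqref{eq:chall} and noting that the rewriting of $\sepa^{(k)}$ into $\sepa$ and $\theta_{k+1},\ldots,\theta_L$ only adds terms that are lex-smaller than $\theta_k^k$. The second step, which is the heart of the argument, is the remark that the leading monomials $\theta_1^1,\theta_2^2,\ldots,\theta_L^L$ involve pairwise distinct variables, so for any $i\neq j$ one has $\gcd(\mathrm{LM}(s_i),\mathrm{LM}(s_j)) = 1$. By Buchberger's ``coprime leading-term'' criterion (sometimes called Buchberger's first criterion, see e.g.\ \cite{becker1998grobner}), whenever the leading monomials of two polynomials are coprime their S-polynomial reduces to zero modulo the pair itself. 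Applied here, every S-polynomial $S(s_i,s_j)$ reduces to zero modulo $\{s_1,\ldots,s_L\}$, which is exactly Buchberger's criterion for being a Gr\"obner basis.

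Since $\{s_1,\ldots,s_L\}$ also trivially generates $\mathcal{I}$ by definition, the two items together give the Gr\"obner basis claim. There is no real obstacle in this proof; the only point to be slightly careful about is to rewrite each $\sepa^{(k)}$ in terms of $\sepa$ and the remaining $\theta_m$'s ($m>k$) before reading off the leading monomial, so that nothing of the form $\theta_i\,(i<k)$ inadvertently appears in $s_k$. Once that rewriting is done the rest is purely formal.
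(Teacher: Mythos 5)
Your proposal is correct and takes essentially the same route as the paper's own proof: identify $\mathrm{LM}(s_k)=\theta_k^{\,k}$ in the given lexicographic order after rewriting the $\sepa^{(k)}$ in terms of $\sepa$ and $\theta_{m}$ with $m>k$, and then use the pairwise coprimality of these leading monomials (Buchberger's first criterion) to conclude that all S-polynomials reduce to zero, so the generators already form a Gr\"obner basis.
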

\begin{proof}
First, by definition, $s_k$ generate the ideal $\CI$. Then, $s_k$ has $\theta_k^k$ as its leading monomial. Indeed, the other monomials are products of $\theta_{k}^{k'}$ with $k'<k$, powers of $\theta_{k'}$ with $k'>k$, and $\sepa$ which hence are lexicographically smaller than $\theta_k^k$. Finally, as the leading monomials enjoy the property ${\rm GCD}(\theta_k^k,\theta_{k'}^{k'})=1$, the S-polynomials between $s_{k}$ and $s_{k'}$ do not produce new relations and so this set of ideal generators is indeed a Gr\"obner basis.
\end{proof}
Conceptually the Gr\"obner basis tells us how to algorithmically find $\theta_{\alpha}$ from the values of their symmetric combinations $\sepa$. First one needs to solve \eqref{eq:chn} for fix $\theta_{L}$  ($L$ solutions), then one needs to substitute the found value of $\theta_{L}$ to the equation $s_{L-1}=0$ and solve it for $\theta_{L-1}$ ($L-1$ solutions) \etc.

Very similarly to the analysis of the Wronskian algebra, we note that the ring $\CC[\sse][\inhom[]]/\CI$ is isomorphic (over $\CC$) to $\CC[\inhom[]]$, but it is also naturally endowed with the structure of a $\CC[\sse]$-module. The computation of the Gr\"obner basis above immediately implies that this module is free and of rank $L!$. Indeed, the corresponding monomial basis are given by monomials $\theta_2^{n_2}\theta_3^{n_3}\ldots\theta_L^{n_L}$, with $n_\pa<\pa$. Any relations between these monomials is impossible precisely because $s_k$ form a Gr\"obner basis and leading monomials of $s_k$ do not belong to the monomial basis. Of course we know that $L!$ is an expected number, if to count with multiplicities: the equation $\theta^L-\se{1}\theta^{L-1}+\ldots+(-1)^L\se{L}=0$ has $L$ solutions, and any permutation of solutions is allowed as well.
\newline
\newline
Finally, let us extend Theorem~\ref{thm:isomorphism} to the case of non-symmetric polynomials in $\inhom$. Consider first the following example
\begin{example} 
Let the generators of a Wronskian algebra $\WA$ satisfy equations $c_1+c_2=\se{1}$, $c_1c_2=\se{2}$~\footnote{Up to normalisations, it is the $\hbar=0$ version of \eqref{eq:229}}, and (a hypothetical) Bethe algebra $\BA$ is generated by $2\times 2$ diagonal matrices $\hat c_1=\theta_1\times \Id_2$, $\hat c_2=\theta_2\times \Id_2$. These Wronskian and Bethe algebras are isomorphic as $\CC[\sse]$-modules. Let us now consider the extension of the Wronskian algebra $\WA^{\theta}\simeq\WA\otimes_{\CC[\sse]}\CC[\inhom[]]$, \ie consider generators satisfying $c_1+c_2=\theta_1+\theta_2$, $c_1c_2=\theta_1\theta_2$ and treat this algebra as a $\CC[\theta]$-module. This is a rank-two $\CC[\theta]$-module. In contrast, the Bethe algebra considered as a $\CC[\theta]$-module is of rank one.
\end{example}
By Lemma~\ref{thm:1} we actually know that generators of type $\inhom\times\Id$ cannot appear in polynomial combinations of $c_\pa$, and so the hypothetical Bethe algebra in the above example cannot exist. 

More generally, we can show that all polynomial relations satisfied $\hat c_{\pa}$, even with non-symmetric coefficients, should follow from the Wronskian algebra in the following sense. We can add non-symmetric polynomials by hand to the Wronskian algebra by considering $\WA_{\Lambda}^{\theta}\simeq\WAL\otimes_{\CC[\sse]}\CC[\inhom[]]$. Likewise, non-symmetric polynomials (times the identity operator) are not elements of the Bethe algebra by Lemma~\ref{thm:1}, and hence appending them as extra generators is also realised as $\BA_{\Lambda}^{\theta}\simeq\BAL\otimes_{\CC[\sse]}\CC[\inhom[]]$. Isomorphism between $\WA_{\Lambda}^{\theta}$ and $\BA_{\Lambda}^{\theta}$ as $\CC[\inhom[]]$-algebras is then obvious from the isomorphism between $\BAL$ and $\WAL$ as $\CC[\sse]$-algebras. We also note that $\WA_{\Lambda}^{\theta}$ and $\BA_{\Lambda}^{\theta}$ are free as $\CC[\inhom[]]$-modules and $\CC[\sse]$-modules as follows \eg from Lemma~\ref{thm:A3}.

\section{Cyclicity of representations}
\label{sec:cyclic}

The goal of this appendix is to build all the formalism necessary for the proof of Theorems~\ref{repisotheorem}~and~\ref{isotheorem}. There are two reasons why proving isomorphism of the specialised map $\varphi_{\binhom[]}$ \eqref{specmorphism} is problematic. First, setting $\inhom$ to numerical values, which is done for the Bethe algebra, is more restrictive than setting their symmetric combinations $\sepa$ to numerical values, which is done for the Wronskian algebra. Second, the specialisation procedure is actually native to the representation of an algebra, not to the algebra alone. Namely, we set to numerical values coefficients of a matrix which is more restrictive than setting to numerical values only the factors that multiply this matrix as a whole.

To overcome these difficulties, we want to ``rigidify'' the algebra isomorphism \eqref{isomorphism} by also proving isomorphism between certain representations of these algebras. As was already mentioned in Section~\ref{sec:specialisationofiso}, the only natural choice of a representation for the Wronskian algebra $\WAL$ is its regular representation. As for the Bethe algebra, it acts on the {\it a priori} unrelated physical space $\End(U_\Lambda)\otimes\CC[\inhom[]]$. These two representations are not isomorphic. This is why we need to introduce an alternative Yangian representation dubbed symmetrised representation. Using a cyclic vector argument we prove that its weight subspaces $\lVVS$ are indeed isomorphic to $\WAL$ as representations of $\BAL\simeq\WAL$, which resolves the second difficulty. This symmetrised representation has the virtue to manifestly depend only on symmetric combinations of inhomogeneities. We show that, under some explicit restriction on $\binhom[]$, its specialisation at a point $\bar\sse$ is isomorphic to the spin chain representation at a point $\binhom[]$ which resolves the first difficulty.

The discussed approach was developed in \cite{MTV} for $\glm$ spin chains. The below-presented generalisation to the supersymmetric case is conceptually very straightforward. The only difference, apart from the way we present the results, is in the proof of Lemma~\ref{thm:nonzero} which is in line with the ideas of Theorem~\ref{thm:isomorphism}.

\subsection{Symmetrised Yangian representation}

Consider the Yangian spin chain representation at point $\binhom[]=(\binhom[1],\ldots,\binhom[L])$ defined in Section~\ref{sec:Yangian}. We note that the order of inhomogeneities  in $\binhom[]$ is often superfluous. Indeed, the operator $r_{\pa}(\binhom[])=(\binhom-\binhom[\pa+1])\mathcal{P}_{\pa,\pa+1}+\hbar\Id$ satisfies \eqref{eq:braiding} evaluated at $\inhom[]=\binhom[]$.  If $\binhom\neq \binhom[\pa+1]\pm\hbar$, it is invertible and hence an intertwiner between the two representations that differ by the permutation of $\binhom$ and $\binhom[\pa+1]$. From here we conclude that the isomorphism class of the representation at point $\binhom[]$ is decided only by $\bsepa$ if there is no $\pa,\pa'$ such that $\binhom-\binhom[\pa']=\pm\hbar$.

More generally, the following facts hold for supersymmetric representations of  Yangians:

\begin{proposition} 
\label{thm:cyclicity}
If $(\binhom[1],\ldots,\binhom[L])$ satisfy $\binhom+\hbar\neq\binhom[\pa']$ for $\pa<\pa'$ then the spin chain representation of $\Yglmn$ at point $\binhom[]$ is cyclic with cyclic vector $\vve^+=\vve_1\otimes \ldots \otimes \vve_1$, where $\vve_1$ is the highest-weight vector of the defining $\glmn$ representation~\footnote{In the choice of ordering when bosonic indices are considered smaller than fermionic indices.}.\qed
\end{proposition}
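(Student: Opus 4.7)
The plan is to prove the proposition by induction on the chain length $L$, using the Yangian coproduct together with an algebraic-Bethe-ansatz-style extraction of tensor components. The base case $L=1$ is immediate: the evaluation representation $ev_{\binhom[1]}$ on $\CC^{\gm|\gn}$ contains the full $\glmn$-action as the first non-trivial coefficient of the $\hbar/u$-expansion~\eqref{eq:globalaction}, and since $\EE_{\sA 1}\vve_1 = \vve_\sA$ for every $\sA$, the defining representation is already cyclic over $\glmn$, hence over $\Yglmn$. The hypothesis on $\binhom[]$ is vacuous here.

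For the inductive step I would decompose $V \simeq W\otimes\CC^{\gm|\gn}$, with $W=(\CC^{\gm|\gn})^{\otimes(L-1)}$ carrying the spin chain representation at $(\binhom[1],\ldots,\binhom[L-1])$, so that $\vve^+ = \vve^+_{L-1}\otimes\vve_1$. The coproduct~\eqref{eq:ev2} rewrites each $ev_{\binhom[]}(T_{\sA\sB}(u))$ as a graded sum of products $ev_{(\binhom[1],\ldots,\binhom[L-1])}(T_{\sA\sC}(u))$ on $W$ times $ev_{\binhom[L]}(T_{\sC\sB}(u))$ on the last site. The explicit single-site action
\begin{equation*}
ev_{\binhom[L]}(T_{\sC 1}(u))\,\vve_1 \;=\; (u-\binhom[L])\,\delta_{\sC 1}\,\vve_1 \;+\; \hbar(-1)^{\bar\sC}\,\vve_\sC
\end{equation*}
is the key computational input: the diagonal combination gives $(u-\binhom[L]+\hbar)\,\vve_1$, which vanishes precisely at $u=\binhom[L]-\hbar$, letting one suppress the $\vve_1$-at-last-site contribution in favour of the $\vve_\sC$ ($\sC\neq 1$) ones.

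The heart of the argument is then to show that by varying $u$ and taking suitable linear combinations of $T_{\sA\sB}(u)$, one can produce from $\vve^+$ every vector of the form $y\otimes\vve_\sA$, where $y$ ranges over the $(L-1)$-chain Yangian orbit of $\vve^+_{L-1}$. Since the $T_{\sA\sB}(u)$ are polynomials in $u$ of degree at most $L$, this reduces to a Vandermonde-type inversion whose non-degeneracy is governed by a determinant built from values of the form $\binhom[\pa']-\binhom-\hbar$. The hypothesis $\binhom+\hbar\neq\binhom[\pa']$ for $\pa<\pa'$ is precisely what makes this determinant non-zero, so that each target component $y\otimes\vve_\sA$ can indeed be isolated inside the Yangian orbit of $\vve^+$. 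The inductive hypothesis applied to $W$ then shows that the $y$'s exhaust $W$, and since $\sA$ ranges over all basis indices one recovers all of $V$.

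The main obstacle will be the book-keeping in the graded setting: the super-coproduct introduces signs that could in principle produce unexpected cancellations between bosonic and fermionic sites, and one must verify that the Vandermonde-style inversion survives these cancellations. The asymmetric form of the hypothesis (only $\pa<\pa'$, not both directions) is suggestive here: it matches the ``left-to-right'' use of the intertwiners $r_\pa$ appearing in~\eqref{eq:braiding}, which need only be invertible in one direction to justify permuting adjacent inhomogeneities compatibly with the induction order — the same asymmetric quantity already controlled the invertibility of $(\Pi_\pa r_\pa)^2$ in the proof of Lemma~\ref{thm:1}.
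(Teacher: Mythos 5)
The paper does not actually prove this proposition internally: it is imported wholesale from Theorem~5.2 of \cite{Zhang:2014rda}, and the authors only remark (after their irreducibility argument) that a refinement of the lattice construction \eqref{eq:B2} is \emph{expected} to yield cyclicity. So you are attempting an independent proof, which is legitimate; your base case is fine, the single-site computation is correct, and you correctly locate where the hypothesis must enter: after splitting off the last site, the diagonal contribution carries the factor $(u-\binhom[L]+\hbar)$, and what survives at $u=\binhom[L]-\hbar$ involves $T^{W}_{\sA\sC}(\binhom[L]-\hbar)\,\vve^+_{L-1}$, whose diagonal values $\prod_{\pa<L}(\binhom[L]-\hbar-\binhom)$ are nonzero precisely under the stated condition. (Note, though, that the Vandermonde matrix you invoke lives in the evaluation points of $u$ and is invertible regardless of the $\inhom$'s; the hypothesis enters only through these products, not through any determinant of differences $\binhom[\pa']-\binhom-\hbar$.)

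The genuine gap is the ``heart'' of your induction, which is asserted rather than proved and is in fact the whole difficulty. Two distinct problems arise. First, at $u=\binhom[L]-\hbar$ you obtain, for each $\sA$, a single fixed vector $\sum_{\sC\neq 1}\pm\,T^{W}_{\sA\sC}(\binhom[L]-\hbar)\,\vve^+_{L-1}\otimes\vve_\sC$; for $\sA$ not adjacent to $1$ the cross terms with $1<\sC<\sA$ are nonzero lowering-operator images sitting in the same total weight space as the wanted term $\vve^+_{L-1}\otimes\vve_\sA$, and varying $u$ does not separate them — it only reintroduces the $\vve_1$-component — so no inversion in the spectral parameter isolates $\vve^+_{L-1}\otimes\vve_\sA$. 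Second, and more fundamentally, even granting $\vve^+_{L-1}\otimes\vve_\sC$ for all $\sC$, you cannot invoke the induction hypothesis to conclude that all of $W\otimes\CC^{\gm|\gn}$ lies in the submodule: the coproduct never acts on $W$ alone, every further application of $T_{\sA\sB}(u)$ moves the last site simultaneously (producing, e.g., fresh unwanted $\vve_1$-in-the-last-slot terms), so the phrase ``$y$ ranges over the $(L-1)$-chain Yangian orbit'' presupposes a factorised action that the Yangian representation does not have. This step is essentially equivalent to the proposition itself; the proofs that exist in the literature go through additional structure — highest-weight/Drinfeld-polynomial criteria as in Zhang's Theorem~5.2, or an R-matrix/fusion construction in the spirit of the operator $B$ of \eqref{eq:B2} run in the opposite direction, mapping a space containing $\vve^+$ onto $V$ by a composition of Lax operators that is invertible exactly when $\binhom+\hbar\neq\binhom[\pa']$ for $\pa<\pa'$. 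If you want a self-contained argument, refining \eqref{eq:B2} along those lines is a far more promising route than the coproduct induction as sketched; as written, your proposal does not constitute a proof.
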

This statement follows from Theorem~5.2 of \cite{Zhang:2014rda}.

$\vve^+$ is obviously a highest-weight vector of the Yangian representation, \ie  it satisfies the condition $T_{ij}\vve^+=0$ for $i<j$. Its weight is given by
\be
\label{hiwe}
T_{ii}\,\vve^+=Q_{\theta}(u+\hbar\,\delta_{i,1})\,\vve^+\,.
\ee

\begin{proposition} 
The spin chain representation at point $\binhom[]$ is irreducible if there are no $\pa,\pa'$ such that $\binhom-\binhom[\pa']=\hbar$.
\end{proposition}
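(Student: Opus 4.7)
The plan is to upgrade the cyclicity of the previous proposition to full irreducibility. I would start by noting that the hypothesis of this proposition is strictly stronger than, and symmetric compared to, the one of the previous proposition: by applying it with both $\pa<\pa'$ and $\pa>\pa'$, one finds that it amounts to $\binhom-\binhom[\pa']\neq\pm\hbar$ for all $\pa\neq\pa'$. Under this symmetric condition, every braiding intertwiner $r_\pa$ of the RTT type is invertible, since $(\Pi_\pa r_\pa)^2=(\hbar^2-(\binhom-\binhom[\pa+1])^2)\Id$ is then a nonzero scalar; hence the representations of $\Yglmn$ at any permutation of the $\binhom[\pa]$ are all isomorphic to $V$. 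Combining this with the previous proposition applied both to $(\binhom[1],\ldots,\binhom[L])$ and to the reverse-ordered tuple would then yield two cyclic vectors of $V$: the $\glmn$-highest-weight vector $\vve^+=\vve_1\otimes\cdots\otimes\vve_1$ and its image $\vve^-$ under a composition of invertible $r_\pa$'s, which must lie in the one-dimensional $\glmn$-lowest-weight subspace.

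To deduce irreducibility, I would take an arbitrary nonzero Yangian submodule $W\subset V$. Since $V$ decomposes as a finite-dimensional $\glmn$-weight module (the $\glmn$ Cartan sits inside the Yangian as the leading coefficient of $t_{\sA\sA}(u)$, see \eqref{eq:globalaction}), so does $W$. A weight vector $w\in W$ of maximal $\glmn$-weight in the dominance order is automatically Yangian-singular in $V$, because the raising generators $t_{\sA\sB}(u)$ with $\sA<\sB$ strictly increase the $\glmn$-weight and so must annihilate $w$. It would then suffice to show that every Yangian-singular vector in $V$ is proportional to $\vve^+$; combined with cyclicity, this forces $\vve^+\in W$ and hence $W=V$. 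Note that the $\glmn$-weight $[L,0,\ldots,0|0,\ldots,0]$ of $\vve^+$ is realised by a unique vector in $V$, so the non-trivial content is the absence of Yangian-singular vectors of strictly lower $\glmn$-weight.

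The main obstacle is this last uniqueness statement. It is the supersymmetric version of the classical non-resonance criterion for irreducibility of tensor products of simple Yangian evaluation modules. For $Y(\gl_\gm)$ the statement is treated in Molev's monograph \cite{molev2007yangians} (Chapter~6); for $\Yglmn$ with tensor factors being the defining evaluation modules, the relevant result, with hypothesis precisely $\binhom-\binhom[\pa']\neq\hbar$, is established in \cite{Zhang:2014rda} (Theorem~5.4 and the results surrounding it), by arguments closely parallel to those producing cyclicity in the previous proposition. An alternative, more self-contained route that I would try would exploit both cyclic vectors $\vve^+$ and $\vve^-$: should $W\subsetneq V$ contain a Yangian-singular vector not proportional to $\vve^+$, the image of $\vve^-$ in $V/W$ would remain cyclic, allowing one to derive a contradiction from a pairing argument between $\vve^+$ and $\vve^-$ combined with the finite-dimensionality of $V$.
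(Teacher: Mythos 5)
Your reduction is sound as far as it goes: a weight vector of maximal $\glmn$-weight in a nonzero submodule is indeed annihilated by all $t_{\sA\sB}(u)$ with $\sA<\sB$, and ``cyclicity of $\vve^+$ plus uniqueness of the Yangian-singular vector'' does imply irreducibility. The problem is that the uniqueness statement is precisely the hard content of the proposition, and your proposal does not prove it. You defer to \cite{Zhang:2014rda}; but the paper itself points out that the needed irreducibility result is \emph{not} stated explicitly there (it only ``apparently follows'' from Proposition~5.4 of that reference), which is exactly why the paper supplies its own argument — so the citation does not close the gap. Your sketched self-contained alternative also breaks at a concrete point: the intertwiners $r_\pa=(\binhom-\binhom[\pa+1])\mathcal{P}_{\pa,\pa+1}+\hbar\,\Id$ commute with the global $\glmn$ action and fix $\vve^+$ up to the nonzero scalar $\binhom-\binhom[\pa+1]+\hbar$, so the image of $\vve^+$ under any composition of $r_\pa$'s stays on the highest-weight line; it can never land in the lowest-weight subspace. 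Hence the vector $\vve^-$ you invoke does not exist as described, and the ``pairing argument'' that is supposed to produce the contradiction is left unspecified. (Your opening observation that the hypothesis is the symmetric condition $\binhom-\binhom[\pa']\neq\pm\hbar$, implying the cyclicity hypothesis for every ordering, is correct but is not by itself enough.)

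For contrast, the paper's proof is self-contained and takes a different, constructive route: it builds the operator $B=\sum_A \vve_A^{\rm aux}\,T_{1\sA_L}(\binhom[L])\cdots T_{1\sA_1}(\binhom[1])$ from entries of the monodromy evaluated at the inhomogeneities, uses the degeneration of the Lax matrix to the graded permutation at $u=\binhom$ to show that $B$ maps $V$ into $\vve^+\otimes V^{\rm aux}$, and shows $B$ is invertible because, up to a nonzero factor, it reduces to a product of Lax matrices each of which is invertible precisely when $\binhom-\binhom[\pa']\neq\hbar$. Thus every nonzero vector is sent by an element of the represented Yangian to a nonzero multiple of $\vve^+$ (co-cyclicity), and cyclicity of $\vve^+$ finishes the proof. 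If you want to complete your approach without such a lattice construction, you must actually establish uniqueness of the singular vector (equivalently co-cyclicity), e.g.\ via duality between the original and reverse-ordered tensor products — which requires the correct bookkeeping of shifted evaluation parameters in the dual modules — rather than via the $r_\pa$ intertwiners.
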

For the $Y(\glm)$ case, this is a standard result appearing in the study of Kirillov-Reshetikhin modules \cite{Kirillov1990}. For $Y(\gl_{1|1})$ it was proven in \cite{Zhang:1995qt}, Theorem~5. For the $Y(\gl_{\gm|\gn})$ case, it apparently follows from \cite{Zhang:2014rda}, Proposition~5.4. But as this was not stated explicitly we give an alternative argument for irreducibility in the style of statistical lattice models.
\begin{proof}
Let $\CC^{\gm|\gn}$ be the Hilbert space of the $\pa$-th node of the spin chain, consider also the auxiliary space $\CC^{\gm|\gn}$ with basis vectors $\vve_{\sA}^{\rm aux}$, $\sA=1,2,\ldots,\gm+\gn$. The dual basis vectors of $\vve_{\sA}^{\rm aux}$ shall be denoted $\vve^{\sA}_{\rm aux}$. Define the Lax matrix acting on the tensor product of the mentioned spaces as $\mathcal{L}(u-\inhom)=(u-\inhom)\Id+\hbar\, P$, where $P$ is the graded permutation. In the notations of \eqref{eq:ev1}, $\mathcal{L}(u-\theta_\pa):={(u-\theta_{\pa})}\sum\limits_{\alpha,\beta}ev_{\theta_\pa}(t_{\alpha\beta})\otimes(\vve_\beta^{\rm aux}\otimes\vve_{\rm aux}^{\alpha})$. The key property we use is that the Lax matrix becomes, up to normalisation, the graded permutation if $u=\inhom$.

Introduce $V^{\rm aux}$ -- the tensor product of $L$ auxiliary spaces spanned by $\vve_{A}^{\rm aux}=\vve_{\sA_1}^{\rm aux}\otimes\ldots\otimes\vve_{\sA_L}^{\rm aux}$ and define  $B=\sum\limits_{A}\vve_{A}^{\rm aux}T_{1\sA_L}(\binhom[L])\times\ldots T_{1\sA_2}(\binhom[2])\times T_{1\sA_1}(\binhom[1])$. Given the above-mentioned key property, $B$ maps $V$ to $\vve^+\otimes V^{\rm aux}$ which is easiest to see by a graphical representation of how $B$ acts:
\be\label{eq:B2}
\includegraphics[width=0.3\textwidth]{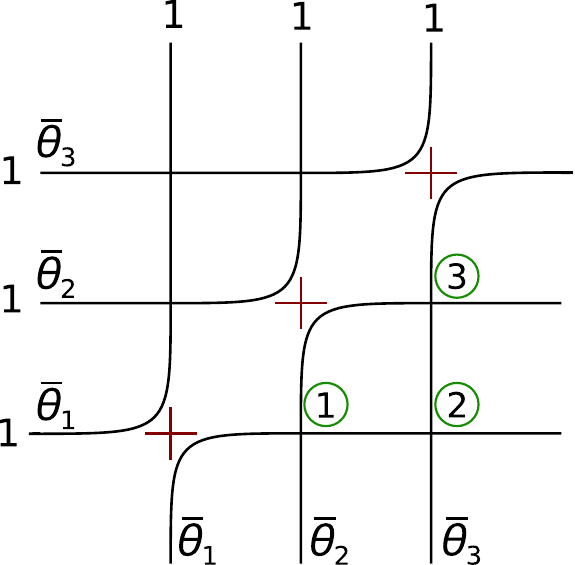}\,.
\ee
Here each vertical direction corresponds to a node $\CC^{\gm|\gn}$ of the spin chain  and each horizontal direction corresponds to a tensor factor $\CC^{\gm|\gn}$ of $V^{\rm aux}$. Intersections are the places where the Lax matrices should be applied (considered as maps from South-West to North-East spaces). Red crosses are the places where the corresponding Lax matrix becomes the permutation.

The map $B$ is also invertible. Indeed, up to a non-zero factor it reduces to an ordered product of $\frac{L(L-1)}{2}$ Lax matrices, \eg these are the three Lax matrices marked by the encircled numbers in the image above. Each of these Lax matrices is invertible since $\binhom-\binhom[\pa']\neq\hbar$.

Because $B$ is invertible, for any $v\in V$ one can find a vector $v^*\in (V^{\rm aux})^*$ such that $(v^*,B)v=\vve^+$. Then irreducibility follows from Proposition~\ref{thm:cyclicity}.
\end{proof}

\paragraph{Remark} We expect that a further refinement of argument \eqref{eq:B2} can be used to prove  Propostion \ref{thm:cyclicity} about cyclicity.

\vspace{.5cm}

For finite-dimensional irreducible $\Yglmn$ representations, the highest-weight vector exists and is unique and the representation is fully determined, up to an isomorphism, by the vector's weight \cite{Zhang:1995uh}. Note that the weight of $\vve^+$  depends only on symmetric combinations of inhomogeneities according to \eqref{hiwe}. Then we can consider the induced representation from $\vve^+$ which is isomorphic to the spin chain one by the above-mentioned uniqueness but, in contrast to the spin chain realisation is manifestly invariant under permutations of inhomogeneities. 
\newline
\newline
We shall now introduce a different permutation-invariant realisation which does not require the irreducibility argument and will be formulated for inhomogeneities being abstract variables.

\paragraph{Yangian centraliser}
Consider the vector space $\mathcal{V}\simeq (\CC^{\gm|\gn})^{\otimes L}\otimes\CC[\inhom[1],\ldots,\inhom[L]]$ on which the Yangian representation $ev_{\inhom[]}$ \eqref{eq:steps} is realised. An interesting question is what is the centraliser of the Yangian action on $\mathcal{V}$.

Define operators $S_{\pa}$ acting on $\mathcal{V}$ by
\be
\label{eq:Spa}
S_{\pa}={\mathcal{P}_{\pa,\pa+1}}\Pi_{\pa,\pa+1}-\frac{\hbar}{\inhom-\inhom[\pa+1]}\left(\Pi_{\pa,\pa+1}-\Id\right)\,,
\ee
where $\mathcal{P}_{\pa,\pa+1}$ is the graded permutation in $(\CC^{\gm|\gn})^{\otimes L}$, and $\Pi_{\pa,\pa+1}$ permutes variables $\inhom$ and $\inhom[\pa+1]$. These permutations were already used in the proof of Lemma~\ref{thm:1} on page \pageref{thm:1}.

Although $S_{\pa}$ contains $\inhom$ in denominator, its action on polynomials in $\inhom$ yields again polynomials and hence its action on $\mathcal{V}$ is well-defined.

\begin{lemma}
\label{thm:2}
For $\pa=1,\ldots,L-1$, $S_{\pa}$ commutes with the $\Yglmn$ action, (i.e. $[S_\pa,T_{\sA\sB}]=0$) and they form a representation of the symmetric group $\SG_L$ on $\mathcal{V}$.
\end{lemma}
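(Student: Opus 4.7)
The plan is to put $S_\pa$ into the factorised form
$$S_\pa = \frac{\hbar\,\Id - \Pi_{\pa,\pa+1}\, r_\pa}{\inhom-\inhom[\pa+1]}, \qquad r_\pa = (\inhom-\inhom[\pa+1])\mathcal{P}_{\pa,\pa+1}+\hbar\,\Id,$$
which follows from $\Pi_{\pa,\pa+1}(\inhom-\inhom[\pa+1])=-(\inhom-\inhom[\pa+1])\Pi_{\pa,\pa+1}$ and $[\Pi_{\pa,\pa+1},\mathcal{P}_{\pa,\pa+1}]=0$. In this form the commutation with every $T_{\sA\sB}(u)$ is essentially the observation already used in the proof of Lemma~\ref{thm:1}: the braiding identity~\eqref{eq:braiding} says that $r_\pa$ conjugates $T_{\sA\sB}(\inhom[])$ by swapping $\inhom[\pa]\leftrightarrow\inhom[\pa+1]$, which is exactly the effect of conjugation by $\Pi_{\pa,\pa+1}$. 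Hence $\Pi_{\pa,\pa+1}\,r_\pa$ commutes with each $T_{\sA\sB}(u)$, while the scalar $(\inhom-\inhom[\pa+1])^{-1}$ acts only on the polynomial factor of $\mathcal{V}$ and commutes with $T_{\sA\sB}(u)$ (whose matrix entries are polynomials in $\inhom[]$); the identity $[S_\pa,T_{\sA\sB}(u)]=0$ follows.

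For the symmetric-group relations, the involution $S_\pa^2=\Id$ is a mechanical expansion: using $\Pi_{\pa,\pa+1}^2=\mathcal{P}_{\pa,\pa+1}^2=\Id$, $[\Pi_{\pa,\pa+1},\mathcal{P}_{\pa,\pa+1}]=0$ and the anti-commutation of $\Pi_{\pa,\pa+1}$ with $(\inhom-\inhom[\pa+1])^{-1}$, all cross terms cancel pairwise and $\Id$ is what remains. The far commutation $S_\pa S_{\pa'}=S_{\pa'}S_\pa$ for $|\pa-\pa'|\geq 2$ is obvious because $\mathcal{P}_{\pa,\pa+1}$ and $\mathcal{P}_{\pa',\pa'+1}$ act on disjoint pairs of tensor slots, the $\Pi$'s permute disjoint pairs of variables, and the scalar coefficients involve disjoint variables too.

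The main obstacle is the braid relation $S_\pa S_{\pa+1}S_\pa = S_{\pa+1}S_\pa S_{\pa+1}$. My plan is to use the factorisation of $S_\pa$ to expand each side, to push all the $\Pi$'s to the right (using $\Pi_{\pa,\pa+1}\,z=-z\,\Pi_{\pa,\pa+1}$ and similar reshuffling rules, which has the effect of shifting the spectral parameters that appear inside the $r$'s), to collect the common factor $\Pi_\pa\Pi_{\pa+1}\Pi_\pa=\Pi_{\pa+1}\Pi_\pa\Pi_{\pa+1}$, and to reduce the remaining identity to the rational Yang--Baxter equation
$$r_{\pa,\pa+1}(z_1)\,r_{\pa+1,\pa+2}(z_1+z_2)\,r_{\pa,\pa+1}(z_2)=r_{\pa+1,\pa+2}(z_2)\,r_{\pa,\pa+1}(z_1+z_2)\,r_{\pa+1,\pa+2}(z_1)$$
(with $z_1=\inhom-\inhom[\pa+1]$, $z_2=\inhom[\pa+1]-\inhom[\pa+2]$), which is classically satisfied by $r_{i,j}(u)=u\,\mathcal{P}_{i,j}+\hbar\,\Id$ and is insensitive to the supersymmetric grading signs inside $\mathcal{P}_{i,j}$. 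Equivalently, $S_\pa$ can be recognised as a Cherednik-type element in a representation of the degenerate affine Hecke algebra, for which such braid relations are standard. The careful bookkeeping of the $\Pi$-$r$ reshuffling and the matching of denominators on the two sides is the chief technical burden; this parallels the argument of \cite{MTV} in the $\Ygln$ case.
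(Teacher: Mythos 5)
Your argument is correct and is essentially the paper's: commutativity follows from the same rewriting $(\inhom-\inhom[\pa+1])S_\pa=\hbar\,\Id-\Pi_{\pa,\pa+1}\,r_\pa$ combined with the braiding relation \eqref{eq:braiding}, and the Coxeter relations $S_\pa^2=\Id$, $(S_\pa S_{\pa+1})^3=\Id$ are, as in the paper, checked by direct computation. One small precision: after pushing the $\Pi$'s to the right, only the coefficient of the longest word $\Pi_{\pa,\pa+1}\Pi_{\pa+1,\pa+2}\Pi_{\pa,\pa+1}$ reduces to the rational Yang--Baxter identity you quote, while the terms with fewer $\Pi$'s must cancel via separate elementary rational identities (using $\mathcal{P}^2=\Id$, unitarity of $r_\pa$, and partial-fraction relations among the $(\inhom-\inhom[\pa'])^{-1}$), so the braid relation does not follow from the Yang--Baxter equation alone --- this is precisely the bookkeeping you defer, and it does go through.
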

\begin{proof}
The commutativity follows from $(\inhom-\inhom[\pa+1])S_{\pa}=-\Pi_{\pa,\pa+1}r_{\pa}+\hbar \Id$ and \eqref{eq:braiding}. Then, by explicit computation one checks $S_{\pa}^2=\Id$ and $(S_{\pa}S_{\pa+1})^3=\Id$ -- the defining relations of $\SG_L$.
\end{proof}

\paragraph{dAHA}\label{pdAHA} In the limit $\hbar\to 0$, $\SG_{L}$ becomes an explicit permutation defined on a graded space that commutes with the action of $\glmn$. Hence $\hbar\neq 0$ should be considered as a generalisation of the Schur-Weyl duality to the case of the Yangian algebra. This statement was made mathematically precise for the bosonic $\glm$ case \cite{Drinfeld_1986, Arakawa_1999, Davydov_2011}: $(S_{\pa})_{1\leq\pa\leq L-1}$ together with $(\theta_{\pa}\times\Id)_{1\leq\pa\leq L}$ form a representation of $\mathcal{H}_L$, the {\it degenerate affine Hecke algebra} (dAHA) on $L$ sites. Moreover, the dAHA and the Yangian  form a dual pair -- they are maximal mutual centralisers of one another when acting on $\mathcal{V}$. More formally, one can view $\mathcal{V}$ as the tensor product of $\SG_{L}$-modules $\mathcal{H}_L\otimes_{\SG_{L}}(\CC^\gm)^{\otimes L}$ where $\SG_{L}$ acts on $\mathcal{H}_L$ as a subalgebra and on $(\CC^\gm)^{\otimes L}$ by permutation of tensor factors. This point of view is conceptually interesting because to generalise Schur-Weyl duality to supersymmetric Yangians, one does not need to change the defining relations of the dAHA $\mathcal{H}_L$ but simply to replace the usual action of $\SG_{L}$ on $(\CC^\gn)^{\otimes L}$ by the graded action on $(\CC^{\gm|\gn})^{\otimes L}$ as in \eqref{eq:Spa}. The full mathematical treatment (for the affine Hecke algebra~\footnote{Recall that the degenerate affine Hecke algebra and the Yangian can be obtained as $q\simeq 1+\hbar$ expansions of respectively the affine Hecke algebra $\mathcal{H}_L(q)$ and $U_q(\hat{\gl}_{\gm|\gn})$.}) can be found in \cite{Flicker_2018}.

We are only going to use the slightly weaker statement of Lemma~\ref{thm:2} that $S_\pa$ commute with the Yangian action.

\paragraph{$\CC[\sse]$-Yangian module} Define $\mathcal{V}^\SG\subset \mathcal{V}$ as the subspace of $S_{\pa}$-invariant vectors. As $[S_{\pa},T_{AB}]=0$, the Yangian action is well-defined on $\mathcal{V}^\SG$. Multiplication by symmetric polynomials is also well-defined on $\mathcal{V}^\SG$ and, moreover, $\CC[\sse]\times\Id$ belongs to $ev_{\inhom[]}(\Yglmn)$ due to \eqref{mastereq2}. Hence we shall call this action of the Yangian on $\mathcal{V}^\SG$ the {\it symmetrised Yangian representation}. Note that $\mathcal{V}^\SG$ is also naturally a $\CC[\sse]$-module.

\subsection{Symmetrised Bethe modules and their characters}
\label{sec:Becha}
Since $\SG_L$ commutes with the global $\glmn$ action it is consistent to define $\mathcal{V}^\SG_{\Lambda}=\mathcal{V}^\SG\cap(\VTw\otimes\CC[\inhom[1],\ldots,\inhom[L]])$ -- the weight $\Lambda$ subspace of $\mathcal{V}^\SG$ -- and $\mathcal{V}^{\SG+}_{\Lambda}=\mathcal{V}^\SG\cap(\VTl\otimes\CC[\inhom[1],\ldots,\inhom[L]])\subset\mathcal{V}^\SG_{\Lambda}$ -- the subspace of $\glmn$ highest-weight vectors corresponding to the Young diagram $\LD$. These spaces are also naturally $\CC[\chi]$-modules.

\paragraph{Characters}
For an element $v$ of $\mathcal{V}^\SG$, define its degree as the maximal degree of the monomials in $\inhom$'s which occur in $v$. Define $\mathcal{F}_k \mathcal{V}^\SG$ as the space of all vectors of degree less or equal to $k$. Finally, define the character
\be
\label{chardnus}
{\rm ch}(\mathcal{V}^\SG)=\sum_{k=0}^\infty \left(\dim \mathcal{F}_k/\mathcal{F}_{k-1}\right)t^k\,.
\ee
Since the $\glmn$ action does not change the degree, we can also define in an analogous way ${\rm ch}(\mathcal{V}^\SG_{\Lambda})$ and ${\rm ch}(\mathcal{V}^{\SG+}_{\Lambda})$.

\begin{proposition}\label{thm:ch1}
$\lVTwS$ is a free $\CC[\chi]$-module of rank $\binom{L}{\lambda_1\ldots\nu_{\gn}}$ and its character is given by
\be
\label{ch1}
{\rm ch}(\mathcal{V}^\SG_{\Lambda})=t^{\Upsilon_\Lambda}\prod_{a=1}^{\gm}\prod_{k=1}^{\lambda_a}\frac 1{1-t^k}\prod_{i=1}^{\gn}\prod_{k=1}^{\nu_i}\frac {1}{1-t^k}\,,
\ee
where $\Upsilon_{\Lambda}:=\sum\limits_{i=1}^{\gn}\frac{\nu_i(\nu_i-1)}{2}$.
\end{proposition}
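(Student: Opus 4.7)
The plan is to compute the character by reducing to an associated graded construction where the operators $S_\pa$ degenerate to a transparent diagonal action of $\SG_L$, and then to evaluate the character of the resulting invariants explicitly. For freeness and rank, I will combine an idempotent/direct-summand argument with Quillen--Suslin applied to the polynomial ring $\CC[\sse]$.

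First, observe that $\CC[\theta]$ is a free $\CC[\sse]$-module of rank $L!$ (Chevalley--Shephard--Todd), hence so is $V_\Lambda\otimes \CC[\theta]$, of rank $L!\dim V_\Lambda$. The symmetrising idempotent $e'=\frac{1}{L!}\sum_{\sigma\in\SG_L}S_\sigma$ is $\CC[\sse]$-linear because permutations of $\inhom[]$ fix $\sse$ and $\mathcal P_{\pa,\pa+1}$ acts only on $V_\Lambda$. Its image equals $\lVTwS$, which is therefore a direct summand of $V_\Lambda\otimes\CC[\theta]$ as a $\CC[\sse]$-module. Being a finitely generated projective module over the polynomial ring $\CC[\sse]$, Quillen--Suslin gives freeness.

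Next, equip $V_\Lambda\otimes \CC[\theta]$ with the filtration $\mathcal F_k$ by total $\inhom[]$-degree. The correction $-\frac{\hbar}{\inhom-\inhom[\pa+1]}(\Pi_{\pa,\pa+1}-\Id)$ maps a polynomial of degree $d$ to one of degree $\le d-1$, since $\Pi_{\pa,\pa+1}P-P$ is divisible by $\inhom-\inhom[\pa+1]$. Hence $S_\pa$ preserves the filtration with $\mathrm{gr}(S_\pa)=\tilde S_\pa:=\mathcal P_{\pa,\pa+1}\Pi_{\pa,\pa+1}$, the naive graded permutation combined with the permutation of polynomial variables. Consequently $\mathrm{gr}(e')=\tilde e:=\frac{1}{L!}\sum_\sigma \tilde S_\sigma$, and since $\lVTwS$ is a direct summand with complement $\Ker e'$, passing to the associated graded yields
$\mathrm{gr}(\lVTwS)=(V_\Lambda\otimes \CC[\theta])^{\tilde\SG_L}$
as graded $\CC[\sse]$-modules. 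In particular the Hilbert series in the sense of~\eqref{chardnus} is preserved. I expect this step---justifying the compatibility between the idempotent and the filtration so that characters of $\mathrm{gr}$ coincide with those of the original module---to be the main technical point, although it follows from a standard splitting argument.

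The invariants for the naive diagonal action $\tilde\SG_L$ are easy to identify. Fixing a canonical basis vector $\vve_{\alpha_0}\in V_\Lambda$ whose stabiliser (without signs) is $H=\prod_a\SG_{\lambda_a}\times \prod_i \SG_{\nu_i}$, the assignment $\sum_\alpha \vve_\alpha\otimes P_\alpha\mapsto P_{\alpha_0}$ identifies $(V_\Lambda\otimes\CC[\theta])^{\tilde \SG_L}$ with the subspace $\mathcal W\subset\CC[\theta]$ of polynomials that are symmetric in each of the $\gm$ bosonic groups of $\lambda_a$ variables and antisymmetric in each of the $\gn$ fermionic groups of $\nu_i$ variables (the antisymmetry arises from graded permutations on fermionic slots). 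Writing $\mathcal W=\prod_{i}\Delta(\tau^{(i)})\cdot S$, where $\Delta(\tau^{(i)})$ is the Vandermonde in the fermionic group $i$ (degree $\binom{\nu_i}{2}$) and $S$ is the tensor product of symmetric polynomial rings in each group, the graded character becomes
\begin{equation*}
\mathrm{ch}(\mathcal W)=t^{\sum_i\binom{\nu_i}{2}}\prod_{a=1}^{\gm}\prod_{k=1}^{\lambda_a}\frac{1}{1-t^k}\prod_{i=1}^{\gn}\prod_{k=1}^{\nu_i}\frac{1}{1-t^k}=t^{\Upsilon_\Lambda}\prod_{a=1}^{\gm}\prod_{k=1}^{\lambda_a}\frac{1}{1-t^k}\prod_{i=1}^{\gn}\prod_{k=1}^{\nu_i}\frac{1}{1-t^k}\,,
\end{equation*}
which is~\eqref{ch1}. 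The rank over $\CC[\sse]$ is obtained from $\mathrm{ch}(\lVTwS)\big/\mathrm{ch}(\CC[\sse])=t^{\Upsilon_\Lambda}\prod_{k=1}^L(1-t^k)\big/\bigl(\prod_a\prod_{k=1}^{\lambda_a}(1-t^k)\prod_i\prod_{k=1}^{\nu_i}(1-t^k)\bigr)$ evaluated at $t=1$, which equals $L!\big/\bigl(\prod_a\lambda_a!\prod_i\nu_i!\bigr)=\binom{L}{\lambda_1\dots\nu_\gn}$ as required.
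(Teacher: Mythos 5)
Your proposal is correct, and its core is the same as the paper's: the paper also computes the character by reducing to the situation where $S_\sigma$ degenerates to the naive product $\lP_\sigma\Pi_\sigma$ (it phrases this as "it suffices to treat $\hbar=0$, since the $\hbar$-term of $S_\pa$ lowers degrees"; your explicit associated-graded/idempotent-splitting argument is exactly the justification of that one-line reduction), then identifies the invariants with polynomials symmetric in each bosonic group and antisymmetric in each fermionic group, producing the Vandermonde factor $t^{\Upsilon_\Lambda}$ times the product of symmetric-ring Hilbert series, and finally extracts the rank from $\mathrm{ch}(\lVTwS)/\mathrm{ch}(\CC[\sse])$ at $t=1$. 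The one genuine divergence is the freeness argument: you observe that the averaging idempotent $\frac{1}{L!}\sum_\sigma S_\sigma$ is $\CC[\sse]$-linear, so $\lVTwS$ is a direct summand of the free module $\VTw\otimes\CC[\theta]$, hence projective and free by Quillen--Suslin (the same device the paper uses for $\WAL$ in Proposition~\ref{freeness}); the paper instead gets freeness from the explicit $\hbar=0$ description, where the image is identified with $Q_\Lambda\times{\rm Sym}_{\lambda_1}\times\cdots\times{\rm Sym}_{\nu_\gn}$ and freeness over $\CC[\sse]$ follows from the elementary analysis of Appendix~\ref{sec:cth}. Your route buys independence from that explicit basis (at the price of invoking Quillen--Suslin), while the paper's route also exhibits a concrete filtered basis, which makes the rank extraction at $t=1$ immediate; if you want to avoid Quillen--Suslin altogether, note that your graded identification already gives a homogeneous basis of $\mathrm{gr}(\lVTwS)$, and lifting it yields a filtered basis of $\lVTwS$, proving freeness directly.
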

\begin{proof}
$\mathcal{V}^\SG_{\Lambda}$ is the image of $\VTw\otimes\CC[\theta]$ by the projector $p:=\frac{1}{L!}\sum\limits_{\sigma\in \SG_L}S_\sigma$, where the symmetry group acts with $S_\sigma$ on $\VTw\otimes\CC[\theta]$ as generated from $S_\pa$ \eqref{eq:Spa}. Since the construction is polynomial in $\hbar$ and the $\hbar$-term of $S_\pa$ lowers degrees of polynomials the proposition is true iff it is true for $\hbar=0$. Hence we will consider only the $\hbar=0$ case. Every element $\sigma\in\SG_L$ is then represented as $S_\sigma=\lP_\sigma\Pi_\sigma$, where $\lP_\sigma$ is the graded permutation acting on $\VTw$ and $\Pi_\sigma$ is the ordinary permutation acting on $\CC[\theta]$.

Denote by $\vve_\sA$ the standard basis vectors of $\CC^{\gm|\gn}$ defined by $\EE_{\sB\sB}\vve_{\sA}=\delta_{\sA \sB}\vve_{\sA}$. The standard spin basis of $\VTw$ is indexed by tuples $I=(i_1,\ldots,i_L)$ such that $|\{k : i_k=a\}|=\lambda_a$ and $|\{k : i_k=i\}|=\nu_i$ corresponding to vectors $\vve_I:=\otimes_{\pa=1}^L\vve_{i_\pa}$.

To get $\mathcal{V}^\SG_{\Lambda}$, it is enough to consider the image of $\VTo\otimes\CC[\theta]$ by $p$ with
\begin{align}
\label{eq:VTo}
\VTo=\underbrace{\vve_1\otimes\ldots\otimes \vve_1}_{\lambda_1}\otimes\ldots\otimes\underbrace{\vve_\gm\otimes\ldots\otimes \vve_\gm}_{\lambda_\gm}\otimes\underbrace{\vve_{\gm+1}\otimes\ldots\otimes \vve_{\gm+1}}_{\nu_1}\otimes\ldots\otimes\underbrace{\vve_{\gm+\gn}\otimes\ldots\otimes \vve_{\gm+\gn}}_{\nu_\gn}\,.
\end{align}
Indeed, for all $I$ we can always find $\sigma\in \SG_L$ such that $\lP_\sigma\cdot \vve_I=\vve_{\sigma(I)}=\pm \VTo$.

Denote by $\SG_\Lambda:=\prod_{a=1}^\gm\SG_{\lambda_a}\prod_{i=1}^\gn\SG_{\nu_i}$ the stabiliser of $\VTo$ and  by $H:=\SG_L/\SG_\Lambda$ the space of orbits with respect to the right group multiplication. Then the projection by $p$ can be represented as
\be
\label{eq:A6}
\lVTwS\simeq p(\VTo\otimes\CC[\inhom[]])=\frac 1{L!}\sum_{[\sigma]\in H} S_{[\sigma]}\cdot\left( \VTo\otimes R_B\cdot C_F\cdot \CC[\theta]\right)\,,
\ee
where $R_B:=\sum\limits_{\sigma\in \prod_{a=1}^\gm\SG_{\lambda_a}}\Pi_\sigma$ and $C_F:=\sum\limits_{\sigma\in \prod_{i=1}^\gn\SG_{\nu_i}}(-1)^{|\sigma|}\Pi_\sigma$.

To decide about linear independence in $\lVTwS$, it is enough to consider one term in the sum $\sum_{[\sigma]\in H} S_{[\sigma]}$, \eg $[\sigma]=[\Id]$ since different terms would be proportional to $\lP_{\sigma}\VTo$ which are linearly independent in $\VTw$.  Also, $R_BC_F$ commutes with symmetric polynomials and hence we conclude that $\lVTwS$ and $R_BC_F\cdot\CC[\inhom[]]$ are isomorphic as $\CC[\sse]$-modules. 

It is easy to describe $R_BC_F\cdot\CC[\inhom[]]$: it is spanned by $Q_{\Lambda}\times{\rm Sym}_{\lambda_1}\times\ldots\times {\rm Sym}_{\nu_{\gm}}$, where $Q_\Lambda:=\prod\limits_{i=1}^\gn\prod\limits_{1\leq k<l\leq\nu_i}(\theta_{j_i+k}-\theta_{j_i+l})$ with $j_i:=\sum_{a=1}^\gm \lambda_a+\sum_{s=1}^{i-1}\nu_s$, and ${\rm Sym}_{k}$ is the space of symmetric polynomials in $k$ variables. $Q_{\Lambda}\times{\rm Sym}_{\lambda_1}\times\ldots\times {\rm Sym}_{\nu_{\gm}}$ is free under the action of $\CC[\sse]$, see Appendix~\ref{sec:cth}, and its character is \eqref{ch1}. Rank is computed from $\frac{\ch(\lVTwS)}{\ch(\CC[\sse])}|_{t=1}$.
\end{proof}

\begin{corollary}
\label{thm:a6}
$\mathcal{V}^\SG=\bigoplus_\Lambda\mathcal{V}^{\SG}_{\Lambda}$ is a free $\CC[\chi]$-module of rank $(\gm+\gn)^L$.
\qed
\end{corollary}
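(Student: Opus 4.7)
My plan is to deduce Corollary~\ref{thm:a6} directly from Proposition~\ref{thm:ch1} by decomposing $\mathcal{V}^\SG$ into weight subspaces and applying the multinomial theorem.

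First I would observe that the symmetric group action generated by $S_\pa$ commutes with the Yangian, and in particular with the global Cartan generators $\CE_{\sA\sA}$ (which appear as the leading coefficients of $T_{\sA\sA}$ in the $\hbar/u$ expansion, \cf \eqref{eq:globalaction}). Hence $S_\pa$ preserves the weight decomposition $V\otimes\CC[\inhom[]]=\bigoplus_\Lambda \VTw\otimes\CC[\inhom[]]$, where $\Lambda$ runs over all fundamental weights $[\lambda_1,\ldots,\lambda_\gm|\nu_1,\ldots,\nu_\gn]$ with $\sum_a\lambda_a+\sum_i\nu_i=L$ and $\lambda_a,\nu_i\in\mathbb{Z}_{\geq 0}$. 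Taking $\SG_L$-invariants commutes with taking direct sums, so
\begin{equation}
\mathcal{V}^\SG=\bigoplus_\Lambda \mathcal{V}^\SG_{\Lambda}\,.
\end{equation}

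Next, by Proposition~\ref{thm:ch1}, each summand $\mathcal{V}^\SG_\Lambda$ is a free $\CC[\sse]$-module of rank $\binom{L}{\lambda_1,\ldots,\lambda_\gm,\nu_1,\ldots,\nu_\gn}$. A (possibly infinite) direct sum of free modules over a commutative ring is free, with a basis obtained by concatenating bases of the summands; in our case the sum is finite so freeness is immediate. Thus $\mathcal{V}^\SG$ is a free $\CC[\sse]$-module of rank
\begin{equation}
\sum_{\substack{\lambda_1+\ldots+\lambda_\gm\\ +\nu_1+\ldots+\nu_\gn=L}}\binom{L}{\lambda_1,\ldots,\lambda_\gm,\nu_1,\ldots,\nu_\gn}=(\gm+\gn)^L\,,
\end{equation}
where the last equality is the standard multinomial expansion of $(\underbrace{1+\ldots+1}_{\gm+\gn})^L$.

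There is no real obstacle here beyond the bookkeeping above; the only point to verify carefully is that the weight decomposition of $V\otimes\CC[\inhom[]]$ is indeed $S_\pa$-stable, which follows from Lemma~\ref{thm:2} and the fact that the weight is read off from elements of the Bethe algebra. Everything else is a mechanical combination of Proposition~\ref{thm:ch1} with the multinomial identity.
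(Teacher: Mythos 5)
Your argument is correct and is essentially the paper's own (the corollary is stated there with no written proof, being regarded as immediate): the $\SG_L$-action of Lemma~\ref{thm:2} commutes with the global $\glmn$ action, so the weight decomposition is preserved, and summing the ranks from Proposition~\ref{thm:ch1} over all weights gives $(\gm+\gn)^L$ by the multinomial theorem. Nothing further is needed.
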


\begin{proposition} \label{thm:ch2}
$\mathcal{V}^{\SG+}_\Lambda$ is a free $\CC[\chi]$-module of rank $\frac{L!}{\prod\limits_{(a,s)\in \LD} h_{a,s}}=\dim \VTl$, where $h_{a,s}$ is the hook length at box $(a,s)$. Its character is given by
\be
\label{ch2}
{\rm ch}(\mathcal{V}^{\SG+}_{\Lambda})=t^{\Upsilon_{\Lambda}^+}\prod_{(a,s)\in\LD}\frac 1{1-t^{h_{a,s}}}\,,
\ee
where $\Upsilon_{\Lambda}^+=\sum_{s=1}^{\lambda_1}\frac{h_s(h_s-1)}{2}$ with $h_s$ being the height of the $s$-th column of $\LD$.
\end{proposition}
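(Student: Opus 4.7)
My plan is to parallel the proof of Proposition~\ref{thm:ch1} but, instead of projecting one specific weight vector $\VTo\in\VTw$, to project the full space $\VTl$ of $\glmn$ highest-weight vectors of type $\LD$, and then to identify the outcome using classical results from the invariant theory of $\SG_L$.

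\textbf{Step 1: reduction to $\hbar=0$.} Exactly as in Proposition~\ref{thm:ch1}, the $\hbar$-terms of $S_\pa$ in \eqref{eq:Spa} strictly lower the polynomial degree in $\inhom$, so the character and the freeness of $\lVTlS$ are unchanged if we set $\hbar=0$. In that case $S_\sigma=\lP_\sigma\Pi_\sigma$ splits as a product of the graded permutation $\lP_\sigma$ on $(\CC^{\gm|\gn})^{\otimes L}$ and the ordinary permutation $\Pi_\sigma$ on $\CC[\inhom[]]$. Since $\VTl\subset\VTw$ is $\SG_L$-invariant (the graded permutations commute with the global $\glmn$ action), the projector $p=\frac{1}{L!}\sum_\sigma S_\sigma$ maps $\VTl\otimes\CC[\inhom[]]$ into itself and
\be
\lVTlS\;\simeq\;(\VTl\otimes\CC[\inhom[]])^{\SG_L}.
\ee

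\textbf{Step 2: identify $\VTl$ as an $\SG_L$-module.} By the supersymmetric Schur–Weyl duality (Sergeev, Berele–Regev, already invoked in Section~\ref{sec:glmn}), the graded permutation action realises the commutant of $\glmn$ on $V$, and
\be
V\;\simeq\;\bigoplus_{\LD}V_\LD^{\glmn}\otimes S_\LD\,,
\ee
the sum being over Young diagrams inside the $(\gm,\gn)$-hook, with $S_\LD$ the Specht module. Consequently $\VTl$ (the multiplicity space of $V_\LD^{\glmn}$) carries the Specht module structure: $\VTl\simeq S_\LD$ as a representation of $\SG_L$ by graded permutations. In particular $\dim\VTl=\dim S_\LD=L!/\prod h_{a,s}$.

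\textbf{Step 3: reduce to a classical invariant-theory computation.} Since Specht modules are self-dual and defined over $\QQ$, the $\SG_L$-invariants can be rewritten as
\be
(\VTl\otimes\CC[\inhom[]])^{\SG_L}\;\simeq\;\Hom_{\SG_L}\!\bigl(S_\LD,\CC[\inhom[]]\bigr).
\ee
Now use Chevalley's theorem: $\CC[\inhom[]]=\CC[\sse]\otimes H$ as graded $\CC[\sse]$-modules, where $H$ is the coinvariant algebra of $\SG_L$, a graded $\SG_L$-module isomorphic to the regular representation. Therefore
\be
\Hom_{\SG_L}\!\bigl(S_\LD,\CC[\inhom[]]\bigr)\;\simeq\;\CC[\sse]\otimes\Hom_{\SG_L}(S_\LD,H),
\ee
as graded $\CC[\sse]$-modules, and this immediately yields the freeness statement since the right factor is a finite-dimensional graded vector space.

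\textbf{Step 4: insert the fake-degree formula.} The graded multiplicity of the Specht module $S_\LD$ in the coinvariant algebra $H$ is the fake degree,
\be
[H:S_\LD]_t\;=\;t^{n(\LD)}\,\frac{\prod_{i=1}^L(1-t^i)}{\prod_{(a,s)\in\LD}(1-t^{h_{a,s}})},
\ee
with $n(\LD)=\sum_i(i-1)\lambda_i=\sum_{s=1}^{\lambda_1}\binom{h_s}{2}=\Upsilon_\LD^+$. Multiplying by $\ch(\CC[\sse])=\prod_{k=1}^L(1-t^k)^{-1}$ gives
\be
\ch(\lVTlS)\;=\;t^{\Upsilon_\LD^+}\prod_{(a,s)\in\LD}\frac{1}{1-t^{h_{a,s}}},
\ee
and specialisation at $t=1$ of the ratio $\ch(\lVTlS)/\ch(\CC[\sse])$ yields the rank $L!/\prod h_{a,s}=\dim\VTl$.

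\textbf{Anticipated difficulties.} The two non-trivial inputs are the Schur–Weyl identification $\VTl\simeq S_\LD$ in the supersymmetric setting, and the fake-degree formula for $[H:S_\LD]_t$. Both are classical — the former is standard after Berele–Regev and the latter is Lusztig/Stanley (equivalently the major-index generating function over $\mathrm{SYT}(\LD)$) — so the genuine work is checking that the $\hbar=0$ reduction is legitimate (which follows exactly as in Proposition~\ref{thm:ch1}) and that the exponent $n(\LD)$ coincides with $\Upsilon_\LD^+$ in the form stated, which is the combinatorial identity $\sum_i(i-1)\lambda_i=\sum_s\binom{\lambda'_s}{2}$.
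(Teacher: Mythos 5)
Your proof is correct, and it reaches the same classical combinatorial input as the paper, but by a noticeably different intermediate route. After the identical $\hbar=0$ reduction, the paper does not invoke Schur--Weyl duality at this point: it realises $\lVTlS$ as the image under the projector $p$ of $S_\SYT\,\VTo\otimes\CC[\theta]$ for one explicitly chosen tableau $\SYT$, pushes the Young symmetriser onto the polynomial factor, argues that the symmetrisation $R_BC_F$ can be dropped because it has zero kernel on $\Pi_{C_\SYT R_\SYT}\cdot\CC[\theta]$, and then quotes the standard fact that the image of a (reversed) Young symmetriser in $\CC[\theta]$ is a free $\CC[\sse]$-module whose character divided by $\ch(\CC[\sse])$ is the Kostka--Foulkes polynomial $K_{\LD',(1^L)}(t)$. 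You instead identify the full multiplicity space $\VTl$ with the Specht module $S_\LD$ via Berele--Regev, rewrite the invariants as $\mathrm{Hom}_{\SG_L}(S_\LD,\CC[\theta])$, and conclude with Chevalley's decomposition $\CC[\theta]\simeq\CC[\sse]\otimes H$ together with the Lusztig--Stanley fake-degree formula. The two classical inputs are equivalent (fake degrees for content $(1^L)$ are exactly those Kostka--Foulkes polynomials), so the mathematical content coincides; your packaging is more structural and makes freeness immediate from the Chevalley isomorphism, whereas the paper's explicit $\VTo$/coset-sum computation has the virtue of exhibiting a concrete $\CC[\sse]$-basis inside the physical spin-chain space, which is precisely what is exploited in the worked example of Appendix~\ref{sec:OE}. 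One point you handled correctly but which deserves emphasis: under the graded permutation action the multiplicity space of the irrep labelled by $\LD$ is $S_\LD$ and not $S_{\LD'}$; the hook product in \eqref{ch2} would not notice the difference, but the prefactor would, and your identity $n(\LD)=\sum_s\binom{\lambda'_s}{2}=\Upsilon^+_\Lambda$ is exactly what makes the exponent come out as stated.
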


Note that $\Upsilon_{\Lambda}^+\geq\Upsilon_{\Lambda}$ which is consistent with $\lVTlS\subset\lVTwS$.
\begin{proof} As in the previous proof, it is enough to consider $\hbar=0$.

  The space $\lVTlS$ can be constructed from $V\otimes\CC[\inhom[]]$ as follows. Take the standard Young tableau $\SYT$ which is obtained by filling the shape $\LD$ first by filling the boxes defining the weight $\lambda_1$, then $\lambda_2$, $\ldots$, then $\nu_{\gn}$. For instance, for $[\lambda_1|\nu_1,\nu_2]=[4|2,2]$,  $\SYT=\raisebox{-1em}{{\scriptsize\young(1234,57,68)}}$. Take the normalised~\footnote{We normalise all symmetrisations/antisymmetrisations such that they are projectors.} Young symmetriser $S_\SYT\propto R_\SYT C_\SYT$, where $R_\SYT$ is the symmetrisation over rows and $C_\SYT$ is the antisymmetrisation over columns. Then $\lVTlS$ is the image by the projector $p$ of $S_{R_\SYT C_\SYT} \VTo\otimes\CC[\inhom[]]$. Singling out a special vector $\VTo$ and (correlated to it) tableau $\SYT$ is enough to construct $\lVTlS$ because $p$ sums over all permutations. Using this feature of $p$ again, and by repeating the same construction as \eqref{eq:A6}, one gets
\be
\lVTlS&\simeq& p(S_{R_\SYT C_\SYT} \VTo\otimes\CC[\inhom[]])=p( \VTo\otimes\Pi_{C_\SYT R_\SYT}\CC[\inhom[]])
\nonumber\\
&=&\frac 1{L!}\sum_{[\sigma]\in H} S_{[\sigma]}\cdot\left( \VTo\otimes R_B C_F\Pi_{C_\SYT R_\SYT}\cdot \CC[\theta]\right)\,,
\ee
and so $\lVTlS$, as a $\CC[\sse]$-module, is isomorphic to $R_B C_F\Pi_{C_\SYT R_\SYT}\cdot \CC[\theta]$. We can omit $R_BC_F$ as it has zero kernel when acting on $\Pi_{C_\SYT R_\SYT}\cdot \CC[\theta]$. Indeed, $\Pi_{C_\SYT R_\SYT}R_B C_F\Pi_{C_\SYT R_\SYT}=\Pi_{C_\SYT R_\SYT}$. But the module $\Pi_{C_\SYT R_\SYT}\cdot \CC[\theta]$ is the standard application of (reversed) Young symmetriser to a polynomial ring which is well understood, see \eg \cite{macdonald1998symmetric, kostkafoulkes, CHARI2006928, Mukhin_2009.2}. It is a free $\CC[\sse]$-module with character given by \eqref{ch2}~\footnote{This character is an important combinatorial object: $\frac{{\rm ch}(\mathcal{V}^{\SG+}_{\Lambda})}{{\rm ch}(\CC[\sse])}$ is the Kostka-Foulkes polynomial $K_{\mu\nu}(t)$ with $\mu=\LD'$ and $\nu=(1^L)$, see \eg \cite{macdonald1998symmetric, kostkafoulkes}.}. Again, the rank is computed from $\frac{\ch(\lVTlS)}{\ch(\CC[\sse])}|_{t=1}$.
\end{proof}

\paragraph{Young diagram dependence}\label{p65} One may wonder how comes that the character \eqref{ch2} and, in fact, the $\CC[\sse]$-isomorphism class of $\lVTlS$ do not depend on $\glmn$ but only on the Young diagram $\LD$. To understand this property, let us extend the underlying symmetry algebra from $\glmn$ to $\gl_{\gm'|\gn'}$, where $\gm'={\rm max}(\gm,{\rm h}_{\LD})$ and $\gn'={\rm max}(\gn,\IPart_1)$.  In addition to $a=1,\ldots,\gm$, $i=\hat 1,\ldots,\hat \gn$, introduce also $\mathfrak{z}$ to label the new indices that appeared due to the extension. In the order $a<i<\mathfrak{z}$, the highest-weight vectors do not involve $v_\mathfrak{z}$ and hence $\lVTlS$ for the extended system is isomorphic to the one of the $\glmn$ system. Now we perform a chain of fermionic duality transformations (odd Weyl reflections) to get any other order in the set $a,i,\mathfrak{z}$ (the order between separately bosonic and fermionic indices will be preserved though). The procedure is described for instance in \cite{Gunaydin:2017lhg}. It changes the highest-weight vectors, \ie the actual embedding of $V_{\Lambda}^+$ inside $V$ is modified, but this change is performed by acting with elements of the global $\gl_{\gm'|\gn'}\subset Y(\gl_{\gm'|\gn'})$ that commutes with the dAHA and in particular with the action of $\CC[\sse]$. Since the procedure is invertible it establishes a $\CC[\sse]$-isomorphism between two spaces $\lVTlS$ that differ by the choice of the order defining the highest-weight vector. The order is bijected to a Manhattan-type path (\eg the one in the Example on page~\pageref{exmp:p44}), and only those indices that belong to the Young diagram part of the path participate in the highest-weight vectors. This last observation allows us to choose $\gm',\gn'$ to be any pair such that $(\gm',\gn')$ lies on the boundary of $\LD$ (black/red dots of Figure~\ref{fig:ShortLong}) or outside of $\LD$.

The dependence of $\lVTlS$, as a $\CC[\sse]$-module, on the Young diagram alone parallels results of Section~\ref{sec:variouspar} that show that the isomorphism class of the twist-less $\BAL$, as a $\CC[\sse]$-algebra, only depends on the Young diagram. This is of course not a coincidence because the twist-less $\BAL$ and $\lVTlS$ are isomorphic as $\CC[\sse]$-modules which follows from the results of the next subsection.

\subsection{Cyclicity of symmetrised Bethe modules}
We shall use the notation $\lVVS$ to cover both $\lVTwS$ and $\lVTlS$ in the discussion below. 

We know that $\WAL$ and $\BAL$ algebras are isomorphic as $\CC[\sse]$-algebras. The goal of this subsection is to show that the regular representation $\WAL$ of the Wronskian algebra and the symmetrised Bethe module $\lVVS$ are isomorphic as $\CC[\sse]$-modules. To do so we need a map from $\WAL$ to $\lVVS$ commuting with the action of $\WAL\equiv\BAL$. A standard approach is to take a vector $\omega\in U_\Lambda\otimes\CC[\inhom[]]$ and to consider the morphism of representations

\be
\begin{array}{cccl}
\psi_\omega : &\WAL & \longrightarrow & U_\Lambda\otimes\CC[\inhom[]]\,, \\
~& c_\pa & \longmapsto & \hat{c}_\pa\,\omega\,
\end{array}\,.
\ee
Of course this map has no reason to be an isomorphism. Nevertheless we can prove the following.
\begin{lemma}
\label{thm:nonzero}
For any non-zero vector $\omega\in U_\Lambda\otimes\CC[\inhom[]]$, $\psi_\omega$ is injective.
\end{lemma}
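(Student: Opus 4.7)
The plan is to view $\hat P := \varphi(P)$ as a $d_\Lambda \times d_\Lambda$ matrix with entries in $\CC[\inhom[]]$ acting on $U_\Lambda \otimes \CC[\inhom[]] \simeq \CC[\inhom[]]^{d_\Lambda}$, and to prove that $\det \hat P \not\equiv 0$ as a polynomial in $\inhom[]$ whenever $P \neq 0$ in $\WAL$. Once this is in hand, the adjugate identity $\mathrm{adj}(\hat P)\,\hat P = \det \hat P \cdot \mathrm{Id}$ applied to the hypothesis $\hat P\,\omega = 0$ gives $\det \hat P \cdot \omega = 0$; since $\omega$ has entries in the integral domain $\CC[\inhom[]]$, this forces $\omega = 0$ and yields the required contradiction.

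To establish $\det \hat P \not\equiv 0$ I would exhibit a dense open set of $\binhom[]$ on which $\det \hat P(\binhom[]) \neq 0$. Because $\WAL \simeq \CC[c]$ is an integral domain, $P \neq 0$ in $\WAL$ means that $P$, viewed as a regular function on the graph of $\SW$ (which is isomorphic to $\CC^L$ and hence irreducible), is not identically zero, so its zero locus is a proper closed subvariety of dimension $\leq L-1$. Since $\SW$ is a finite morphism (see Section~\ref{sec:basicproperties}), the image of this zero locus in $\Sedom$ is a closed subset of dimension $\leq L-1$. Therefore, for $\binhom[]$ outside the preimage of this set and also outside $\sse^{-1}(\Secrit)$ (a union of proper closed subsets of $\CC^L$), the value $P(x^{(i)})$ is nonzero for every one of the $d_\Lambda$ WBE solutions $x^{(i)}$ at $\bar\sse = \sse(\binhom[])$.

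At any such generic $\binhom[]$, Theorem~\ref{eq:alwaysiso} combined with $\bar\sse \notin \Secrit$ (which makes the nilradical of $\WAL(\bar\sse)$ trivial) yields that $\varphi_{\binhom[]}$ is an algebra isomorphism, so $\BAL(\binhom[]) \simeq \WAL(\bar\sse)$ is a semisimple commutative $d_\Lambda$-dimensional algebra whose $d_\Lambda$ distinct characters are $\chi_i : c_\pa \mapsto x^{(i)}_\pa$. Since $\BAL(\binhom[])$ embeds by construction in $\End(U_\Lambda)$, it acts faithfully on the $d_\Lambda$-dimensional space $U_\Lambda$. The key structural input, which I expect to be the most delicate step, is the observation that a faithful semisimple commutative subalgebra of $\End(\CC^{d_\Lambda})$ of dimension $d_\Lambda$ is necessarily conjugate to the diagonal subalgebra: its idempotents are $d_\Lambda$ pairwise orthogonal nonzero projectors whose ranks sum to $d_\Lambda$, hence each rank equals one. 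This decomposes $U_\Lambda$ into $d_\Lambda$ one-dimensional joint eigenspaces indexed by the characters $\chi_i$, so the eigenvalues of $\hat P(\binhom[])$ are exactly $\{\chi_i(P)\}_{i=1}^{d_\Lambda} = \{P(x^{(i)})\}_{i=1}^{d_\Lambda}$, all nonzero by the previous paragraph, and $\det \hat P(\binhom[]) = \prod_i P(x^{(i)}) \neq 0$. This last step is what bridges the abstract isomorphism with the concrete non-vanishing of $\det \hat P$ needed in the first paragraph, completing the argument.
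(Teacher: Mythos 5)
Your argument is correct, and it is a genuinely different route from the paper's. The paper's proof works directly with the vector $\omega$: near a regular point it diagonalises the Bethe algebra by a $\theta$-dependent change of basis $W(\theta)$, notes that some fixed component of $W(\theta)\,\omega$ stays non-zero on an $L$-dimensional ball, concludes that $P$ vanishes at the corresponding solution of \eqref{mastereq2} for all $\theta$ in that ball, and then uses the local invertibility of $\SW$ to conclude that $P$ vanishes on an open subset of $\Cdom$, hence $P=0$; it never needs the counting of Section~\ref{sec:Hilbert}, only that joint eigenvalues of the $\hat c_\pa$ solve WBE and that the restricted Bethe algebra is diagonalisable at regular points. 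You instead reduce everything to the single statement $\det\varphi(P)\not\equiv 0$ in $\CC[\inhom[]]$, verify it at one well-chosen point $\binhom[]$ (regular and outside the image under the finite map $\SW$ of the zero locus of $P$), and then dispose of an arbitrary non-zero $\omega$ uniformly via the adjugate identity and integrality of $\CC[\inhom[]]$ --- a cleaner way to handle $\omega$ than the paper's continuity-of-a-component argument. The price is that, as written, your rank-one-idempotent step invokes $\dim U_\Lambda=d_\Lambda$, i.e.\ the completeness count of Section~\ref{sec:Hilbert}; this is logically available and non-circular here (that count rests on freeness and properness, not on this lemma), but it is a heavier input than the paper uses. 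Note also that the rank-one claim is dispensable: once $\varphi_{\binhom[]}$ is an isomorphism at the regular point (as stated right after Theorem~\ref{eq:alwaysiso}), $\BAL(\binhom[])$ is reduced, its primitive idempotents decompose $U_\Lambda$ into joint eigenspaces on which $\hat P(\binhom[])$ acts by the scalars $P(x^{(i)})$, so every eigenvalue of $\hat P(\binhom[])$ is one of the $P(x^{(i)})$ and $\det\hat P(\binhom[])\neq 0$ follows without knowing the eigenspaces are one-dimensional; with that simplification your proof needs essentially the same inputs as the paper's while keeping the neater adjugate endgame.
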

\begin{proof} Take $P\in\WAL$ such that $\hat{P}\,\omega=0$. As before, use WBE to write $P$ as a polynomial in $c_\pa$ only. Around a regular point $\theta$, we know from our previous results that the Bethe algebra can be fully diagonalised and that the spectrum of the operators $\hat{c}_\pa$ are exactly the solutions of the Bethe equations. Denote by $W(\theta)$ the corresponding $\theta$-dependent change of basis that diagonalises $\hat c_\pa$. At least one of the components of $W(\theta)\,\omega$ has to be non-zero at $\theta$ and hence in some $L$-dimensional ball $\mathcal{O}$ around $\theta$. Then $\hat{P}\omega=0$ implies that for one of the solutions $c_\pa(\theta)$, $P(c_\pa(\theta))=0$ for $\theta\in\mathcal{O}$. Since $c_\pa$ is a local diffeomorphism, $P$ vanishes on a $L$-dimensional ball and thus $P=0$.
\end{proof}

We hence see that for all $\omega$, $\psi_\omega$ is an isomorphism on its image. Let us take a very precise $\omega$ -- the vector of the smallest degree~\footnote{It is unique up to a normalisation as follows from \eqref{ch1} and \eqref{ch2}.} (as a polynomial in $\inhom$) that belongs to $\lVVS$.

\begin{lemma}
\label{thm:chiiso}
$\psi_{\omega}(\WAL)=\lVVS$. Therefore $\psi_{\omega}$ is an isomorphism of $\CC[\sse]$-modules between $\WAL$ and $\lVVS$.
\end{lemma}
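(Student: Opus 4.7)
The plan is to exhibit $\psi_\omega$ as an injective, filtered $\CC[\sse]$-module homomorphism whose source and target have the same Hilbert series up to an explicit degree shift, and to conclude surjectivity by a dimension-count in every filtration piece. First I would verify that $\psi_\omega(\WAL)\subseteq \lVVS$: because $\omega\in\lVVS$ by construction, and because the Bethe algebra commutes both with the global $\glmn$ action (preserving $U_\Lambda$) and with the operators $S_\pa$ of Lemma~\ref{thm:2} (preserving $\SG_L$-invariants), every $\hat c_\pa$ preserves $\lVVS$. Injectivity of $\psi_\omega$ then comes for free from Lemma~\ref{thm:nonzero}.

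Next I would equip both sides with compatible filtrations. On $\WAL$ I would use the grading of Section~\ref{sec:Hilbert} (with $\deg c_\pa$ as there and $\deg\sepa=\pa$); on $\lVVS$ I would use the filtration $\mathcal{F}_k$ by degree in the inhomogeneities as in \eqref{chardnus}. The technical input I need here is that, as a polynomial in $\inhom[]$, every matrix entry of $\hat c_\pa$ has degree at most $\deg c_\pa$. This is a direct consequence of the construction of the Q-operators recalled in Appendix~\ref{sec:q-belongs-bethe} together with the homogeneity of the Wronskian relations $\sepa=\SWe_\pa(c)$ (both sides carry degree $\pa$). Since $\omega$ spans the one-dimensional lowest nonzero filtration piece $\mathcal{F}_{\Upsilon}\lVVS$, with $\Upsilon=\Upsilon_\Lambda$ in the twisted case and $\Upsilon=\Upsilon_\Lambda^+$ in the twist-less case (the one-dimensionality is read off from the leading coefficients in Propositions~\ref{thm:ch1} and \ref{thm:ch2}), it follows that $\psi_\omega$ maps $\mathcal{F}_k\WAL$ into $\mathcal{F}_{k+\Upsilon}\lVVS$ for all $k$, i.e.\ $\psi_\omega$ is a filtered homomorphism with uniform degree shift $\Upsilon$.

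With these ingredients in place, injectivity yields $\dim_\CC\psi_\omega(\mathcal{F}_k\WAL)=\dim_\CC\mathcal{F}_k\WAL$, while the inclusion $\psi_\omega(\mathcal{F}_k\WAL)\subseteq\mathcal{F}_{k+\Upsilon}\lVVS$ gives $\dim_\CC\psi_\omega(\mathcal{F}_k\WAL)\leq\dim_\CC\mathcal{F}_{k+\Upsilon}\lVVS$. Comparing Lemma~\ref{thm:Hilbertseries} with Propositions~\ref{thm:ch1}, \ref{thm:ch2} one obtains the identity $\ch(\lVVS)(t)=t^{\Upsilon}\,\ch_{\WAL}(t)$, which after summation of graded dimensions gives $\dim_\CC\mathcal{F}_k\WAL=\dim_\CC\mathcal{F}_{k+\Upsilon}\lVVS$ for every $k$. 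Hence the inclusion $\psi_\omega(\mathcal{F}_k\WAL)\subseteq\mathcal{F}_{k+\Upsilon}\lVVS$ is an equality in each filtration piece, and taking the union over all $k$ yields $\psi_\omega(\WAL)=\lVVS$ as claimed; the $\CC[\sse]$-linearity of $\psi_\omega$ upgrades the equality to an isomorphism of $\CC[\sse]$-modules.

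The hard part will be the degree bound used in step two, namely $\deg_{\inhom[]}(\hat c_\pa)_{ij}\le \deg c_\pa$ for the matrix entries of the Q-operator coefficients. Everything else is essentially bookkeeping once that is in hand: Lemma~\ref{thm:nonzero} supplies injectivity, the characters have already been computed, and the one-dimensionality of the minimal graded piece of $\lVVS$ (hence uniqueness of $\omega$ up to scalar) is manifest from the series expansion of $\ch(\lVVS)$. Tracking the polynomial degrees through the explicit Q-operator construction and matching them with the Wronskian grading is the only substantive verification required to complete the proof.
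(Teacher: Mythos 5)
Your proposal is correct and follows essentially the same route as the paper's proof: injectivity from Lemma~\ref{thm:nonzero}, $\CC[\sse]$-linearity, the observation that $\psi_\omega$ shifts the natural filtrations by $\Upsilon_\Lambda$ (resp.\ $\Upsilon_\Lambda^+$), and surjectivity by comparing the characters of Sections~\ref{sec:Hilbert} and~\ref{sec:Becha}. You merely spell out the per-filtration-piece dimension count (and flag the degree bound on the matrix entries of $\hat c_\pa$) that the paper states without elaboration, so there is no substantive difference.
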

\begin{proof}
Injectivity is proven by Lemma~\ref{thm:nonzero}. The fact that $\psi_{\omega}$ preserves the action of $\CC[\sse]$ is obvious from the definition. Note also that $\psi_{\omega}$ just increases the degree by $\Upsilon_{\Lambda}$ (by $\Upsilon_{\Lambda}^+$ in the non-twisted case) and otherwise preserves the natural filtrations on $\WAL$ and $\lVVS$. Therefore surjectivity of $\psi_{\omega}$ follows from the comparison of the corresponding characters computed in Sections~\ref{sec:Hilbert}~and~\ref{sec:Becha}.
\end{proof}

\subsection{Specialisations}
\label{sec:isosp}

Let us summarise what has been done so far. On one side, we have the Wronskian algebra $\WAL$ acting on itself \via the regular representation. On the other side, we have the Bethe algebra $\BAL$ acting on the space $\lVVS$. Moreover the two couples $(\WAL, \WAL)$ and $(\BAL, \lVVS)$ are isomorphic \via $(\varphi,\psi_\omega)$.

Now consider the ideal $\CI :=\langle\SW_1-\bse1,\ldots\SW_L-\bse{L}\rangle$ of $\WAL$ and its image $\hat{\CI}$ in $\BAL$ under $\varphi$. Automatically, the isomorphisms $(\varphi,\psi_\omega)$ will induce isomorphisms between $(\WAL/\CI, \WAL/\CI)$ and $(\BAL/\hat{\CI}, \lVVS/\psi_\omega(\CI))$. Moreover, as shown in Lemma~\ref{thm:trivialsp} $\psi_\omega(\CI)=\CJ\cdot\lVVS$, where $\CJ:=\langle \sse-\bsse\rangle\subset\CC[\sse]$. Denote $\BAL(\bsse):=\BAL/\hat{\CI}$ and recall that $\WAL(\bsse):=\WAL/\CI$.

We also have a third pair in this correspondence, namely $\BAL(\binhom[])$, the Bethe algebra evaluated at $\binhom[]$, acting on $U_\Lambda$. Our final goal is to show that $\varphi_{\binhom[]} : \WAL(\bsse)\simeq\BAL(\binhom[])$. This is equivalent to showing that $\BAL(\binhom[])\simeq\BAL(\bsse)$. Let us emphasise that {\it a priory} $\BAL(\bar\sse)$ and $\BAL(\binhom[])$ are two different objects.

\begin{example}
Consider $\WA$ and $\CB^{\rm bad}$ from the example on page~\pageref{ex:gb}. $\CB^{\rm bad}$ acts on the space $\mathcal{V}=\CC^2\otimes \CC[\inhom[1],\theta_2]$. $\lV$ is a $\CC[\se{1},\se{2}]$-module of rank four, we can take $\vtwo{1}{0}$, $\inhom[1]\vtwo 10$, $\vtwo 01$, $\inhom[2]\vtwo 01$ as its basis elements. In this basis
\be
\label{fullc1}
\hat c_1^{\rm bad}=
\begin{pmatrix}
0 & -\se 2 & 0 & 0  \\
1 & \se 1 & 0 & 0\\
0 & 0 & 0 & -\se 2 \\
0 & 0 & 1 & \se 1
\end{pmatrix}\,.
\ee
Take a vector $\omega=A\vtwo 10+ B\vtwo 01$, where $A,B$ are some polynomials in $\se{1},\se{2}$. Then $\lU^\SG:=\psi_{\omega}(\WA)$ is a $\CC[\se{1},\se{2}]$-module of rank two spanned by $\xi_1:=\omega$, and $\xi_2:=A\,\inhom[1]\vtwo 10+B\,\inhom[2]\vtwo 01$. $\hat c_1^{\rm bad}\in\CB^{\rm bad}$ acting on $\lU^\SG$ is $\mtwo 0{-\se{2}}{1}{\se{1}}$  in the basis $\xi_1,\xi_2$.

Specialisation $\CB^{\rm bad}(\bsse)$ is two-dimensional for any $\bar\sse$ and is clearly $\CC$-isomorphic to $\WA(\bsse)$, in particular $\hat c_1^{\rm bad}(\bsse)=\mtwo 0{-\bse{2}}{1}{\bse{1}}$. Note that the statement is completely independent of the choice of $\omega$. It holds even if $A(\bse{1},\bse{2})=A(\bse{1},\bse{2})=0$ because $\xi_i\notin \psi_\omega(\CI):=\langle \sse-\bar\sse\rangle\lU^\SG$, $i=1,2$.
\end{example}

We chose $\CB^{\rm bad}$ in the example above to explicitly demonstrate that $\BAL(\binhom[])$ and $\BAL(\bsse)$ can be in principle non-isomorphic.

Instead of showing isomorphism between $\BAL(\bar\sse)$ and $\BAL(\binhom[])$ directly, let us show isomorphism between $\lVVS/\CJ\cdot\lVVS$ and $U_\Lambda$. Since these spaces both carry representations of the Bethe algebra, if one can find an isomorphism commuting with these actions, it would automatically imply $\BAL(\bar\sse)\equiv\BAL(\binhom[])$ as is argued in Section~\ref{sec:specialisationofiso}. The advantage of this strategy is that we can leverage Yangian representation theory to prove such an isomorphism.

To relate the symmetrised Yangian representation at point $\bsse$ and the spin chain Yangian representation at point $\binhom[]$, recall that $\mathcal{V}^{\SG}$ is a subspace of $\mathcal{V}:=(\CC^{\gm|\gn})^{\otimes L}\otimes\CC[\inhom[1],\ldots,\inhom[L]]$ and so we can define a map $\mathrm{Ev}_{\binhom[]} :\mathcal{V}^{\SG} \rightarrow (\CC^{\gm|\gn})^{\otimes L} $ simply by evaluating all vectors at $\binhom[]$. Since $\CJ\cdot\mathcal{V}^{\SG}\subset\mathrm{Ker}~\mathrm{Ev}_{\binhom[]}$, this induces a well-defined map 
\be
\mathrm{ev}_{\binhom[]}:\mathcal{V}^{\SG}(\bsse)\to (\CC^{\gm|\gn})^{\otimes L}\,,
\ee
where $\mathcal{V}^{\SG}(\bsse):=\mathcal{V}^{\SG}/\CJ\cdot\mathcal{V}^{\SG}$. Concretely this just means the following: take a class $[v]\in\mathcal{V}^{\SG}(\bsse)$, represent it by some $v\in\mathcal{V}$ and evaluate it at $\binhom[]$.

Note that $\mathcal{V}^{\SG}(\bsse)$ is the space where the symmetrised Yangian representation at point $\bsse$ is realised and $(\CC^{\gm|\gn})^{\otimes L}$ is the Hilbert space of the spin chain. We can realise on it the spin chain Yangian representation at point $\binhom[]$.

We are ready to formulate the main conceptual result of this appendix which is  Proposition~3.5 of \cite{MTV} generalised to the supersymmetric case. 
\begin{theorem}
\label{thm:iso}
Let $\binhom[]=(\binhom[1],\ldots,\binhom[L])$ be a solution of equations $\se{\pa}(\theta)=\bar{\se{\pa}}$ such that $\binhom+\hbar\neq\binhom[\pa']$ for $\pa<\pa'$. Then $\mathrm{ev}_{\binhom[]}$ is an isomorphism of $\Yglmn$ representations.
\end{theorem}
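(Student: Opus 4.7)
}
The strategy is to show that $\mathrm{ev}_{\binhom[]}$ is (i) a morphism of Yangian representations, (ii) surjective, and (iii) a bijection by a dimension count. The hypothesis $\binhom+\hbar\neq\binhom[\pa']$ for $\pa<\pa'$ enters only once, through the cyclicity statement of Proposition~\ref{thm:cyclicity}.

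First, I would verify that $\mathrm{ev}_{\binhom[]}$ intertwines the Yangian action. By construction, the Yangian acts on $\mathcal{V}^{\SG}$ as the restriction of its action on $\mathcal{V} = V\otimes\CC[\theta]$ given by $ev_{\theta}$, and matrix coefficients of $ev_\theta(T_{\sA\sB})$ are polynomials in the $\inhom$. Evaluating such polynomial coefficients at $\binhom[]$ yields precisely $ev_{\binhom[]}(T_{\sA\sB})$. Since the elements of $\CJ$ act as zero on $\mathcal{V}^{\SG}(\bsse)$ and on $(\CC^{\gm|\gn})^{\otimes L}$ (because $\bsepa=\sepa(\binhom[])$), the induced map $\mathrm{ev}_{\binhom[]}:\mathcal{V}^{\SG}(\bsse)\to(\CC^{\gm|\gn})^{\otimes L}$ is well-defined and Yangian-equivariant.

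Second, I would establish surjectivity through cyclicity. The vector $\vve^+=\vve_1\otimes\cdots\otimes\vve_1$ has no $\inhom$-dependence and is invariant under each $S_\pa$ of~\eqref{eq:Spa} (both $\mathcal{P}_{\pa,\pa+1}$ and $\Pi_{\pa,\pa+1}$ fix it), so it defines a class $[\vve^+]\in\mathcal{V}^{\SG}(\bsse)$ whose image under $\mathrm{ev}_{\binhom[]}$ is $\vve^+\in(\CC^{\gm|\gn})^{\otimes L}$. By Proposition~\ref{thm:cyclicity}, the hypothesis on $\binhom[]$ guarantees that $\vve^+$ is cyclic for the spin chain representation at $\binhom[]$. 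Hence the Yangian-invariant subspace $\mathrm{Im}\,\mathrm{ev}_{\binhom[]}$ contains all of $(\CC^{\gm|\gn})^{\otimes L}$.

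Third, injectivity would follow from a dimension comparison over $\CC$. By Corollary~\ref{thm:a6}, $\mathcal{V}^{\SG}$ is a free $\CC[\sse]$-module of rank $(\gm+\gn)^L$; therefore $\mathcal{V}^{\SG}(\bsse)=\mathcal{V}^{\SG}/\CJ\cdot\mathcal{V}^{\SG}$ has $\CC$-dimension $(\gm+\gn)^L$, which equals $\dim(\CC^{\gm|\gn})^{\otimes L}$. A surjection between finite-dimensional spaces of equal dimension is an isomorphism, which concludes the argument. The conceptually substantive input has already been packaged into Proposition~\ref{thm:cyclicity} (the cyclicity, which is the only place the arithmetic condition on $\binhom[]$ is used) and into the character computations of Section~\ref{sec:Becha} that yield Corollary~\ref{thm:a6}; the proof of the theorem itself is then essentially a formal consequence, with the main potential pitfall being a careful check that the Yangian action indeed descends correctly to the quotient $\mathcal{V}^{\SG}(\bsse)$.
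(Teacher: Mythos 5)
Your argument coincides with the paper's own proof in its main line (equivariance of $\mathrm{ev}_{\binhom[]}$, surjectivity via the cyclic vector $\vve^+$ of Proposition~\ref{thm:cyclicity}, then the dimension count from Corollary~\ref{thm:a6}), but it has a genuine gap in the purely fermionic case $\gm=0$. Your key claim that $\vve^+=\vve_1^{\otimes L}$ is invariant under each $S_\pa$ of \eqref{eq:Spa} uses that the \emph{graded} permutation $\mathcal{P}_{\pa,\pa+1}$ fixes $\vve_1\otimes\vve_1$; this is true only when $\vve_1$ is even, i.e.\ when $\gm\geq 1$. For $\gl_{0|\gn}$ the vector $\vve_1$ is odd, $\mathcal{P}_{\pa,\pa+1}$ produces a sign $-1$, so $S_\pa\,\vve^+=-\vve^+$ and $\vve^+\notin\mathcal{V}^{\SG}$; the class $[\vve^+]\in\mathcal{V}^{\SG}(\bsse)$ on which your surjectivity argument rests simply does not exist in that case.

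The paper repairs this by replacing $\vve^+$ with the vector $\prod_{1\leq \pa<\pa'\leq L}(\inhom-\inhom[\pa']+\hbar)\,\vve^+$, which does lie in $\mathcal{V}^{\SG}$, and whose evaluation at $\binhom[]$ is a nonzero multiple of $\vve^+$ precisely because $\binhom+\hbar\neq\binhom[\pa']$ for $\pa<\pa'$; cyclicity and the dimension count then proceed as before. Note that this also corrects your assertion that the arithmetic hypothesis on $\binhom[]$ enters only through Proposition~\ref{thm:cyclicity}: in the $\gm=0$ case it is used a second time, to guarantee that the image of this modified vector is nonzero. For $\gm\geq 1$ your proof is complete and matches the paper's.
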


\begin{proof}
Since $\mathrm{ev}_{\binhom[]}$ commutes with the Yangian action, it defines a homomorphism from the symmetrised representation to the spin chain representation. Assume $\gm\geq 1$ and consider the vector $\vve^+:=\vve_1^{\otimes L}\in\mathcal{V}^{\SG}(\bsse)$. Since $\mathrm{ev}_{\binhom[]}:\vve^+\mapsto \vve^+$, and $\vve^+$ is a cyclic vector of the spin chain module by Theorem~\ref{thm:cyclicity}, $\mathrm{ev}_{\binhom[]}$ is surjective. As $\mathcal{V}^{\SG}({\bsse})$ and $(\CC^{\gm|\gn})^{\otimes L}$ are of the same dimension by Corollary~\ref{thm:a6}, $\mathrm{ev}_{\binhom[]}$ is an isomorphism.

If $\gm=0$, one can check that $\mathcal{V}^{\SG}$ contains the vector $\prod_{1\leq \pa<\pa'\leq L}(\inhom - \inhom[\pa']+\hbar)\vve^+$, whose image under $\mathrm{ev}_{\binhom[]}$ is nonzero as long as $\binhom+\hbar\neq\binhom[\pa']$ for $\pa<\pa'$. The rest of the proof is the same.
\end{proof}

\subsection{An explicit case study}
\label{sec:OE}
Finally, we provide a concrete comprehensive example to illustrate the above-discussed ideas. 

\paragraph{Explicit Q-operators} Consider the $L=3$ $\gls{2}$ spin chain in the absence of twist. By convention, we use the following basis of $\left(\mathbb{C}^2\right)^{\otimes 3}$
\begin{align}
 \left\{
\left|\downarrow\downarrow\downarrow\right>, \left|\uparrow\downarrow\downarrow\right>, \left|\downarrow\uparrow\downarrow\right>, \left|\downarrow\downarrow\uparrow\right>, \left|\downarrow\uparrow\uparrow\right>, \left|\uparrow\downarrow\uparrow\right>, \left|\uparrow\uparrow\downarrow\right>, \left|\uparrow\uparrow\uparrow\right>\right\}\,.
\end{align}
In this basis, the periodic (``twist-less'') limit of the $Q$-operators
is:
\begin{align}
  Q_{\emptyset}&=1,&
  Q_1&=\begin{pmatrix}
1&\\
&M_{1}&\\
&&M_{1}&\\
&&&1
\end{pmatrix},&Q_{12}&=\prod_{i=1}^3 (u-\theta_i)
\,,
\end{align}
where $M_{1}$ is the following $3\times 3$ block matrix
\begin{multline}
\label{eq:M1}  M_{1}=\underbrace{\frac 1 3 \begin{pmatrix} 1&1&1\\1&1&1\\1&1&1
  \end{pmatrix}}_{P_s}+\left(u-2\frac {\inhom[1]+\theta_2+\theta_3} 3\right) \underbrace{\frac 1 3 \begin{pmatrix} 2&-1&-1\\-1&2&-1\\-1&-1&2
  \end{pmatrix}}_{P_h}\\+\frac 1 6 \,\times\,\underbrace{
\begin{pmatrix}2\theta_2+2\theta_3&-\hbar-2\theta_3&\hbar-2\theta_2\\
\hbar-2\theta_3&2\inhom[1]+2\theta_3&-\hbar-2\inhom[1]\\
-\hbar-2\theta_2&\hbar-2\inhom[1]&2\inhom[1]+2\theta_2
\end{pmatrix}
}_{c_0}\,.
\end{multline}

Notice that the projector to the symmetric irrep
\begin{tikzpicture}[scale=.15]
\draw (0,0) |-
(3,1) |- cycle (1,0) -- (1,1) (2,0) -- (2,1);  
\end{tikzpicture} is $\left(
  \begin{smallmatrix}
    1\\&P_s\\&&P_s\\&&&1
  \end{smallmatrix}
\right)$, whereas the projector to hook irreps $2\times \begin{tikzpicture}[scale=.15,baseline=-.1cm]
\draw (1,1) |- (0,-1) |- (2,1) |-(0,0);
\end{tikzpicture}$
is $\left(
  \begin{smallmatrix}
    0\\&P_h\\&&P_h\\&&&0
  \end{smallmatrix}
\right)$. Also notice that $c_0 P_s=0$, \ie $c_0$ only affects the hook irreps \begin{tikzpicture}[scale=.15,baseline=-.1cm]
\draw (1,1) |- (0,-1) |- (2,1) |-(0,0);
\end{tikzpicture}.

In addition to these notations, use
$\sse_1=\inhom[1]+\inhom[2]+\inhom[3]$,
$\sse_2=\inhom[1]\inhom[2]+\inhom[1]\inhom[3]+\inhom[2]\inhom[3]$, 
$\sse_3=\inhom[1]\inhom[2]\inhom[3]$,
and get 
\begin{multline}\label{eq:Q2}
  Q_2=\frac{-6 u^4+8 u^3 \sse_1+u^2\left(3\hbar^2-12 \sse_2\right)+u(-2\hbar^2 \sse_1+24 \sse_3)}{24
         \hbar}
\left(
  \begin{smallmatrix}
    1\\&P_s\\&&P_s\\&&&1
  \end{smallmatrix}
\right)\\
+\left(-\frac {u^3}{2\hbar}+\frac{\hbar \sse_1}{12}-\frac{\sse_3}{\hbar}\right)
 \left(
  \begin{smallmatrix}
    0\\&P_h\\&&P_h\\&&&0
  \end{smallmatrix}
\right)+\left(\frac{u^2}{4\hbar}-\frac{\hbar}{12}\right) \left(
  \begin{smallmatrix}
    0\\&c_0\\&&c_0\\&&&0
  \end{smallmatrix}
\right)\,,
\end{multline}
\begin{multline}\label{eq:Q01}
  \mathbb Q_{0,1}=\left(3\hbar u^2-2\hbar \sse_1\,u+\hbar \sse_2+\frac{\hbar^3}4\right)\left(
  \begin{smallmatrix}
    1\\&P_s\\&&P_s\\&&&1
  \end{smallmatrix}
\right)\\
+3\hbar\, u \left(
  \begin{smallmatrix}
    0\\&P_h\\&&P_h\\&&&0
  \end{smallmatrix}
\right)-\frac \hbar 2 \left(
  \begin{smallmatrix}
    0\\&c_0\\&&c_0\\&&&0
  \end{smallmatrix}
\right)\,.
\end{multline}

\paragraph{Restriction to the hook irreps}
If we restrict to the subspace $2\times \begin{tikzpicture}[scale=.15,baseline=-.1cm]
\draw (1,1) |- (0,-1) |- (2,1) |-(0,0);
\end{tikzpicture}$, we obtain $2\times 2$ matrices written for
instance in the basis~\footnote{This basis is an orthogonal basis
  of the 2D subspace, the expression of which is ``quite symmetric''.}
\begin{multline}\label{hookirrepbasis}
                 \left\{
                 \frac {\sqrt 3+3}6
                 \left|\uparrow\downarrow\downarrow\right>-\frac{\sqrt
                 3} 3
                 \left|\downarrow\uparrow\downarrow\right>+\frac{\sqrt
                 3 -3}6 \left|\downarrow\downarrow\uparrow\right>
               ,\frac {\sqrt 3-3}6
                 \left|\uparrow\downarrow\downarrow\right>-\frac{\sqrt
                 3} 3
                 \left|\downarrow\uparrow\downarrow\right>+\frac{\sqrt
                 3 +3}6 \left|\downarrow\downarrow\uparrow\right>
                 \right\}\,.
               \end{multline}
In this basis, $c_0$ becomes the $2\times 2$ matrix
\begin{equation}
\label{eq:cmatrix}
  \CCC:=c_0=
\begin{pmatrix}
    2\sse_1-\sqrt 3 (\inhom[1]-\theta_3)&\sse_1-3\theta_2+\sqrt 3
    \hbar\\ \sse_1-3\theta_2-\sqrt 3 \hbar&2\sse_1+\sqrt 3 (\inhom[1]-\theta_3)
  \end{pmatrix}
\end{equation}
and equations \eqref{eq:M1}, \eqref{eq:Q2} and \eqref{eq:Q01} become respectively
\begin{align}
  Q_1\rule[-.3cm]{.02cm}{.7cm}_{2\times \begin{tikzpicture}[scale=.15,baseline=-.1cm]
\draw (1,1) |- (0,-1) |- (2,1) |-(0,0);
\end{tikzpicture}}&=(u-2\frac {\sse_1}3)\mathbb I+\frac {\CCC}6
= \begin{pmatrix} u-\frac {\sse_1}3-\frac{\sqrt 3}6 (\inhom[1]-\theta_3) &\frac
  {\sse_1}6 -\frac {\theta_2}{2}+\frac {\sqrt 3}6 \hbar\\
\frac
  {\sse_1}6 -\frac {\theta_2}{2}-\frac {\sqrt 3}6 \hbar&
u-\frac {\sse_1}3+\frac{\sqrt 3}6 (\inhom[1]-\theta_3)
\end{pmatrix}\\
  Q_2\rule[-.3cm]{.02cm}{.7cm}_{2\times \begin{tikzpicture}[scale=.15,baseline=-.1cm]
\draw (1,1) |- (0,-1) |- (2,1) |-(0,0);
\end{tikzpicture}}&=\left(-\frac {u^3}{2\hbar}+\frac{\hbar \sse_1}{12}-\frac{\sse_3}{\hbar}\right)\mathbb I+\left(\frac{u^2}{4\hbar}-\frac{\hbar}{12}\right)\CCC
\\
\mathbb Q_{0,1}\rule[-.3cm]{.02cm}{.7cm}_{2\times \begin{tikzpicture}[scale=.15,baseline=-.1cm]
\draw (1,1) |- (0,-1) |- (2,1) |-(0,0);
\end{tikzpicture}}&=3\hbar u-\frac{\hbar}{ 2}{\CCC}=
3\hbar\begin{pmatrix}
     u-\frac{\sse_1}3+\frac {\sqrt 3}6(\inhom[1]-\theta_3)&-\frac{\sse_1}6+\frac {\theta_2} 2
     -\frac{\sqrt 3}6\hbar\\-\frac{\sse_1}6+\frac {\theta_2} 2
     +\frac{\sqrt 3}6\hbar&u-\frac{\sse_1}3-\frac {\sqrt 3}6(\inhom[1]-\theta_3)
\end{pmatrix}
\end{align}

\paragraph{Symmetrised modules}
Let us now compute these operators in the symmetrised Yangian representation. This results in presenting a $\inhom[]$-dependent change of basis such that all the matrix coefficients of $\CCC$, the only non-trivial operator of the Bethe algebra, are symmetric polynomials in $\inhom[\pa]$. We will explicitly compute this basis by using the proof of \ref{ch2}.

Let us first consider the case $\hbar=0$. The normalised Young symmetriser for $\SYT=\raisebox{-0.5em}{{\scriptsize\young(12,3)}}$ is given by
\be
S_\SYT=\frac 1 3 R_\SYT C_\SYT=\frac{1}{3}(1+(2~1~3)-(3~2~1)-(3~1~2))\,.
\ee
We now have to compute $S_\SYT\cdot\CC[\inhom[]]$ and moreover to find a $\CC[\chi]$-basis for it. Start by picking a $\CC[\chi]$-basis of $\CC[\inhom[]]${},  for example the Schubert polynomials (in the case of $\SG_3$ they are given by $\{1, \inhom[1],\inhom[1]+\inhom[2],\inhom[1]^2,\inhom[1]\inhom[2],\inhom[1]^2\inhom[2]\}$)~\footnote{It is also a good example to check equation \eqref{ch1}. Indeed $\frac{\ch(\CC[\inhom[]])}{\ch(\CC[\sse])}=\frac{(1-t)(1-t^2)(1-t^3)}{(1-t)^3}=1+2t+2t^2+t^3$ which is exactly the character of the Schubert basis.}. Since $S_\SYT$ commutes with multiplication by symmetric polynomials we just have to compute its action on Schubert polynomials. Eliminating obvious redundancies we obtain only two $\CC[\sse]$-independent basis elements
\begin{equation}
\eta_1:=\inhom[1]+\inhom[2]-2\inhom[3]\qquad\eta_2:=2\inhom[1]\inhom[2]-\inhom[1]\inhom[3]-\inhom[2]\inhom[3]\,.
\end{equation}
At this stage we can already check the character formula \eqref{ch2}. Indeed
\be
\frac{\ch(\mathcal{V}^{\SG+}_{(2,1)})}{\ch(\CC[\sse])}=\frac{t(1-t)(1-t^2)(1-t^3)}{(1-t)^2(1-t^3)}=t(1+t)=t^{\deg\eta_1}+t^{\deg\eta_2}\,.
\ee
To obtain a $\CC[\sse]$-basis of $\mathcal{V}^{\SG+}_{(2,1)}$ it remains to compute $p(\VTo\otimes\eta_1)$ and $p(\VTo\otimes\eta_2)$ which can be done straightforwardly.

Now assume $\hbar\neq0$. This case is more complicated because now we have to take $\hbar$ corrections into account. In particular now $p(\VTo\otimes\eta_1), p(\VTo\otimes\eta_2)\notin\mathcal{V}^{\SG+}_{(2,1)}$. 
Nevertheless $p(\VTo\otimes\eta_1), p(\VTo\otimes\eta_2)\in\mathcal{V}^{\SG}_{(2,1)}$ and we have to correct them by some vectors of lower degree such that they belong to $\mathcal{V}^{\SG+}_{(2,1)}$. Since $\deg p(\VTo\otimes\eta_1)=\deg\eta_1=1$ it can only be corrected by a vector of degree zero. There is only one such vector in $\mathcal{V}^{\SG}_{(2,1)}$: the totally symmetric combination $\ket{\uparrow\downarrow\downarrow}+\ket{\downarrow\uparrow\downarrow}+\ket{\downarrow\downarrow\uparrow}$. Its coefficient can be uniquely fixed by requiring that the corrected vector is highest-weight. By a similar argument we can compute the $\hbar$ corrections to $p(\VTo\otimes\eta_2)$. In the end we obtain the following $\CC[\sse]$-basis of $\mathcal{V}^{\SG+}_{(2,1)}$
\be
\begin{aligned}
\xi_1:= & \frac{1}{6}(-2\inhom[1]+\inhom[2]+\inhom[3]-3\hbar)\ket{\uparrow\downarrow\downarrow}\\
&~~+\frac{1}{6}(\inhom[1]-2\inhom[2]+\inhom[3])\ket{\downarrow\uparrow\downarrow}\\
&~~~~+\frac{1}{6}(\inhom[1]+\inhom[2]-2\inhom[3]+3\hbar)\ket{\downarrow\downarrow\uparrow}\\
\xi_2:= & \frac{1}{18}(-3\inhom[1]\inhom[2]-3\inhom[1]\inhom[3]+6\inhom[2]\inhom[3]-\hbar(\inhom[1]+4\inhom[2]+4\inhom[3]))\ket{\uparrow\downarrow\downarrow}\\
&~~+\frac{1}{18}(-3\inhom[1]\inhom[2]+6\inhom[1]\inhom[3]-3\inhom[2]\inhom[3]-\hbar(4\inhom[1]+\inhom[2]-5\inhom[3])-3\hbar^2)\ket{\downarrow\uparrow\downarrow}\\
&~~~~+\frac{1}{18}(6\inhom[1]\inhom[2]-3\inhom[1]\inhom[3]-3\inhom[2]\inhom[3]+\hbar(5\inhom[1]+5\inhom[2]-\inhom[3])+3\hbar^2)\ket{\downarrow\downarrow\uparrow}
\end{aligned}
\ee
that is, $\mathcal{V}^{\SG+}_{(2,1)}=\CC[\sse]\xi_1\oplus\CC[\sse]\xi_2$. Changing $\CCC$ to this basis we finally obtain its symmetrised representative
\be
\CCC^\SG:=
\begin{pmatrix}
-\hbar & -2\sse_2-\frac{2}{3}\hbar(\sse_1+\hbar)\\
6 & 4\sse_1+\hbar
\end{pmatrix}\,.
\ee

A few remarks are in order. First, note that $\CCC$ is homogeneous of degree $1$ (if we consider $\deg\hbar=1$) whereas $\CCC^\SG$ is non homogeneous of degree $2$. This has to do with the fact that $\xi_1$ and $\xi_2$ are $\inhom[]$-dependent and of different degrees. Second, one can check that $\CCC$ and $\CCC^\SG$ have the same characteristic polynomial and therefore the same spectrum. However there is a crucial difference: if we set $\hbar=0$ and take $\binhom[1]=\binhom[2]=\binhom[3]$ (a potentially ``bad'' point by Theorem~\ref{thm:iso}), $\CCC$ becomes proportional to the identity whereas $\CCC^\SG$ does not, so they cannot be related by a change of basis. In particular the symmetrised Bethe algebra $\BAL(\bsse)$ is still maximal, whereas the evaluated Bethe algebra $\BAL(\binhom[])$ is not. Again this has to do with the fact that the basis $(\xi_1,\xi_2)$ is $\inhom[]$-dependent: at this particular point it becomes degenerate \emph{as a basis of the physical vector space}, but not \emph{as a basis of the quotient} $\mathcal{V}^{\SG+}_{(2,1)}(\bsse)$. Actually, the determinant of the matrix of change of basis between \eqref{hookirrepbasis} and $(\xi_1,\xi_2)$ is given by $\frac{-1}{4\sqrt{3}}(\inhom[1]-\inhom[2]+\hbar)(\inhom[1]-\inhom[3]+\hbar)(\inhom[2]-\inhom[3]+\hbar)$ and  it equates to zero precisely at the ``bad'' points of Theorem~\ref{thm:iso}. Note however that even at most ``bad'' points $\BAL(\binhom[])$ and $\BAL(\bsse)$ are still isomorphic and maximal.

This example shows that a $\CC[\sse]$-basis of the symmetrised Yangian representation is quite difficult to compute. As long as the hypothesis of Theorem~\ref{thm:iso} is satisfied, the traditional physical frame is perfectly equivalent to the symmetrised one and can be used without trouble for all practical applications.

\section{Q-operators belong to the Bethe algebra}
\label{sec:q-belongs-bethe}

In order to show that the Q functions/operators belong to the Bethe algebra we will show how to find them from $\wT$ functions/operators. When $a=1$, the contraction with the Levi-Civita tensor in the Wronskian expression (\ref{eq:WronskianwT}) reduces to $(\gm-1)!\sum\limits_b (-1)^bQ_b^{[\gm-\gn+s]}Q_{\bar b}^{[-s]}$, hence we
can express the $t$-dependent sum (where $t$ is a free parameter)
\begin{equation}
  \label{eq:DefS}
  S(t):= \sum_b \left(\sum_{s\ge 1} Q_{\bar b}^{[-\gm+\gn-2s]} t^s\right)  Q_{b\vphantom{\bar b}}^{\vphantom{[]}}
\end{equation}
as an infinite linear combination of the $\wT_{(s^1)}$: this linear combination looks like $\sum\limits_{s\ge 1} \wT_{(s^1)}^{[-\gm+\gn-s]} t^s$, up to the first terms (when $s<a-\gm+\gn$) and up to factors that have no impact on the present argument and would make expressions extremely bulky. These are the factors $Q_{\bar\emptyset|\bar\emptyset}$, $\Ber G$ (which we set to $1$), $(\gm-1)!$  and the proportionality~\footnote{This proportionality factor is a supersymmetric version of a Vandermonde determinant of the eigenvalues $z_\alpha$, as can be found from requiring that $q_\emptyset$, $q_a$ and $q_i$ are monic.} factor denoted by the symbol $\propto$ in (\ref{eq:WronskianwT}).

This infinite sum (where $s$ runs from $0$ to $+\infty$) converges
in the disk $|t|<\min \left|\frac {1} {x_b}\right|$ and is then analytically continued to $t\in\mathbb C\setminus\left\{\frac {1} {x_b}\middle|1\le b\le \gm\right\}$. Indeed, if we denote $Q_{\bar b}=
\left(\frac {1} {x_b}\right)^{u/\hbar} \sum_{k=0}^{M_{\bar b}}c_{{\bar b}}^{(k)}u^{k}
$ then for $|t|<\min \left|\frac {1} {x_b}\right|$ we have:
\begin{multline}
  \label{eq:Sumq}
  \sum_{s\ge 1} Q_{\bar b}^{[-\gm+\gn-2s]} t^s=\\\left(\frac {1} {x_b}\right)^{\frac{-\gm+\gn}{2}}\sum_{j=0}^{M_{\bar b}}
                                                (-1)^j  \left(\sum_{k=j}^{M_{\bar b}} {\binom{k}{j}} (u^{[-\gm+\gn]})^{k-j} c^{(k)}_{\bar b} \right)\underbrace{\sum_{s\ge 1} s^j \left(t \,x_b\right)^s}_{\frac{
\sum_{k=0}^{j-1} A(j,k)\left(t\, x_b\right)^{k+1}
}{\left(1-t\, x_b\right)^{j+1}}\,,}
\end{multline}
where the combinatorial factors $A(j,k)$ are the positive integers known as the Eulerian numbers, and where the sum $\sum\limits_{k=0}^{j-1} A(j,k)\left(t\, x_b\right)^{k+1}$ should be replaced by $1$ in the ill-defined case $j=0$.

If the eigenvalues are $x_b$ are pairwise distinct we deduce that
\begin{equation}
  \label{eq:defQfromGenSeries}
  Q_b\propto \lim_{t\to \frac {1}{x_b}}\left(1-t\, x_b\right)^{M_{\bar b}+1} S(t)\,.
\end{equation}

This expression allows one to conclude that, at the level of representations, it belongs to the Bethe algebra. Indeed, the Bethe Algebra forms a linear subspace of the space of operators on the Hilbert space, which is finite-dimensional. It is hence topologically closed, so that the sum $S(t)$ belongs to the Bethe algebra, not only when $|t|<\min \left|\frac {1} {x_b}\right|$ but even for arbitrary $t$ by analytic continuation. Then, by taking the limit \eqref{eq:defQfromGenSeries} $Q_b$ belongs to the Bethe Algebra.

In addition to the Q-operators $Q_b$ ($1\le b\le \gm$), the same approach also allows producing the operators $Q_i$ ($\hat 1\le i\le \hat \gn$) by focusing on a sum of the form $\sum_{a\ge 0} \wT_{(1^a)}^{[-\gm+\gn+a]}t^a$. Hence it allows expressing all Q-operators using (\ref{sport}) and (\ref{eq:qqforms}).

In \cite{Kazakov:2010iu}, explicit computations of such infinite sums and of their limit were performed combinatorically at the level of the representation $ev_{{\theta}}$ (for twist with pairwise-distinct eigenvues, as in the above discussion). This explicit construction of the Q-operators shows that their matrix coefficients are indeed polynomial functions of $u$ and of the inhomogeneities $\inhom$, and that they are rational functions of the twist eigenvalues $z_\sA$. It also shows that the degree $M_{A|I}$ of $q_{A|I}$ is indeed given by \eqref{degree}.

For comparison purposes, we note that the infinite sum which was computed in \cite{Kazakov:2010iu} is actually of the form $\sum_{s\ge 1} \wT_{(s^1)}^{[+s]} t^s$, by contrast with the opposite shifts in the above discussion. Consequently the explicit combinatorial description gives an expression of $Q_{\bar b}$ instead of $Q_b$, and $Q_b$ was extracted after a few more steps -- after $\gm-1$ successive limits-- and the whole Q-system is expressed explicitly.

\section{Details about structural study of Bethe equations}
\label{app:BetheDetails}
In this section we will adopt the analytic point of view of Section~\ref{sec:basicproperties} on inhomogeneities. Namely we will think them as complex numbers that we are going to vary. Then $c_{\pa}$ -- the coefficients of (twisted) Baxter polynomials -- turn out to be algebraic functions of inhomogeneities. The main purpose of this section is to prove properness of WBE -- that all solutions $c_{\pa}$ are bounded if $\inhom$ (and hence $\sepa$) are bounded. A natural consequence of this analysis will be the behaviour of $c_{\pa}$ when $\inhom$ tend to infinity (in the twist-less case) which we analyse in \ref{sec:enum-solut-with} to provide the necessary technical results for Section~\ref{sec:twist-less-case}.

\subsection{Properness  - twisted case}\label{sec:Finiteness}

The feature that ensures that $c_{\alpha}$ are bounded at finite $\theta_{\alpha}$ is the fact that $c_\alpha$ are coefficients of polynomials in $u$ and the quantization condition \eqref{mastereq} is an equation on these polynomials.

Assume that there is a point $\bsse \in \Sedom$ by approaching which some of $c_{\alpha}$ diverge (become unbounded). If $c_{\alpha}$ is a coefficient of a twisted polynomial $Q(u)$ then divergence of $c_{\alpha}$ implies divergence of some of the roots of $Q(u)$. Factorise $Q(u)$ in the form $Q=Q^{\gg}Q^{\lesssim}$, where $Q^{\gg}$ is a polynomial containing all diverging roots, and $Q^{\lesssim}$ is the function containing the twist prefactor and all finite roots.

Consider first the equation $Q_{a|i}^+-Q_{a|i}^-=Q_{a|\es}Q_{\es|i}$. One can rewrite $Q_{a|i}^+-Q_{a|i}^-=Q^{\gg}_{a|i}((Q^{\lesssim}_{a|i})^+-(Q^{\lesssim}_{a|i})^-)+H$, where $H$ is the function that ensures equality. By taking a point $\sse$ sufficiently close to the point $\bsse$, one can ensure that $H$ is small in the following sense: there exists $R$ such that all roots of $(Q^{\lesssim}_{a|i})^+-(Q^{\lesssim}_{a|i})^-$ lie inside the circle $|u|=R$, all roots of $Q^{\gg}$ lie outside it, and that the absolute value of $H$ is smaller than the absolute value of $Q^{\gg}_{a|i}((Q^{\lesssim}_{a|i})^+-(Q^{\lesssim}_{a|i})^-)$ when $|u|=R$. Then by Rouch\'e's theorem, the number of zeros of $Q_{a|i}^+-Q_{a|i}^-$ inside and outside of the circle is the same as that of $Q^{\gg}_{a|i}((Q^{\lesssim}_{a|i})^+-(Q^{\lesssim}_{a|i})^-)$. Hence existence of large zeros of $Q_{a|i}$ imply existence of large zeros, in the same amount, in the product $Q_{a|\es} Q_{\es|i}$. Their distribution between $Q_{a|\es}$ and $Q_{\es|i}$ depends on the solution we consider.

Consider now the WBE \eqref{mastereq} and recall that SW is explicitly the determinant \eqref{susyw}. Applying the same logic, we write  
\be\label{eq:D1}
\SW(C_{\Lambda})=\prod_{a=1}^{\gm} Q_{a|\es}^{\gg}\prod_{i=1}^{\gn}Q_{\es|i}^{\gg}\,\SW(Q^{\lesssim})+H\,,
\ee 
and then conclude using Rouché's theorem that $Q_{\theta}=\SW(Q)$ has large zeros, \ie  the point $\bsse$ cannot have $\bsepa$ all finite. For the argument to work, one needs to ensure that $\SW(Q^{\lesssim})$ is not vanishing but it is straightforward as the presence of twist prefactors $z_{\alpha}^{u/\hbar}$ ensures that already the leading-$u$ term in $\SW(Q^{\lesssim})$ is non-vanishing.

\subsection{Properness - twist-less case}
\label{sec:deta-constr-mapp}
To study the twist-less case, we will focus on the bosonised parameterisation of the Q-system on a Young diagram \eqref{eq:reconsQ}. In particular, one has
\be
\label{eq:QB00}
Q_{\theta}=\wQ_{0,0}\propto W(B_1,\ldots,B_{\gm})\,,
\ee
where $\gm=h_\LD$. Equation \eqref{eq:QB00} contains in principle the full information since the $\gls{\gm|0}$ Q-system is a possible way to parameterise the Bethe algebra $\BAL$.

To prove properness, we would like to use an argument similar to that of \eqref{eq:D1}, however cancellations in the Wronskian determinant make things more subtle.
\begin{example}\label{exmp:72}
Take $B_1=u$, $B_2=(u-\Lambda)(u+1)$, $B_3=(u-\Lambda)^3$. Let $\Lambda\to\infty$ if $\sse\to\bsse$. Formally there are four divergent roots when $\Lambda\to\infty$. However $W(B_1,B_2,B_3)=u^3+u(3\Lambda-\hbar^2)-\Lambda^2(\Lambda+3)\,$ which has three divergent roots.
\end{example}
The issue in the example comes (at least) from the fact that in the decomposition $Q=Q^{\gg}Q^{\lesssim}$, $B_1^{\lesssim}$ and $B_2^{\lesssim}$ are polynomials of the same degree (equal to one). 

To continue, we do a couple of formalisations.

\paragraph{Parametric factorisation} Let $\Lambda$ be a parameter, and we intend to consider the $\Lambda\to\infty$ behaviour of Q-functions that are algebraic functions of $\Lambda$. Define a scale function $S=\Lambda^{\beta}$ for some real $\beta$. We say that  $Q=Q^{\gg S}Q^{\lesssim S}$  is the parametric factorisation of the polynomial $Q$ at scale $S$ if all roots of the monic polynomial $Q^{\gg S}$ are much larger than $S$, and all roots of the monic polynomial $Q^{\lesssim S}$ are comparable to or smaller than $S$. More precisely, for each $\bu$ that satisfies $Q^{\gg S}(\bu)=0$ one has $\lim\limits_{\Lambda\to\infty} S/\bu=0$, and for each $\bu$ that satisfies $Q^{\lesssim S}(\bu)=0$ the $\Lambda\to\infty$ limit of $\bu/S$ is finite.

Then the argument around \eqref{eq:D1} can be formalised by the following lemma.
\begin{lemma}
\label{thm:D1}
For some polynomials $Q_1,\ldots,Q_a$, let $Q_{12\ldots a}=W(Q_1,\ldots,Q_a)$ and let $Q=Q^{\gg S}Q^{\lesssim S}$ be the parametric factorisation at scale $S$.

If degrees $\deg Q_1^{{\lesssim} S}$, $\deg Q_2^{{\lesssim} S}$, $\deg Q_3^{{\lesssim} S}$, $\dots$ are  pairwise distinct, and degrees $\deg Q_1$, $\deg Q_2$, $\deg Q_3$, $\dots$ are also pairwise distinct then 
\be
\deg Q^{\gg S}_{12\ldots a}=\sum_{a'=1}^{a} \deg Q^{\gg S}_{a'}\,.
\ee
\end{lemma}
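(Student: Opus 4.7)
The strategy is to compute $\deg Q^{\gg S}_{12\ldots a}$ by contrasting the full polynomial degree of the Wronskian with the degree coming from its low-scale piece. Abbreviate $d_{a'} := \deg Q_{a'}$, $g_{a'} := \deg Q_{a'}^{\gg S}$ and $\ell_{a'} := \deg Q_{a'}^{\lesssim S}$, so that $d_{a'} = g_{a'} + \ell_{a'}$.

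First I would establish the total degree identity $\deg Q_{12\ldots a} = \sum_{a'} d_{a'} - \binom{a}{2}$. Expanding each entry of the Wronskian matrix in the large-$u$ regime via $(u + \hbar\eta_j)^{d_{a'}} = u^{d_{a'}}\sum_k \binom{d_{a'}}{k}(\hbar\eta_j)^k u^{-k}$ (where $\eta_j = \tfrac{a+1}{2}-j$ denotes the column shift) and pulling $u^{d_{a'}}$ out of row $a'$, a Cauchy--Binet expansion of the remaining determinant factorises, at top order in $u$, as a product of two Vandermonde-type determinants: one in the $\eta_j$ (non-vanishing since the shifts are pairwise distinct) and one in the $d_{a'}$ (non-vanishing precisely because the $d_{a'}$ are distinct). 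The dominant $u$-power is $\sum_{a'} d_{a'} - \binom{a}{2}$ with a non-zero coefficient, as claimed.

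Next I would read off $\deg Q_{12\ldots a}^{\lesssim S}$ by probing the intermediate regime $u = Sw$ with $w$ fixed and $\Lambda \to \infty$. Setting $\tilde Q_{a'}^{\lesssim}(w) := S^{-\ell_{a'}}\, Q_{a'}^{\lesssim S}(Sw)$ -- a monic polynomial of degree $\ell_{a'}$ in $w$ -- one has
\[
Q_{a'}(Sw + \hbar\eta_j) = (-1)^{g_{a'}}\Bigl(\prod_k u^{\gg S}_{a',k}\Bigr)\, S^{\ell_{a'}}\, \tilde Q_{a'}^{\lesssim}\!\bigl(w + \tfrac{\hbar\eta_j}{S}\bigr) \cdot \bigl(1 + O(S/u^{\gg S})\bigr),
\]
where only the $\tilde Q_{a'}^{\lesssim}$-factor carries $w,j$-dependence at leading order in $\Lambda$. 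Pulling the row-by-row prefactors out of the determinant reduces the problem, modulo $O(S/u^{\gg S})$ corrections, to a finite-difference Wronskian of the $\tilde Q_{a'}^{\lesssim}$ with vanishing shift $\hbar/S$. The same Cauchy--Binet argument as in step one, applied now in a small-shift Taylor expansion in $\hbar/S$, extracts $(\hbar/S)^{\binom{a}{2}}$ times the ordinary differential Wronskian of the $\tilde Q_{a'}^{\lesssim}$; distinctness of the $\ell_{a'}$ then guarantees that this differential Wronskian is a non-zero polynomial in $w$ of exact degree $\sum_{a'} \ell_{a'} - \binom{a}{2}$.

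Translating this $w$-behaviour back to $u = Sw$ in the range $S \ll u \ll u^{\gg S}$, where the factorisation $Q_{12\ldots a} = Q^{\gg S}_{12\ldots a}\, Q^{\lesssim S}_{12\ldots a}$ forces $Q_{12\ldots a}(u)$ to scale as $u^{\deg Q^{\lesssim S}_{12\ldots a}}$ up to a $Q^{\gg S}_{12\ldots a}(0)$-type constant, identifies $\deg Q^{\lesssim S}_{12\ldots a} = \sum_{a'} \ell_{a'} - \binom{a}{2}$; subtracting from the total degree gives
\[
\deg Q^{\gg S}_{12\ldots a} = \Bigl(\sum_{a'} d_{a'} - \binom{a}{2}\Bigr) - \Bigl(\sum_{a'} \ell_{a'} - \binom{a}{2}\Bigr) = \sum_{a'} g_{a'} = \sum_{a'} \deg Q_{a'}^{\gg S}.
\]
The main obstacle is a careful bookkeeping of the subleading corrections in the $u = Sw$ expansion: one must verify that the $O(S/u^{\gg S})$ remainder in each matrix entry cannot conspire across rows of the determinant to compete with the $(\hbar/S)^{\binom{a}{2}}$-piece extracted from the $\tilde Q_{a'}^{\lesssim}$ block. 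The separation of scales $\hbar \ll S \ll u^{\gg S}$ is precisely what makes this a bookkeeping matter rather than a genuine obstruction, and the two distinctness hypotheses find neatly separated roles -- the $d_{a'}$-distinctness controlling the total degree count, and the $\ell_{a'}$-distinctness controlling the low-scale Wronskian, the latter being exactly the hypothesis whose failure is illustrated by the counterexample preceding the lemma statement.
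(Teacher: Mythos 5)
Your skeleton---compute the exact total degree of the Wronskian, compute the exact degree of its low-scale factor, and subtract---is sound, and your two non-vanishing inputs (the generalised Vandermonde determinants attached to the pairwise distinct $\deg Q_{a'}$ and $\deg Q_{a'}^{\lesssim S}$) are exactly the ``important ingredient'' of the paper's own proof, which instead writes $W(Q_1,\ldots,Q_a)=\prod_{a'}Q_{a'}^{\gg S}\,W(Q_1^{\lesssim S},\ldots,Q_a^{\lesssim S})+H$ and transfers the zero count by Rouch\'e's theorem on a circle $|u|=R$ separating the two scales. There is, however, a genuine gap in your second step: you reduce the low-scale block to a \emph{differential} Wronskian by expanding in the shift $\hbar/S$, which presupposes $\hbar\ll S$. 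The lemma assumes no relation between $S=\Lambda^{\beta}$ and $\hbar$ (any real $\beta$ is allowed), and it is invoked in the properness argument of Appendix D.2 precisely at scale $S=1$ with fixed $\hbar$, where $\hbar/S$ is of order one and the extraction of $(\hbar/S)^{\binom{a}{2}}$ times a differential Wronskian is not available. The detour is also unnecessary: since the $Q_{a'}^{\lesssim S}$ are monic, the finite-difference Wronskian $W(Q_1^{\lesssim S},\ldots,Q_a^{\lesssim S})$ has leading coefficient equal to that of $W(u^{\deg Q_1^{\lesssim S}},\ldots,u^{\deg Q_a^{\lesssim S}})$, which your own Cauchy--Binet computation shows is non-zero for pairwise distinct exponents and $\hbar\neq 0$; hence the low-scale Wronskian has exact degree $\sum_{a'}\deg Q_{a'}^{\lesssim S}-\binom{a}{2}$ with no limit taken.

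The second soft spot is the point you explicitly defer: that the per-entry $O(S/u^{\gg S})$ corrections cannot conspire in the determinant to compete with the main term, which is itself suppressed by $u^{-\binom{a}{2}}$ relative to the naive product of entries. This is where the actual content lies, and the paper settles it by Rouch\'e: choose $R$ with all roots of the $Q_{a'}^{\lesssim S}$ (and of $W(Q^{\lesssim S})$) inside $|u|=R$ and all roots of the $Q_{a'}^{\gg S}$ outside, verify $|H|<\bigl|\prod_{a'}Q_{a'}^{\gg S}\,W(Q^{\lesssim S})\bigr|$ on the circle, and conclude that $W(Q_1,\ldots,Q_a)$ has exactly $\sum_{a'}\deg Q_{a'}^{\lesssim S}-\binom{a}{2}$ roots inside, hence $\sum_{a'}\deg Q_{a'}^{\gg S}$ outside. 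Your intermediate-window degree matching can be upgraded to the same standard, but as written it asserts rather than proves the non-conspiracy, and together with the $\hbar\ll S$ restriction it does not yet cover the lemma in the generality in which the paper uses it.
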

\begin{proof}
  Perform an equivalent of decomposition \eqref{eq:D1} and apply Rouché's theorem. An important ingredient is that $W(u^{\deg Q_1},\dots, u^{\deg Q_a})$ (resp. $W(u^{\deg Q_1^{{\lesssim} S}},\dots, u^{\deg Q_a^{{\lesssim} S}})$) is not zero and hence provides the term of the highest degree of $W(Q_1,\ldots, Q_a)$ (resp. $W(Q_1^{{\lesssim} S},\ldots, Q_a^{{\lesssim} S})$) which is why the restriction on the degrees is imposed.
\end{proof}
Now we recall that not all coefficients of $B_a$ bear physical information as they are subject to the symmetry transformation  \eqref{eq:225}. We shall benefit from \eqref{eq:225} to ensure that a parametric factorisation of $B_a$ satisfies the conditions of the above lemma.
\begin{lemma}
  \label{lemma:rotation}
Let $B_1,\ldots,B_\gm$ be monic polynomials with $\deg B_1<\ldots<\deg B_\gm$. For any scale $S$, one can find a ``rotation''
  \begin{align}
    \label{eq:rotation}
    B_a&\to B_a+\sum_{b<a}h_{ab} B_b\,,
  \end{align}
  where the $h_{ab}$'s are complex-valued functions~\footnote{\label{fn:lemmacoefs}More
    accurately, they are algebraic functions of the inhomogeneities
    $\inhom[\pa]$ whose values depend on $\Lambda$.} of $\Lambda$,  such that, after the rotation, the degrees of ${B}_{1}^{{\lesssim} S}$,
$B_2^{{\lesssim} S}$, $\ldots$, ${B}_{\gm}^{{\lesssim} S}$ are  pairwise distinct.
\end{lemma}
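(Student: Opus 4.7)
The plan is induction on $\gm$. The base case $\gm = 1$ is trivial, since a single polynomial has pairwise distinct small-part degrees vacuously. For the inductive step, I will assume that $B_1, \ldots, B_{\gm-1}$ have already been rotated so that their small-part degrees $d_b := \deg B_b^{\lesssim S}$ are pairwise distinct. Since the rotation in the lemma is strictly lower-triangular, adding $\Lambda$-dependent combinations of $B_1, \ldots, B_{\gm-1}$ to $B_\gm$ leaves the previously rotated polynomials untouched, and the task reduces to finding complex-valued functions $h_1(\Lambda), \ldots, h_{\gm-1}(\Lambda)$ such that $B_\gm' := B_\gm + \sum_{b < \gm} h_b B_b$ satisfies $d_\gm' := \deg (B_\gm')^{\lesssim S} \notin \{d_1, \ldots, d_{\gm-1}\}$. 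A simple counting argument shows that such a target exists in principle: $d_\gm'$ takes values in $\{0, 1, \ldots, \deg B_\gm\}$, and the strict monotonicity $\deg B_1 < \ldots < \deg B_\gm$ forces $\deg B_\gm \geq \gm - 1$, so the candidate pool has cardinality at least $\gm$ while the forbidden set has cardinality $\gm - 1$.

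To realise such a target, I would exploit the asymptotic factorisation $B_a(u) \sim L_a \cdot B_a^{\lesssim S}(u)$ valid for $|u| \sim S$, where $L_a := B_a^{\gg S}(0)$ is a large constant encoding the contribution of the large roots. In this regime the rotated polynomial is governed by $L_\gm B_\gm^{\lesssim S}(u) + \sum_{b < \gm} h_b L_b B_b^{\lesssim S}(u)$, which offers two complementary mechanisms for tuning $d_\gm'$. If $d_\gm^{(0)} = d_{b^*}$ for some $b^*$ in the forbidden set, the choice $h_{b^*} = -L_\gm / L_{b^*}$ cancels the leading coefficient and \emph{reduces} $d_\gm'$; iterating this elementary step produces a strictly decreasing sequence of values until one escapes the forbidden set. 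When the descent is blocked (typically because $d_\gm^{(0)}$ already sits at the bottom of the forbidden set), I would instead \emph{increase} $d_\gm'$ by selecting $h_b$'s of larger magnitude that cancel the dominant scale-$S$ contribution of $B_\gm$, thereby exposing subleading root structure previously masked among the large roots. This second mechanism is exemplified in Example~\ref{exmp:72}, where $d_3^{(0)} = 0$ is lifted to $d_3' = 2$ via $h_{32} = -\Lambda^2$.

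The hard part will be to make this scale-by-scale manipulation fully rigorous while simultaneously keeping control over all intermediate scales: a rotation designed to act at scale $S$ can, in principle, produce unintended cancellations at neighbouring scales and alter the root count in ways that are hard to predict. The clean formalisation of the argument is via the Newton polygon of $B_a$ with respect to the $\Lambda$-adic valuation of its coefficients: the small-part degree $d_a$ is then recovered as the horizontal extent of the polygon's ``shallow'' segments, and a rotation $B_\gm \to B_\gm + h_b B_b$ acts as a convex deformation of this polygon. The freedom to choose $h_b$ of \emph{arbitrary} order in $\Lambda$ (built into the lemma statement) is exactly the flexibility needed to tilt the polygon so as to realise an admissible $d_\gm'$ outside the forbidden set, while respecting the asymptotic hierarchy of scales.
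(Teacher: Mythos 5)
Your overall strategy---a triangular elimination in which $B_\gm$ is corrected by multiples of the lower polynomials so as to move $\deg B_\gm^{\lesssim S}$ off the already-occupied values---is the same as the paper's, and your elementary move (cancel the coefficient at the tied degree, using that both ${\lesssim}S$ parts are monic there) is exactly the paper's transformation $B_a \to B_a - \tfrac{b^{(a)}_{(k)}}{b^{(b)}_{(k)}} B_b$. The genuine gap is in the iteration. The claim that ``iterating this elementary step produces a strictly decreasing sequence of values'' is false: the cancellation can remove \emph{all} coefficients of $B_\gm$ carrying the maximal $\Lambda$-exponent, after which $\deg B_\gm^{\lesssim S}$ is read off from a strictly smaller exponent and can jump \emph{upwards}, anywhere up to $\deg B_\gm$. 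This is precisely what happens in the example you cite: the single move $B_3 \to B_3 - \Lambda^2 B_2$ is the same coefficient-cancellation move (not a separate mechanism requiring $h$'s of ``larger magnitude''), and it lifts the small-part degree from $0$ to $2$. Since the degree can move in either direction, neither the descent claim nor the loosely described ascent alternative shows that the process ever lands outside the forbidden set; your counting argument only shows that an admissible value exists, not that your moves reach it. Moreover, ``cancelling the dominant scale-$S$ contribution of $B_\gm$'' wholesale is in general not available, since that degree-$d_\gm$ polynomial need not lie in the span of the dominant parts of $B_1,\dots,B_{\gm-1}$; the single-coefficient cancellation is the only guaranteed move.

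What is missing is a monotone quantity guaranteeing termination, and this is where the paper's proof does its real work: for each $B_a$ one tracks the maximal $\Lambda$-exponent occurring among its coefficients. Each elementary cancellation either leaves this exponent unchanged and strictly lowers $\deg B_a^{\lesssim S}$ (a nonnegative integer), or strictly lowers the exponent itself; the exponent is bounded below by $0$ because the leading monomial has coefficient $1$ and is never touched (the degrees $\deg B_a$ are pairwise distinct and the rotation is triangular), and the exponents of the algebraic functions involved form a discrete set. Hence the recursion terminates, and it can only terminate when all $\deg B_a^{\lesssim S}$ are pairwise distinct. Your Newton-polygon paragraph gestures at this kind of bookkeeping, but you never exhibit the decreasing invariant, so as written the proposal does not establish the lemma.
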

We note that the proof below is constructive and it provides an algorithm to find $h_{ab}$ explicitly.
\begin{proof}
Without loss of generality one can set $S=1$ in which case we denote the parametric factorisation as $Q=Q^{\gg}Q^{\lesssim}$. Indeed, we can always perform the rescaling $u\to u\,S$.

In the labelling of polynomials $B_a=u^{\lambda_a+\gm-a}+\ldots+b_k^{(a)} u^k+\ldots + b_1^{(a)} u+b_0^{(a)}$, consider all $b_{k'}^{(a)}$ that have the largest exponent when $\Lambda\to\infty$ and choose $b_{(k)}^{(a)}$ with the largest $k$ among them. For instance, in $u^3+\Lambda u^2+\Lambda^3 u+2\Lambda^3$, it is $b_{(1)}=\Lambda^3$. Then $\deg B_a^{\lesssim}=k$.

If there exist such $a,b$, $b<a$ that $\deg B_a^{\lesssim}=\deg B_b^{\lesssim}=k$ then perform the transformation $B_a\to B_a-\frac{b_{(k)}^{(a)}}{b_{(k)}^{(b)}}\,B_b$. This transformation will affect the parametric factorisation of $B_a$. Two things can happen. First, $\deg B_a^{\lesssim}$ becomes smaller. Second, all terms with the largest exponent are cancelled out from $B_a$ in which case one gets a new (smaller) largest exponent and a new value for $\deg B_a^{\lesssim}$ (in principle arbitrarily large, only bounded by the degree of $B_a$).

We repeat recursively the procedure of comparison between all available pairs of $a,b$ and terminate when $\deg B_a^{\lesssim}$ become pairwise distinct. The recursion will terminate in a finite number of steps and produce a meaningful result for the following reasons: there are finitely many polynomials of finite degree to operate with, the maximal exponents can only  decrease in the procedure and they are bounded by zero from below, and $B_a$ cannot vanish entirely as $\deg B_a$ are pairwise distinct and so the leading monomial is never affected by the performed transformations.
\end{proof}
\begin{example}
For $S=1$ and the system in the Example on page~\pageref{exmp:72}, the rotation is done as follows. First, transformation $B_2\to B_2+\Lambda B_1=u^2+u-\Lambda$ drops the degree of $B_2^{\lesssim}$ to zero. Now both degrees of $B_2^{\lesssim}$ and $B_3^{\lesssim}$ are zero. We perform transformation $B_3\to B_3-\Lambda^2 B_2$. This drops the maximal exponent in $B_3$ from three to two, and $\deg B_3^{\lesssim}$ computed with respect to the new maximal exponent is two. Now all $\deg B_a^{\lesssim}$ are pairwise distinct. In summary, we get the rotated values $B_1=u,B_2=u^2+u-\Lambda,B_3=u^3-\Lambda(\Lambda+3)\,u^2+2\Lambda^2\,u$. $B_2$ has two divergent roots, $B_3$ has one divergent root, $W(B_1,B_2,B_3)$ remains unchanged by the performed rotation and it has three divergent roots.
\end{example}
Now we are ready to prove properness as declared on page~\pageref{par:prop}. Assume that there is a finite point $\bsse \in \Sedom$ by approaching which some coefficients of $B_a$ diverge. We follow some path parameterised by $\Lambda$, and $\Lambda\to\infty$ corresponds to the approach of the point. Choose $S=1$ and perform the transformation \eqref{eq:rotation} to get $\deg B_a^{\lesssim}$ pairwise distinct. If there are still divergent coefficients after this transformation, we get $\deg Q_{\theta}^{\gg}>0$ by Lemma~\ref{thm:D1} and \eqref{eq:QB00} and hence reach a contradiction. Thus all $B_a$ have finite coefficients. Compute $\wQ_{a,s}$ following \eqref{eq:reconsQ} and use the procedure in the proof of Lemma~\ref{thm:pollemma} to compute the set $C_{\Lambda}$ introduced after \eqref{eq:225}. $c_{\pa}$ appearing in this set are hence non-divergent when we approach $\bsse$ which is the properness in the sense of Section~\ref{sec:C}.

\subsection{Labelling solutions with standard Young tableaux --  technical details}
\label{sec:enum-solut-with}
\subsubsection*{All solutions approach \eqref{eq:wQlead0}}
The key step is to justify the formula  (\ref{eq:QSplit}). Consider the situation when $\theta_L=\Lambda$, $\Lambda\to\infty$, and all other $\inhom$ are finite.
In any scaling $S=\Lambda^{\beta}$ with $0<\beta<1$, we rotate $B_a$ to a frame where $\deg B_a^{\lesssim}$ are pair-wise distinct. By Lemma~\ref{thm:D1}, there is precisely one $a=a_0$ for which precisely one root of $B_{a_0}$ diverges, and all roots of $B_{a\neq a_0}$ stay finite.

Recall that $\deg B_a=\lambda_a+\gm-a$ and $\deg B_i=d_i$. The assignment rules are explained in Figure~\ref{fig:boso}. Since $B_a$ are in the frame with pair-wise distinct $\deg B_a^{\lesssim}$, if $a_0\neq\gm$, it must be that $\lambda_{a_0}>\lambda_{a_0+1}$ (equality is impossible). Then, there exists $s_0$ such that $d_{s_0}=\deg B_{a_0}-1$ and so the box $(a_0,s_0)$ is a corner box of the Young diagram.

Finally, we consider \eqref{eq:reconsQ} to decide which $\wQ_{a,s}$ have a divergent root. If $a>a_0$ then $\wQ_{a,s}$ does not depend on $B_{a_0}$ and hence has no  divergent roots. If $s>s_0$ then all polynomials of degree from $0$ to $\deg B_{a_0}-1$ appear in the Wronskian determinant. Then the polynomial structure of $B_{a_0}$ is irrelevant, we can replace it with the leading monomial and so $\wQ_{a,s}$ cannot have divergent roots. Finally if $a\leq a_0$ and $s\leq s_0$ then $B_{a_0}$ is present in the Wronskian and there is no polynomial of degree $\deg B_{a_0}-1$ in the Wronskian, so the conditions of the Lemma~\ref{thm:D1} are satisfied, hence $\wQ_{a,s}$ has precisely one divergent root.

Now we can deduce that roots scale exactly as $\Lambda$. Indeed, for $\beta'\in]\beta,1[$, the rotation of Lemma~\ref{lemma:rotation} may change but the $\wQ$ functions are invariant under this triangular rotation. If the value of $a_0$ changes at scale $\Lambda^{\beta'}$ compared to scale $\Lambda^{\beta}$, then there would be another corner-box $(a_0',s_0')\neq(a_0,s_0)$ such that at scale $\Lambda^{\beta'}$ the $\wQ$-functions with diverging roots are the nodes with $a\leq a_0'$ and $s\leq s_0'$. This is impossible because all $\wQ$ functions that diverge at scale $\Lambda^{\beta'}$ also diverge at scale $\Lambda^{\beta}$. Therefore $a_0$ is independent of $\beta\in]0,1[$, the number of diverging roots is thus also independent of $\beta$ and the diverging roots scale exactly as $\Lambda$. Finally, we get \eqref{eq:QSplit}, and also that $\tilde\wQ_{a,s}$ introduced alongside \eqref{eq:QSplit} is a Q-system on the Young diagram with the box $(a_0,s_0)$ removed.

\subsubsection*{Unambiguous continuation of the limiting solution \eqref{eq:wQlead0} for each SYT to finite inhomogeneities}
To discuss this question, we should not send inhomogeneities one after another to infinity, but do a more smooth realisation of the limit \eqref{scaling}. Namely, for $\alpha_1$, $\alpha_2$, $\dots$, $\alpha_L\in\mathbb C^*$ and  $\beta_L>\beta_{L-1}>\dots>\beta_1>0$, we parameterise $\inhom[1]=\alpha_1\,\Lambda^{\beta_1}$, $\inhom[2]=\alpha_2\,\Lambda^{\beta_2}$, $\dots$, $\inhom[L]=\alpha_L\,\Lambda^{\beta_L}$. All roots and all coefficients of all Q-polynomials are algebraic functions~\footnote{For a Q-system on a Young diagram, this is an immediate consequence of the QQ-relations. For a Q-system on a Hasse diagram, we should restrict symmetry transformations \eqref{eq:225} to algebraically depend on $\inhom$.} of the parameter $\Lambda$ and hence have a large-$\Lambda$ behavior of the form $\alpha\,\Lambda^\beta$ which allows applying scaling argumentation from previous sections.
  
If $\beta_\pa$ are spaced apart well, we can recover the same results as if inhomogeneities are sent to infinity one by one. But now, after we know that all solutions of the Q-system approach \eqref{eq:wQlead0}, a sharper judgement about possible $\beta_\pa$ can be made by observing that the leading order of the large $\Lambda$ expansion \eqref{eq:wQlead0} is solved by~\footnote{The computation reduces to the fact that
  $W\left(1,u,\dots,u^{k-1}, u^{k}(u-(k+1)\alpha\,\inhom[])\right)\propto (u-\alpha\, \inhom[])
  $.}
\begin{align}
  \label{eq:QparamL}
  \forall a&\le \gm,&
B_a\sim u^{\gm-a}\prod_{s=1}^{\IPart_{a}}\left(u-(\gm-a+s)\N{\SYT_{a,s}}{a-1}{s-1}\inhom[\SYT_{a,s}]\right)\,.
\end{align}
Let us now {\it parameterise} the Q-system by $\kappa_1,\ldots,\kappa_L$ as follows
\begin{align}
  \forall a&\le \gm,&
B_a=u^{\gm-a}\prod_{s=1}^{\IPart_{a}}\left(u-(\gm-a+s)\N{\SYT_{a,s}}{a-1}{s-1}\kappa_{\SYT_{a,s}}\Lambda^{\beta_{\SYT_{a,s}}}\right)\,.
\end{align}
Recall that $\inhom=\alpha_\pa\Lambda^{\beta_\pa}$. Then $Q_\theta=W(B_1,\ldots,B_\gm)$ realises a map from $\kappa_\pa$ to $\alpha_\pa$ which analytically depends on $1/\Lambda$ for $\beta_{\pa}$ being integers. When $1/\Lambda=0$ this map is simply an identity map with obviously non-zero Jacobian. Hence we can apply the analytic implicit function theorem to invert the map. By the theorem, for some neighbourhood of the point $1/\Lambda=0$, $\kappa_\pa$ are analytic functions of $\alpha_1,\ldots,\alpha_L$ and $1/\Lambda$ and hence each limiting solution \eqref{eq:wQlead0} can be continued to finite values of inhomogeneities.

\bibliographystyle{utphys}
\bibliography{References}
\end{document}